\newtheorem*{rep@theorem}{\rep@title}
\newcommand{\newreptheorem}[2]{%
	\newenvironment{rep#1}[1]{%
		\def\rep@title{#2 \ref{##1}}%
		\begin{rep@theorem}}%
		{\end{rep@theorem}}}
\tikzset{main node/.style={circle,fill=blue!20,draw,minimum size=1cm,inner sep=0pt},
            }
\newcommand{\parityL}{\parity\mathsf{L}}
\theoremstyle{plain}
\newtheorem{theorem}{Theorem}
\newtheorem{claim}[theorem]{Claim}
\newtheorem{lemma}[theorem]{Lemma}
\newtheorem{corollary}[theorem]{Corollary}
\newtheorem{fact}[theorem]{Fact}
\newtheorem*{gssproblem}{Graph state simulation problem}
\newtheorem*{PCTNSproblem}{Planar Clifford tensor network simulation problem}
\theoremstyle{definition}
\newtheorem{definition}[theorem]{Definition}
\definecolor{darkred}  {rgb}{0.5,0,0}
\definecolor{darkblue} {rgb}{0,0,0.5}
\definecolor{darkgreen}{rgb}{0,0.5,0}
\newcommand*\Let[2]{\State #1 $\gets$ #2}
\renewcommand{\algorithmicrequire}{\textbf{Input:}}
\setlist[enumerate]{topsep=1ex,itemsep=0ex,partopsep=1.5ex,parsep=1ex}
\setlist[itemize]{topsep=1ex,itemsep=0ex,partopsep=1.5ex,parsep=1ex}
\setlist[description]{topsep=1ex,itemsep=0pt,partopsep=1.5ex,parsep=1ex}
\definecolor{introduce}{rgb}{.3,.6,.3}
\definecolor{forget}{rgb}{1,0,0}
\definecolor{merge}{rgb}{0,0,1}
\newcommand{\circuit}{\mathscr{C}}
\newcommand{\vqwblue}[1]{
	\edef\start{\the\pgfmatrixcurrentrow-\the\pgfmatrixcurrentcolumn}
	\edef\end{\the\numexpr#1+\pgfmatrixcurrentrow\relax-\the\pgfmatrixcurrentcolumn}
	\expandafter\expandafter\expandafter\vqwexplicitblue\expandafter\expandafter\expandafter{\expandafter\start\expandafter}\expandafter{\end}
}
\newcommand{\vqwexplicitblue}[2]{
	\arrow[from=#1,to=#2,arrows,draw=blue] {}
}
\newcommand{\vqwred}[1]{
	\edef\start{\the\pgfmatrixcurrentrow-\the\pgfmatrixcurrentcolumn}
	\edef\end{\the\numexpr#1+\pgfmatrixcurrentrow\relax-\the\pgfmatrixcurrentcolumn}
	\expandafter\expandafter\expandafter\vqwexplicitred\expandafter\expandafter\expandafter{\expandafter\start\expandafter}\expandafter{\end}
}
\newcommand{\vqwexplicitred}[2]{
	\arrow[from=#1,to=#2,arrows,draw=red] {}
}
\newcommand{\lows}{\operatorname{lower}}
\newcommand{\pfont}[1]{\textnormal{\texttt{#1}}}
\newcommand{\diag}{\operatorname{diag}}
\newcommand{\CZ}{\operatorname{CZ}}
\title{Fast simulation of planar Clifford circuits}
\author{David Gosset}
\affiliation{Institute for Quantum Computing, University of Waterloo, Canada}
\affiliation{Department of Combinatorics and Optimization, University of Waterloo, Canada}
\affiliation{Perimeter Institute for Theoretical Physics, Waterloo, Canada}
\email{dngosset@gmail.com}
\author{Daniel Grier}
\affiliation{Institute for Quantum Computing, University of Waterloo, Canada}
\affiliation{Cheriton School of Computer Science, University of Waterloo, Canada}
\affiliation{Department of Computer Science and Engineering and Department of Mathematics, University of California, San Diego, US}
\email{dgrier@ucsd.edu}
\author{Alex Kerzner}
\affiliation{Institute for Quantum Computing, University of Waterloo, Canada}
\affiliation{Department of Combinatorics and Optimization, University of Waterloo, Canada}
\email{alexkerzner1@gmail.com}
\author{Luke Schaeffer}
\affiliation{Institute for Quantum Computing, University of Waterloo, Canada}
\affiliation{Department of Combinatorics and Optimization, University of Waterloo, Canada}
\affiliation{Joint Center for Quantum Information and Computer Science, College Park, Maryland, US}
\email{lrschaeffer@gmail.com}
\date{}
\begin{document}
\maketitle
\begin{abstract}
A general quantum circuit can be simulated classically in exponential time. If it has a planar layout, then a tensor-network contraction algorithm due to Markov and Shi has a runtime exponential in the square root of its size, or more generally exponential in the treewidth of the underlying graph. Separately, Gottesman and Knill showed that if all gates are restricted to be Clifford, then there is a polynomial time simulation. We combine these two ideas and show that treewidth and planarity can be exploited to improve Clifford circuit simulation. Our main result is a classical algorithm with runtime scaling asymptotically as $ n^{\omega/2}<n^{1.19}$ which samples from the output distribution obtained by measuring all $n$ qubits of a \textit{planar} graph state in given Pauli bases. Here $\omega$ is the matrix multiplication exponent. We also provide a classical algorithm with the same asymptotic runtime which samples from the output distribution of any constant-depth Clifford circuit in a planar geometry. Our work improves known classical algorithms with cubic runtime.

A key ingredient is a mapping which, given a tree decomposition of some graph $G$, produces a Clifford circuit with a structure that mirrors the tree decomposition and which emulates measurement of the corresponding graph state. We provide a classical simulation of this circuit with the runtime stated above for planar graphs and otherwise $nt^{\omega-1}$ where $t$ is the width of the tree decomposition. Our algorithm incorporates two subroutines which may be of independent interest. The first is a matrix-multiplication-time version of the Gottesman-Knill simulation of multi-qubit measurement on stabilizer states. The second is a new classical algorithm for solving symmetric linear systems over $\mathbb{F}_2$ in a planar geometry, extending previous works which only applied to non-singular linear systems in the analogous setting.
\end{abstract}

\section{Introduction and Overview}

Enormous resources are now being marshalled by quantum information scientists towards the goal of building a quantum computer: demonstration of a computation for which there are no known efficient classical algorithms \cite{arute2019quantum}; quantum machines that are accessible to the public \cite{IBMQ};  and validation of certain building blocks of quantum error correction \cite{reed2012realization, corcoles2015demonstration, ofek2016extending}.  A by-product of the progress towards building quantum machines has been a renewed interest in the classical simulation of quantum computers \cite{markov2008simulating, boixo2017simulation, bravyi2019simulation}. This is partly due to the fact that a direct comparison with classical simulation algorithms is an important measure of the performance of today's quantum computers \cite{pednault2017breaking, pednault2019leveraging, barak2020spoofing}.  A quantum computer in the real world is limited to a specific architecture, which may restrict both the native gate set and the number of qubits while also being subjected to noise from its environment. While in some cases it is possible to leverage quantum advantage from small quantum devices with limited capabilities,  one can also exploit these limitations in classical simulation algorithms. As a result, the search for a quantum advantage with near-term quantum computers is grounded in our understanding of the power of restricted models of quantum computing. Such models include constant-depth quantum circuits \cite{terhal2002adaptive}, IQP circuits \cite{bremner2011classical}, BosonSampling \cite{aaronson2011computational}, circuits with few non-Clifford gates \cite{gottesman1998heisenberg, bravyi2016improved, bravyi2019simulation}, and others. Their study reveals bright spots of quantum advantage even in the presence of harsh restrictions on the operation of quantum devices.

Here we focus on the restricted family of Clifford circuits, which are expressed as a product of gates from the set Hadamard, $S=\mathrm{diag}(1,i)$, and $\CZ=\mathrm{diag}(1,1,1,-1)$.  Gottesman and Knill showed that Clifford circuits can be simulated in polynomial time on a classical computer \cite{gottesman1998heisenberg}. The quantum state at any point during such a Clifford circuit is a so-called \textit{stabilizer state} that can be specified using a tableau of $O(n^2)$ classical bits, see Section \ref{sec:stab} for details. Updating this representation after the action of a $H,S,$ or $\mathrm{CZ}$ gate can be performed by modifying $O(n)$ bits of the tableau in a simple way. Measurements are more costly: the best known algorithm for simulating the outcome of a single-qubit measurement uses $O(n^2)$ time \cite{aaronson+gottesman:2004}. 

Despite the fact that they can be simulated in polynomial time, it has been shown that quantum computations based on Clifford circuits still enjoy a certain type of quantum advantage \cite{bravyi2018quantum, watts2019exponential, grier2020interactive, bravyi2020quantum}. In this paper we consider a problem which is closely connected to the tasks considered in Refs.\ \cite{bravyi2018quantum, watts2019exponential, grier2020interactive}, which demonstrate a quantum advantage for certain special cases. Those tasks can be viewed  as special cases of the problem of simulating measurements on a \textit{graph state}. For any graph $G=(V,E)$ the graph state $|G\rangle$ is defined as the $n:=|V|$ qubit state given by
\[
|G\rangle=\bigg(\prod_{\{u,v\}\in E} \CZ_{uv}\bigg)H^{\otimes n}|0^n\rangle.
\]
\noindent Equivalently, $\ket{G}$ is the $n$-qubit state satisfying
\[
\pfont{X}_v \bigg(\prod_{j: \{v,j\}\in E} \pfont{Z}_j \bigg)|G\rangle=|G\rangle \qquad \qquad \forall v\in V,
\]
where $\pfont{X}_v$ and $\pfont{Z}_v$ denote the Pauli $\pfont{X}$ and $\pfont{Z}$ operators acting nontrivially on qubit $v$.

In this paper, we will focus on the computational problem of simulating measurement of graph states, which (as described later) can be used as the basis for simulating more general types of Clifford circuits. In the simplest version of the problem, we are given a graph $G=(V,E)$ with $n=|V|$ vertices and a single-qubit Pauli measurement basis $P_v\in \{\pfont X, \pfont Y, \pfont Z\}$ for each vertex $v\in V$, and we are asked to sample from the probability distribution obtained by measuring each qubit of $|G\rangle$ in the given bases. In particular, letting $U_v$ be a single-qubit Clifford unitary that maps the $P_v$ basis to the computational (\pfont Z) basis (i.e., $U_v P_v U_v^\dagger= \pfont Z$ for each $v\in V$), we are asked to sample a string $z\in \{0,1\}^n$ from the probability distribution 
\begin{equation}
p(z)=|\langle z|\bigotimes_{v\in V} U_v |G\rangle|^2.
\label{eq:pz}
\end{equation}
Notice that because $\ket{G}$ is prepared by a Clifford circuit and each $U_v$ is also Clifford, this problem is an instance of Clifford circuit simulation. Conversely, it is also well known that simulation of general Clifford circuits can be reduced to this problem using, e.g., techniques from measurement-based quantum computing \cite{raussendorf2001one}, where the size of the graph state will increase with the depth of the circuit.  We refer the reader to \Cref{sec:appl} for an example of this reduction, as well as a non-exhaustive list of applications of the problem of measuring graph states in Pauli bases.

In the general version of the problem, we also allow some of the qubits to be postselected. That is, in addition to the graph and measurement bases, we are given a subset of qubits $\mathcal{P} \subseteq V$ and measurement outcomes $m \in \{ 0, 1 \}^{\mathcal{P}}$ as input. The problem is then to sample the remaining qubits $\mathcal{S} := V \backslash \mathcal{P}$ after having postselected on measurement outcomes $m$ for the qubits of $\mathcal{P}$. In particular, let 
\begin{equation}
p_\mathcal{P}(m)=\langle G| \bigotimes_{v\in V} U^{\dagger}_v |m\rangle\langle m|_\mathcal{P} \bigotimes_{v\in V} U_v |G\rangle
\label{eq:pbm}
\end{equation}
be the probability of obtaining outcomes $m$ for the qubits in $\mathcal{P}$. If $p_\mathcal{P}(m)$ is nonzero, then the output is a binary string $x\in \{0,1\}^{|\mathcal{S}|}$ sampled from the conditional distribution $p_\mathcal{S}(x|m)$ over measurement outcomes in $\mathcal{S}$ after postselecting on outcomes $m$ for the qubits in $\mathcal{P}$:
\begin{equation}
p_\mathcal{S}(x|m)=\frac{1}{p_\mathcal{P}(m)}|\langle x|_\mathcal{S}\langle m|_\mathcal{P}\bigotimes_{v\in V} U_v |G\rangle|^2.
\label{eq:condpa}
\end{equation}
Otherwise, the output is a flag indicating that $p_\mathcal{P}(m)=0$, since it is meaningless to postselect on an impossible outcome. To summarize, we now state the problem in full.

\begin{gssproblem}
The input to the problem is a graph $G=(V,E)$ with $n=|V|$ vertices, a measurement basis $P_v\in \{\pfont X,\pfont Y, \pfont Z\}$ for each vertex $v\in V$, a partition $[n]=\mathcal{S}\cup \mathcal{P}$ of the vertices, and a binary string $m\in \{0,1\}^{|\mathcal{P}|}$. If the marginal probability $p_\mathcal{P}(m)$ from Eq.~\eqref{eq:pbm} is zero, the output is an error flag. Otherwise, the output is a binary string $x\in \{0,1\}^{|\mathcal{S}|}$ sampled from the conditional distribution $p_\mathcal{S}(x|m)$ defined in Eq.~\eqref{eq:condpa}.
\end{gssproblem}

We shall refer to the special case $\mathcal{P}=\varnothing$ as the graph state simulation problem \textit{without postselection}.  We note that although this special case without postselection is of particular interest, the classical algorithms we develop in this paper are capable of solving the more general graph state simulation problem with the same asymptotic runtime.

Let us now compare quantum and classical algorithms for the graph state simulation problem.  When there is no postselection, a quantum computer can solve the problem by preparing the graph state $|G\rangle$ and then measuring all qubits in the given bases. The total runtime of this procedure is linear in the size of the graph (number of edges plus number of vertices) and is therefore upper bounded as $O(n^2)$ in the general case, and as $O(n)$ if $G$ is sparse. The graph state simulation problem can also be solved efficiently on a classical computer via the Gottesman-Knill theorem since all unitaries and measurements to be simulated are Clifford operations. To our knowledge the best previously-known algorithm for the graph state simulation problem is based on the standard Gottesman-Aaronson simulation method \cite{aaronson+gottesman:2004} and has runtime $O(n^3)$. Our first contribution is to reduce the graph state simulation problem to matrix multiplication, improving the runtime in the general case to $O(n^{\omega})$ where  $\omega < 2.373$ \cite{alman+williams:2020} is the matrix multiplication exponent. 
\begin{theorem}
\label{thm:fastmm}
There is a classical algorithm with runtime $O(n^{\omega})$ which solves the graph state simulation problem.\footnote{As with many problems solved with fast matrix multiplication, we incur an extra $O(\log n)$ factor if $\omega = 2$.  For full details, see the proof of \Cref{thm:fastmm} in \Cref{sec:fast_measurement}.}
\end{theorem}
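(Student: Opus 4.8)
\noindent The plan is to reduce the graph state simulation problem to a constant number of $\mathbb{F}_2$-linear-algebra operations on $O(n)\times O(n)$ matrices, each of which can be carried out in matrix-multiplication time. First, observe that $|\phi\rangle := \bigotimes_{v\in V} U_v |G\rangle$ is a stabilizer state whose $n$ stabilizer generators can be written down directly: conjugating the generator $\pfont{X}_v\prod_{j\sim v}\pfont{Z}_j$ of $|G\rangle$ by $\bigotimes_k U_k$ acts qubitwise and gives the Hermitian Pauli $h_v := \big(U_v \pfont{X}_v U_v^{\dagger}\big)\prod_{j\sim v}\big(U_j \pfont{Z}_j U_j^{\dagger}\big)$, which stabilizes $|\phi\rangle$. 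Recording the $\pfont{X}$-part $u_v$, $\pfont{Z}$-part $w_v$, and phase of each $h_v$ costs $O(n^2)$ since $\sum_v(\deg(v)+1)=O(n^2)$. The graph state simulation problem is now exactly: measure every qubit of $|\phi\rangle$ in the computational basis, postselect the qubits of $\mathcal{P}$ onto $m$, and return the outcomes on $\mathcal{S}$, with an error flag when the postselection is impossible.

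Second, I would reduce the sampling to a linear system over $\mathbb{F}_2$. Expanding $|z\rangle\langle z| = 2^{-n}\sum_a (-1)^{a\cdot z}\pfont{Z}^a$ and using that $\langle\phi|\pfont{Z}^a|\phi\rangle$ equals $(-1)^{s(a)}$ when $(-1)^{s(a)}\pfont{Z}^a$ lies in the stabilizer group $S=\langle h_1,\dots,h_n\rangle$ and $0$ otherwise, one gets
\[
|\langle z|\phi\rangle|^2 = 2^{-n}\sum_{a\in L}(-1)^{a\cdot z + s(a)}, \qquad L := \{a\in\mathbb{F}_2^n : \pm\pfont{Z}^a\in S\}.
\]
Since $L$ is a subspace and $s$ is $\mathbb{F}_2$-linear on it, $a\mapsto a\cdot z + s(a)$ is a linear functional on $L$, so the sum equals $|L|$ when it vanishes and $0$ otherwise. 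Hence the support $\mathcal{A}=\{z:\langle z|\phi\rangle\ne 0\}$ is exactly the (nonempty, since $|\phi\rangle\ne 0$) solution set of the affine system $\{a\cdot z = s(a):a\in L\}$, and the measurement distribution is uniform on $\mathcal{A}$. So without postselection we compute an affine description of $\mathcal{A}$ and output a uniformly random point; with postselection we additionally impose $z|_{\mathcal P}=m$, output the error flag if the system becomes inconsistent (which is exactly the event $p_{\mathcal P}(m)=0$), and otherwise output the $\mathcal{S}$-restriction of a uniform solution — the restriction is injective on this set, so this yields precisely $p_{\mathcal S}(\cdot\,|m)$.

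Third, I would carry out the linear algebra in matrix-multiplication time. A product $\prod_i h_i^{c_i}$ has $\pfont{X}$-part $\sum_i c_i u_i$, so it equals $\pm\pfont{Z}^a$ precisely when $c$ lies in the left kernel of the matrix $U$ with rows $u_i$; for such $c$ its $\pfont{Z}$-part is $a=W^{\top}c$ (with $W$ the matrix of $\pfont{Z}$-parts) and its sign $(-1)^{s(a)}$ is recovered from the stored phases. Computing a basis of $\ker U^{\top}$ by block-recursive Gaussian elimination / LUP decomposition over $\mathbb{F}_2$ takes $O(n^\omega)$; pushing the basis through $W^{\top}$ and reading off signs produces generators of $L$ and the function $s$; and solving the resulting $O(n)\times O(n)$ affine system for an affine description of $\mathcal{A}$ — then intersecting with $z|_{\mathcal P}=m$, testing consistency, and sampling a uniform solution — is again $O(n^\omega)$ by the same primitives, incurring the usual extra $O(\log n)$ factor if $\omega=2$.

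The main obstacle is the sign bookkeeping in the third step. The dimension of $\mathcal{A}$ is immediate from $\operatorname{rank} U$, but the \emph{identity} of the affine subspace $\mathcal{A}$, i.e.\ the signs $s(a)$, must be computed exactly, since a single sign error replaces the support by a disjoint affine subspace. The sign of $\prod_i h_i^{c_i}$ is a product of the stored $\pm1$ phases times a commutation phase that is a priori quadratic in $c$; the key point to establish is that on $\ker U^{\top}$, where $\sum_i c_i u_i = 0$, this quadratic phase collapses to an $\mathbb{F}_2$-linear function of $c$, so that the signs for an entire kernel basis come from one matrix product rather than the naive $\Theta(n^3)$ of processing one basis vector at a time. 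Making this collapse precise — equivalently, packaging the argument as a matrix-multiplication-time version of the Gottesman--Knill simulation of multi-qubit stabilizer measurement — is the crux; everything else is standard fast $\mathbb{F}_2$ linear algebra.
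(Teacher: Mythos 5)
Your proposal is correct, but it reaches Theorem \ref{thm:fastmm} by a genuinely different route than the paper. The paper proves a general batch-measurement theorem (Theorem \ref{thm:fast_measurement}): it keeps the full Aaronson--Gottesman tableau including destabilizers, row-reduces it via $LSP$ decompositions and the batched Pauli-conjugation lemma (Lemma \ref{lem:tableau_with_many_paulis}), handles random and determinate outcomes separately, and returns the post-measurement state; Theorem \ref{thm:fastmm} then follows by invoking this subroutine twice (once to postselect the qubits in $\mathcal{P}$, once to measure the rest). You instead exploit the fact that \emph{all} qubits are measured: the output distribution of a full computational-basis measurement of a stabilizer state is uniform on an affine subspace cut out by the $\pfont{Z}$-only elements of the stabilizer group and their signs, so the whole problem collapses to one kernel computation, one batch sign evaluation, and one affine solve with the postselection constraints adjoined (your uniformity argument is essentially Claim \ref{claim:simplestab}, and your affine-solving step is the paper's Lemma \ref{lem:fast_linear_equations}). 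The ``crux'' you flag is real but easy to close: since the $h_v$ are commuting Hermitian involutions, $c\mapsto\prod_v h_v^{c_v}$ is a group homomorphism, and on the kernel of the $\pfont{X}$-part the images are mutually commuting $\pm\pfont{Z}^a$ operators, so the sign is additive in $c$; computationally, the signs of an entire kernel basis $C$ come from $\diag\big(C \lows(WU^{\top}+\alpha\alpha^{\top})C^{\top}\big)$ plus linear terms, a constant number of $n\times n$ products --- exactly the trick the paper uses inside Lemma \ref{lem:tableau_with_many_paulis}. The trade-off: your argument is more self-contained and avoids destabilizers and post-measurement bookkeeping, but it only solves the ``measure everything'' problem, whereas the paper's measurement theorem also returns post-measurement states for arbitrary subsets of qubits with runtime interpolating in $k$, which is what the tree-decomposition algorithm of Theorem \ref{thm:sumofcubes} needs as a subroutine.
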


The proof of Theorem \ref{thm:fastmm} is obtained from a general algorithm for simulating multi-qubit measurements on stabilizer states which may be of independent interest, see Section \ref{sec:matrix_mult} for details. 

Guan and Regan \cite{guan2019stabilizer} have recently established a similar result for the problem of computing output probabilities; they show that given a binary string $z\in \{0,1\}^n$, there is a classical algorithm which computes the output probability $p(z)$ in Eq.~\eqref{eq:pz} in $O(n^{\omega})$ time. However, the result of Guan and Regan does not provide a classical algorithm for the graph state simulation problem considered here. An important distinction is that the latter problem provides a compelling candidate for a quantum speedup. Indeed, Theorem \ref{thm:fastmm} leaves open the possibility that quantum computers may have a polynomial speedup in terms of the total runtime (number of one- and two-qubit gates) required to solve the graph state simulation problem without postselection. For sparse graphs the comparison is more dramatic---the linear time quantum algorithm \textit{which bears no resemblance to classical linear algebraic methods} beats Theorem \ref{thm:fastmm} by a quadratic factor even if the matrix multiplication exponent is shown to meet its lower bound $\omega\geq 2$. Unfortunately we lack a suitable technique for proving that classical algorithms are more costly in terms of total runtime. On the other hand, if we turn our attention to other metrics such as circuit depth, Refs.\ \cite{bravyi2018quantum, grier2020interactive} prove that quantum computers beat classical machines in solving the graph state simulation problem even for the special case of a (subgraph of a) 2D grid graph.  These works establish a quantum advantage in the following sense:

\begin{itemize}[leftmargin=0pt]
\item[]\emph{Quantum advantage for relation tasks (\cite{bravyi2018quantum, watts2019exponential})}\hspace{1ex}

Let $G$ be a subgraph of the $\ell \times \ell$ grid, and consider a relaxation of the graph state simulation problem (without postselection) in which it suffices to output \emph{any} string of measurement outcomes $z\in \{0,1\}^n$ that occurs with nonzero probability. (Equivalently, the problem can be expressed in terms of properties of binary quadratic forms, without reference to quantum graph states, see Ref.\cite{bravyi2018quantum} for details). This problem can be trivially solved by a constant-depth quantum circuit since since the graph has constant degree.  In contrast, it cannot be solved by any constant-depth classical circuit composed of unbounded fan-in AND, OR, and NOT gates (i.e., $\AC^0$ circuits) \cite{bravyi2018quantum, watts2019exponential}. 

\item[]\emph{Quantum advantage for interactive tasks (\cite{grier2020interactive})}\hspace{1ex}
Simulation of low-depth quantum circuits can be shown to be even harder for classical devices when they are required to answer the graph state simulation problem in two rounds.  Again, here we consider some subgraph $G=(V,E)$ of the $\ell \times \ell$ grid.  The first round input is a set of Pauli measurement bases for each of the qubits in some given subset $\mathcal{P}$ of the qubits, and the desired output is any measurement outcome $m\in \{0,1\}^{|\mathcal{P}|}$ such that $p_\mathcal{P}(m)\neq 0$. The second round input is a set of measurement bases for the remaining qubits in $\mathcal{S}=V\setminus \mathcal{P}$, and the desired outcome is any string $x$ such that $p_\mathcal{S}(x|m)\neq 0$. Of course, this problem can be solved by a constant-depth quantum computation which prepares the graph state and in each round measures the corresponding subset of qubits in the given Pauli bases. Nevertheless, any classical algorithm which succeeds at this interactive task could be used to solve any problem in the class $\parityL$. This provides a strong lower bound on the type of classical circuits that can solve the interactive task, as the class $\parityL$ is known to be strictly more powerful than the constant-depth circuit family $\AC_0$.\footnote{We note that the postselected graph state simulation problem was already known to be $\parityL$-hard \cite{aaronson+gottesman:2004}. While the interactive task described in Ref. \cite{grier2020interactive}  has a similar hardness guarantee, it can be solved by a constant-depth quantum computation.}

\end{itemize}

Motivated by the above results, we first observe that Theorem \ref{thm:fastmm} can be improved for the special case of 2D grid graphs.  

\begin{theorem}[Special case of \Cref{thm:planargss}]
\label{thm:grid_warmup}
There is an $\widetilde{O}(n^{\omega/2})$-time classical algorithm which solves the graph state simulation problem for any subgraph of a $\sqrt{n}\times \sqrt{n}$ grid.
\end{theorem}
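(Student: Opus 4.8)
The plan is to reduce the graph state simulation problem on a subgraph $G$ of the $\sqrt{n}\times\sqrt{n}$ grid to a Gaussian elimination over $\mathbb{F}_2$ on an $O(n)\times O(n)$ matrix that inherits the grid's sparsity pattern, and then to carry out that elimination by nested dissection using the explicit recursive family of $O(\sqrt{n})$-separators of the grid.

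For the reduction, recall that the stabilizer group of $|G\rangle$ is generated by the Paulis $g_v = \pfont X_v\prod_{j\sim v}\pfont Z_j$, whose symplectic representation is the pair $(I,\Gamma)$ with $\Gamma$ the $\mathbb{F}_2$ adjacency matrix of $G$. Conjugating by the local Cliffords $\bigotimes_v U_v$ acts on qubit $v$'s pair of symplectic coordinates by a fixed $2\times 2$ symplectic matrix, so the $\pfont X$-part of the generators becomes a matrix $M$ whose support is contained in the grid edges plus the diagonal. The state $\bigotimes_v U_v|G\rangle$ is again a stabilizer state, so measuring all of its qubits in the $\pfont Z$ basis produces a distribution that is \emph{uniform over a coset} of a linear subspace of $\mathbb{F}_2^n$, determined by the left-kernel of $M$ and the corresponding $\pfont Z$-stabilizers; the quadratic phases of the stabilizer state play no role in the bit distribution. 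Postselecting the qubits in $\mathcal{P}$ to outcomes $m$ intersects this coset with an affine subspace, and $p_\mathcal{P}(m)=0$ exactly when the intersection is empty. After the standard normalization that removes $\pfont Z$-measured qubits by deleting the corresponding vertex (which keeps $G$ a subgraph of the grid, contributes only independent uniform bits, and makes the residual system symmetric), the entire problem becomes: given a symmetric, possibly singular $\mathbb{F}_2$ matrix with grid sparsity together with the postselection constraints, decide feasibility and, if feasible, return a uniformly random element of the solution coset; this is the elimination step at the heart of the matrix-multiplication-time measurement routine of \Cref{sec:matrix_mult}, now to be performed faster by exploiting planarity.

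The algorithm is nested dissection adapted to this $\mathbb{F}_2$ setting. Deleting a middle row or column of the $\sqrt{n}\times\sqrt{n}$ grid is a separator of size $O(\sqrt{n})$ that splits $G$ into two pieces of at most half the size, and recursing gives a separator tree in which a region of $s$ vertices has boundary $O(\sqrt{s})$ and in which the separators of a subgraph of the grid are read directly off the ambient rows and columns. Eliminating the variables in postorder (each separator last within its region) confines all fill-in to region boundaries, so processing a region of size $s$ reduces to a dense $\mathbb{F}_2$ Schur-complement computation on $O(\sqrt{s})$ pivots against $O(\sqrt{s})$ coordinates, which costs $O(s^{\omega/2})$ by fast matrix multiplication; summing over the tree yields $T(n)=2\,T(n/2)+O(n^{\omega/2})$ and hence $\widetilde{O}(n^{\omega/2})$, the logarithmic factor being needed only when $\omega=2$. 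Recording, during the elimination, which variables become pivots and which remain free, a uniformly random solution is then produced by back-substitution in preorder with an independent uniform bit chosen for each free variable, and an infeasible system triggers the error flag. Equivalently, one may view this as building the Clifford circuit that mirrors the separator tree, as in the general construction behind \Cref{thm:planargss}, and simulating it separator-by-separator rather than qubit-by-qubit; the improvement over the naive $n(\sqrt{n})^{\omega-1}$ bound is exactly the nested-dissection phenomenon that the topmost separator dominates the total work.

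The main obstacle is that textbook nested dissection assumes one can always pivot on the current leading block, whereas here the matrices are symmetric but frequently singular over $\mathbb{F}_2$, where there is no magnitude-based pivoting to fall back on. The elimination must be organized so that symmetry is preserved at every step — so that fill-in really does stay within region boundaries and the recursion remains balanced — while rank deficiency inside a separator block is detected and handled, either by permuting pivots within that block (an $O(s^{\omega/2})$ overhead) or by deferring the singular part to the parent region. Carrying this out for arbitrary subgraphs of the grid, and threading the free-variable randomness and the postselection constraints through the elimination so that the output is drawn from exactly the distribution $p_\mathcal{S}(x\mid m)$ of Eq.~\eqref{eq:condpa}, is the real content; it is the planar symmetric-$\mathbb{F}_2$ solver promised in the introduction, and in \Cref{thm:planargss} the explicit grid separators used here are replaced by the planar separator theorem.
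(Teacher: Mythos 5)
Your route is genuinely different from the paper's, and it contains a real gap precisely at the step you defer to as ``the real content.'' The paper proves \Cref{thm:grid_warmup} by a direct recursive stabilizer simulation (the box containing \Cref{fig:2d}): recursively measure the interiors of four subgrids, return the post-measurement tableau on the $O(\ell)$ perimeter qubits, glue with CZ gates, and measure, using the matrix-multiplication-time multi-qubit measurement of \Cref{sec:matrix_mult}; the recurrence $T(\ell)=4T(\ell/2)+\widetilde{O}(\ell^{\omega})$ gives the bound, and all rank-deficiency and postselection issues are absorbed by the Aaronson--Gottesman-style measurement routine (\Cref{thm:fast_measurement}), which natively distinguishes random from determinate outcomes. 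You instead reduce the problem to a symmetric, possibly singular linear system over $\mathbb{F}_2$ with grid sparsity and propose to solve and uniformly sample it by nested dissection. The reduction itself is essentially sound (the outcome distribution is uniform on an affine subspace, $\pfont Z$-measured vertices peel off as uniform bits, and the residual $\pfont X$/$\pfont Y$ system is $\Gamma+D$ with diagonal $D$), although the coset offset does require tracking the signs of the $\pfont Z$-type stabilizers through the elimination, which you wave away as ``quadratic phases play no role.''

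The gap is the nested-dissection step for singular symmetric systems over $\mathbb{F}_2$. Classical nested dissection (George; Lipton--Rose--Tarjan; Alon--Yuster) requires that the interior block of each region be eliminable, i.e.\ effectively nonsingular, and the paper explicitly flags that no prior planar solver handles the singular case --- indeed, \Cref{thm:linearsolver} is obtained in the paper as a \emph{consequence} of the stabilizer machinery (affine stabilizer subspaces and the pattern-enforcing correction subroutine of \Cref{thm:sumofcubes}), not the other way around. Your two proposed fixes are not established: permuting pivots within a region does not help when, over $\mathbb{F}_2$, a symmetric block has no admissible symmetric pivot at all (e.g.\ zero diagonal with rank deficiency), and ``deferring the singular part to the parent region'' can enlarge the effective boundary of the parent beyond $O(\sqrt{s})$ unless you prove a bound on the total number of deferred variables per level; without such a bound the claimed recurrence $T(n)=2T(n/2)+O(n^{\omega/2})$, and hence the $\widetilde{O}(n^{\omega/2})$ runtime, does not follow. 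You would also need to show that whatever pivoting scheme you adopt preserves symmetry (or otherwise controls fill-in) and that the free-variable bookkeeping still yields an exactly uniform sample from the postselected coset. As written, the proposal assumes the existence of exactly the planar singular-$\mathbb{F}_2$ solver that the paper treats as a new result, so the proof is incomplete; either supply that solver or fall back on the paper's recursive tableau simulation, which sidesteps the issue entirely.
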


The seemingly innocuous quadratic improvement for the graph state simulation problem on grid graphs provides a striking conclusion about the problems used for quantum advantage with low-depth circuits \cite{bravyi2018quantum, watts2019exponential, grier2020interactive}.  Although quantum circuits based on measuring grid graph states are superior in terms of depth, \Cref{thm:grid_warmup} leaves open the possibility that there exist classical circuits with a similar number of elementary gates (e.g., if $\omega = 2$, then a classical circuit of $\widetilde{O}(n)$ gates is possible, matching the quantum circuit with $O(n)$ gates). This leaves their advantage in terms of gate complexity on shakier footing.

Although \Cref{thm:grid_warmup} follows directly from \Cref{thm:planargss} (discussed below), we now present a simpler, self-contained algorithm for grid graphs which serves as a warm-up for the more general case.  Consider graph state simulation on the $\ell \times \ell$ grid. For simplicity we consider the case without postselection ($\mathcal{P}=\varnothing$) although the algorithms we consider below work in the more general case with only superficial modifications. We shall also ignore fast matrix multiplication for the moment. Our starting point is the very simple classical simulation of the process of preparing the graph state---by starting with the all-zeros computational basis state, applying Hadamards to each qubit, and applying $\CZ$ gates between qubits at vertices connected by an edge---and then sequentially measuring the $\ell^2$ vertices in the given Pauli bases. Simulating all $O(\ell^2)$ gates in the circuit uses a runtime of $O(\ell^4)$. Since the state has $O(\ell^2)$ qubits, simulating each single-qubit measurement also uses a runtime of $O(\ell^4)$ \cite{aaronson+gottesman:2004}, and the total runtime of this algorithm is $O(\ell^6)$.  An improvement was noticed in Ref.\ \cite{bravyi2018quantum}, and is based on simulating a quantum algorithm that ``sweeps'' from the left to the right, processing at most $2\ell$ qubits at a time.  The key idea is that the measurements on any given column commute with the CZ gates two columns away. Concretely, one can prepare all qubits in the first two columns of the grid, measure the qubits in the first column in the given Pauli basis, and then initialize the qubits in the third column and apply the CZ gates between the second and third column.  Continuing in this way, the algorithm sweeps from left to right, storing an \textit{active set} of at most $2\ell$ qubits at any given time. Since we are now measuring a stabilizer state of only $O(\ell)$ qubits, the cost of each single-qubit measurement is reduced to $O(\ell^2)$. Thus, the classical simulation algorithm runs in $O(\ell^4)$ time. Perhaps surprisingly, this is \emph{not} the ordering of CZ gates and measurements that leads to the most efficient classical algorithm.  In the box containing \Cref{fig:2d}, we describe a recursive method for ordering the measurements in which only one active set of size $O(\ell)$ dominates the runtime, leading to the bound given by Theorem \ref{thm:grid_warmup}.  We stress that these advantages are not merely theoretical.  \Cref{fig:grid_simulation} compares the runtimes of software implementations of the na\"ive, left-to-right sweep, and recursive algorithms. The results show a clear asymptotic advantage for the recursive algorithm, and an absolute advantage beyond a $50 \times 50$ grid (where the algorithm finished in under a hundredth of a second anyway). Using the recursive algorithm we were able to simulate a grid of 529 million qubits in under 11 hours on a desktop computer.

\begin{figure}
\centering
\input{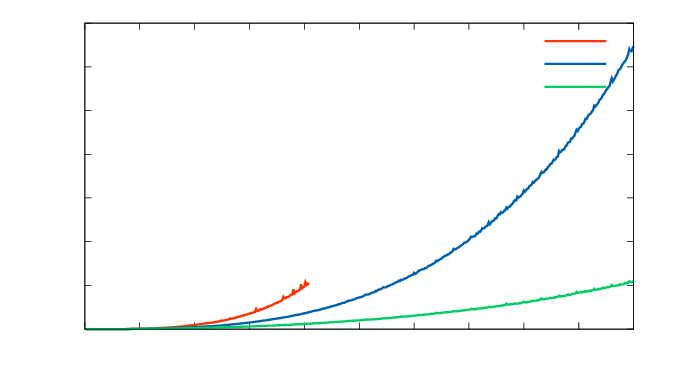}
\caption{Simulation time in seconds vs.\ grid length $\ell$ of an $\ell \times \ell$ grid. The data is taken as an average over 7 random trials from $\ell = 2$ to $\ell = 1000$ by $2$'s.  We cut off the na\"ive algorithm at $\ell=408$ due to memory limitations.  The simulations were conducted on a desktop computer using a modified version of the CHP Clifford simulator \cite{aaronson+gottesman:2004, gridCHPpp}. Our simulations do not use fast matrix multiplication. See \Cref{sec:software_implementation} for the implementation details.}
\label{fig:grid_simulation}
\end{figure}

The recursive, divide-and-conquer algorithm for the grid has its roots in the \textit{nested dissection} algorithm of George for solving positive semidefinite linear systems of equations, where the sparsity pattern of the matrix matches the adjacency of the grid graph \cite{george1973nested}. This result was later generalized to obtain fast algorithms for solving nonsingular linear systems of equations on planar graphs \cite{lipton1979generalized, alon2010solving}, and likewise our main result generalizes the graph state simulation algorithm from the grid to arbitrary planar graphs.
\begin{restatable}{theorem}{planargss}
There is an $\widetilde{O}(n^{\omega/2})$-time classical algorithm which solves the graph state simulation problem for planar graphs.
\label{thm:planargss}
\end{restatable}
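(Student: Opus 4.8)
The plan is to transplant the nested-dissection philosophy from sparse linear algebra onto graph-state measurement itself, using the fast multi-qubit stabilizer-measurement engine behind \Cref{thm:fastmm} only on small subsystems. The first ingredient is a recursive decomposition of the planar graph $G$ obtained by applying the Lipton--Tarjan separator theorem repeatedly, but in a form (as in generalized nested dissection) that balances \emph{boundaries} as well as sizes: one builds a binary tree $\mathcal{T}$ whose nodes $x$ carry vertex regions $R_x\subseteq V$ with $R_{\mathrm{root}}=V$, a separator $S_x\subseteq R_x$ of size $O(\sqrt{|R_x|})$ that splits $R_x\setminus S_x$ into the regions of the two children of $x$, leaves of size $O(1)$, and --- crucially --- the property that the boundary $\partial R_x$ of $R_x$ in $G$ also has size $O(\sqrt{|R_x|})$. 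Reading off $S_x\cup\partial R_x$ as the bag of node $x$ yields a tree decomposition of $G$ of width $O(\sqrt n)$ in which bag sizes decay geometrically along every root-to-leaf path.

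Next I would apply the mapping advertised in the introduction to convert this tree decomposition into a Clifford circuit that prepares $|G\rangle$, applies $\bigotimes_v U_v$, and measures in the computational basis, but whose gate order follows a postorder traversal of $\mathcal{T}$: a qubit is prepared (its Hadamard, together with the $\CZ$ gates creating $|G\rangle$) when the traversal first enters the subtree of bags containing it; the $\CZ$ for an edge is issued at the first visited bag containing both endpoints; a qubit is measured (after applying its $U_v$) at the node where the traversal leaves that subtree; and the two partial stabilizer computations feeding a join node are merged into a single description of the shared bag. Since a single-qubit Pauli measurement commutes with $\CZ$ gates on disjoint qubits, this reordered circuit has the same output distribution as the na\"ive one, so sampling its output (or reporting $p_\mathcal{P}(m)=0$) solves the graph state simulation problem, postselection included. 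The payoff of the ordering is locality: while the traversal is inside the subtree rooted at $x$, every qubit that has been prepared but not yet measured lies in $S_x\cup\partial R_x$ together with the bags on the path from the root to $x$; by the geometric decay this ``active set'' has size $O(\sqrt n)$, and the operation performed \emph{at} node $x$ touches only $O(\sqrt{|R_x|})$ qubits.

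For the runtime I would account level by level. A prepare-or-$\CZ$ step at node $x$ rewrites $O(\sqrt{|R_x|})$ tableau entries, and batched over a bag it is an $\mathbb{F}_2$ matrix product of dimension $O(\sqrt{|R_x|})$; a measure-or-merge step at node $x$ is a multi-qubit stabilizer operation on $O(\sqrt{|R_x|})$ qubits, which by the subroutine behind \Cref{thm:fastmm} costs $O(|R_x|^{\omega/2})$, the merge in particular reducing to solving a symmetric $\mathbb{F}_2$ linear system whose sparsity pattern is that of the current region, handled by the planar symmetric solver. Level $\ell$ of $\mathcal{T}$ contributes at most $O(2^\ell (n/2^\ell)^{\omega/2}) = O(n^{\omega/2} 2^{\ell(1-\omega/2)})$; since $\omega\ge 2$ the exponent $1-\omega/2$ is non-positive, so summing over the $O(\log n)$ levels telescopes to $O(n^{\omega/2})$, with an extra $O(\log n)$ only in the boundary case $\omega=2$. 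This is the claimed $\widetilde O(n^{\omega/2})$ bound, and specializing $G$ to a subgraph of a grid recovers \Cref{thm:grid_warmup}.

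The main obstacle I expect is the merge step. Unlike a plain left-to-right sweep, a join node forces us to intersect two stabilizer groups that agree on the shared bag, and to do so in matrix-multiplication time \emph{within the small active set} rather than in time scaling with $n$. This is precisely where one needs a fast solver for symmetric --- and possibly singular --- linear systems over $\mathbb{F}_2$ with planar sparsity: earlier nested-dissection solvers assume a nonsingular (or real symmetric positive definite) system, a hypothesis that cannot hold over $\mathbb{F}_2$, so extending the method to the symmetric singular case while preserving the $O(n^{\omega/2})$ bound is the technical heart of the argument. A secondary point to check carefully is that the boundary-balanced separator recursion can itself be computed within the target time and genuinely keeps every active set $O(\sqrt n)$, not merely the one at the root.
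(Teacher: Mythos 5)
Your overall architecture (recursive planar separators, lazy simulation that only ever touches $O(\sqrt{|R_x|})$ qubits per node, fast batched stabilizer measurement, level-by-level cost sum) matches the spirit of the paper, but the proposal leaves unresolved exactly the step that carries the technical weight, and you say so yourself: the join/merge. As written you assume it reduces to a symmetric $\mathbb{F}_2$ linear system with planar sparsity ``handled by the planar symmetric solver.'' No such tool exists to be invoked: prior planar/nested-dissection solvers (Lipton--Rose--Tarjan, Alon--Yuster) require nonsingularity, which cannot be guaranteed over $\mathbb{F}_2$, and the singular planar solver is itself one of this paper's \emph{new} results (\Cref{thm:linearsolver}), obtained as a corollary of the graph-state algorithm rather than used as an ingredient. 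The paper resolves the merge differently: each shared vertex is \emph{replicated}, replicas are reconciled by a CNOT-plus-measurement gadget whose ancilla must be postselected to $0$, and instead of enforcing that postselection (and the postselection on $\mathcal{P}$) during the simulation, it first samples unconstrained outcomes and then computes a uniformly random correcting Pauli $P_{\mathrm{cor}}$ satisfying Eqs.~\eqref{eq:pcor1}--\eqref{eq:pcor2}. Finding $P_{\mathrm{cor}}$ in $\widetilde{O}(\|T\|_\omega^\omega)$ time is done by propagating affine stabilizer subspaces up and back down the tree decomposition (\Cref{thm:powerful_correction}); this is the missing idea in your sketch. Note also a tension in your own description: if, as your decomposition suggests, each vertex lives in exactly one child region and separator vertices are introduced at the parent, then the two children's states occupy disjoint qubits and the ``merge'' is just a tensor product followed by $\CZ$ gates and measurements---no intersection of stabilizer groups arises; if instead you intend shared/duplicated qubits, then you need the gadget-plus-correction machinery you have not supplied. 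Either way the heart of the argument is absent.

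Two secondary points. First, your runtime accounting assumes a boundary-balanced recursion in which every region satisfies $|\partial R_x|=O(\sqrt{|R_x|})$ with geometric decay along each root-to-leaf path; plain Lipton--Tarjan does not give this, and the paper avoids needing it by passing the inherited boundary $U$ into the recursion and proving directly that $\|T\|_2^2=\widetilde{O}(n)$ for the resulting tree decomposition (\Cref{lem:tdnorm}), which is what makes $\widetilde{O}(\|T\|_\omega^\omega)=\widetilde{O}(n^{\omega/2})$ go through in \Cref{thm:sumofcubes}. Second, ``postselection included'' is asserted but not implemented: in the paper the general case with $\mathcal{P}\neq\varnothing$ is precisely what forces the sample-then-correct structure (uniformity of the output is argued via \Cref{claim:existp}), whereas your sketch never says how a recursive simulation certifies $p_{\mathcal{P}}(m)=0$ or samples from the correct conditional distribution.
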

Our work shares inspiration and methodology with the literature on planar linear system solving algorithms. Moreover, one of the key subroutines from our algorithm for graph state simulation provides a new method for solving linear systems in a planar geometry. In particular, unlike previous works \cite{lipton1979generalized, alon2010solving}, our algorithm can be applied to singular linear systems.

\begin{theorem}[Special case of \Cref{thm:linearsolver}]
There exists an algorithm to find a uniformly random solution to $Ay=b \pmod 2$ (or report that no solution exists) in time $\widetilde{O}(n^{\omega/2})$ where $A \in \mathbb F_2^{n \times n}$ is the adjacency matrix of a planar graph.
\label{thm:specialcase}
\end{theorem}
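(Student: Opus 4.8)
The plan is to develop a symmetric, $\mathbb{F}_2$-aware version of generalized nested dissection \cite{lipton1979generalized, alon2010solving}. First I would set up the usual combinatorial scaffolding: apply the planar separator theorem recursively to $G$ --- a separator $C$ of size $O(\sqrt n)$ whose removal leaves pieces of size at most $2n/3$, then recurse on each piece --- and record the resulting separator tree $\mathcal{T}$. Each node $\nu$ carries a subgraph $G_\nu$ on $s_\nu$ vertices, an \emph{interior} set $I_\nu$ (vertices of $G_\nu$ in no ancestor's separator) and a \emph{boundary} set $B_\nu$ of size $O(\sqrt{s_\nu})$ (the vertices of $G_\nu$ that do lie in an ancestor's separator). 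Eliminating $I_\nu$ at node $\nu$, processing $\mathcal{T}$ from leaves to root, is the nested-dissection elimination order; the standard analysis then shows that every vertex, at the moment it is eliminated, has only $O(\sqrt{s_\nu})$ neighbours in the fill-augmented graph, and that all Schur-complement updates can be batched into matrix multiplications of total cost $\widetilde O(n^{\omega/2})$.

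The essential new point is how to perform the elimination itself, since $A$ is symmetric over $\mathbb{F}_2$ with identically zero diagonal (it is an adjacency matrix) and may be singular, so Gaussian elimination along the diagonal is unavailable. Instead I would compute a symmetric block factorization resembling $A = L D L^{T}$ in which $D$ is block-diagonal with $1\times 1$ blocks equal to $1$ and $2\times 2$ blocks equal to $\begin{pmatrix} 0 & 1\\ 1 & 0 \end{pmatrix}$, which is invertible over $\mathbb{F}_2$. Working one node of $\mathcal{T}$ at a time: with $M$ the current working submatrix on the live vertices of $G_\nu$, repeatedly choose a $1\times 1$ pivot $i \in I_\nu$ with $M_{ii}=1$ whenever one exists, and otherwise a $2\times 2$ pivot $\{i,j\}$ with $M_{ij}=1$ (preferring $j \in I_\nu$). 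In each case the Schur complement stays symmetric and fills in only among the common neighbours of $i$ (and $j$), so the $O(\sqrt{s_\nu})$ bound on live neighbourhoods --- and hence the $\widetilde O(n^{\omega/2})$ accounting --- survives, a $2\times 2$ pivot costing only an $O(1)$ factor more than a $1\times 1$ one. An interior vertex whose entire working row has become zero records a consistency constraint $0 = \tilde b_i$ (to be checked at the end) and is marked as a free variable.

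The main obstacle is the case that never occurs in the positive-definite setting: an interior vertex $i \in I_\nu$ that cannot be pivoted within $I_\nu$ (its working row restricted to $I_\nu$ has become zero) but is still joined to $B_\nu$. I would resolve this by permitting a $2\times 2$ pivot $\{i,j\}$ with $i \in I_\nu$ and $j \in B_\nu$, eliminating $i$ at the price of eliminating the boundary vertex $j$ ahead of where the nested-dissection order would place it. The crux --- and the step I expect to require genuine work --- is to show that these early eliminations (equivalently, the boundary vertices consumed or promoted in this way) are few enough and local enough that no live neighbourhood at any node grows beyond $O(\sqrt{s})$ for the appropriate $s$, so that the matrix-multiplication dimensions, and thus the $\widetilde O(n^{\omega/2})$ bound, are unaffected; this is precisely where planarity and the symmetric $\mathbb{F}_2$ structure must be used together. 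Granting this, a bottom-up factorization followed by a top-down back-substitution --- setting free variables to $0$ and reading off each $1\times 1$ and $2\times 2$ pivot equation --- returns a particular solution of $Ax=b$, or reports infeasibility if some recorded consistency constraint fails, all within time $\widetilde O(n^{\omega/2})$.
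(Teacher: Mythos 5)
Your proposal takes a genuinely different route from the paper, but it contains a gap that you yourself flag and then assume away, and that gap is exactly the hard part. The paper does not extend nested dissection at all: it encodes $Ax=b$ as the existence of a stabilizer of the state $\circuit\ket{+^{n_t}}=\ket{G'}$ with pattern $(*_V 0_M,\, b_V *_M)$ (since $\mathrm{Stab}(G')=\{(x,A'x)\}$), and then invokes the tree-decomposition-based correction subroutine of \Cref{thm:powerful_correction}, which propagates affine subspaces of stabilizers up the tree and never has to commit to an elimination order or find pivots; combined with the separator-built tree decomposition (\Cref{lem:tdnorm}, \Cref{lem:tdruntime}) this gives $\widetilde O(\|T\|_\omega^\omega)=\widetilde O(n^{\omega/2})$ even when $A$ is singular, and in fact samples a uniformly random solution.

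The unproven step in your argument is the treatment of an interior vertex $i\in I_\nu$ whose working row has become zero on $I_\nu$ but not on $B_\nu$, which you handle by a $2\times 2$ pivot $\{i,j\}$ with $j\in B_\nu$. Eliminating the boundary vertex $j$ ahead of schedule is not a local perturbation: at the moment node $\nu$ is processed, $j$ (which lies in an ancestor separator) already carries fill from every previously processed subtree incident to it, so its Schur-complement update couples boundary sets of different branches and can make live neighbourhoods grow like the total ancestor boundary, which is $\Theta(\sqrt n)$ rather than $O(\sqrt{s_\nu})$, at many nodes deep in the tree. Nothing in the proposal rules this out; the ``granting this'' sentence is precisely the lemma one would need, and it is exactly the obstruction that Lipton--Rose--Tarjan and Alon--Yuster sidestep by assuming nonsingularity (so that pivots always exist where the elimination order wants them). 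Over $\mathbb F_2$ with zero diagonal and possible singularity there is no known argument that the $1\times1$/$2\times2$ pivoting you describe can always be confined to the interiors, so as written the $\widetilde O(n^{\omega/2})$ accounting does not follow. If you want a proof along your lines you would have to either prove that boundary-crossing pivots can be avoided or charged off (which appears to be open), or switch to the paper's strategy, where the singular case costs nothing extra because one works with the solution \emph{space} (affine subspaces) rather than with a factorization.
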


While the simulation algorithm from \Cref{thm:planargss} does use a divide-and-conquer strategy, we stress that it is not simply obtained using the active set approach---based on a clever ordering of measurements and CZ gates---which we used in the case of the 2D grid.\footnote{In fact, some features of our proof---such as the duplication of qubits---are mirrored in existing planar linear system solvers where the adjacency matrix is sparsified by adding additional rows and columns \cite{alon2010solving}.}

Instead, the divide-and-conquer algorithm from \Cref{thm:planargss} is based on a data structure known as the \emph{tree decomposition} of the graph.  The tree decomposition consists of nodes--each representing a subset of vertices of the original graph--which are connected together in a certain way to form a tree.  Loosely speaking, the more your original graph already resembled a tree, the smaller you can make the subsets at each node (for a formal definition of the tree decomposition, see \Cref{sec:td}).  The size of the largest subset of vertices at a node in the tree decomposition is referred to as the \emph{treewidth}.  Planar graphs are special due to the fact that they admit tree decompositions of width $O(\sqrt{n})$ that can be computed in linear time using the planar separator theorem \cite{Lipton1979}.  

More generally, we ask how the runtime of algorithms for graph state simulation scale with the treewidth of the graph. Here we take our inspiration from pioneering work of Markov and Shi who showed that a general (universal) quantum circuit can be simulated on a classical computer in time exponential in the treewidth of a graph derived in a simple way from the circuit \cite{markov2008simulating}. The simulation is based on viewing the quantum circuit as a tensor network and choosing an ordering to contract the tensor network based on a tree decomposition.  Tensor-network contraction based algorithms along these lines are especially well-suited to the simulation of near-term quantum computers with a two-dimensional qubit architecture \cite{aaronson2016complexity,boixo2017simulation,chen2018classical,villalonga2020establishing}. Of course, our setting is somewhat different as we are interested in polynomial-time algorithms and the special case of Clifford circuits. We describe a parameterized algorithm for the graph state simulation problem for low-treewidth graphs with the following runtime guarantee. 
\begin{restatable}{theorem}{treealg}
There is a classical algorithm which, given an instance of the graph state simulation problem along with a tree decomposition $T$ of $G$ of treewidth $t$ and at most $O(n)$ nodes, outputs a solution in $\widetilde{O}(nt^{\omega-1})$ time.
\label{thm:treegss}
\end{restatable}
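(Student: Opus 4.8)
The plan is to prove \Cref{thm:treegss} by first turning the tree decomposition $T$ into a Clifford circuit that emulates measurement of $|G\rangle$ but never acts on more than $O(t)$ qubits at once, and then simulating that circuit with the help of fast matrix multiplication. As a preprocessing step I would convert $T$ into a \emph{nice} tree decomposition --- in which every node is a leaf, an \emph{introduce} node (one new vertex relative to its child), a \emph{forget} node (one fewer vertex), or a \emph{join} node (two children whose bags equal the node's bag) --- keeping the width $O(t)$ while also controlling the number of join nodes. Rooting the result and fixing an assignment of each edge of $G$ to a single bag containing both endpoints, I would build the circuit by a post-order traversal, maintaining a stabilizer state on a ``live register'' of $\le t+1$ qubits (and pushing/popping the $O(t^2)$-bit tableaux of completed subtrees on a stack at join nodes). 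At a leaf we prepare its qubits in $|+\rangle$ and apply $\CZ$ for the edges assigned to it; at an introduce node for $v$ we prepare a fresh $|+\rangle$ on $v$ and apply $\CZ$ for the incident edges assigned there; at a forget node for $v$ we measure $v$ in basis $P_v$ (apply $U_v$, measure $\pfont{Z}$) --- postselecting on $m_v$ and emitting an error flag if that outcome is impossible when $v\in\mathcal P$, or sampling and recording the outcome in $x$ when $v\in\mathcal S$; and at a join node we fuse the two saved sub-states, which share the labels of the bag, by coupling the two copies of each shared vertex (measuring the $\pfont{X}\pfont{X}$ and $\pfont{Z}\pfont{Z}$ operators and applying the resulting Pauli correction). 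At the root we measure the remaining live qubits and output $x$.

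Correctness I would prove by induction on $T$ in post-order: after processing the subtree $T_x$, the recorded outcomes together with the stabilizer state of the live register $B_x$ have the joint law obtained by preparing the graph state of the subgraph spanned by $T_x$ (with its assigned edges), measuring the already-forgotten vertices in their Pauli bases with the prescribed postselection, and leaving $B_x$ unmeasured. The leaf case is direct; the introduce/forget cases use that single-qubit gates on $v$ commute past measurements on the other qubits, and that since the bags containing $v$ form a connected subtree, every edge incident to a forgotten $v$ has already been applied; the join case is the graph-state fusion identity with the random Pauli corrections absorbed into the signs of the tableau. An impossible postselection is exactly the event $p_{\mathcal P}(m)=0$, and since every random outcome is chosen uniformly, the recorded $x$ is a correct sample from $p_{\mathcal S}(\cdot\mid m)$.

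For the runtime, simulating the circuit node-by-node costs $O(n)$ updates of $\Theta(t^2)$ bits, i.e.\ $O(nt^2)$ --- too slow. To reach $\widetilde O(nt^{\omega-1})$ I would \emph{batch}: consecutive introduce nodes merely accumulate their $|+\rangle$'s and symmetric $\CZ$-layers, which are flushed to the tableau only before a forget or join, or once $\Theta(t)$ new qubits accumulate --- each flush being an accumulated symplectic update of an $O(t)$-qubit tableau, a constant number of $\mathbb F_2$ matrix products, $\widetilde O(t^\omega)$; forget nodes are processed in batches of $\Theta(t)$ qubits using the matrix-multiplication-time multi-qubit stabilizer-measurement subroutine underlying \Cref{thm:fastmm} (see \Cref{sec:matrix_mult}), which handles such a batch on an $O(t)$-qubit state in $\widetilde O(t^\omega)$; and each join is one fusion on $O(t)$ qubits, again $\widetilde O(t^\omega)$. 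Since the $n$ forget events (and, after the restructuring, $\widetilde O(n)$ introduce events) are thereby partitioned into $\widetilde O(n/t)$ batches, these contribute $\widetilde O(n/t)\cdot\widetilde O(t^\omega)=\widetilde O(nt^{\omega-1})$. I expect the main obstacle to be the remaining cost --- one fusion plus at most one forced flush per join node --- which is $\widetilde O(nt^{\omega-1})$ only if $T$ can be restructured to have $\widetilde O(n/t)$ join nodes while preserving width $O(t)$; carrying out this restructuring (or an amortized substitute) so that the matrix-multiplication calls cleanly telescope, and verifying the fusion/sign bookkeeping at join nodes, are the parts I would expect to require the most care.
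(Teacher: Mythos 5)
Your overall architecture (nice tree decomposition $\to$ bag-sized stabilizer registers fused up the tree, with the multi-qubit measurement subroutine of \Cref{sec:matrix_mult} applied to $O(t)$-qubit states) matches the paper's, but there is a genuine gap in how you handle the non-unitary events, and it is exactly the point where the paper's proof does its main work. You resolve postselected qubits locally: at a forget node for $v\in\mathcal P$ you force the outcome $m_v$ and ``emit an error flag if that outcome is impossible.'' This is incorrect. The qubits in $\mathcal S$ that were forgotten earlier in the traversal were sampled \emph{without} conditioning on the postselections that occur later, so a deterministic disagreement at a later forget node may be an artifact of those earlier free samples rather than evidence that $p_{\mathcal P}(m)=0$. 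Concretely, for the single-edge graph state with $A$ measured in $\pfont Z$ and $B$ in $\pfont X$, the valid outcome set is $\{00,11\}$; with $\mathcal P=\{B\}$, $m_B=1$ and the forget order $A$ then $B$, your algorithm raises the error flag with probability $1/2$ even though $p_{\mathcal P}(m)=1/2\neq 0$ (and retrying is not an option, since with many such components the acceptance probability is exponentially small). The same problem appears at your join nodes: when the fusion outcome is ``wrong,'' the required correction is a product of $\pfont Z$'s over the neighbours of the fused vertex inside one subtree, and those neighbours have typically already been measured and their outcomes emitted; the correction therefore cannot be ``absorbed into the signs of the tableau'' of the live register, and when some of those neighbours are postselected it cannot be absorbed by flipping recorded bits either. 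The paper avoids all of this by never conditioning during the simulation: it measures every merge ancilla and data qubit freely (the sampling subroutine), and then computes a \emph{global}, uniformly random correcting stabilizer $P_{\mathrm{cor}}$ satisfying Eqs.~(\ref{eq:pcor1},\ref{eq:pcor2}) via the affine-stabilizer-subspace machinery of \Cref{thm:powerful_correction}, declaring $p_{\mathcal P}(m)=0$ only if no such stabilizer exists. That correction subroutine (or an equivalent) is the missing ingredient in your proposal; without it the error-flag semantics and the conditional distribution are wrong.

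Two secondary points. First, the obstacle you flag at the end --- restructuring $T$ so that there are only $\widetilde O(n/t)$ join nodes while keeping width $O(t)$ --- is real but resolvable: this is precisely \Cref{thm:to_nice} (via \Cref{lem:td3}), which contracts a nice tree decomposition to $O(n/t)$ nodes of width $O(t)$ in $O(\lVert T\rVert_1)$ time, after which the per-node $\widetilde O(t^\omega)$ cost sums to $\widetilde O(nt^{\omega-1})$ without any batching gymnastics. Second, your join gadget as literally described (measuring both $\pfont X\pfont X$ and $\pfont Z\pfont Z$ on the two replicas) is a full Bell measurement that consumes both copies, leaving no live qubit for the shared vertex, which must survive into the parent bag where further $\CZ$'s and its own basis measurement still await; the paper's merge gadget (CNOT onto a merge ancilla followed by a single $\pfont Z$-basis measurement) keeps one replica alive, and its ``wrong outcome'' branch is again deferred to the global correction rather than fixed on the spot.
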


We expect \Cref{thm:treegss} to be most useful for families of graphs where a tree decomposition can be computed efficiently. Computing the treewidth of a graph is an \NP-hard problem \cite{arnborg1987complexity}, and all known classical algorithms for finding a tree decomposition whose width is within a constant factor of optimal require exponential time in the worst case \cite{bodlaender1993tourist}.\footnote{Markov and Shi's algorithm does not require a tree decomposition to be given because its runtime is exponential in the treewidth, and constructing a suitable tree decomposition from scratch incurs no additional asymptotic cost \cite{bodlaender}.  In contrast, our algorithms are polynomial in the treewidth.}

\Cref{thm:planargss} and \Cref{thm:treegss} are both obtained as special cases of a general classical simulation algorithm, which is described in detail in \Cref{sec:gss}. At the core of this algorithm is a tree-decomposition-to-circuit mapping which takes a graph $G$ along with a tree decomposition $T$ and outputs a Clifford circuit that solves the graph state simulation problem on $G$ and has a structure similar to that of $T$. This structure allows us to simulate the circuit in a lazy way, only initializing qubits and applying gates as needed. Just like in the active set approach described above, this reduces the (dominant) cost of measurements in the classical simulation algorithm, such that when we measure a given qubit, the stabilizer state we are measuring has as few qubits as possible. The circuit begins with a quantum state which is a tensor product of few-qubit graph states, one for each leaf node of the tree decomposition. Since a given vertex of the graph may appear in multiple nodes of its tree decomposition, this initial state may have several replica qubits corresponding to each data qubit of the graph state to be simulated. This replication of data qubits is an essential difference between our approach and the active set approach. We will exploit the fact that it is possible to replicate data qubits and operate on the replicas before merging them later in the simulation. During the course of the circuit, gates are applied, qubits are measured, and replica qubits are merged together as needed using a gadget composed of Clifford gates and postselection.   We exhibit a classical simulation of this Clifford circuit with a runtime which is a simple function of the tree decomposition. We show that this runtime is upper bounded by \Cref{thm:treegss} in the general case, and we also provide a sharper upper bound as stated in \Cref{thm:planargss} for planar graphs.

There are several open questions related to our work.  As far as we know, the graph state simulation problem without postselection on general bounded-degree graphs remains a compelling candidate for a polynomial quantum speedup. Can the classical algorithms for the latter problem be improved, or conversely is it possible to strengthen the evidence for a quantum speedup? Another direction for future work is to use our techniques in conjunction with stabilizer-rank based classical simulation algorithms for universal quantum circuits \cite{bravyi2016trading,bravyi2016improved, bravyi2019simulation}. Roughly speaking, such algorithms are based on representing the quantum state of a given non-Clifford circuit as a superposition of stabilizer states which can each be simulated using the Gottesman-Knill theorem. For this application one requires a stronger form of classical simulation---going beyond the standard tableau representation \cite{aaronson+gottesman:2004}---which keeps track of the global phase of the stabilizer states, since they appear in superposition. Such phase-sensitive Clifford simulators are described in Refs.~\cite{nest2008classical,bravyi2016improved,bravyi2019simulation}. Recent work \cite{kerznerthesis} shows that our results are compatible with stabilizer rank methods, however the need for phase-sensitivity precludes the use of the fast matrix multiplication methods in Section \ref{sec:matrix_mult}.

The remainder of the paper is organized as follows. In \Cref{sec:appl} we describe applications of \Cref{thm:planargss} to the simulation of constant-depth Clifford circuits on planar graphs and an extension to more general Clifford tensor networks. In \Cref{sec:prelim} we discuss subroutines for the classical simulation of stabilizer states using the standard tableau representation \cite{aaronson+gottesman:2004}, and we describe improved subroutines for multi-qubit measurements using fast matrix multiplication. This section also reviews relevant background information regarding tree decompositions and treewidth. We describe the tree-decomposition-to-circuit mapping and prove  \Cref{thm:treegss} in \Cref{sec:gss}. Finally, in \Cref{sec:plan} we give a sharper analysis of the runtime of our algorithm when applied to tree decompositions that are computed using the planar separator theorem \cite{Lipton1979} and we prove \Cref{thm:planargss}. The proof of \Cref{thm:fastmm} is provided in Section \ref{sec:matrix_mult}.

\begin{center}
\parbox{\textwidth}{
\begin{mdframed}[linewidth=0.75pt, skipabove=2cm, skipbelow=2cm, frametitle={Example: recursive algorithm for 2D grid graphs \label{ex:grid_warmup}}]

Here we describe a simple recursive classical algorithm which attains the runtime bound claimed in \Cref{thm:grid_warmup} for the graph state simulation problem on an $\ell\times \ell$ grid graph. The \textit{main subroutine} measures (or postselects) qubits in the interior of the subgrid in the given bases, and outputs the measurement outcomes along with the post-measurement state on an ``active set" consisting of the $4\ell-4$ perimeter vertices. The key insight is that the main subroutine can be performed by recursively computing the outcomes and post-measurement states for 4 copies of the  $\lfloor \ell/2\rfloor-1 \times \lfloor \ell/2\rfloor-1 $ grid which are completely contained in the interior of the  $\ell\times \ell$ grid, separately preparing the stabilizer state on the $O(\ell)$ remaining qubits (including the perimeter), applying the $O(\ell)$ $\CZ$ gates which connect them together, and then measuring all qubits except those on the perimeter. The algorithm is illustrated schematically in \Cref{fig:2d}. Letting $f(\ell)$ denote the runtime of this algorithm, we have $f(1), f(2)=O(1)$ and $f(\ell) = 4f(\ell/2) + \widetilde{O}(\ell^\omega)$ for $\ell\geq 3$. Here the first term contains the runtime of the four recursive calls, while the second term contains the runtime required for the remaining steps of the algorithm which only involve preparation and measurement of stabilizer states of $O(\ell)$ qubits. The solution to the recurrence is $f(\ell) = \widetilde{O}(\ell^\omega) = \widetilde{O}(n^{\omega/2})$. In the very last stage of the algorithm the qubits on the perimeter are measured, incurring an additional $O(\ell^\omega)$ cost.
\vspace{1ex}

\captionsetup{type=figure}
\begin{center}
\begin{tikzpicture}[scale=0.5]

  \foreach \x in {1,...,3}
{
    \foreach \y in {1,...,2}  
{
      \draw[thick] (\x,\y)--(\x,\y+1) (\y,\x)--(\y+1,\x) ;
}
}

  \foreach \x in {5,...,7}
{
    \foreach \y in {1,...,2}  
{
      \draw[thick]  (\x,\y)--(\x,\y+1) (\y,\x)--(\y+1,\x) ;
}
}

  \foreach \x in {1,...,3}
{
    \foreach \y in {5,...,6}  
{
      \draw[thick]  (\x,\y)--(\x,\y+1) (\y,\x)--(\y+1,\x) ;
}
}

  \foreach \x in {5,...,7}
{
    \foreach \y in {5,...,6}  
{
      \draw[thick]  (\x,\y)--(\x,\y+1) (\y,\x)--(\y+1,\x) ;
}
}

  \foreach \x in {0,4,8}
{
    \foreach \y in {0,...,7}  
{
      \draw[thick]  (\x,\y)--(\x,\y+1) (\y,\x)--(\y+1,\x) ;
}
}
\foreach \i in {0,...,8}
{
\foreach \j in {0,...,8}
{
	\node[circle,fill=red!55!white, inner sep=0pt, minimum size=6]
        () at (\i,\j) {};
}
}
\node[circle,fill=white!40!blue, inner sep=0pt, minimum size=6]() at (2,2){};
\node[circle,fill=white!40!blue, inner sep=0pt, minimum size=6]() at (6,2) {};
\node[circle,fill=white!40!blue, inner sep=0pt, minimum size=6]() at (6,6) {};
\node[circle,fill=white!40!blue, inner sep=0pt, minimum size=6]() at (2,6) {};
\end{tikzpicture}
\qquad
\begin{tikzpicture}[scale=0.5]

  \foreach \x in {0,...,8}
{
    \foreach \y in {0,...,7}  
{
      \draw[thick] (\x,\y)--(\x,\y+1) (\y,\x)--(\y+1,\x) ;
}
}

\foreach \i in {0,...,8}
{
\foreach \j in {0,...,8}
{
	\node[circle,fill=red!55!white, inner sep=0pt, minimum size=6]
        () at (\i,\j) {};
}
}
\node[circle,fill=white!40!blue, inner sep=0pt, minimum size=6]() at (2,2){};
\node[circle,fill=white!40!blue, inner sep=0pt, minimum size=6]() at (6,2) {};
\node[circle,fill=white!40!blue, inner sep=0pt, minimum size=6]() at (6,6) {};
\node[circle,fill=white!40!blue, inner sep=0pt, minimum size=6]() at (2,6) {};
\end{tikzpicture}
\qquad 
\begin{tikzpicture}[scale=0.5]

  \foreach \x in {0,...,8}
{
    \foreach \y in {0,...,7}  
{
      \draw[thick] (\x,\y)--(\x,\y+1) (\y,\x)--(\y+1,\x) ;
}
}

\foreach \i in {1,...,7}
{
\foreach \j in {1,...,7}
{
	\node[circle,fill=white!40!blue, inner sep=0pt, minimum size=6]
        () at (\i,\j) {};
}
}

\foreach \i in {0,...,8}
{
\foreach \j in {0,8}
{
	\node[circle,fill=red!55!white, inner sep=0pt, minimum size=6]
        () at (\i,\j) {};
}
}

\foreach \i in {0,8}
{
\foreach \j in {1,...,7}
{
	\node[circle,fill=red!55!white, inner sep=0pt, minimum size=6]
        () at (\i,\j) {};
}
}
\end{tikzpicture}

\end{center}

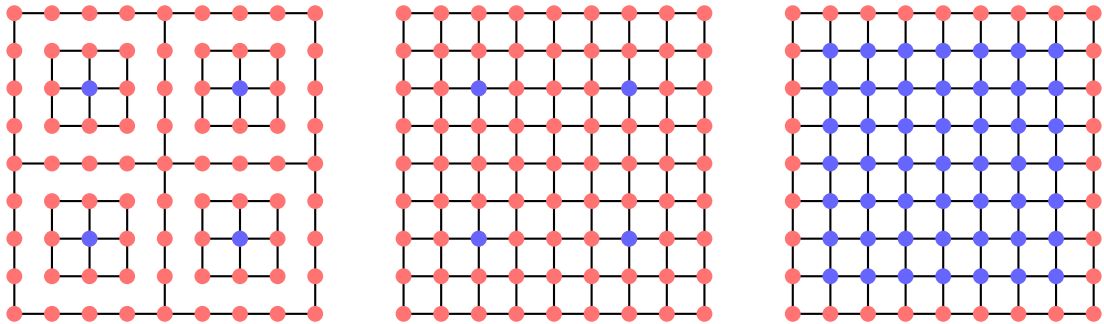
\captionof{figure}{Depiction of the recursive algorithm for graph state simulation on a 2D grid. Blue vertices are measured and red are unmeasured. (a) Quantum graph states corresponding to the 5 connected components of this graph are prepared and the qubits at the blue vertices are measured. (b) $\CZ$ gates are applied to connect them together. (c) Single-qubit measurements are performed on all qubits corresponding to red vertices not on the perimeter.} 
\label{fig:2d}
\end{mdframed}
}
\end{center}

\subsection{Applications \label{sec:appl}}

In this section, we describe how our algorithm for the graph state simulation problem can be leveraged to solve other more traditional circuit simulation tasks (proofs and more precise statements in \Cref{app:applications}).  In the most general setting, we prove an analog of the Markov and Shi tensor network contraction algorithm for Clifford tensor networks with planar geometry:

\begin{theorem}[informal]
\label{thm:clifford_tensor}
A random non-zero element from a planar Clifford tensor network can be sampled in $\widetilde{O}(n^{\omega/2})$ time.
\end{theorem}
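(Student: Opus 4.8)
The plan is to reduce the planar Clifford tensor network simulation problem to the postselected graph state simulation problem on planar graphs, which is solved by Theorem~\ref{thm:planargss}, via the standard dictionary between stabilizer tensors, graph states, and measurement-based quantum computation. The first observation is that the contraction of any Clifford tensor network with some legs left open is a scalar multiple of a stabilizer state on the open legs (the class of unnormalized stabilizer states is closed under tensor products and under Bell-type projections, which is what bond contraction does for qubit bonds), or the zero tensor. Since every stabilizer state has uniform amplitude magnitudes supported on an affine subspace, ``a random non-zero element'' is precisely a sample from the Born distribution of that state; so it suffices to classically simulate a Clifford circuit that prepares this state and then measures the open legs. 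The zero case is reported as an error flag, exactly as in the graph state simulation problem.

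Next I would make this circuit explicit. Bending all legs of each tensor downward turns it into a stabilizer state, hence a local-Clifford image of a few-qubit graph state; I would normalize the network so that every tensor is a graph-state tensor, pushing the single-qubit Cliffords onto the bonds or into short ``wire'' gadgets. Each bond then becomes the projection onto the unnormalized Bell state $\sum_k \ket{kk}$, implemented by one \CZ-type two-qubit Clifford followed by two postselected single-qubit Pauli measurements. Applying the MBQC reduction converts the whole object --- a tensor product of few-qubit graph states, two-qubit Clifford gates implementing the bonds, and postselected single-qubit Pauli measurements --- into one larger graph state on which we perform only single-qubit Pauli measurements, leaving the open legs unmeasured. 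This is an instance of the graph state simulation problem with postselection, and a uniformly random non-zero element of the contracted tensor is exactly a sample from its output distribution. The analogous statement for bounded-treewidth (rather than planar) geometry would follow the same way from Theorem~\ref{thm:treegss}.

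The crux is that this reduction can be carried out so that planarity is preserved with only linear blow-up in size, so that Theorem~\ref{thm:planargss} gives the claimed $\widetilde{O}(n^{\omega/2})$ runtime. I would first reduce to the bounded-degree case by splitting each high-degree tensor into a small planar network of constant-degree stabilizer tensors along a tree, which changes the total description size by at most a constant factor and keeps the geometry planar. Then, in a fixed planar embedding of the network, I would replace each constant-size tensor by a constant-size \emph{planar} graph-state gadget whose boundary legs occur in the cyclic order dictated by the embedding, and replace each bond by a constant-size planar gadget glued along the corresponding edge; gluing all the gadgets along the embedding yields a planar graph of size $O(n)$, and running the algorithm of Theorem~\ref{thm:planargss} on it finishes the proof.

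I expect the main obstacle to be exactly this planarity bookkeeping: a generic few-qubit stabilizer state need not be a planar graph state, and even when it is, its graph need not be drawable with the distinguished legs in the prescribed cyclic order. The resolution is that a crossing between two \CZ edges can be removed by a local planar gadget --- routing one edge through a short chain of auxiliary qubits measured in the $\ket{\pm}$ basis, which implements a long-range \CZ --- so every constant-size stabilizer tensor does admit a constant-size planar graph-state realization with an arbitrary prescribed cyclic boundary order; the remaining routing of bond gadgets in the planar embedding is then routine. A secondary point to check is that absorbing the local Cliffords at shared legs and performing the postselections that implement the bonds does not disturb the ``uniform on the support'' structure; this holds because postselection only conditions the stabilizer distribution, which stays uniform on a (possibly smaller, possibly empty) affine subspace.
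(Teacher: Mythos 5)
Your reduction is, at its core, the same as the paper's: view each bond contraction as a Bell projection, rewrite each elementary tensor as local Cliffords acting on a small graph state (the paper cites Van den Nest et al.\ for this), turn the Bell projections into $\CZ$ edges plus postselected single-qubit stabilizer projections (using a Hadamard gadget, i.e.\ one extra vertex per bond, exactly when the accumulated local Clifford on a bond contains a Hadamard), and thereby obtain an instance of the graph state simulation problem \emph{with} postselection of size $O(n)$. Where you genuinely diverge is in how planarity is restored. You propose to produce an honestly planar graph by fixing a planar embedding, building for every constant-size tensor a constant-size \emph{planar} graph-state gadget whose boundary legs appear in the prescribed cyclic order, and eliminating crossings with auxiliary-qubit gadgets. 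The paper instead never planarizes: it observes that the graph $G_{\mathcal T}$ it builds is $O(1)$-coarse-grained planar (collapse each tensor's few-qubit graph to the tensor's vertex in the underlying planar graph of the network, and map each bond's Hadamard-gadget vertex to the corresponding wire vertex), and then invokes \Cref{thm:coarse}, whose proof simply pulls back a planar tree decomposition through the coarse-graining map (\Cref{lem:tree_decomp_homo}) at a cost of $r^\omega=O(1)$. The coarse-graining route buys exactly what you identify as your main obstacle: there is no need to prove that every constant-size stabilizer tensor admits a planar graph-state realization with a prescribed cyclic boundary order, nor to verify a crossing-removal gadget. Your route can very likely be completed (constant-size planar patches of cluster state with postselected Pauli measurements realize arbitrary constant-size Clifford tensors, and crossings can be handled this way), but as written that step is asserted rather than proved, and the specific gadget you name --- routing an edge through a chain of ancillas measured in the $\pfont X$ basis to obtain a long-range $\CZ$ --- needs care (the clean graph-state rewriting rules use $\pfont Y$-measurements/local complementation, and one must track the induced local Cliffords and check the postselected branch has nonzero probability). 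Also note that your preliminary degree-reduction step is unnecessary: the problem statement already assumes each elementary tensor has constant rank. So: correct overall strategy and the same reduction target (postselected graph state simulation), but a more laborious planarization argument where the paper's coarse-graining theorem does the work for free.
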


Any $n$-qubit quantum state $\ket{\psi}$ can be represented by a multidimensional matrix (or \emph{tensor}) indexed by $\{0,1\}^n$, whose $(x_1,\ldots,x_n)$th entry is $\braket{\psi|x_1 \ldots x_n}$. Such tensors are sometimes written as a \emph{tensor network}, where individual tensors are combined together by a generalization of matrix multiplication to describe the global tensor. Sampling from a tensor network refers to sampling from the state that the tensor network represents. A \emph{Clifford tensor network} (see \Cref{app:applications} for a formal definition) is a tensor network whose individual tensors all represent stabilizer states. The tensors making up a tensor network often have operational meanings such as applying a gate or projecting onto a subspace. These operational meanings illustrate the expressiveness of Clifford tensor networks, which capture unitary operations (such as composition of Clifford gates) and non-unitary operations (such as projection onto stabilizer states). In fact, our theorem for Clifford tensor networks immediately gives an algorithm for the simulation of planar Clifford circuits:

\begin{theorem}[informal]
\label{thm:clifsim}
Let $C$ be an $n$-qubit, depth-$d$ Clifford circuit with qubits located at the vertices of a planar graph and two-qubit gates acting on edges of the graph. A random $n$-bit measurement outcome of the state $C \ket{0^n}$ can be sampled in time $\widetilde{O}(n^{\omega/2} d^\omega)$.
\end{theorem}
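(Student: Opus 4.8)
\emph{Proof proposal.} The plan is to reduce sampling a computational-basis measurement of $C\ket{0^n}$ to the graph state simulation problem on a graph $\hat G$ obtained by ``thickening'' the planar interaction graph $G$ in the time direction, and then to run the general algorithm behind \Cref{thm:planargss} and \Cref{thm:treegss} on a tree decomposition of $\hat G$ inherited from a planar-separator tree decomposition of $G$. Note that naively invoking \Cref{thm:treegss} with treewidth $O(d\sqrt n)$ and $O(nd)$ nodes would only give $\widetilde O(d^\omega n^{(\omega+1)/2})$, and $\hat G$ is not planar for $d\ge 2$, so we must instead reuse the sharper, ``shrinking separator'' analysis of \Cref{sec:plan}.

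First I would set up a measurement-based / gate-teleportation reduction. Writing $C=C_d\cdots C_1$ as a product of $d$ layers, each layer a disjoint union of one-qubit Cliffords ($H$, $S$) and $\CZ$ gates on edges of $G$, standard MBQC gadgets let us replace $C\ket{0^n}$, measured in the computational basis, by a graph state $\ket{\hat G}$ together with a choice of single-qubit Pauli measurement bases (and possibly a few postselected ancillas, exactly as in the merging gadgets of \Cref{sec:gss}), such that a sample of the measurement of $\ket{\hat G}$ determines the desired sample of $C\ket{0^n}$ after an affine classical post-processing step that propagates the Pauli byproduct operators; since every gate is Clifford this post-processing is non-adaptive and runs in $O(nd)$ time. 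The key structural feature to verify is that $\hat G$ can be built so that (i) $|V(\hat G)|=O(nd)$; (ii) each vertex of $\hat G$ is labelled by a pair $(v,i)\in V(G)\times[d]$ (its ``worldline position''); and (iii) every edge of $\hat G$ either joins $(v,i)$ to $(v,i{+}1)$ or joins $(u,i)$ to $(v,j)$ with $\{u,v\}\in E(G)$. This is the step I expect to be the main obstacle: it is essentially a locality bookkeeping argument, and care is needed so that the ancillas introduced by the teleportation and merging gadgets do not create long-range edges that would spoil the separator structure below.

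Granting such a $\hat G$, the remaining analysis is routine. Take a tree decomposition $(T,\{B_t\})$ of $G$ computed from the planar separator theorem, so that $T$ has $O(n)$ nodes and the bag sizes shrink in the balanced way that makes the runtime function of \Cref{sec:plan} evaluate to $\widetilde O(n^{\omega/2})$. Thicken it: on the same tree $T$, put $\hat B_t:=\{(v,i):v\in B_t,\ i\in[d]\}$. Properties (i)--(iii) of $\hat G$ immediately imply that $(T,\{\hat B_t\})$ is a valid tree decomposition of $\hat G$, and every bag has grown by exactly a factor of $d$, so $|\hat B_t|=O(d|B_t|)$ and the treewidth of $\hat G$ along this decomposition is $O(d\sqrt n)$. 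Running the general algorithm of \Cref{sec:gss} on $(T,\{\hat B_t\})$ with $\hat G$ and the MBQC bases, the very same recursion analyzed in \Cref{sec:plan} now has every bag-size term scaled by $d$, hence every local cost scaled by $d^\omega$; the recurrence $T(m)\le \sum_j T(m_j)+\widetilde O((d\sqrt m)^\omega)$ with $\sum_j m_j\le m$ and $m_j\le cm$ for a constant $c<1$ (here $m$ counts $G$-vertices in a piece) solves to $\widetilde O(d^\omega m^{\omega/2})$, while the $O(n)$ leaves contribute only $\widetilde O(n d^\omega)$. Since $\omega\ge 2$, the top-level term dominates and the total is $\widetilde O(d^\omega n^{\omega/2})$, as claimed; the $O(nd)$ classical post-processing then turns the $\ket{\hat G}$-measurement sample into the desired sample of $C\ket{0^n}$. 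The same argument handles Pauli-basis final measurements and postselection with no change in the exponent.

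One can alternatively package steps one and three as a single invocation of \Cref{thm:clifford_tensor}: the circuit $C$ together with its final measurement is a Clifford tensor network whose underlying graph is exactly the thickened graph $\hat G$, and although $\hat G$ is nonplanar for $d\ge 2$, the proof of \Cref{thm:clifford_tensor} in \Cref{sec:plan} goes through verbatim on any tree decomposition, so feeding it the thickened decomposition $(T,\{\hat B_t\})$ yields the stated bound. I would present whichever of the two framings makes the ancilla-locality claim cleanest.
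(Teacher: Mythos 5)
Your proposal is correct and follows essentially the same route as the paper: there, each (middle) Hadamard of $C$ is replaced by the Hadamard gadget, producing a graph state whose vertices are exactly your worldline pairs $(v,i)$, the Pauli byproducts are propagated classically in $O(nd)$ time, and the runtime is obtained by lifting the planar-separator tree decomposition of $G$ through the $(d+O(1))$-coarse-graining $(v,i)\mapsto v$ (\Cref{thm:coarse} together with \Cref{lem:tree_decomp_homo}), which is precisely your thickened-bag decomposition with $\|\varphi^{-1}(T_G)\|_\omega^\omega \le r^\omega\|T_G\|_\omega^\omega$. The locality bookkeeping you flag as the main obstacle is handled concretely in the paper by conjugating $C$ with layers of Hadamards so that the residual circuit contains only CZ and $S$ gates, making the graph-state form, the worldline edge structure, and the postselection-free reduction immediate.
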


While this result can be obtained as a corollary to \Cref{thm:clifford_tensor}, in \Cref{sec:planar_circuit_application} we provide a proof which directly reduces the Clifford circuit simulation task to the graph state simulation problem \textit{without postselection}. Although our algorithms for graph state simulation with and without postselection have the same asymptotic running time, their practical running times are likely to be quite different due to the increased complexity associated with postselection. Indeed, in \Cref{sec:gss} we will provide a self-contained algorithm for graph state simulation without postselection which avoids many of the linear algebra subroutines used in the more general case. This allows one to simulate constant-depth Clifford circuits in a planar geometry in a much more straightforward manner.

We note that the two theorems described above are not an exhaustive list of applications of graph state simulation.  There are many possible variations of Clifford simulation problems one might want to solve that would follow from tweaking the proofs of the theorems above.  For example, \Cref{thm:clifsim} applies when the qubits of a Clifford circuit are arranged on the vertices of a planar graph, but one could get a similar result for circuits where the \emph{gates} are placed at the vertices of a planar graph with edges between gates if the output of one gate is fed into another gate (e.g., a Clifford circuit with gates acting only on neighboring qubits).\footnote{In more detail, the underlying graph is as follows:  each gate/wire of the circuit is a vertex; and there is an edge between two vertices when the wire is an input/output of the gate.}  Furthermore, planar graphs are not the only graphs which admit efficiently computable tree decompositions.  For example, the graph state sampling problem for the $n^{\frac{1}{3}} \times n^{\frac{1}{3}} \times n^{\frac{1}{3}}$ cube can be solved in $\widetilde O(n^{\frac{2}{3}\omega})$ time\footnote{In fact, such an algorithm follows from the same sort of recursive algorithm used in the proof of \Cref{thm:grid_warmup}: split the cube into 8 identical subcubes, recursively measure their interiors, and then measure the boundary.  This leads to the recursion $f(n) = 8f(n/8) + \widetilde{O}(n^{\frac{2}{3}\omega})$.  Once again, the computation at the root of the recursion dominates, so we get $f(n) = \widetilde{O}(n^{\frac{2}{3}\omega})$ runtime.} using a tree decomposition for the cube of width $O(n^{2/3})$, and so we could obtain similar applications for Clifford circuits where the gates/qubits are arranged in a 3D grid.

Finally, we note that the graph state simulation algorithm for planar graphs still suffices when the graph is close to planar in some sense.  To this end, we introduce the notion of coarse-graining a graph (see \Cref{fig:coarse_graining}).  We say that a graph $G'$ can be coarse-grained to another graph $G$ if there exists a map $\varphi \colon V(G')\rightarrow V(G)$ such that for any edge $\{u,v\}\in E(G')$ we have either (a) $\{\varphi(u),\varphi(v)\}\in E(G)$, or (b) $\varphi(u)=\varphi(v)$.\footnote{In other words, a graph homomorphism modulo loops, since graph states are typically defined only on loopless graphs.}
We say that $\varphi$ is an \emph{$r$-coarse-graining} from $G'$ to $G$ if each vertex of $G$ is the image of at most $r$ vertices of $G'$.

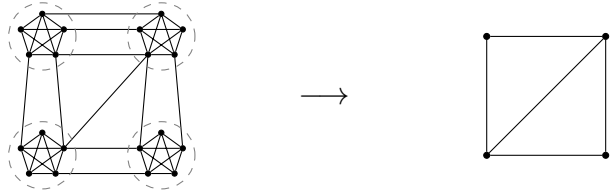
\begin{figure}[h!]
\centering
\begin{subfigure}[r]{.3\textwidth}
\centering
\begin{tikzpicture}
\pgfmathsetmacro{\dotradius}{.3}
\pgfmathsetmacro{\circleradius}{9}
\pgfmathsetmacro{\height}{45}
\pgfmathsetmacro{\width}{45}
\tikzstyle{dot}=[circle, fill=black, draw, inner sep=.7pt]
\tikzstyle{graycircle}=[draw, dashed, circle, color=gray, inner sep=\circleradius]

    \node[dot] at ({\dotradius * cos((pi/2 + 0*2*pi/5) r)}, {\dotradius * sin((pi/2 + 0*2*pi/5) r)}) (a1) {};
    \node[dot] at ({\dotradius * cos((pi/2 + 1*2*pi/5) r)}, {\dotradius * sin((pi/2 + 1*2*pi/5) r)}) (a2) {};
    \node[dot] at ({\dotradius * cos((pi/2 + 2*2*pi/5) r)}, {\dotradius * sin((pi/2 + 2*2*pi/5) r)}) (a3) {};
    \node[dot] at ({\dotradius * cos((pi/2 + 3*2*pi/5) r)}, {\dotradius * sin((pi/2 + 3*2*pi/5) r)}) (a4) {};
    \node[dot] at ({\dotradius * cos((pi/2 + 4*2*pi/5) r)}, {\dotradius * sin((pi/2 + 4*2*pi/5) r)}) (a5) {};
    
    \node[dot, yshift=\height] at ({\dotradius * cos((pi/2 + 0*2*pi/5) r)}, {\dotradius * sin((pi/2 + 0*2*pi/5) r)}) (b1) {};
    \node[dot, yshift=\height] at ({\dotradius * cos((pi/2 + 1*2*pi/5) r)}, {\dotradius * sin((pi/2 + 1*2*pi/5) r)}) (b2) {};
    \node[dot, yshift=\height] at ({\dotradius * cos((pi/2 + 2*2*pi/5) r)}, {\dotradius * sin((pi/2 + 2*2*pi/5) r)}) (b3) {};
    \node[dot, yshift=\height] at ({\dotradius * cos((pi/2 + 3*2*pi/5) r)}, {\dotradius * sin((pi/2 + 3*2*pi/5) r)}) (b4) {};
    \node[dot, yshift=\height] at ({\dotradius * cos((pi/2 + 4*2*pi/5) r)}, {\dotradius * sin((pi/2 + 4*2*pi/5) r)}) (b5) {};

    \node[dot, xshift=\width] at ({\dotradius * cos((pi/2 + 0*2*pi/5) r)}, {\dotradius * sin((pi/2 + 0*2*pi/5) r)}) (c1) {};
    \node[dot, xshift=\width] at ({\dotradius * cos((pi/2 + 1*2*pi/5) r)}, {\dotradius * sin((pi/2 + 1*2*pi/5) r)}) (c2) {};
    \node[dot, xshift=\width] at ({\dotradius * cos((pi/2 + 2*2*pi/5) r)}, {\dotradius * sin((pi/2 + 2*2*pi/5) r)}) (c3) {};
    \node[dot, xshift=\width] at ({\dotradius * cos((pi/2 + 3*2*pi/5) r)}, {\dotradius * sin((pi/2 + 3*2*pi/5) r)}) (c4) {};
    \node[dot, xshift=\width] at ({\dotradius * cos((pi/2 + 4*2*pi/5) r)}, {\dotradius * sin((pi/2 + 4*2*pi/5) r)}) (c5) {};
    
    \node[dot, xshift=\width, yshift=\height] at ({\dotradius * cos((pi/2 + 0*2*pi/5) r)}, {\dotradius * sin((pi/2 + 0*2*pi/5) r)}) (d1) {};
    \node[dot, xshift=\width, yshift=\height] at ({\dotradius * cos((pi/2 + 1*2*pi/5) r)}, {\dotradius * sin((pi/2 + 1*2*pi/5) r)}) (d2) {};
    \node[dot, xshift=\width, yshift=\height] at ({\dotradius * cos((pi/2 + 2*2*pi/5) r)}, {\dotradius * sin((pi/2 + 2*2*pi/5) r)}) (d3) {};
    \node[dot, xshift=\width, yshift=\height] at ({\dotradius * cos((pi/2 + 3*2*pi/5) r)}, {\dotradius * sin((pi/2 + 3*2*pi/5) r)}) (d4) {};
    \node[dot, xshift=\width, yshift=\height] at ({\dotradius * cos((pi/2 + 4*2*pi/5) r)}, {\dotradius * sin((pi/2 + 4*2*pi/5) r)}) (d5) {};

    \draw (a1)--(a2)--(a3)--(a4)--(a5)--(a1)--(a3)--(a5)--(a2)--(a4)--(a1);
    \draw (b1)--(b2)--(b3)--(b4)--(b5)--(b1)--(b3)--(b5)--(b2)--(b4)--(b1);
    \draw (c1)--(c2)--(c3)--(c4)--(c5)--(c1)--(c3)--(c5)--(c2)--(c4)--(c1);
    \draw (d1)--(d2)--(d3)--(d4)--(d5)--(d1)--(d3)--(d5)--(d2)--(d4)--(d1);
    
    \draw (a5)--(b4);
    \draw (a2)--(b3);
    \draw (a5)--(d3);
    \draw (a5)--(c2);
    \draw (a4)--(c3);
    \draw (c2)--(d3);
    \draw (c5)--(d4);
    \draw (b1)--(d1);
    \draw (b5)--(d2);
    \draw (b4)--(d3);
    
    \node[graycircle] {};
    \node[graycircle, yshift=\height] {};
    \node[graycircle, xshift=\width] {};
    \node[graycircle, xshift=\width, yshift=\height] {};
    
\end{tikzpicture}
\end{subfigure}
$\longrightarrow$
\begin{subfigure}[c]{.3\textwidth}
\centering
\begin{tikzpicture}
\pgfmathsetmacro{\height}{45}
\pgfmathsetmacro{\width}{45}
\tikzstyle{dot}=[circle, fill=black, draw, inner sep=.8pt]

\node[dot] (a) {};
\node[dot, yshift=\height] (b) {};
\node[dot, xshift=\width] (c) {};
\node[dot, xshift=\width, yshift=\height] (d) {};

\draw (a)--(c)--(d)--(b)--(a)--(d);
    
\end{tikzpicture}
\end{subfigure}
\caption{A 5-coarse-graining from a nonplanar graph to a planar graph.}
\label{fig:coarse_graining}
\end{figure}

The following theorem, which we prove in \Cref{sec:plan}, shows that the graph state sampling algorithm can be extended to graphs which can be coarse-grained to planar.

\begin{restatable}[Extension of Theorem \ref{thm:planargss}]{theorem}{coarse}
Let $G'$ be a graph with $n$ vertices and let $G$ be a planar graph. Suppose we are given an $r$-coarse-graining from $G'$ to $G$. There is a classical algorithm with runtime upper bounded as $\widetilde{O}(n^{\omega/2} r^\omega)$ which solves the graph state simulation problem for $G'$.
\label{thm:coarse}
\end{restatable}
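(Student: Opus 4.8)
The plan is to use the $r$-coarse-graining $f$ to \emph{pull back} the recursive, planar-separator-based tree decomposition of $G$ that underlies \Cref{thm:planargss} to a tree decomposition of $G'$ in which every bag is larger by at most a factor of $r$, and then to re-run the refined runtime analysis of \Cref{sec:plan} on it. We may assume $f$ is surjective (otherwise delete from $G$ the vertices outside its image, which keeps $G$ planar and only shrinks it); write $m=|V(G)|$, so that $n/r\le m\le n$. Let $T$ be the tree decomposition of $G$ obtained by recursively applying the planar separator theorem, as in \Cref{sec:plan}; it has $O(m)=O(n)$ nodes, and the analysis behind \Cref{thm:planargss} bounds the running time of the general algorithm of \Cref{sec:gss} on $T$ by $\widetilde{O}(m^{\omega/2})$. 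Define $T'$ to have the same underlying tree as $T$, with each bag $B$ replaced by $f^{-1}(B):=\bigcup_{v\in B}f^{-1}(v)\subseteq V(G')$. One checks that $T'$ is a tree decomposition of $G'$: every $u\in V(G')$ lies in $f^{-1}(B)$ whenever $f(u)\in B$, and $f(u)$ is in some bag of $T$; for an edge $\{u,w\}\in E(G')$ the definition of coarse-graining gives either $f(u)=f(w)$, so any bag of $T$ containing this common vertex yields a bag of $T'$ containing both $u$ and $w$, or $\{f(u),f(w)\}\in E(G)$, so a bag of $T$ containing both endpoints does; and the bags of $T'$ containing a fixed $u$ are exactly the bags of $T$ containing $f(u)$, which form a connected subtree because $T$ is a tree decomposition of $G$. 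Since $|f^{-1}(B)|\le r|B|$, every bag of $T'$ has size at most $r$ times that of the corresponding bag of $T$, so $T'$ has $O(n)$ nodes and width $O(r\sqrt n)$.

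To bound the runtime I would run the general algorithm of \Cref{sec:gss} on the $G'$-instance using $T'$ as its input tree decomposition. Its running time is a sum over the nodes of the decomposition of per-node contributions (the ``simple function of the tree decomposition'' of \Cref{sec:gss}), and — as the proof of \Cref{thm:planargss} makes explicit, and as one already sees in the grid warm-up where the recurrence $T(\ell)=4T(\ell/2)+\widetilde{O}(\ell^\omega)$ charges $\widetilde{O}(\ell^\omega)$ to a node whose active set has $O(\ell)$ qubits — the contribution at a node is $\widetilde{O}(\beta^\omega)$, where $\beta$ is the size of that node's bag (more generally, a monomial of degree at most $\omega$ in $\beta$ and other local quantities, each of which is likewise multiplied by at most $r$ on passing from $T$ to $T'$). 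Hence every per-node contribution grows by at most a factor $r^\omega$, so the total running time on $T'$ is at most $r^\omega$ times the running time of the same algorithm on $T$, namely at most $r^\omega\cdot\widetilde{O}(m^{\omega/2})\le\widetilde{O}(r^\omega n^{\omega/2})$, which is the claim.

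The main obstacle is precisely this last bookkeeping: one \emph{cannot} simply invoke \Cref{thm:treegss} on $T'$, since its width $O(r\sqrt n)$ would only give $\widetilde{O}\!\big(n\,(r\sqrt n)^{\omega-1}\big)=\widetilde{O}\!\big(r^{\omega-1}n^{(\omega+1)/2}\big)$, a factor $\sqrt n$ worse than claimed. One must instead reopen the sharper, level-by-level accounting of \Cref{sec:plan}, in which the subproblem sizes at any fixed recursion depth sum to $O(n)$ while the maximum subproblem size decays geometrically with depth, so that the per-node costs sum to $\widetilde{O}(n^{\omega/2})$ exactly because $\omega>2$; this accounting is unchanged, apart from the single overall factor $r^\omega$, when every bag is dilated by $r$. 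A minor point to check along the way is that $f^{-1}$ of a separator of an induced subgraph $G[W]$ is again a separator of $G'[f^{-1}(W)]$ — immediate from the coarse-graining property, since every edge of $G'$ maps either into a single vertex of $G$ or onto an edge of $G$ — so that $T'$ genuinely is the decomposition to which the refined analysis of \Cref{sec:plan} applies.
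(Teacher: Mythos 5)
Your proposal is correct and follows essentially the same route as the paper: pull back the separator-based tree decomposition of $G$ through the coarse-graining (the paper's Lemma~\ref{lem:tree_decomp_homo}), run the general algorithm of Theorem~\ref{thm:sumofcubes} on the resulting decomposition of $G'$, and observe that each bag grows by at most a factor $r$, so the $\widetilde{O}(\|T\|_\omega^\omega)$ bound picks up exactly a factor $r^\omega$ on top of the $\widetilde{O}(n^{\omega/2})$ bound from Lemma~\ref{lem:tdnorm}. Your side remarks (surjectivity of $f$, separators pulling back to separators) are harmless but not needed, since the sharp norm bound is applied to the decomposition of the planar graph $G$ itself rather than re-derived for $T'$.
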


\section{Preliminaries}
\label{sec:prelim}

In this section we define stabilizer states and tree decompositions and describe some of their properties. In Section \ref{sec:stab} we review the tableau representation of stabilizer states and describe the runtime of classical algorithms for updating this representation under Clifford operations. In Section \ref{sec:td} we review the definition of a tree decomposition of a graph, and we describe simple algorithms which convert a given tree decomposition into one with certain desirable features.

\subsection{Stabilizer states \label{sec:stab}}

For $a \in \{0,1\}^n$, let $\pfont X^a$ be the $n$-qubit Pauli operator $\pfont{X}^{a_1} \otimes \cdots \otimes \pfont X^{a_n}$.  Define $\pfont Z^a$ similarly.  A general $n$-qubit Pauli operator $P$ is of the form $i^\alpha (-1)^{\beta} \pfont X^{a}\pfont Z^b$ for some $n$-bit strings $a,b$ and bits $\alpha,\beta\in \{0,1\}$ describing the global phase.

Recall that an $n$-qubit stabilizer state $\psi$ can be specified (up to a global phase) by a set of independent, commuting $n$-qubit Pauli operators $\{P_1,P_2,\ldots, P_n\}$ such that $P_j|\psi\rangle=|\psi\rangle$. Here independence means that none of the operators $P_k$ is proportional to a product of the other ones. Equivalently, an $n$-qubit stabilizer state is of the form $|\psi\rangle=Q|0^n\rangle$ where $Q$ is an $n$-qubit Clifford unitary, i.e., a unitary which can be expressed as a product of single-qubit Hadamard, phase $S=\mathrm{diag}(1,i)$, and two-qubit $\CZ=\mathrm{diag}(1,1,1,-1)$ gates.

The Gottesman-Knill theorem establishes that stabilizer states can be stored and manipulated efficiently on a classical computer \cite{gottesman1998heisenberg}. Throughout this work, we use a classical representation of stabilizer states and Clifford operations which is a variant of the tableau representation described by Aaronson and Gottesman \cite{aaronson+gottesman:2004}. In particular, we represent both the $n$-qubit stabilizer state $|\psi\rangle=Q|0^n\rangle$ (up to a global phase) and the Clifford unitary $Q$ as a tableau that lists the images of $\pfont X^{e_j}$ and $\pfont Z^{e_j}$ under conjugation by $Q$ for all $j$, where $e_j \in \{0,1\}^n$ is the $n$-bit vector with a 1 in position $j$ and 0's everywhere else. Let 
\[
Q \pfont X^{e_j} Q^\dag = i^{\alpha_j} (-1)^{\beta_j} \pfont X^{a_j} \pfont Z^{b_j} \qquad \text{and} \qquad Q \pfont Z^{e_j} Q^\dag = i^{\delta_j} (-1)^{\epsilon_j} \pfont X^{c_j} \pfont Z^{d_j}.
\]
We arrange this conjugation information into a $2n \times 2n$ matrix $M$, a phase vector $p \in \{0,1\}^{2n}$, and a sign vector $s \in \{0,1\}^{2n}$:
\begin{equation}
M = \left[\begin{array}{c | c}
A & B \\ \hline C & D
\end{array}\right]
\hspace{10pt}
p = \left[\begin{array}{c} \alpha \\ \hline \delta \end{array}\right]
\hspace{10pt}
s = \left[\begin{array}{c} \beta \\ \hline \epsilon \end{array}\right]
\label{eq:tableau}
\end{equation}
where $A, B, C,$ and $D$ are the $n \times n$ binary matrices whose elements correspond to the Pauli decompositions. We will refer to this as the \emph{tableau representation} of the state/circuit.

It is well known that the tableau representation can be updated efficiently if we apply a Clifford gate or measure a qubit of the stabilizer state. For example, one can easily see that the runtime of updating the tableau under the application of a $1$- or $2$-qubit Clifford gate is $O(n)$. Aaronson and Gottesman \cite{aaronson+gottesman:2004} provide an $O(n^2)$ algorithm for measuring a qubit in the computational basis, improving upon a na\"{i}ve $O(n^3)$ algorithm. In this work we are interested in measuring batches of qubits all at once. An application of Aaronson-Gottesman would upper bound the runtime of measuring $k$ qubits in the computational basis as $O(n^2 k)$. We now describe how this can be improved.

The Aaronson-Gottesman simulation algorithm can be viewed as an application of linear algebra over $\mathbb F_2$. In fact, Aaronson and Gottesman show that a variant of the Clifford circuit simulation problem\footnote{Specifically, they consider the task of determining whether measuring the first qubit of a given Clifford circuit's output state yields result $\ket{1}$ with probability 1.} is $\parityL$-complete, putting it in the same class as matrix inversion, iterated matrix multiplication, determinant, and other linear algebra tasks over $\mathbb F_2$ \cite{damm90}. Since the discovery of fast matrix multiplication in the 1960s, many linear algebra tasks have been sped up via a reduction to matrix multiplication. The (asymptotically) fastest currently known algorithm for matrix multiplication has a runtime $O(n^{\omega})$ for $\omega < 2.3728596$ \cite{alman+williams:2020}. It seems quite natural to conjecture that simulating a depth-$n$ Clifford circuit on $n$ qubits, and then measuring all $n$ qubits could be accomplished in $O(n^{\omega})$ time. Yet surprisingly, we find no proof of this fact in the literature. To this end, in \Cref{sec:matrix_mult} we will give algorithms for fast stabilizer circuit composition (\Cref{thm:fast_composition}) and state measurement (\Cref{thm:fast_measurement}) using their tableau representations.  These operations are described in more detail in \Cref{table:clifford_asymptotics}, where we summarize all the subroutines for manipulating stabilizer states which are used in this work.

\begin{table}[ht]
\begin{framed}
\noindent \large{\textbf{Classical representation of stabilizer states}}
\normalsize

\noindent An $n$-qubit stabilizer state $|\psi\rangle$ is represented up to a global phase as a tableau Eq.~\eqref{eq:tableau} of $4n(n+1)$ bits. This representation can be manipulated as follows:
\begin{itemize}
    \item[] \textbf{Append a qubit:} $|\psi\rangle|0\rangle$ can be computed in time $O(n)$.
    \item[] \textbf{Apply a gate:} Any $1$- or $2$-qubit Clifford gate can be applied to $|\psi\rangle$ in time $O(n)$.
    \item[] \textbf{Append:} If $|\phi\rangle$ is an $\ell$-qubit stabilizer state, compute $|\psi\rangle|\phi\rangle$ in time $O((n+ \ell)^2)$.
    \item[] \textbf{Measure:} Measure a given subset of $k$ qubits (in the $\pfont Z$ basis) in time $\widetilde O(k^{\omega-2}n^{2})$.
	\item[] \textbf{Apply CZ gates:} Apply an arbitrary collection of CZ gates in $O(n^\omega)$ time. 
\end{itemize}
\end{framed}
\caption{Running times for basic Clifford operations. These follow from Aaronson and Gottesman \cite{aaronson+gottesman:2004} for na\"{i}ve matrix multiplication (i.e., for $\omega = 3$). In \Cref{sec:matrix_mult}, we show how to achieve these running times for $2 \leq \omega < 3$ assuming fast matrix multiplication in time $O(n^{\omega})$. See Theorems \ref{thm:fast_composition} and \ref{thm:fast_measurement} specifically. }
\label{table:clifford_asymptotics}
\end{table}

We will also need the following simple fact about stabilizer states:

\begin{claim}
Suppose $\psi$ is an $n$-qubit stabilizer state and $y \in \{0,1\}^n$ satisfies $\langle y|\psi\rangle \neq 0$. If $R$ is a uniformly random stabilizer of $\psi$, then the bit string $z \in \{0,1\}^n$ such that $|z\rangle \propto R|y\rangle$ is drawn from the distribution $p(z)=|\langle z|\psi\rangle|^2$.
\label{claim:simplestab}
\end{claim}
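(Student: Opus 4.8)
The plan is to parametrize the stabilizer group of $\psi$, track where each stabilizer sends $\ket{y}$, and then identify the resulting distribution with the support distribution of $\psi$. Write a stabilizer as $R = i^{\alpha}(-1)^{\beta}\pfont{X}^{a}\pfont{Z}^{b}$ for $a,b\in\{0,1\}^n$; then $R\ket{y} = i^{\alpha}(-1)^{\beta+b\cdot y}\ket{y\oplus a}$, so the output string is $z = y\oplus a(R)$, where $a(R)$ denotes the ``$\pfont X$-part'' of $R$. The map $\phi\colon R\mapsto a(R)$ is a group homomorphism from the stabilizer group $\mathcal S$ (under multiplication) into $(\{0,1\}^n,\oplus)$, since multiplying Paulis adds their $\pfont X$-parts mod $2$. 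Hence $H:=\phi(\mathcal S)$ is a subgroup and $\phi$ is $|\mathcal S|/|H|$-to-one; as $|\mathcal S| = 2^n$ for an $n$-qubit stabilizer state, pushing the uniform distribution on $\mathcal S$ through $\phi$ yields the uniform distribution on $H$, and therefore $z = y\oplus a(R)$ is uniformly distributed over the coset $y\oplus H$.

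It remains to show that $y\oplus H$ is exactly the support $\{z:\langle z|\psi\rangle\neq 0\}$ and that $|\langle z|\psi\rangle|^2 = 1/|H|$ on it. First, since $R\ket{\psi} = \ket{\psi}$ and $\bra{z}R$ is a unit-modulus scalar multiple of $\bra{z\oplus a(R)}$, we get $|\langle z|\psi\rangle| = |\langle z\oplus a(R)|\psi\rangle|$ for every $R\in\mathcal S$; thus $z\mapsto|\langle z|\psi\rangle|^2$ is constant on each coset of $H$, and the support is a union of $H$-cosets which includes $y\oplus H$. Second, writing $|\langle z|\psi\rangle|^2 = \bra{z}\bigl(\ket{\psi}\bra{\psi}\bigr)\ket{z}$ and $\ket{\psi}\bra{\psi} = 2^{-n}\sum_{R\in\mathcal S}R$, only the diagonal stabilizers — those in $\ker\phi$, a subgroup $\mathcal S_Z$ of size $2^n/|H|$ — contribute to $\bra{z}R\ket{z}$. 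On $\mathcal S_Z$ the map $R\mapsto\bra{z}R\ket{z}$ is a $\{\pm1\}$-valued character, so by orthogonality of characters $\sum_{R\in\mathcal S_Z}\bra{z}R\ket{z}$ equals either $|\mathcal S_Z|$ or $0$; hence $|\langle z|\psi\rangle|^2 \in\{1/|H|,\,0\}$. Combining these two facts with the normalization $\sum_z|\langle z|\psi\rangle|^2 = 1$ forces the support to consist of exactly one $H$-coset, on which $|\langle z|\psi\rangle|^2 = 1/|H|$; since $y$ lies in the support, that coset is $y\oplus H$.

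Putting the pieces together, the law of $z$ is the uniform distribution on $y\oplus H = \operatorname{supp}(\psi)$, which assigns mass $1/|H|$ to each point of the support — exactly $p(z) = |\langle z|\psi\rangle|^2$. The only slightly delicate point is the character-sum step: one must check that diagonal elements of the stabilizer group are of the form $\pm\pfont Z^{b}$ (no factor of $i$, as they are Hermitian with eigenvalues $\pm 1$) and that $R\mapsto\bra{z}R\ket{z}$ is genuinely multiplicative on $\mathcal S_Z$, so that the orthogonality relation applies; the remaining manipulations are routine.
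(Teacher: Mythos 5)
Your proof is correct, but it follows a genuinely different route from the paper's. The paper argues pointwise: for each outcome $z$ it writes $p(z)=|\Omega|/2^n$ with $\Omega=\{P\in\mathrm{Stab}(\psi):\langle z|P|y\rangle\neq 0\}$, observes that the proportionality phase $\alpha_P$ with $\langle y|P=\alpha_P\langle z|$ is the same for every $P\in\Omega$ (since $\langle y|P|\psi\rangle=\langle y|\psi\rangle=\alpha_P\langle z|\psi\rangle$), and then uses the projector identity $|\psi\rangle\langle\psi|=2^{-n}\sum_{Q\in\mathrm{Stab}(\psi)}Q$ on the cross term $\langle y|\cdot|z\rangle$ to get $|\Omega|=2^n|\langle z|\psi\rangle|^2$ directly. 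You instead push the uniform distribution on $\mathrm{Stab}(\psi)$ through the homomorphism $R\mapsto a(R)$ onto the $\pfont X$-parts, concluding $z$ is uniform on the coset $y\oplus H$, and separately prove the structural fact that the computational-basis support of a stabilizer state is a single coset of $H$ carrying the flat distribution $1/|H|$ (via the same projector identity applied to the diagonal term, the $\pm\pfont Z^b$ form of diagonal stabilizers, character orthogonality, and normalization). Both arguments are sound and both ultimately rest on $|\psi\rangle\langle\psi|=2^{-n}\sum_R R$; the paper's version is shorter and needs no coset or character machinery, while yours yields as a byproduct the well-known (and independently useful) fact that stabilizer-state output distributions are uniform over an affine subspace, which makes the mechanism behind the claim more transparent. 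Your flagged delicate points (diagonal stabilizers having no factor of $i$, multiplicativity of $R\mapsto\langle z|R|z\rangle$ on $\ker\phi$) are indeed the right things to check, and they check out.
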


\begin{proof}
Let $\mathrm{Stab}(\psi)$ denote the stabilizer group of $\psi$. The probability distribution of the random variable $z$ is, by definition,
\begin{equation}
p(z)= \frac{|\Omega|}{|\mathrm{Stab}(\psi)|}=\frac{|\Omega|}{2^n} \qquad \quad \Omega=\left\{P\in \mathrm{Stab}(\psi): \langle z|P|y\rangle \neq 0\right\}.
\label{eq:pofy}
\end{equation}
Now suppose $P\in \Omega$, so $\langle y|P= \alpha_P \langle z|$ for some $\alpha_P \in \{\pm 1, \pm i\}$. Then $\langle y|P|\psi\rangle=\langle y|\psi\rangle=\alpha_P \langle z|\psi\rangle$. This shows that $\alpha_P$ does not in fact depend on $P$ and in particular 
\begin{equation}
\alpha_P=\alpha\equiv \frac{\langle y|\psi\rangle}{\langle z|\psi\rangle}.
\label{eq:alpha}
\end{equation}
Therefore
\[
\langle y|\psi\rangle\langle \psi|z\rangle=\frac{1}{2^n} \sum_{Q\in \mathrm{Stab}(\psi)} \langle y|Q|z\rangle=\frac{1}{2^n} \sum_{P\in \Omega} \langle y|P|z\rangle=\frac{\alpha}{2^n} |\Omega|.
\]
Using Eq.~\eqref{eq:alpha} gives $|\Omega|=2^n |\langle z|\psi\rangle|^2$ and combining with Eq.~\eqref{eq:pofy} completes the proof.
\end{proof}
Finally, we note that \Cref{claim:simplestab} has the following direct corollary:
\begin{claim}
Suppose $\psi$ is an $n$-qubit stabilizer state and $x,y \in \{0,1\}^n$ satisfy $\langle x|\psi\rangle \neq 0$ and $\langle y|\psi\rangle \neq 0$. Then there exists a stabilizer $P$ of $\psi$ such that $P|x\rangle \propto |y\rangle$.
\label{claim:existp}
\end{claim}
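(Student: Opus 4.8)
The plan is to derive this directly from \Cref{claim:simplestab}. First I would apply that claim verbatim with $y$ as given: letting $R$ be a uniformly random stabilizer of $\psi$ and $z \in \{0,1\}^n$ the bit string with $|z\rangle \propto R|y\rangle$, \Cref{claim:simplestab} guarantees that $z$ is distributed according to $p(z) = |\langle z|\psi\rangle|^2$. Since by hypothesis $\langle x|\psi\rangle \neq 0$, we have $p(x) = |\langle x|\psi\rangle|^2 > 0$, so $x$ lies in the support of this distribution. Consequently there must exist at least one stabilizer $R \in \mathrm{Stab}(\psi)$ for which $R|y\rangle \propto |x\rangle$.

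It then remains to convert this into the asserted form $P|x\rangle \propto |y\rangle$. Here I would use the fact that every element $R$ of $\mathrm{Stab}(\psi)$ is an involution: writing $R = i^\alpha(-1)^\beta \pfont{X}^a \pfont{Z}^b$ one has $R^2 = \pm I$, and the sign must be $+$ because otherwise $R$ would have eigenvalues $\pm i$ and could not fix $\psi$. Since $R$ is unitary with $R^2 = I$, it is Hermitian and equal to its own inverse. Applying $R$ to both sides of $R|y\rangle \propto |x\rangle$ therefore yields $|y\rangle \propto R|x\rangle$, so $P := R$ is the desired stabilizer.

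I do not anticipate any genuine obstacle; the statement is essentially a one-line corollary of \Cref{claim:simplestab}, with the only mild subtlety being the inversion step, which is immediate because stabilizers square to the identity. (One could equally well avoid even this step by invoking \Cref{claim:simplestab} with the roles of $x$ and $y$ interchanged and then setting $P = R^{-1}$, but the argument above is the most economical.)
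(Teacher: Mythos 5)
Your argument is correct and matches the paper's intent: the paper gives no separate proof, simply asserting \Cref{claim:existp} as a direct corollary of \Cref{claim:simplestab}, which is exactly how you derive it (and your involution step $R^2=I$ is valid, since a Pauli with $R^2=-I$ has only eigenvalues $\pm i$ and could not stabilize $\psi$). The only cosmetic remark is that applying \Cref{claim:simplestab} with $x$ in the starting role yields $R|x\rangle\propto|y\rangle$ immediately with $P=R$, so neither the inversion step nor the $P=R^{-1}$ in your parenthetical is actually needed.
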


\subsection{Tree Decompositions \label{sec:td}}

One of our goals in this work is to give a parameterized algorithm for the graph state simulation problem, where the parameter is \emph{treewidth}. Let us introduce tree decompositions and treewidth. 
\begin{definition}\label{def:tree_decomposition}
Let $G = (V,E)$ be a graph. A \emph{tree decomposition} of $G$ is a tree $T$ such that each of its nodes is associated with a \emph{bag} of vertices $B_i \subseteq V$ such that the following hold:
\begin{enumerate}
    \item For all $v \in V$, there exists $B_i$ such that $v \in B_i$.  

    \item For all $\{u,v\} \in E$, there exists $B_i$ such that $u,v \in B_i$.  

    \item For all $v \in V$, the nodes $\{B_i : v \in B_i\}$ form a connected subtree of $T$.
\end{enumerate}
It is conventional to use the word ``node'' to describe vertices of the tree, to avoid confusion with vertices of the graph $G$. Similarly, we will use the term ``link'' to describe edges of the tree, to avoid confusion with the edges of $G$.
\end{definition}
\noindent The \emph{width} of a tree decomposition is defined as $t = \max_i |B_i| - 1$, the size of the largest bag minus one.  The \emph{treewidth} of a graph is the smallest $t$ for which there exists a tree decomposition of width $t$. \Cref{fig:graph_and_td} shows an example of a tree decomposition for a graph of treewidth 2.

Although the treewidth is a natural parameter for many graph algorithms, the performance of our algorithms are more precisely captured by the sum of the bag sizes raised to some power.  We introduce the notation 
$$
\| T \|_{p} := \left( \sum_i |B_i|^{p} \right)^{1/p}
$$
for the $p$-norm of the vector of bag sizes in a tree decomposition $T$ for real $p > 0$. We also define the $\infty$-norm
$$
\| T \|_{\infty} := \max_{i} |B_i|,
$$
and note that $t = \| T \|_{\infty} - 1$.

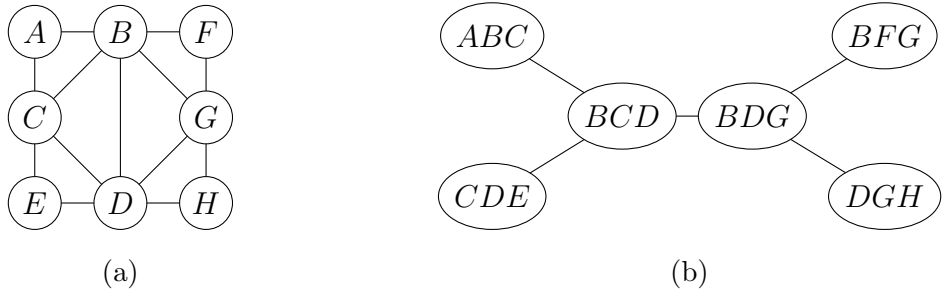
\begin{figure}
\centering
\begin{subfigure}[t]{0.45\textwidth}
\centering
\begin{tikzpicture}
\tikzstyle{vertex}=[draw, circle, minimum size=20pt, inner sep=0pt, outer sep=0pt]
\matrix[row sep=12pt, column sep=12pt] {
\node (A) [vertex] {$A$}; & \node (B) [vertex] {$B$}; & \node (F) [vertex] {$F$}; \\
\node (C) [vertex] {$C$}; & & \node (G) [vertex] {$G$}; \\
\node (E) [vertex] {$E$}; & \node (D) [vertex] {$D$}; & \node (H) [vertex] {$H$}; \\
};
\path[draw,-] 
(A)--(B)--(F)--(G)--(H)--(D)--(E)--(C)--(A)
(B)--(C)--(D)--(G)--(B)--(D);
\end{tikzpicture}
\subcaption{}
\end{subfigure}
\begin{subfigure}[t]{0.45\textwidth}
\centering
\begin{tikzpicture}
\tikzstyle{node}=[draw, ellipse, minimum width=30pt, minimum height=25pt, inner sep=0pt, outer sep=0pt]
\matrix[row sep=5pt, column sep=8pt] {
\node (ABC) [node] {$A B C$}; & & &  \node (BFG) [node] {$B F G$}; \\
& \node (BCD) [node] {$B C D$}; & \node (BDG) [node] {$B D G$}; \\
\node (CDE) [node] {$C D E$}; & & & \node (DGH) [node] {$D G H$}; \\
};
\path[draw,-] 
(ABC)--(BCD)--(BDG)--(BFG)
(CDE)--(BCD)
(BDG)--(DGH);
\end{tikzpicture}
\subcaption{}
\end{subfigure}
\caption{(a) Planar graph on 8 vertices.  (b) A tree decomposition of width 2 \cite{treed}.}
\label{fig:graph_and_td}
\end{figure}

We will often assume that the tree decomposition has a particularly nice form.  We caution that while the definition below is convenient for our purposes, it may diverge from other works.

\begin{definition}
A \emph{nice} tree decomposition \cite{bodlaender1991better} is a rooted tree decomposition where every node is either an \emph{introduce} node, a \emph{forget} node, or a \emph{merge} node: 
\begin{itemize}
    \item[] \textit{Introduce:} Nodes whose bag strictly contains its child's bag (or that have no children). These nodes are either leaves, or have one child.
    \item[] \textit{Forget:} Nodes that have exactly one child and whose bag is strictly contained in their child's bag. 
    \item[] \textit{Merge:} Nodes that have exactly two children and whose bag is the union of its children's bags. 
\end{itemize}

\end{definition}

The following theorem will allow us to work with nice tree decompositions containing relatively few nodes whenever it is convenient.
\begin{theorem}
\label{thm:to_nice}
Let $G=(V,E)$ be a graph with $n=|V|$ vertices, and let $T$ be a tree decomposition of $G$ of width $t$. We can compute a nice tree decomposition $T'$ of $G$ with $O(n/t)$ nodes and width $O(t)$ in $O(\|T\|_1)$ time.
\end{theorem}

We will prove this theorem from a series of lemmas. For a node $B$ with parent $A$ in a rooted tree decomposition, we say that the vertices $B\setminus A$ are \textit{forgotten} on the link from $A$ to $B$. If there is a link between nodes $A$ and $B$, \textit{contracting} the two nodes refers to the process of contracting the link between $A$ and $B$ and setting the combined node equal to $A \cup B$. If nodes $B$ and $S$ are siblings,\footnote{We say that two nodes are siblings if they share the same parent.} then \textit{merging} them refers to the following process: delete the nodes $B$ and $S$; create a new node $B \cup S$ with a link to parent shared by $B$ and $S$, as well as links to any children of $B$ and $S$. Note that the contraction and merge operations preserve the requirements for a tree decomposition set out in \Cref{def:tree_decomposition}.

\begin{lemma}
		\label{lem:forgot}
		Given a rooted tree decomposition of a graph, every vertex of the graph is either forgotten exactly once or belongs to the root node. 
\end{lemma}
\begin{proof}
    Consider an arbitrary vertex $v$ that is not in the root node. Suppose for the sake of contradiction that $v$ is forgotten along two distinct links (i.e. $v$ is in two distinct bags $A$ and $B$, but not in their respective parent(s)). However, $A$ and $B$ are connected by a unique path which goes through one or both of their parents. However, by \Cref{def:tree_decomposition}, the nodes containing $v$ are connected, contradicting our assumption that neither parent contains $v$. 
\end{proof}

\begin{lemma}
    \label{lem:forget_properties}
    Consider a rooted tree decomposition. Let $B$ be a node with child $C$, parent $A$, sibling $S$, and grandparent $P$.\footnote{We say that node $P$ is a grandparent of $B$ if $P$ is the parent of $B$'s parent.} Then:
    \begin{enumerate}
        \item If we contract $A$ and $B$ into a node $A \cup B$ then
        \begin{align*}
            C \backslash (A \cup B) &= C \backslash B, 
            & S \backslash (A \cup B)&= S \backslash A, 
            & (A \cup B) \backslash P &= (B \backslash A) \cup (A \backslash P).
        \end{align*}
        \item If we merge $B$ and $S$ into a new child $B \cup S$ of $A$ then 
        \begin{align*}
            C \backslash (B \cup S) &= C \backslash B, 
            & (B \cup S) \backslash A &= (B \backslash A) \cup (S \backslash A).
        \end{align*}
    \end{enumerate}  
    In other words, contract and merge operations (i) preserve the nodes forgotten by their children or siblings; and (ii) forget the (disjoint) union of the vertices forgotten by their constituent parts. 
\end{lemma}
\begin{proof}
		The proof consists of repeated application of the requirement that vertices appear in connected subsets of a tree decomposition. For the first four equalities, we will show that the two sets are equal by showing that they contain one another.
		
		Consider the parent-child contraction $A, B \mapsto A \cup B$ first. Clearly $C \backslash (A \cup B)\subseteq C \backslash B$. If there was some vertex $v \in C \backslash B$ but $v \notin C \backslash (A \cup B)$, then clearly $v \in C$ and $v \notin B$, but $v \in A \cup B$ and hence $v \in A$. This contradicts the fact that bags containing $v$ are connected. Therefore we must also have $C \backslash (A \cup B)\supseteq C \backslash B$, so we conclude $C \backslash (A \cup B) = C \backslash B$. Similarly, $S \backslash (A \cup B)$ is contained in $S \backslash A$, and if the containment was strict we would have $v \in B$ and $v \in S$, but $v \notin A$, a contradiction. 
		
		Next, we have $(A \cup B) \backslash P = (B \backslash P) \cup (A \backslash P) \supseteq (B \backslash A) \cup (A \backslash P)$, since any $v \in B \backslash A$ is also in $B \backslash P$. (Otherwise, $v \in B$ and $v \in P$ implies $v \in A$ and hence $v \notin B \backslash A$, a contradiction.) In the other direction, take an arbitrary $v \in (B \backslash P) \cup (A \backslash P)$. Clearly $v \notin P$, but it must belong to either $A$ or $B$. If $v \in A$ then $v \in (A \backslash P) \subseteq (B \backslash A) \cup (A \backslash P)$. Otherwise, $v \notin A$ and $v \in B$, so again $v \in (B \backslash A) \cup (A \backslash P)$. Thus, the two sets are equal. 
		
		Now consider the sibling merge $B, S \mapsto B \cup S$. Once again, clearly $C \backslash (B \cup S) \subseteq C \backslash B$. If $v \in C \backslash B$ then $v \in C$ and $v \notin B$. The path connecting $C$ and $S$ must pass through $B$, so using the fact that bags containing $v$ form a connected subtree, we conclude that $v \notin S$, and thus $v \in C \backslash (B \cup S)$. The final equality follows from the distributive law:
		\[
		(B \cup S) \backslash A = (B \cup S) \cap \overline{A} = (B \cap \overline{A}) \cup (S \cap \overline{A}) = (B \backslash A) \cup (S \backslash A).
		\] 
	\end{proof}

\begin{lemma}\label{lem:td1}
		Given a graph $G$, a rooted tree decomposition $T$ for $G$ of treewidth $t$, and an integer $c \geq 0$, we can construct a new rooted tree decomposition with treewidth at most $t+2c$ and at most $\tfrac{n}{c+1}+1$ nodes in $O(\|T\|_1)$ time.
		In particular for $c = \Theta(t)$ the tree decomposition has width $O(t)$ and $O(n/t)$ nodes. 
	\end{lemma}

\begin{proof}
        The algorithm proceeds recursively, moving from the leaf nodes up to the root node. For a bag $B$, we recursively process its children. Then we repeatedly merge pairs of children whenever there are two that both forget $\leq c$ vertices along the link to $B$. (This may result in more than two children being merged into the same node. For example, if $B$ has three children that each forget $c/3$ vertices along the link to $B$, we would merge the first two, and then merge the new node with the third.) Once this process is complete, there might still be one such child. If this is the case, we contract it into $B$.

        First, we prove that the resulting tree decomposition has at most $\tfrac{n}{c+1}+1$ nodes. We do so by claiming that when this process is complete, all nodes (except the root) forget at least $c+1$ vertices. Suppose to the contrary that some node $B$ forgets $\leq c$ vertices. By Lemma~\ref{lem:forget_properties}, the merges and contractions above $B$ in the tree do not change the set of vertices $B$ forgets, so $B$ forgot $\leq c$ vertices even at the recursive call on its parent. But then the algorithm would have merged it with a sibling, or if no sibling had $\leq c$ vertices, contracted it with its parent, which contradicts the fact that $B$ exists in $T$. Thus, all links forget at least $c+1$ vertices. By \Cref{lem:forgot}, every vertex is forgotten (somewhere in the tree) at most once, so there are at most $\frac{n}{c+1}$ links, and thus $\frac{n}{c+1} + 1$ nodes. 
		
		Next, we show that the resulting treewidth is at most $t + 2c$. Merging and contraction usually results in larger nodes, potentially beyond the original upper bound of $t+1$ vertices. However, when we merge two siblings forgetting $\leq c$ vertices each, the resulting node forgets $\leq 2c$ vertices (by \Cref{lem:forget_properties}), and is thus at most $2c$ vertices larger than its parent. The parent is (at this point in the algorithm) untouched from the original tree, so it has at most $t+1$ vertices, and thus the merged node has $\leq t+1+2c$ vertices. Likewise, contraction only occurs when there is a child forgetting $\leq c$ vertices, so we add $\leq c$ vertices to an untouched parent with $\leq t+1$ vertices for at total size of $\leq t+1+c$. The size of a bag never exceeds $t + 2c + 1$ during transformation, and so the resulting treewidth is at most $t + 2c$.

        Finally, we claim that the runtime can be made to be $O(\|T\|_1)$. The main hurdle is that if we merge and contract bags by explicitly computing the union of the two bags every time, then our runtime might not necessarily be $O(\|T\|_1)$. Instead, we will wait until the very end before explicitly computing the contents of each bag. Observe that each bag in the new tree decomposition is a union of bags from $T$, since that's all that the merge and contraction operations do to the contents of bags. Throughout the algorithm, for each bag we will store a set of pointers to the bags of $T$ that that bag is the union of, and wait until the very end to compute all unions. In order to do so, we need to show that the number of vertices forgotten along a link can be computed using this pointer-based approach (rather than explicitly knowing the contents of the child's bag). Recall that by \Cref{lem:forget_properties}, when we merge two nodes $B$ and $S$ with common parent $A$, the number of vertices $|(B \cup S)\setminus A|$ forgotten along the new link to $A$ is given by $|B\setminus A| + |S\setminus A|$. \Cref{lem:forget_properties} also tells us that when contracting $B$ and $A$, the number of vertices $|(A\cup B)\setminus P|$ forgotten along the new link to $A$'s parent $P$ is given by $|B\setminus A|+ |A\setminus P|$. These simple rules allow us to keep a running tally of how many vertices are forgotten along each link, without having to explicitly compute bag contents every time. Using these rules, whenever we merge or contract two nodes, we can update the number of vertices forgotten along the link from the resulting node to its parent in time $O(1)$.
        
        Before we begin processing $T$, we compute the number of vertices forgotten along each link. For a bag $B$ with children $C_1,\ldots,C_k$, the number of vertices forgotten along each link to $B$ can be computed in time $|B| + |C_1| + \cdots + |C_k|$ by enumerating vertices of $C_1, \ldots, C_k$ and checking each against $B$. Repeating this for each bag takes time $O(\|T\|_1)$, since each bag is used at most once as a parent and once as a child. Then, while performing the transformation on the tree decomposition, whenever we merge or contract two nodes, we can update the number of vertices forgotten along the link from the resulting node to its parent using the rules discussed above in time $O(1)$. Next, observe that each bag from $T$ is represented in exactly one bag of the new tree decomposition. Therefore, if for each bag in the new tree decomposition, we have pointers to the bags of $T$ that the new bag is the union of, then we can explicitly compute the contents of all bags in a total time of $O(\|T\|_1)$, since each bag of $T$ will be processed by the ``union'' operation exactly once. This shows that the total runtime is $O(\|T\|_1)$.
	\end{proof}

\begin{lemma} \label{lem:td2}
Given a width-$t$ tree decomposition $T$ with $m$ nodes, we can construct a width-$t$ nice tree decomposition $T'$ with at most $\frac{7}{2}m$ nodes in $O(mt)$ time.
\end{lemma}
\begin{proof}
We shall describe the transformation in two steps. 

In the first step, for each node with bag $B$ and children with bags $C_1, \ldots, C_k$, add nodes $B \cap (C_1 \cup \cdots\cup C_k)$ and $B \cap C_1, \ldots, B \cap C_k$ as shown in \Cref{fig:to_nice}. Contract links where the parent and child have the same bag. 

It is clear that for each link in the tree, the bags of the endpoints are comparable with either $\subseteq$ or $\supseteq$. Classify nodes with zero or one children as introduce nodes or forget nodes by the direction of this containment.

The second step of the transformation is as follows. For any node with more than two children, note that it is the union of its children. It will eventually become a merge node, but first we need to reduce the number of children. To do this, we pick two children arbitrarily, give them a new merge node as a parent, and make it a child of the original parent. Of course, the new merge node bag will be the union of its children's bags (as is required for a merge node). Repeat until all nodes have at most two children. 

\Cref{fig:to_nice} depicts an example of this transformation. Note that the transformation only creates nodes that are intersections of others, so the maximum bag size cannot increase. When we create merge nodes from two children, we take the union of two nodes, but both are already contained in their parent, so the new bag cannot be bigger than its parent. Thus, treewidth does not increase. 

Observe that since $T$ is a tree with $m$ nodes, it has $m-1$ links. The first step of the transformation creates at most one node per link, and an additional node per multi-child internal node. There are at most $m-1$ links, and at most $\frac{m-1}{2}$ multi-child internal nodes (since each one is associated with at least two links), so we add at most $\frac{3}{2}m$ nodes in this step. For a bag $B$ with children $C_1, \ldots, C_k$, this step takes time $O\left(|B| + \sum_{i=1}^k |C_i|\right)$, which leads to a total runtime of $O(||T||_1)$ for this first step, which is in $O(mt)$.

If a node of the tree has $d > 2$ children, then in the second step of the transformation we create new merge nodes which blow it up into a binary tree with $d$ leaves. The leaves and root of the binary tree already exist, but we create $d-2$ additional nodes. The total number of children across all nodes is $m-1$, so we add at most $m-3$ additional nodes. Therefore in total, the process introduces at most $\frac{5}{2}m$ new nodes. Because each new node is obtained by taking unions and intersections of at most three nodes (whose sizes are at most $t+1$), the time to compute each new node is $O(t)$. Combining this with the fact that this transformation only creates $O(m)$ new nodes, we find that the runtime of this second step is $O(mt)$.
\end{proof}

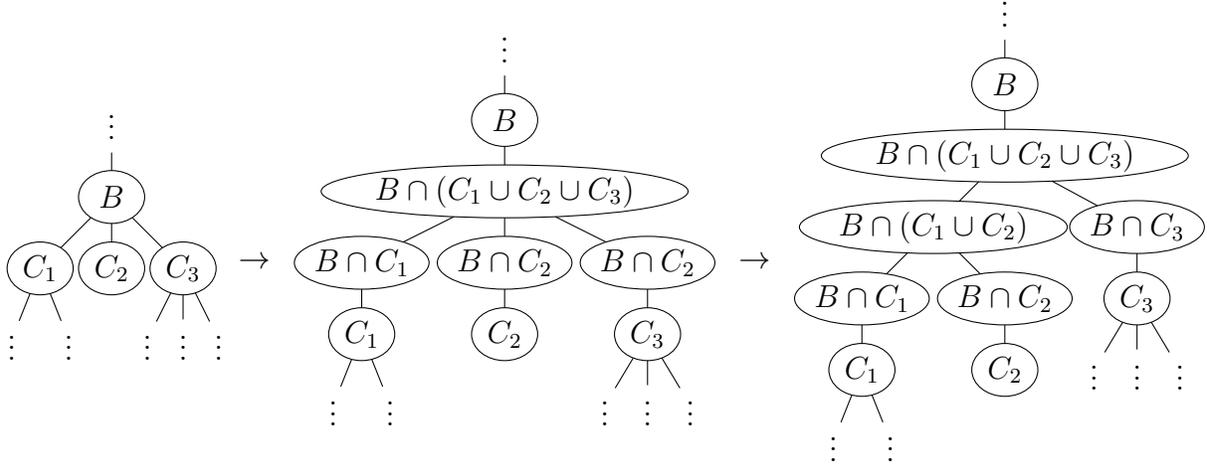
\begin{figure}
\tikzset{
  treenode/.style = {align=center, inner sep=0pt, text centered, font=\sffamily},
  bag/.style = {draw, ellipse, minimum width=25pt, minimum height=20pt, inner sep=0pt, outer sep=0pt},
  etc/.style = {circle, inner sep=0pt, outer sep=0pt},
}
\centering

\begin{tikzpicture}[-,>=stealth',level/.style={sibling distance = 1cm,level distance = 1cm}, scale=.95]
]
\node[etc, yshift=-1.5cm] (T1) {\raisebox{5pt}{$\vdots$}}
	child{ node[bag] {$B$}
    	child{ node[bag] {$C_1$} 
    		child[sibling distance=8mm]{ node[etc] {$\vdots$} }
    		child[sibling distance=8mm]{ node[etc] {$\vdots$} }
    	}
    	child{ node[bag] {$C_2$} }
    	child{ node[bag] {$C_3$}
        		child[sibling distance=5mm]{ node[etc] {$\vdots$} }
        		child[sibling distance=5mm]{ node[etc] {$\vdots$} }
        		child[sibling distance=5mm]{ node[etc] {$\vdots$} }
	}
}
; 

\node at (2cm, -3.5cm) {$\rightarrow$};

\node[etc] at (5.5cm, -.5cm) (T2) {\raisebox{5pt}{$\vdots$}}
	child{ node[bag] {$B$}
	child { node[bag] {$B \cap (C_1 \cup C_2 \cup C_3)$}
		child[sibling distance=2cm]{ node[bag] {$B \cap C_1$}
    			child{ node[bag] {$C_1$} 
    				child[sibling distance=8mm]{ node[etc] {$\vdots$} }
    				child[sibling distance=8mm]{ node[etc] {$\vdots$} }
    			}
		}
		child[sibling distance=2cm]{ node[bag] {$B \cap C_2$}
    			child{ node[bag] {$C_2$} 
			}
		}
		child[sibling distance=2cm]{ node[bag] {$B \cap C_3$}
    			child{ node[bag] {$C_3$}
        				child[sibling distance=6mm]{ node[etc] {$\vdots$} }
        				child[sibling distance=6mm]{ node[etc] {$\vdots$} }
        				child[sibling distance=6mm]{ node[etc] {$\vdots$} }
			}
		}
	}
}
;
\node at (9cm, -3.5cm) {$\rightarrow$};

\node[etc] at (12.5cm, 0) (T3) {\raisebox{5pt}{$\vdots$}}
child{ node[bag] {$B$}
    child{ node[bag] {$B \cap (C_1 \cup C_2 \cup C_3)$}
        child[sibling distance=2cm]{ node[bag] {$B \cap (C_1 \cup C_2)$} 
            child[sibling distance=2cm]{ node[bag] {$B \cap C_1$}
                child{ node[bag] {$C_1$}
                    child[sibling distance=8mm]{ node[etc] {$\vdots$} }
                    child[sibling distance=8mm]{ node[etc] {$\vdots$} }
                }
            }
            child[sibling distance=2cm]{ node[bag] {$B \cap C_2$}
                child { node[bag] {$C_2$} }
            }                            
        }
        child[sibling distance=3.7cm]{ node[bag] {$B \cap C_3$}
            child { node[bag] {$C_3$} 
                child[sibling distance=6mm]{ node[etc] {$\vdots$} }
                child[sibling distance=6mm]{ node[etc] {$\vdots$} }
                child[sibling distance=6mm]{ node[etc] {$\vdots$} }
            }
    	}
    }
    }
; 
\end{tikzpicture}
\caption{Depiction of the transformation used in \Cref{lem:td2} which maps a tree decomposition to a nice tree decomposition.}
\label{fig:to_nice}
\end{figure}

We now prove \Cref{thm:to_nice}.

\begin{proof}[Proof of \Cref{thm:to_nice}]
    Set $c = \Theta(t)$ in \Cref{lem:td1} to contract/merge down to a tree decomposition $T'$ with $O(n/t)$ nodes and width $O(t)$ in $O(\|T\|_1)$ time. Then use \Cref{lem:td2} on $T'$ to get a \emph{nice} tree decomposition $T''$ in $O(\tfrac{n}{t} \cdot t) = O(n)$ time, still with $O(n/t)$ nodes and width $O(t)$. Clearly $n = O(\|T\|_1)$ since $T$ contains every vertex, so the result follows. 
\end{proof}

Suppose that $\varphi \colon V(G) \rightarrow V(H)$ is a coarse-graining from a graph $G$ to another graph $H$. The following Lemma allows us to compute a tree decomposition of $G$, given one of $H$. In particular, suppose $T_H$ is a tree decomposition of $H$. Define a tree decomposition $T_G := \varphi^{-1}(T_H)$ to have the same tree as $T_H$ but each bag $B$ in $T_H$ is replaced with its preimage  
$$
\varphi^{-1}(B) := \{ v \in V : \varphi(v) \in B \}. 
$$
We now prove that $T_G$ is indeed a tree decomposition of $G$.
\begin{lemma}[Tree Decompositions are closed under coarse-graining]
\label{lem:tree_decomp_homo}
Let $\varphi \colon G \to H$ be a coarse-graining from a graph $G = (V,E)$ to graph $H$. Given a tree decomposition $T_H$ of the graph $H$, the preimage $T_G := \varphi^{-1}(T_H)$ is a tree decomposition for $T_G$. 
\end{lemma}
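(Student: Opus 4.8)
The plan is a direct verification of the three tree decomposition axioms for $T_G$, noting throughout that $T_G$ has the same underlying tree as $T_H$, so connectivity of node-subsets is inherited verbatim. Write the bags of $T_H$ as $\{B_i\}$, so the bags of $T_G$ are $\{\varphi^{-1}(B_i)\}$, and observe the basic identity $v \in \varphi^{-1}(B_i) \iff \varphi(v) \in B_i$, which I will use repeatedly.

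First I would check the covering axiom: given $v \in V(G)$, apply axiom~1 of $T_H$ to the vertex $\varphi(v) \in V(H)$ to get a bag $B_i \ni \varphi(v)$; then $v \in \varphi^{-1}(B_i)$. Next, the edge axiom: let $\{u,v\} \in E(G)$. By the definition of coarse-graining we are in one of two cases. In case (a), $\{\varphi(u),\varphi(v)\} \in E(H)$, so axiom~2 of $T_H$ supplies a bag $B_i$ containing both $\varphi(u)$ and $\varphi(v)$, whence $u,v \in \varphi^{-1}(B_i)$. In case (b), $\varphi(u) = \varphi(v)$, so axiom~1 of $T_H$ (applied to this common image) supplies a bag $B_i \ni \varphi(u) = \varphi(v)$, and again $u,v \in \varphi^{-1}(B_i)$. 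Either way the edge is covered.

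Finally, the connectivity axiom: for fixed $v \in V(G)$, the set of $T_G$-nodes whose bag contains $v$ is $\{\,i : \varphi(v) \in B_i\,\}$, which is precisely the set of $T_H$-nodes whose bag contains the single vertex $\varphi(v)$. By axiom~3 of $T_H$ this set induces a connected subtree of $T_H$, and since $T_G$ and $T_H$ share the same tree, it induces a connected subtree of $T_G$ as well. This completes all three checks.

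I do not anticipate any real obstacle here; the only thing to be slightly careful about is that the edge axiom genuinely needs the disjunction in the definition of coarse-graining — clause (b) is exactly what lets a ``collapsed'' edge of $G$ still be covered, since a self-loop-like pair $u,v$ with $\varphi(u)=\varphi(v)$ cannot be handled by axiom~2 of $T_H$ and must instead be absorbed into a bag guaranteed by axiom~1. If one wanted, one could also remark that this shows tree decompositions pull back along any graph homomorphism (coarse-grainings being a mild generalization thereof), but that observation is not needed for the statement.
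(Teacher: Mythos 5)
Your proof is correct and follows essentially the same route as the paper's: a direct verification of the three tree-decomposition axioms, handling the edge axiom by the same case split on whether $\varphi(u)=\varphi(v)$ or $\{\varphi(u),\varphi(v)\}\in E(H)$, and inheriting connectivity since $T_G$ and $T_H$ share the same underlying tree. No issues.
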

\begin{proof}
We check the defining properties of a tree decomposition explicitly:
\begin{enumerate}
    \item For any vertex $v \in V$, its image $\varphi(v)$ appears in some bag $B$ of $T_H$, and therefore $v$ appears in $\varphi^{-1}(B)$ in $T_G$.
    \item For each edge $\{ u, v \} \in E$, either $\varphi(u) = \varphi(v)$ appears in some bag $B$ and therefore $\{ u, v \}$ appears in $\varphi^{-1}(B)$, or $\varphi(u) \neq \varphi(v)$ but $\{ \varphi(u), \varphi(v) \}$ is an edge appearing in some bag $B$ of $T_H$, and thus $\{ u, v \}$ appears in $\varphi^{-1}(B)$.
    \item Take an arbitrary vertex $v \in V$. It appears in nodes of $T_G$ corresponding to nodes of $T_H$ that contain $\varphi(v)$. Since this is a connected subtree of $T_H$ (by the tree decomposition property of $T_H$), it is also connected in $T_G$. 
\end{enumerate}
\end{proof}

\section{Algorithm for graph state simulation}
\label{sec:gss}

In this section we describe our algorithm for the graph state simulation problem which makes use of a tree decomposition of the graph. We first recall the definition of the problem:

\begin{gssproblem}
The input to the problem is a graph $G=(V,E)$ with $n=|V|$ vertices, a measurement basis $P_v\in \{\pfont X,\pfont Y, \pfont Z\}$ for each vertex $v\in V$, a partition $[n]=\mathcal{S}\cup \mathcal{P}$ of the vertices, and a binary string $m\in \{0,1\}^{|\mathcal{P}|}$. If the marginal probability $p_\mathcal{P}(m)$ from Eq.~\eqref{eq:pbm} is zero, the output is an error flag. Otherwise, the output is a binary string $x\in \{0,1\}^{|\mathcal{S}|}$ from the conditional distribution $p_\mathcal{S}(x|m)$ defined in Eq.~\eqref{eq:condpa}.
\end{gssproblem}

Our goal in this section is to prove the following theorem:

\begin{theorem}
There is a classical algorithm which takes as input an instance of the graph state simulation problem along with a nice tree decomposition $T$ of the given graph $G$. The algorithm outputs a solution in time $\widetilde{O}(\| T \|_\omega^\omega)$.
\label{thm:sumofcubes}
\end{theorem}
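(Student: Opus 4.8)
The plan is to prove \Cref{thm:sumofcubes} by a \emph{tree-decomposition-to-circuit mapping}: from $G$ and the nice tree decomposition $T$ we build a Clifford circuit with intermediate measurements (and postselections on the vertices of $\mathcal P$) whose structure mirrors $T$, and then we simulate it \emph{lazily}, so that every tableau we touch while processing a node $B_i$ lives on only $O(|B_i|)$ qubits. First I would fix, for each edge $\{u,v\}\in E$, the \emph{highest} node of $T$ whose bag contains both $u$ and $v$ (this node is well defined because $\{B_j : u,v\in B_j\}$ is a nonempty connected subtree); that node is where the gate $\CZ_{uv}$ will be applied. The circuit uses one \emph{replica} qubit for each incidence $v\in B_i$, and it is defined by recursion on the rooted tree. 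At a leaf we create $|+\rangle$ for each vertex of the bag and apply the $\CZ$ gates assigned there. At an \emph{introduce} node we take the child's state, append $|+\rangle$ for the newly introduced vertices, and apply the $\CZ$ gates assigned there. At a \emph{forget} node we take the child's state, and for every forgotten vertex $u$ we apply $U_u$ and measure it in the $\pfont Z$ basis — sampling its outcome, or, if $u\in\mathcal P$, postselecting on $m_u$ and emitting the error flag if that outcome has probability zero — and then apply the $\CZ$ gates assigned there. At a \emph{merge} node $B=C_1\cup C_2$ we tensor the two children's states and, for each vertex $v$ appearing in both children's bags, run a constant-size gadget that fuses $v$'s two replicas into one, and then apply the $\CZ$ gates assigned to $B$. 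Finally, after processing the root we apply $U_v$ and measure every remaining vertex, and output the collected samples restricted to $\mathcal S$.

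\textbf{The fusion gadget and correctness.}
The fusion gadget is the technical heart. For replicas $a$ (from the left subtree) and $b$ (from the right subtree) of the same vertex $v$, where $a$ carries the $\CZ$ edges into a set of qubits $N_a$ (those assigned inside the left subtree) and $b$ carries the disjoint set $N_b$, one checks by induction that the joint state has the form $\tfrac1{\sqrt2}\bigl(|0\rangle_a + |1\rangle_a \prod_{w\in N_a}\pfont Z_w\bigr)\otimes\tfrac1{\sqrt2}\bigl(|0\rangle_b + |1\rangle_b \prod_{w\in N_b}\pfont Z_w\bigr)$ tensored with the remaining qubits (which are disjoint from $N_a\cup N_b$ and from each other across the two subtrees). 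Applying $\mathrm{CNOT}_{a\to b}$, measuring $b$ in the $\pfont Z$ basis with outcome $c$, and applying the Pauli correction $\prod_{w\in N_b}\pfont Z_w$ when $c=1$ leaves the surviving replica $a$ in the state $\tfrac1{\sqrt2}\bigl(|0\rangle_a + |1\rangle_a \prod_{w\in N_a\cup N_b}\pfont Z_w\bigr)$ tensored with the unchanged remainder — exactly the replica one would get by applying all the edges $N_a\cup N_b$ to a single $|+\rangle$. Crucially this holds for \emph{both} values of $c$, so the internal measurements of the fusion gadgets never produce a branch we must track. Hence the invariant is preserved: after processing the subtree rooted at $i$, the surviving replicas form precisely the partial graph state on $B_i$ in which the edges assigned strictly inside that subtree have been applied and the forgotten vertices have been measured (for any values of the sampled outcomes). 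Propagating this up to the root, the reduced circuit on the surviving replicas prepares $\bigotimes_v U_v|G\rangle$; and since every measured replica is measured only after every gate on it has been applied — which is exactly where the "highest bag" choice of edge location is used, to ensure every edge incident to a forgotten vertex is assigned inside its own subtree — a deferred-measurement argument shows the emitted string is distributed as $p_{\mathcal S}(\cdot\mid m)$, with the error flag emitted exactly when $p_{\mathcal P}(m)=0$.

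\textbf{Runtime.}
I would account for the cost node by node using the stabilizer subroutines of \Cref{table:clifford_asymptotics}, under the invariant that processing the subtree at $i$ begins and ends with a tableau on $O(|B_i|)$ qubits (and that a merge node's working register has at most $|B_{c_1}|+|B_{c_2}|\le 2|B_i|$ qubits). A leaf or introduce node costs $O(|B_i|^2)$ to append qubits plus $O(|B_i|^\omega)$ to apply its $\CZ$ gates as one batch. A forget node with child bag $C$ costs $O(|C|^2)$ for the single-qubit Cliffords, $\widetilde O(|C\setminus B_i|^{\omega-2}|C|^2)\le\widetilde O(|C|^\omega)$ to batch-measure the forgotten vertices, and $O(|B_i|^\omega)$ for the edges. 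A merge node, after batching all fusion gadgets — all their $\mathrm{CNOT}$s act on disjoint pairs and commute, their $\pfont Z$-measurements of the $b$-replicas can be done in one batch, and the resulting $\pfont Z$-type corrections commute and combine into a single Pauli — costs $O(|B_i|^2)$ for the $\mathrm{CNOT}$s and the correction plus $\widetilde O(|B_i|^\omega)$ for the batch measurement and the edges. The root step costs $\widetilde O(|B_r|^\omega)$. Thus node $i$ contributes $\widetilde O(|B_i|^\omega)$, plus $\widetilde O(|B_c|^\omega)$ for its child $c$ in the forget case. Summing, the first terms give $\widetilde O\bigl(\sum_i |B_i|^\omega\bigr)=\widetilde O(\|T\|_\omega^\omega)$, and since each node is the child of at most one other node, the second terms sum to at most $\widetilde O\bigl(\sum_i|B_i|^\omega\bigr)$ as well. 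All bookkeeping (edge placement via linear-time ancestor queries, classifying introduce/forget/merge roles, locating fused vertices) is $O(\|T\|_1)\le\widetilde O(\|T\|_\omega^\omega)$ and is absorbed.

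\textbf{Main obstacle.}
The difficulty lies in the correctness argument, not the runtime. One must set up the replica bookkeeping and the inductive invariant precisely; verify that the fusion gadget composes correctly with \emph{later} edge applications and \emph{later} fusions — the computation sketched above, but carried through with the remaining qubits entangled arbitrarily and with the surviving replica $a$ itself possibly the product of earlier fusions; confirm that deferring the Pauli corrections of all fusions at a merge node past all of that node's $\pfont Z$-measurements does not change any outcome; and justify reordering all the Pauli-basis measurements to the forget nodes and to the root. By contrast, the runtime is a mechanical per-node tally against \Cref{table:clifford_asymptotics}, the one nontrivial point being to batch both the measurements and the $\CZ$ applications so as to invoke the fast-matrix-multiplication subroutines rather than applying gates one at a time.
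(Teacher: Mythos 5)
Your circuit construction, the lazy per-bag simulation, and the runtime tally are essentially the paper's: the merge/fusion gadget (CNOT onto a replica that is then $\pfont{Z}$-measured) is exactly the paper's merge gadget, and your node-by-node accounting against \Cref{table:clifford_asymptotics} is the paper's analysis of its sampling subroutine. For the special case $\mathcal P=\varnothing$ your plan is close to the paper's simplified route, with one inaccuracy: the correction $\prod_{w\in N_b}\pfont{Z}_w$ must frequently act on qubits that were already forgotten lower in the subtree (in the paper's worked example of \Cref{sec:affine_example}, both neighbours of the merged vertex are measured before the merge), so the product form you assert for the pre-fusion state is false in general; the repair is to treat those $\pfont{Z}$'s as classical flips of already-recorded $\pfont{X}/\pfont{Y}$ outcomes, which is in effect what the paper does by applying a single global Pauli correction at the end (\Cref{thm:ancilla_fix}) and then re-randomizing via \Cref{claim:simplestab}.

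The genuine gap is the postselected case, which \Cref{thm:sumofcubes} requires. First, interleaving the sampling of $\mathcal S$-qubits at early forget nodes with postselection of later $\mathcal P$-qubits does not implement conditioning on $z|_{\mathcal P}=m$: for the path $A\!-\!B\!-\!C$ with all qubits measured in the $\pfont{X}$ basis, $\pfont{X}_A\pfont{X}_C$ stabilizes the graph state, so the $A$ and $C$ outcomes are individually uniform but perfectly correlated; taking $\mathcal P=\{C\}$, $m_C=0$, your algorithm forgets $A$ first, samples outcome $1$ with probability $1/2$, and then the local postselection at $C$ has conditional probability zero, so you emit the error flag even though $p_{\mathcal P}(m)=1/2$. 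Second, a fusion correction can land on an already-postselected $w\in\mathcal P$, and flipping that recorded outcome contradicts the postselection you already enforced; whether the flip occurs depends on merge outcomes obtained only later, so it cannot be pre-compensated. The paper sidesteps both problems by never postselecting during the simulation: the sampling subroutine measures every data qubit and every merge ancilla unconditionally, and a separate correction subroutine then draws a \emph{uniformly random} stabilizer of $(U_{\mathrm{bases}}\otimes I)\circuit|+^{n_t}\rangle$ whose $\pfont{X}$-component simultaneously zeroes all merge ancillas and maps the $\mathcal P$-outcomes to $m$, i.e.\ satisfies Eqs.~(\ref{eq:pcor1}, \ref{eq:pcor2}); this is computed in $O(\|T\|_\omega^\omega)$ time by propagating affine stabilizer subspaces with enforced patterns up and back down the tree (\Cref{thm:powerful_correction}), and the uniformity of that Pauli together with \Cref{claim:existp} is what certifies that the output is distributed exactly as $p_{\mathcal S}(\cdot\,|\,m)$ and that the error flag is raised exactly when $p_{\mathcal P}(m)=0$. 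That global linear-algebraic correction machinery is the missing ingredient in your proposal.
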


\Cref{thm:treegss} from the introduction (restated below) follows immediately.
\treealg*
\begin{proof}
\Cref{thm:to_nice} allows us to map the given tree decomposition $T$ to a nice tree decomposition $T'$ with $O(n/t)$ nodes and treewidth $O(t)$ in time $O(nt)$. Clearly $\| T' \|_{\omega}^{\omega} = \sum_{i} |B_i|^{\omega} = O(n t^{\omega-1})$ since there are at most $O(n/t)$ bags in $T'$ of size at most $O(t)$. Applying Theorem~\ref{thm:sumofcubes} to $T'$ and $G$ finishes the proof. 
\end{proof}

Later, in \Cref{sec:plan} we will show how the graph state sampling algorithm for planar graphs (\Cref{thm:planargss}) also follows from \Cref{thm:sumofcubes}. Thus, \Cref{thm:sumofcubes} can be viewed as our main result in this work.

\paragraph{Algorithm overview}
Let us first describe the high level structure of the algorithm described in Theorem \ref{thm:sumofcubes}.  It is based on a quantum circuit which is constructed from the given tree decomposition. The circuit, which we denote by $\circuit$, is obtained by replacing each node of $T$ with a gadget based on the node type, i.e., introduce, forget, or merge. Since the circuit has $n=|V|$ data qubits labelled by vertices of the graph, and one qubit for each of the $n_a$ merge gadgets (we refer to these as \emph{merge ancillas}), $\circuit$ acts on a total of $n_t := n + n_a$ qubits. All qubits are initially prepared in the state $|+\rangle\equiv 2^{-1/2}(|0\rangle+|1\rangle)$. We will see that, if we apply the circuit and then project all merge ancillas into the all-zeros state, the resulting state of the $n$ data qubits is the graph state to be simulated:
\begin{equation}
(I\otimes \langle 0^{n_a}|) \circuit \left(|+^n\rangle\otimes |+^{n_a}\rangle\right) \propto |G\rangle.
\label{eq:postsel}
\end{equation}
Here the right-hand side is of course independent of the tree decomposition. To solve the graph state simulation problem it therefore suffices to classically simulate the following procedure: apply $\circuit$, project all merge ancillas onto the all-zeros state, project all data qubits in the set $\mathcal{P}$ according to the given measurement bases/outcomes, and measure all data qubits in the set $\mathcal{S}$ in the given Pauli bases. Our algorithm is a slight variant of this and is based on (a classical simulation of) of two steps: a sampling subroutine, followed by a correction subroutine.

In the first step, which we call the \textit{sampling subroutine}, we apply $\circuit$, measure all merge ancillas in the computational basis, and measure all data qubits in the given Pauli bases $\{P_v\}_{v\in V}$. This produces a bit string $y\in \{0,1\}^{n_t}$ drawn from a distribution
\begin{equation}
p(y)=|\langle y|(U_{\mathrm{bases}}\otimes I)\circuit|+^{n_t}\rangle|^2.
\label{eq:py}
\end{equation}
Here we write $U_{\mathrm{bases}}$ for the $n$-qubit Clifford unitary which locally changes the basis of each data qubit to the given Pauli bases $\{P_v\}$,  such that 
\[
P_v U_{\mathrm{bases}}|z\rangle=(-1)^{z_v} U_{\mathrm{bases}}|z\rangle \qquad \qquad z\in \{0,1\}^n.
\] 
In particular, $U_{\mathrm{bases}}$ is a tensor product of single-qubit identity gates, Hadamard gates, and gates 
\[
    \frac{1}{\sqrt{2}} \left( \begin{matrix} 1 & 1 \\ i & -i \end{matrix} \right)
\] 
which map the computational basis to itself, the $X$ basis, and the $Y$ basis respectively. A crucial feature of $\circuit$, as we will see in more detail below, is that its structure mirrors that of the tree decomposition. We show how to classically simulate the circuit in a ``lazy" way whereby gates are applied from the leaves up to the root of the tree and qubits are measured as soon as possible (after all operations on them have been applied). At any point in the simulation the state is separable under a partition reflecting the structure of the tree decomposition; we will show how this leads to a runtime of $\widetilde O(\|T\|_\omega^\omega)$ which can be far less than the na\"ive simulation based on the Gottesman-Knill theorem. 

The output $y\in \{0,1\}^{n_t}$ of the sampling subroutine will typically fail to have all the merge ancillas in the all-zeros state,
and thus will not directly provide a solution to the graph state simulation problem (since Eq.~\eqref{eq:postsel} will not be satisfied). In addition,  $y$ may disagree with the given measurement outcomes for the data qubits in the set $\mathcal{P}$ to be postselected.  We will fix both issues in the \textit{correction subroutine} of the algorithm.  The goal of the correction subroutine is to sample a uniformly random $n_t$-qubit Pauli $P_{\mathrm{cor}}$ such that
\begin{equation}
P_{\mathrm{cor}}(U_{\mathrm{bases}}\otimes I)\circuit|+^{n_t}\rangle \propto (U_{\mathrm{bases}}\otimes I)\circuit|+^{n_t}\rangle
\label{eq:pcor1}
\end{equation}
and 
\begin{equation}
P_{\mathrm{cor}} |y\rangle \propto |z\rangle\otimes |0^{n_a}\rangle \qquad z|_{\mathcal{P}}=m.
\label{eq:pcor2}
\end{equation}
Here the notation $z|_{\mathcal{P}}=m$ means that $z\in \{0,1\}^n$ matches the desired postselected outcomes (i.e., $z$ agrees with $m$ on the qubits in the set $\mathcal{P}$).  The runtime used in our algorithm to compute $P_{\mathrm{cor}}$ is $O(\|T\|_\omega^\omega)$.  For the special case where postselection on the data qubits is not needed (i.e., $\mathcal{P}=\varnothing$), we also provide a simplified algorithm with a faster $O(n_t+|E|)$ runtime.

Plugging Eqs.~(\ref{eq:pcor1}, \ref{eq:pcor2}) into Eq.~\eqref{eq:py} and using Eq.~\eqref{eq:postsel} we see that
\begin{equation}
p(y)=|\langle y|P_{\mathrm{cor}}(U_{\mathrm{bases}}\otimes I)\circuit|+^{n_t}\rangle|^2\propto |\langle z|U_{\mathrm{bases}}|G\rangle|^2.
\label{eq:correctequiv}
\end{equation}

Therefore, the binary string $z$ defined by $P_{\mathrm{cor}}|y\rangle\propto |z\rangle$ is a valid measurement outcome for the given  graph state simulation instance, i.e., it occurs with nonzero probability if the graph state is measured in the given bases, and matches the specified postselection outcomes. To show that $z$ is sampled from the correct probability distribution, which is uniform over all valid measurement outcomes, we appeal to \Cref{claim:existp}. Suppose $z'$ is another valid measurement outcome for this given graph state simulation instance. \Cref{claim:existp} states that there exists a Pauli $P$ in the stabilizer group of $(U_{\mathrm{bases}}\otimes I)\circuit|+^{n_t}\rangle$ such that $P|z\rangle\propto |z'\rangle$. Since $P_{\mathrm{cor}}$ is a uniformly random stabilizer subject to Eqs.~(\ref{eq:pcor1}, \ref{eq:pcor2}), the stabilizer $P P_{\mathrm{cor}}$ is also uniformly random subject to those constraints. By definition, sampling $P_{\mathrm{cor}}$ in the correction step leads to output $z$, whereas sampling $PP_{\mathrm{cor}}$ leads to output $z'$. Therefore any two valid measurement outcomes $z,z'$ are equally likely and we have shown that the output of the correction step is uniformly sampled from the set of valid measurement outcomes.

In the remainder of this section we fill in the details of the algorithm described above. In \Cref{sec:coft} we describe the circuit $\circuit$ and establish Eq.~\eqref{eq:postsel}. In \Cref{sec:samplingalg} we describe the sampling subroutine, show that it outputs a bit string $y$ sampled according to Eq.~\eqref{eq:py}, and establish the runtime bound $\widetilde O(\|T\|_\omega^\omega)$. Finally, in \Cref{sec:coralg} we describe the correction subroutine, and show that it outputs a random Pauli $P_{\mathrm{cor}}$ satisfying equations \eqref{eq:pcor1} and \eqref{eq:pcor2} using a runtime $\widetilde O(\|T\|_\omega^\omega)$.

\subsection{The quantum circuit \texorpdfstring{$\circuit$}{C} \label{sec:coft}}
Given a nice tree decomposition $T$ of a graph $G$, let us consider a quantum computation obtained by replacing each node with a gadget based on its type---introduce, forget, or merge. The gadgets depend on both the node and its children, so let $A$ be bag of the current node, and let $B_1, B_2$ be the bags of its children, or just $B$ for a single child, or $B = \varnothing$ if no children. The gadgets are depicted in \Cref{fig:circuit_gadgets} and described below:
\begin{description}
\item[Introduce:] Add a qubit in state $\ket{+}$ for each vertex in $A \backslash B$, and pass everything else through. 
\item[Forget:] Set aside qubits $B \backslash A$ (i.e., do not pass them to the parent), but first, we apply CZ gates for any edges with at least one endpoint in $B \backslash A$ and both endpoints in $B$.
\item[Merge:] For each vertex $v \in B_1 \cap B_2$, apply a CNOT from the qubit $v \in B_2$ to qubit $v \in B_1$, then measure the latter qubit, a \emph{merge ancilla}, in the $Z$-basis.
\end{description}
In other words, when a bag introduces a qubit, we add a new qubit in state $\ket{+}$. Just before a bag forgets a qubit, there is a ``last call'' to apply any CZ gates involving that qubit. Last, merge nodes apply the merge gadget to combine duplicate qubits.

See \Cref{fig:td_to_circuit} for an example of the entire construction. We define the quantum circuit $\circuit$ to be the \textit{unitary} part of the circuit constructed in this way, which contains only CZ and CNOT gates and does not contain any measurements. So the quantum computation described above and depicted in \Cref{fig:td_to_circuit} consists of preparing all qubits in the $|+^{n+n_a}\rangle$ state, applying $\circuit$ and then measuring all merge ancillas in the computational basis. 

Note that the total number of qubits in the circuit is upper bounded as
\begin{equation}
\text{Total number of qubits in } \circuit = n_t = n+n_a \leq \|T\|_1.
\label{eq:numq}
\end{equation}
The circuit contains a CZ gate for every edge in the graph and a CNOT gate for each merge ancilla. Therefore the total number of gates in $\circuit$ is 
\begin{equation}
\text{Total number of gates in } \circuit = |E|+ n_a.
\label{eq:numg}
\end{equation}

\begin{figure}
    \centering
    \sbox0{
    \begin{quantikz}[row sep=.8cm]
    \lstick[wires=3]{$B$} & \qw & \qw \rstick[wires=5]{$A$} \\
    & \qw & \qw \\
    & \qw & \qw \\
    & \lstick{$\ket{+}$} & \qw \\
    & \lstick{$\ket{+}$} & \qw
    \end{quantikz}
    }
    \sbox1{
    \begin{quantikz}[row sep=.7cm]
    \lstick[wires=5]{$B$} & \ctrl{1} & \ctrl{3} & \ctrl{4} & \qw &  \\
    & \control{} & \qw & \qw & \qw & \qw \rstick[wires=3]{$A$} \\
    & \control{} & \qw & \qw & \qw & \qw \\
    & \qw & \control{} & \qw & \qw & \qw  \\ 
    & \ctrl{-2} & \qw & \control{} & \qw & \\
    \end{quantikz}
    }
    \sbox2{
    \begin{quantikz}[row sep=.7cm]
    \lstick[wires=3]{$B_1$} & \qw & \qw & \qw & \qw \rstick{$B_1 \backslash B_2$} \\[-5pt]
    & \targ{} & \qw & \meter[scale=.8]{} \rstick[wires=2]{merge \\ ancillas} \\[-10pt]
    & \qw & \targ{} & \meter[scale=.8]{} \\
    \lstick[wires=3]{$B_2$} & \ctrl{-2} & \qw & \qw & \qw\rstick[wires=2]{$B_1 \cap B_2$} \\
    & \qw & \ctrl{-2} & \qw & \qw \\
    & \qw & \qw & \qw & \qw \rstick{$B_2 \backslash B_1$} \\
    \end{quantikz}
    }
    \begin{tabular}{ccc}
        \bf{Introduce} & \bf{Forget} & \bf{Merge} \\
        \raisebox{5pt}{\usebox0} & \raisebox{-5pt}{\usebox1} & \usebox2 
    \end{tabular}
    \caption{Examples of the three gadgets for translating tree decomposition $T$ to circuit $\circuit$.}
    \label{fig:circuit_gadgets}
\end{figure}
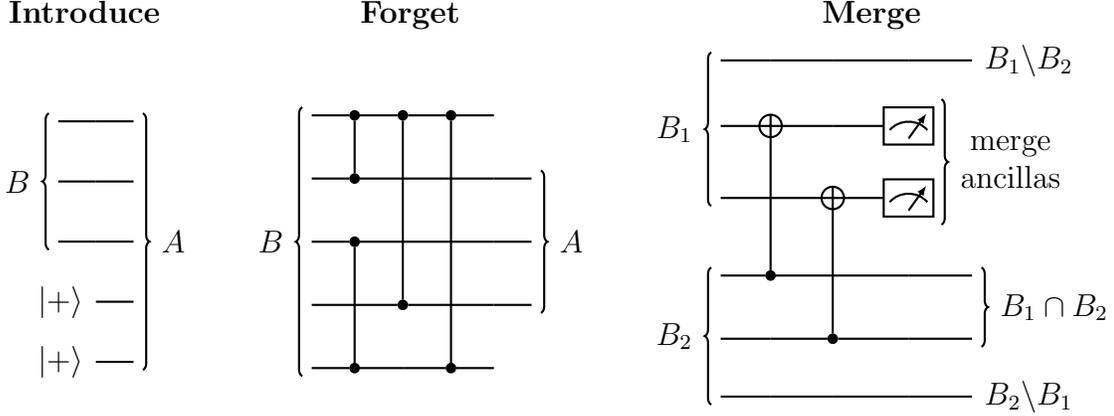

\begin{figure}
\begin{center}
\scalebox{.98}{\begin{tikzpicture}
\tikzstyle{bag}=[rectangle, rounded corners=8pt, minimum size=30pt, text width=30pt, align=center, fill opacity=0.1, text opacity=1]
\tikzstyle{introduce}=[bag, fill=introduce]
\tikzstyle{forget}=[bag, fill=forget]
\tikzstyle{merge}=[bag, fill=merge]
\matrix[row sep=17pt, column sep=23pt] {
\node (ABC) [introduce]{\phantom{$A B C$}}; \node (ABC_text) {$A B C$}; &  \node (BC) [forget]{\phantom{$B C$}};  \node (BC_text) {$B C$}; &  \\
& & \node (BCD) [merge]{\phantom{$B C D$}}; \node (BCD_text) {$B C D$}; & \node (BD) [forget]{\phantom{$B D$}};  \node (BD_text) {$B D$}; \\ 
\node (CDE) [introduce]{\phantom{$C D E$}}; \node (CDE_text) {$C D E$}; & \node (CD) [forget]{\phantom{$C D$}}; \node (CD_text) {$C D$}; & & & \node (BDG) [merge]{\phantom{$B D G$}}; \node (BDG_text) {$B D G$}; & \node (BG) [forget]{\phantom{$B G$}}; \node (BG_text) {$B G$};  & \node (BFG) [introduce]{\phantom{$B F G$}};  \node (BFG_text) {$B F G$}; & \node (EMPTY) [forget]{\phantom{$\varnothing$}};  \node (EMPTY_text) {$\varnothing$}; \\
& & \node (DGH) [introduce]{\phantom{$D G H$}};  \node (DGH_text) {$D G H$}; & \node (DG) [forget]{\phantom{$D G$}};  \node (DG_text) {$D G$}; \\
};
\newcommand{\roundec}{2.5pt}
\path[draw,-] (ABC) edge (BC)
(CDE) edge (CD)
($(BC.south east) - (\roundec,-\roundec)$) -- ($(BCD.north west) - (-\roundec,\roundec)$)
($(CD.north east) - (\roundec,\roundec)$) -- ($(BCD.south west) - (-\roundec,-\roundec)$)
(BCD) edge (BD)
(DGH) edge (DG)
($(BD.south east) - (\roundec,-\roundec)$) -- ($(BDG.north west) - (-\roundec,\roundec)$)
($(DG.north east) - (\roundec,\roundec)$) -- ($(BDG.south west) - (-\roundec,-\roundec)$)
(BDG) edge (BG)
(BG) edge (BFG)
(BFG) edge (EMPTY);
\tikzstyle{type}=[rectangle, rounded corners=2pt, fill opacity=.1]
\tikzstyle{legend}=[]
\node (INTRODUCE) [type, fill=introduce, above=80pt of BG]{};
\node (INTRODUCELABEL) [legend, right=5pt of INTRODUCE]{Introduce node};
\node (FORGET) [type, fill=forget, below=5pt of INTRODUCE]{};
\node (FORGETLABEL) [legend, right=5pt of FORGET]{Forget node};
\node (MERGE) [type, fill=merge, below=5pt of FORGET]{};
\node (MERGELABEL) [legend, right=5pt of MERGE]{Merge node};
\end{tikzpicture}}
\scalebox{.98}{\input{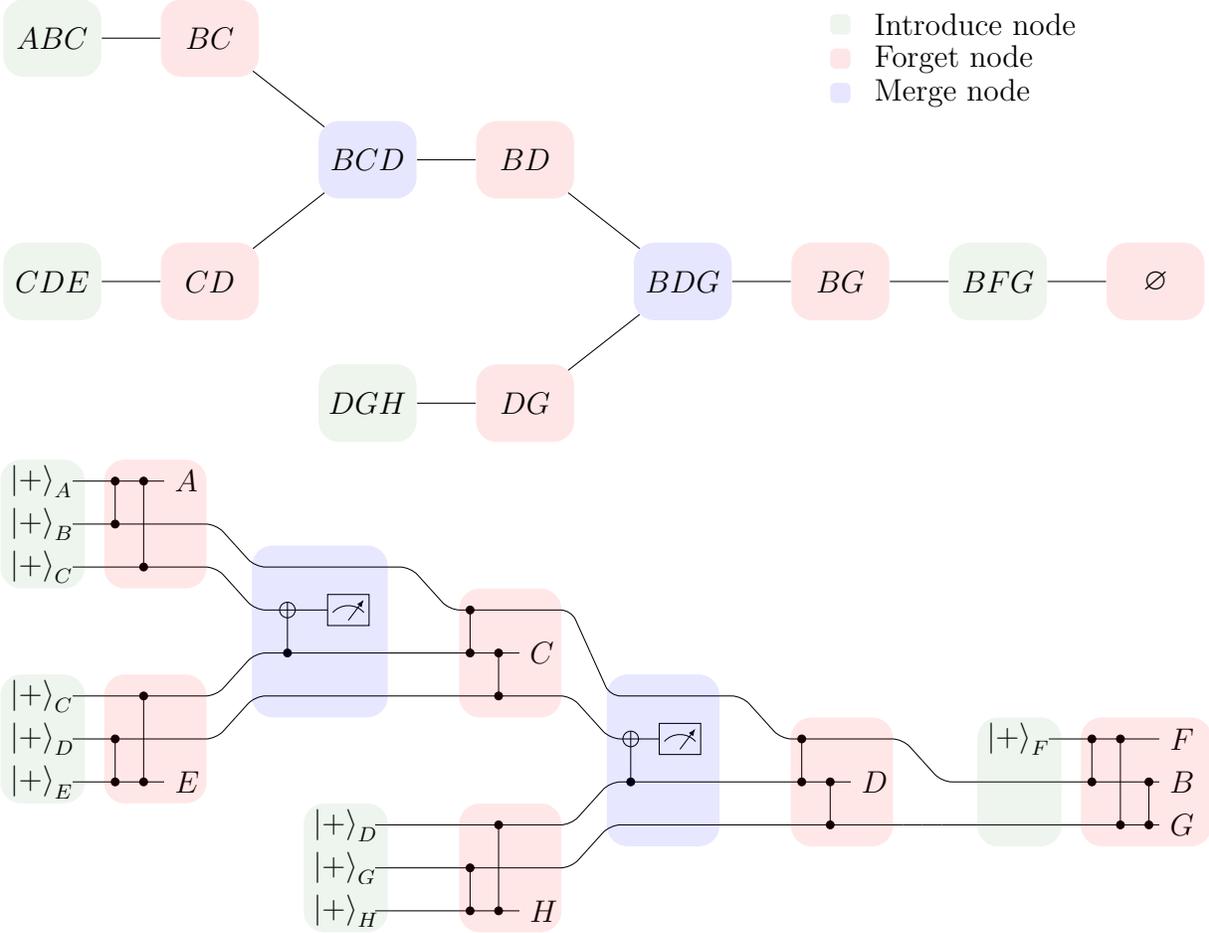}}
\end{center}
\caption{\textit{Top:}  Nice tree decomposition $T$ of graph in \Cref{fig:graph_and_td}.  \textit{Bottom:}  The quantum computation which prepares the state $\circuit|+^{n_t}\rangle$ and then measures all merge ancillas in the computational basis.}
\label{fig:td_to_circuit}
\end{figure}

We now establish Eq.~\eqref{eq:postsel}.

\begin{theorem}
\label{thm:graph_state_equality}
Let $G$ be a graph with a nice tree decomposition $T$. If all the merge ancillas are measured in the state $\ket{0}$, then the state on the remaining qubits is precisely $\ket{G}$.
\end{theorem}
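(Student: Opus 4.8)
\textbf{The plan} is to prove a stronger, local statement by induction over the nice tree decomposition $T$, processing its nodes from the leaves up to the root. For a node $x$ of $T$, write $T_x$ for the subtree rooted at $x$, $B_x$ for its bag, $V_x\subseteq V$ for the set of graph vertices occurring in the bags of $T_x$, and $F_x:=V_x\setminus B_x$ for the vertices that get ``forgotten'' (i.e.\ set aside as data-output qubits) somewhere in $T_x$. Let $E_x$ be the set of edges of $G$ with both endpoints in $V_x$ and at least one endpoint in $F_x$. Running only the gadgets belonging to $T_x$ on fresh $\ket{+}$ qubits and postselecting all merge ancillas inside $T_x$ onto $\ket{0}$ leaves a state on a register carrying one wire per element of $B_x$ together with one wire per element of $F_x$; the invariant I want to maintain is
\[
\ket{\Psi_x}\;=\;2^{-k_x/2}\Bigl(\prod_{e\in E_x}\CZ_e\Bigr)\ket{+}^{\otimes(B_x\cup F_x)},
\]
where $k_x$ is the number of merge ancillas in $T_x$. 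Applying this at the root (where $B_{\mathrm{root}}=\varnothing$ forces $F_{\mathrm{root}}=V_{\mathrm{root}}=V$ and $E_{\mathrm{root}}=E$) says precisely that, conditioned on all merge ancillas reading $\ket{0}$, the data qubits are in a positive real multiple of $\bigl(\prod_{e\in E}\CZ_e\bigr)\ket{+^n}=\ket{G}$; since that state is nonzero, the normalized conditional state equals $\ket{G}$ exactly, which is the claim.

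The base case and the introduce/forget steps should be routine bookkeeping. A leaf is an introduce node with no child whose gadget just prepares $\ket{+}^{\otimes B_x}$, and there $F_x=E_x=\varnothing$. At an introduce node with child $y$, the gadget tensors fresh $\ket{+}$ states onto the new wires $B_x\setminus B_y$; here I would use the connectivity axiom of tree decompositions to check that $F_x=F_y$ and that no edge of $G$ joins a vertex of $B_x\setminus B_y$ to a vertex of $F_y$, so that $E_x=E_y$ and the invariant carries over. At a forget node with child $y$, the gadget applies $\CZ_e$ for each edge $e$ lying inside $B_y$ and incident to $B_y\setminus B_x$ and then sets those wires aside; the set $D$ of these freshly applied edges has both endpoints in $B_y$ and hence is disjoint from $E_y$ (every edge of which has an endpoint outside $B_y$), and a short set computation gives $E_x=E_y\sqcup D$, again preserving the invariant.

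The merge step is the \emph{crux}, and the part I expect to need the most care. At a merge node $x$ with children $y_1,y_2$ and $B_x=B_{y_1}\cup B_{y_2}$, the inductive hypothesis supplies $\ket{\Psi_{y_1}}\otimes\ket{\Psi_{y_2}}$ on the physically disjoint wire sets of the two children. The structural facts I would extract from the connectivity axiom are: $V_{y_1}\cap V_{y_2}\subseteq B_{y_1}\cap B_{y_2}$; hence $F_{y_1},F_{y_2}$ are disjoint with $F_x=F_{y_1}\sqcup F_{y_2}$; and $E_x=E_{y_1}\sqcup E_{y_2}$ (an edge in both would have both endpoints in $B_{y_1}\cap B_{y_2}$, contradicting incidence to some $F_{y_i}$). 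It then remains to analyze the gadget itself, which for each $v\in B_{y_1}\cap B_{y_2}$ applies a CNOT from the $v$-wire of $y_2$ onto the $v$-wire of $y_1$ and measures the latter. A one-line computation shows that, conditioned on outcome $0$, this is the ``fusion'' map $\ket{a}\ket{b}\mapsto\delta_{a,b}\ket{a}$ on that pair of wires (keeping the $y_2$ wire); that it sends $\ket{+}\otimes\ket{+}$ to $2^{-1/2}\ket{+}$; and --- the point that needs a careful check --- that it commutes past $\CZ_{v,w}$ for any other wire $w$, turning it into a $\CZ$ on the surviving wire, with no spurious phase. Each $E_{y_i}$-edge touches at most one fused vertex, so applying these identities once per $v\in B_{y_1}\cap B_{y_2}$ turns $\ket{\Psi_{y_1}}\otimes\ket{\Psi_{y_2}}$ into $2^{-k_x/2}\bigl(\prod_{e\in E_{y_1}\cup E_{y_2}}\CZ_e\bigr)\ket{+}^{\otimes(B_x\cup F_x)}=\ket{\Psi_x}$, using $k_x=k_{y_1}+k_{y_2}+|B_{y_1}\cap B_{y_2}|$.

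In summary, the work is all in the merge step: I need to (i) verify the fusion identity for the CNOT-then-$Z$-measurement gadget and its phase-free commutation with $\CZ$ gates, and (ii) marshal the handful of tree-decomposition connectivity arguments that make $F_x$ and $E_x$ decompose correctly over the children. Everything else --- the base case and the introduce and forget gadgets --- is comparatively mechanical, using the same connectivity ingredient.
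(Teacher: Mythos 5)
Your proposal is correct, but it takes a genuinely different route from the paper's published proof. You run a leaves-to-root induction over $T$, maintaining the invariant that after the gadgets of the subtree $T_x$ and postselection of its merge ancillas the state is $2^{-k_x/2}\prod_{e\in E_x}\CZ_e\ket{+}^{\otimes(B_x\cup F_x)}$; the substance is the fusion identity (CNOT followed by $\langle 0|$ on the target acts as $\ket{a}\ket{b}\mapsto\delta_{a,b}\ket{b}$, maps $\ket{++}$ to $2^{-1/2}\ket{+}$, and commutes phase-free past $\CZ$ onto the surviving wire) together with the connectivity bookkeeping showing that $F_x$ and $E_x$ decompose as you claim at introduce, forget, and merge nodes --- all of which checks out, including the implicit point that no edge's $\CZ$ is ever applied twice, which is exactly your disjointness statements $E_y\cap D=\varnothing$ and $E_{y_1}\cap E_{y_2}=\varnothing$. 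The paper instead proves a stronger global statement with no postselection at all: using the CNOT/CZ commutation identities it pushes every merge CNOT to the front of $\circuit$, where it acts trivially on $\ket{++}$ and can be deleted, concluding that $\circuit\ket{+^{n_t}}$ is itself a graph state $\ket{G'}$ whose extra edges all touch merge ancillas and whose induced subgraph on the data vertices is $G$ (Claim~\ref{claim:graphgprime}); the theorem then follows from the standard fact that projecting some vertices of a graph state onto $\ket{0}$ leaves the graph state of the induced subgraph on the rest. The global form buys an explicit, efficiently computable $G'$, which the paper reuses later (e.g.\ for the simplified correction subroutine without postselection and the linear-system application); your local induction avoids invoking the graph-state projection fact, tracks normalization explicitly, and is self-contained, at the cost of more combinatorial case analysis and of establishing only the postselected statement.
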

\begin{proof}
We will prove the following more general statement:

\begin{claim}
The state prepared by  $\circuit$  is a graph state $|G'\rangle= \circuit|+^{n_t}\rangle$ where the graph $G'=(V',E')$ has the following properties:
\begin{itemize}
    \item $V'=V \cup M$ where $V$ is the vertex set of $G$ and $M$ contains a vertex for each merge ancilla in $\circuit$ (i.e., $|M| = n_a$).
    \item $E'=E \cup F$ where $E$ is the edge set of $G$ (with both endpoints in $V$) and $F$ only contains edges with at least one endpoint in $M$.
\end{itemize}
\label{claim:graphgprime}
\end{claim}

Theorem \ref{thm:graph_state_equality} follows directly from the claim, since (using an easily verified property of graph states)  $\left(I_V\otimes \langle 0|_M\right)|G'\rangle$ is proportional to the graph state for the induced subgraph of $G'$ on vertices in $V$. Claim~\ref{claim:graphgprime} states that this induced subgraph is $G$.

It remains to establish the claim. Consider the two circuit identities shown in Fig.~\ref{fig:circuitidentities} describing how CNOT moves past CZ. 

\begin{figure}[h!]
\centering
\begin{quantikz}
& \control[blue]{} \vqwblue{1} & \qw & \qw & && \qw & \control[blue]{} \vqwblue{1} & \control[blue]{} \vqwblue{2}\qw & \qw  \\
& \control[blue]{} & \targ{} & \qw & = && \targ{} & \control[blue]{} & \qw & \qw \\
& \qw & \ctrl{-1} & \qw & && \ctrl{-1} & \qw & \control[blue]{} & \qw
\end{quantikz}
\hspace{2cm}
\begin{quantikz}
& \qw & \targ{} & \qw & && \targ{} & \qw & \qw \\
& \control[red]{} \vqwred{1} & \ctrl{-1} & \qw & = && \ctrl{-1} & \control[red]{} \vqwred{1} & \qw \\
& \control[red]{} & \qw & \qw & && \qw & \control[red]{} & \qw 
\end{quantikz}
\caption{Two circuit identities.\label{fig:circuitidentities}}
\end{figure}

Putting these together, we see that we can push CZ gates to the right of CNOT gates that appear in a merge gadget, as shown in Fig.~\ref{fig:pushCNOTS}.
\begin{figure}[h!]
\centering

\begin{quantikz}
\lstick{$a$}   & \control[blue]{} \vqwblue{1} & \qw & \qw \rstick{$a$} \\
\lstick{$b_1$} & \control[blue]{} & \targ{} & \qw \rstick{$\text{Merge ancilla}$} \\
\lstick{$b_2$} & \control[red]{} & \ctrl{-1} & \qw \rstick{$b$} \\
\lstick{$c$}   & \control[red]{} \vqwred{-1} & \qw & \qw \rstick{$c$}
\end{quantikz}
\; =
\begin{quantikz}
\lstick{$a$}   & \qw       & \control[blue]{} \vqwblue{1}   & \control[blue]{} \vqwblue{2}   & \qw \rstick{$a$}  \\
\lstick{$b_1$} & \targ{}   & \control[blue]{} & \qw        & \qw \rstick{$\text{Merge ancilla}$} \\
\lstick{$b_2$} & \ctrl{-1} & \control[red]{} & \control[blue]{} & \qw \rstick{$b$} \\
\lstick{$c$}   & \qw       & \control[red]{} \vqwred{-1}  & \qw        & \qw \rstick{$c$}
\end{quantikz}
\caption{Pushing CZ gates to the right of CNOTS\label{fig:pushCNOTS}}
\end{figure}

We can use this rule to move all the CNOT gates in $\circuit$ to the left of all CZ gates. The CNOT gates can then be removed from the circuit altogether, as they have a trivial action on the initial state $|+^{n_t}\rangle$ (that is, $\mathrm{CNOT} \ket{++}=\ket{++}$). 

Once we have removed all the CNOTs from the circuit in this manner, we are left with a circuit composed of CZ gates acting on $|+^{n_t}\rangle$, that is, a graph state. The graph $G'$ has $n+n_a$ vertices as claimed. Using the circuit identities above we can also verify that the edge set of $G'$ has the desired properties. 
\end{proof}

\subsection{The sampling subroutine \label{sec:samplingalg}}

The goal of the sampling subroutine is to (classically simulate) the following process: initialize $n_t=n+n_a$ qubits in the $|+\rangle$ state, apply the circuit $\circuit$ (described in the previous section), measure all $n$ data qubits in the given Pauli bases $\{P_v\}_{v\in V}$, and measure all $n_a$ merge ancillas in the computational basis. The output of the sampling subroutine is a binary string $y\in\{0,1\}^{n_t}$ sampled from the distribution Eq.~\eqref{eq:py}. 

The sampling subroutine is a recursive algorithm which is described in Algorithm \ref{alg:basic_sampling}. In particular, the desired output $y\in \{0,1\}^{n_t}$ is obtained by running Algorithm \ref{alg:basic_sampling} with the given graph $G$, and tree decomposition $T$ (recall it has an empty root bag). The algorithm should be interpreted as a \textit{classical} algorithm in which all stabilizer states are represented using tableaux and all operations are performed using the update rules summarized in Table \ref{table:clifford_asymptotics}.  

\begin{algorithm}
	\caption{Sampling subroutine \label{alg:basic_sampling}}
	\begin{algorithmic}[1]
		\Require{Nice tree decomposition $T$}
		{
		\renewcommand{\algorithmicrequire}{\textbf{Global:}}
		\Require{Graph $G$ and Pauli measurement bases $\{P_v\}_{v\in V(G)}$.}
		}
		\Ensure{Print measurement outcomes of $\circuit$ for merge ancillas and each vertex \emph{not} in the root bag of $T$.  Return post-measurement state of vertices in root bag.}
		\Statex{}
		\Function{SampleGraph}{$T$}
 		\Statex{$(\triangleright)$ We denote the root bag of $T$ by $A$, and its child bags by $B_1$ and $B_2$ (or just $B$ if it only has one child). Similarly, let $S_1$ and $S_2$ be the left and right subtrees of $T$ (or $S$ if there is only one child).}
		\If{$T$ is a single node}
		\State \Return{$\bigotimes_{v \in A} \ket{+}_v$}
		\EndIf
		\If{root($T$).type = Introduce}
		\Let{$\ket{\psi}$}{\Call{SampleGraph}{$S$}}
		\State \Return{$\ket{\psi} \otimes \bigotimes_{v \in A \backslash B} \ket{+}_v$}
		\EndIf
		\If{root($T$).type = Forget}
		\Let{$\ket{\psi}$}{\Call{SampleGraph}{$S$}}
		\State Apply CZ to $\ket{\psi}$ for each edge $\{u,v\} \in E(G)$ with $u \in B$ and $v \in B \backslash A$
		\State For each $v \in B \backslash A$, measure corresponding qubit of $\ket{\psi}$ in $P_v$ basis
		\State \Return{post-measurement state}
		\EndIf
		\If{root($T$).type = Merge}
		\Let{$\ket{\psi_1}$}{\Call{SampleGraph}{$S_1$}}
		\Let{$\ket{\psi_2}$}{\Call{SampleGraph}{$S_2$}}
		\State For each $v \in B_1 \cap B_2$, apply CNOT controlled by qubit $v$ in $\ket{\psi_2}$, to qubit $v$ in $\ket{\psi_1}$. 
		\State For each such $v$, measure the corresponding qubit of $\ket{\psi_1}$ in $Z$ basis
		\State \Return{post-measurement state}
		\EndIf
		\EndFunction
	\end{algorithmic}
\end{algorithm}

\begin{theorem}
\label{thm:basic_analysis}
\Cref{alg:basic_sampling} runs in $\widetilde O(\|T\|_\omega^\omega)$ time.
\end{theorem}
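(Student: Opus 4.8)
The plan is to observe that \Cref{alg:basic_sampling} performs a single bottom-up pass over $T$ --- $\textsc{SampleGraph}$ is invoked exactly once on the subtree rooted at each node --- so its total runtime equals the sum, over all nodes $v$ of $T$, of the work done \emph{locally} at $v$, meaning the steps in the body of $\textsc{SampleGraph}(T)$ performed after the recursive calls on the child subtrees have returned. I would then bound this local work node by node by an $\widetilde O(\cdot^\omega)$ term and sum.

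The key structural fact is that the lazy ordering keeps every stabilizer tableau small. By inspection of \Cref{alg:basic_sampling}, the call on the subtree rooted at a node $v$ returns a stabilizer state on exactly $|B_v|$ qubits, where $B_v$ is the bag of $v$. Consequently, while processing $v$: if $v$ is a leaf or an introduce node, every intermediate tableau has at most $|B_v|$ qubits; if $v$ is a forget node with child $c$ (so $B_c \supsetneq B_v$), the tableau has exactly $|B_c|$ qubits; and if $v$ is a merge node with children $c_1,c_2$ (so $B_{c_1},B_{c_2}\subseteq B_v$), the tableau has at most $|B_{c_1}|+|B_{c_2}|\le 2|B_v|$ qubits. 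Using the running times in \Cref{table:clifford_asymptotics} together with $\omega\ge 2$ and the fact that a nonempty bag has size at least $1$, I would then bound the local work: a leaf writes down the tableau of $\ket{+^{|B_v|}}$ in $O(|B_v|^2)$ time; an introduce node appends at most $|B_v|$ fresh $\ket{+}$ qubits in $O(|B_v|^2)$ time; a merge node appends one child's state to the other's in $O((|B_{c_1}|+|B_{c_2}|)^2)=O(|B_v|^2)$ time, applies at most $|B_v|$ CNOT gates at $O(|B_v|)$ each, and measures at most $|B_v|$ merge ancillas from an at-most-$2|B_v|$-qubit state in $\widetilde O(|B_v|^{\omega-2}(2|B_v|)^2)=\widetilde O(|B_v|^\omega)$ time; and a forget node applies the batch of CZ gates for all relevant edges of $G$ in $O(|B_c|^\omega)$ time and measures at most $|B_c|$ qubits in $\widetilde O(|B_c|^{\omega-2}|B_c|^2)=\widetilde O(|B_c|^\omega)$ time. (It is essential here to apply the CZ gates in one batch via the $O(n^\omega)$ routine rather than one at a time, since a single forget node may have to apply as many as $\Theta(|B_c|^2)$ of them.) Thus the local work at a leaf, introduce, or merge node is $\widetilde O(|B_v|^\omega)$, and the local work at a forget node $v$ with child $c$ is $\widetilde O(|B_c|^\omega)$.

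Summing over all nodes: the leaf, introduce, and merge contributions total $\widetilde O\big(\sum_v |B_v|^\omega\big)=\widetilde O(\|T\|_\omega^\omega)$ directly. For the forget contributions, I would charge the $\widetilde O(|B_c|^\omega)$ cost of each forget node to its unique child $c$; because every node of $T$ is the child of at most one node, each node is charged at most once, so these contributions also total at most $\widetilde O\big(\sum_u |B_u|^\omega\big)=\widetilde O(\|T\|_\omega^\omega)$. Adding the two gives a total runtime of $\widetilde O(\|T\|_\omega^\omega)$, as claimed.

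The analysis is largely bookkeeping, but there are two points that deserve care and where the actual content lies: (i) confirming that the bottom-up, measure-as-early-as-possible evaluation keeps every tableau at $O(\text{bag size})$ qubits rather than $O(n)$ --- this is precisely why the algorithm can beat the na\"ive Gottesman--Knill bound --- and (ii) noticing that a forget node's cost is governed by its \emph{child's} (larger) bag, so it must be paid for by that child, which is legitimate only because each node of a tree has at most one parent. I expect (i) to be the main thing to get exactly right; once the size invariant is pinned down, the rest follows from \Cref{table:clifford_asymptotics} and the inequality $\omega\ge 2$.
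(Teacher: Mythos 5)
Your proposal is correct and follows essentially the same argument as the paper: bound the local cost of each gadget (introduce, forget, merge) using the tableau operations of \Cref{table:clifford_asymptotics} on states whose size is governed by the bags involved, then observe that each bag pays at most a constant number of $\widetilde O(|B|^\omega)$ charges, giving $\widetilde O(\|T\|_\omega^\omega)$ overall. Your explicit statement of the invariant that a call on the subtree rooted at $v$ returns a state on exactly $|B_v|$ qubits is just a cleaner formulation of what the paper uses implicitly.
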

\begin{proof}

\Cref{alg:basic_sampling} recursively simulates $\circuit$ gadget by gadget up the tree decomposition, so it suffices to upper bound the runtime for simulating each gadget using the stabilizer simulation algorithms summarized in \Cref{table:clifford_asymptotics}. Let $A$ be the bag of the current node, and $B_1, B_2$ be the bags of its children (or $B$ if there is only one child). 
\begin{description}
\item[Introduce node:] Initializing $|A \backslash B|$ many qubits in the $\ket{+}$ state uses a runtime $O(|A|^2)$.
\item[Forget node:] First we apply all CZ gates for edges which touch forgotten vertices in $B$, which takes $O(|B|^\omega)$ time. We then measure qubits at vertices in $B \backslash A$, requiring $\widetilde O(|B\backslash A|^{\omega-2} |B|^{2})\leq \widetilde O(|B|^{\omega})$ time.
\item[Merge node:] We apply CNOT gates between $O(|B_1 \cap B_2|)$ pairs of qubits, and a similar number of measurements. We bound both as costing at most
\[
\widetilde O((|B_1| + |B_2|)^\omega)  = \widetilde{O}(|A|^{\omega}).
\]
\end{description}
In each of the above cases, we can conservatively upper bound the cost as the sum of the $\omega$th power of all the bags involved. Since each bag is involved in at most two gadgets (once as a child, once as a parent), the cost for the entire circuit is at most $\widetilde O(\|T\|_\omega^\omega)$ as claimed.
\end{proof}

\subsection{The correction subroutine \label{sec:coralg}}

The sampling subroutine from the previous section outputs a binary string $y\in \{0,1\}^{n_t}$ sampled from the distribution Eq.~\eqref{eq:py}.  The correction subroutine takes this $y$ as input and computes a uniformly random $n_t$-qubit Pauli $P_{\mathrm{cor}}$ such that Eqs.~(\ref{eq:pcor1},\ref{eq:pcor2}) hold, or reports that no such Pauli exists. In the former case, the output of the correction subroutine is the binary string $z\in \{0,1\}^n$ from Eq.\eqref{eq:pcor2} which solves the graph state simulation problem, i.e., it is sampled from the conditional distribution obtained by measuring all qubits of the graph state $|G\rangle$ in the given Pauli bases $\{P_v\}_{v\in V}$ and postselecting on the given outcomes $m$ for qubits in $\mathcal{P}$.

In Section \ref{sec:corgeneral} we provide a description of the correction subroutine, which completes our description of the classical algorithm for graph state simulation and the proof of Theorem \ref{thm:sumofcubes}. In some applications, such as the Clifford circuit simulation task described in \Cref{sec:planar_circuit_application}, one may be interested in the special case of graph state simulation without postselection.  For this special case we provide a simplified correction subroutine with an improved runtime, in Section \ref{sec:withoutpost}.

\subsubsection{Correction subroutine: general case \label{sec:corgeneral}}
There are two types of qubits for which our correction algorithm has strict requirements: postselected qubits (i.e., those qubits in the set $\mathcal{P}$) and merge ancillas.  For the merge ancillas, we enforce that their measurement outcomes are $0$.  For postselected qubits, we enforce that their measurement outcomes coincide with some desired values specified by the input to the problem.  Nevertheless, in our analysis below we will treat each type of qubit equally since our correction subroutine will not need any properties that distinguish the two cases. Our goal is to compute a uniformly random Pauli $P_{\mathrm{cor}}$ that stabilizes the state $(U_{\mathrm{bases}}\otimes I)\circuit|+^{n_t}\rangle$ (up to a sign) and whose $\pfont X$-component either flips or leaves unchanged all bits of $y$ corresponding to merge ancillas and postselected qubits, depending on its desired measurement result (see Eq. \eqref{eq:pcor2}).

Let us represent $n$-qubit Pauli operators using the homomorphism mapping $\alpha \pfont X^{a} \pfont Z^{b}$ to the $2n$-bit vector $(a, b)$. The phase $\alpha$ is not needed because it will not affect measurement. Under this homomorphism, the Pauli group maps to $\mathbb F_2^{2n}$, a subgroup of the Paulis maps to a linear subspace of $\mathbb F_2^{2n}$, and a coset of a Pauli subgroup maps to an \emph{affine} subspace of $\mathbb F_2^{2n}$.

When searching for a Pauli stabilizer with particular $\pfont X$- and $\pfont Z$-components, we represent the desired components by a \emph{pattern} $\pi = (\pi^{(X)},\pi^{(Z)}) \in \{0,1,*\}^{2n}$, where $*$ indicates that the $\pfont{X}$- or $\pfont{Z}$-component can be either $0$ or $1$.  We associate the pattern $\pi$ with the set of all allowable outcomes
\[
\Pi := \{(a,b) \in \{0,1\}^{2n} : a_i = \pi^{(X)}_i \text{ if } \pi^{(X)}_i \in \{0,1\} \text{ and } b_j = \pi^{(Z)}_j \text{ if } \pi^{(Z)}_j \in \{0,1\} \},
\]
which forms an affine subspace of $\{0,1\}^{2n}$. We say that a Pauli $\pfont{X}^a \pfont{Z}^b$ \emph{respects} $\pi$ if $(a,b)\in\Pi$. The rest of this section is devoted to proving the following theorem:
\begin{restatable}{theorem}{powerfulcorrection}
\label{thm:powerful_correction}
Let a nice tree decomposition $T$ of some $n$-vertex graph $G$ be given.  Let $\mathrm{Stab}\subseteq \mathbb{F}_2^{2n_t}$ be the linear subspace describing the stabilizer group of $(U_{\mathrm{bases}}\otimes I)\circuit|+^{n_t}\rangle$, and let $\Pi\subseteq \mathbb{F}_2^{2n_t}$ be the affine subspace describing a given pattern $\pi$. There is an $O(\lVert T \rVert_\omega^\omega)$ time algorithm that samples a uniformly random element of the intersection $\mathrm{Stab}\cap \Pi$, or reports that no such element exists.
\end{restatable}

While our algorithm will make use of the structure of the tree decomposition, most of the required operations are in fact independent of $T$. Consider then any $n$-qubit stabilizer state $\ket{\psi}$. We will define subsets of the stabilizer group of $\ket{\psi}$ that are given by affine subspaces $\mathcal A = \{ \begin{psmallmatrix} C \\ D \end{psmallmatrix} x + \begin{psmallmatrix} c \\ d \end{psmallmatrix} : x \in \{0,1\}^N \} \subseteq \{0,1\}^{2n}$ for matrices $C, D \in \{0,1\}^{n \times N}$, vectors $c, d \in \{0,1\}^n$, and $N \le 2n$ (note that if the latter condition did not hold then some of the columns of $\begin{psmallmatrix} C \\ D \end{psmallmatrix}$ would be linearly dependent).  Here, $C$ and $c$ correspond to the $\pfont X$-component of the stabilizer, while $D$ and $d$ correspond to the $\pfont Z$-component. For example, the affine subspace for the entire stabilizer group of $\ket{\psi}$ is such that each column of $\begin{psmallmatrix} C \\ D \end{psmallmatrix}$ is the bit representation of a stabilizer generator of $\ket{\psi}$ and $c, d$ are both zero. We say that $\mathcal{A}$ is an \emph{affine stabilizer subspace}. For convenience, we sometimes refer to the elements of $\mathcal A$ directly as Pauli operators (modulo signs) using the correspondence described above.

To find the elements of an affine stabilizer subspace $\mathcal{A}$ that also respect the pattern $\pi$ we must find all solutions $x$ to
\[
\begin{pmatrix}C\\D\end{pmatrix} x + \begin{pmatrix}c\\d\end{pmatrix} \in \Pi.
\]
We can write this more conventionally as the rectangular system of linear equations
\[
\begin{pmatrix}C'\\D'\end{pmatrix} x = \begin{pmatrix}c'\\d'\end{pmatrix} + \pi'
\]
where $C'$, $D'$, $c'$, $d'$ and $\pi'$ are obtained by removing row $i$ from $\begin{psmallmatrix}
C\\D
\end{psmallmatrix}$, $\begin{psmallmatrix}
c\\d
\end{psmallmatrix}$, and $\pi$ whenever $\pi_i = *$.  This system can be solved using the following lemma:
\begin{lemma}
\label{lem:fast_linear_equations}
Let $E \in \{0,1\}^{k \times \ell}$ and $\zeta \in \{0,1\}^k$ be given for some $k$ and $\ell$.  There is an $O(k^\omega + \ell^\omega)$-time algorithm which determines if the system of linear equations $Ex = \zeta$ has a solution.  If so, the algorithm reports its affine solution space, that is, a matrix $E' \in \{0,1\}^{\ell \times \ell}$ and vector $\zeta' \in \{0,1\}^\ell$ such that
\[
\{ x \in \{0,1\}^\ell : Ex = \zeta \} = \{E'z + \zeta' : z \in \{0,1\}^\ell \}.
\]
\end{lemma}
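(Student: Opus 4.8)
The plan is to reduce \Cref{lem:fast_linear_equations} to computing the reduced row echelon form (RREF) of the augmented matrix $[\,C \mid d\,] \in \{0,1\}^{k \times (\ell+1)}$, and then to read off consistency, a particular solution, and a basis of $\ker C$ directly from the echelon form.

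\textbf{Fast row reduction.} First I would invoke the matrix-multiplication-time version of Gaussian elimination (a standard consequence of fast matrix multiplication, going back to Bunch and Hopcroft and to Ibarra, Moran, and Hui): given an $m\times n$ matrix over a field, one can compute its RREF, together with its rank $r$ and the list of pivot columns, in $O\big(mn\min(m,n)^{\omega-2}\big)$ field operations. Briefly, the recursion splits the columns into two halves, recursively reduces the left half, applies the resulting invertible row transformation to the right half with a single matrix multiplication, and recursively reduces the updated right half; over $\mathbb{F}_2$ each field operation is a bit operation. Applied to $[\,C\mid d\,]$ with $m=k$ and $n=\ell+1$, this costs $O\big(k\ell\min(k,\ell)^{\omega-2}\big) \le O\big(\max(k,\ell)^\omega\big) \le O(k^\omega+\ell^\omega)$, using $\omega\ge 2$. (As usual, an extra logarithmic factor may be incurred when $\omega=2$.)

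\textbf{Extracting the solution space.} Let $[\,C'\mid d'\,]$ be the resulting RREF, with pivots in rows $1,\dots,r$ located in columns $p_1<\dots<p_r$. The system $Cx=d$ is consistent iff column $\ell+1$ is not a pivot column, equivalently iff $d'_i=0$ for all $i>r$; otherwise the algorithm reports that no solution exists. When it is consistent, set $f_{p_i}:=d'_i$ for $i\le r$ and $f_j:=0$ for every non-pivot column $j\le\ell$; a short computation using $C'=UC$, $d'=Ud$ for the invertible $U$ realizing the row reduction shows $Cf=d$. For each non-pivot column $j\le\ell$, let $v^{(j)}\in\{0,1\}^\ell$ have $v^{(j)}_j=1$, $v^{(j)}_{p_i}=C'_{i,j}$ for $i\le r$, and all other coordinates zero; one checks $C'v^{(j)}=0$, hence $v^{(j)}\in\ker C$, and since there are exactly $\ell-r=\dim\ker C$ of them and they have distinct unit coordinates they form a basis of $\ker C$. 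Let $E\in\{0,1\}^{\ell\times\ell}$ have the vectors $v^{(j)}$ as the columns indexed by non-pivot $j$ and zero columns elsewhere. Then $\{Ez+f:z\in\{0,1\}^\ell\}=f+\operatorname{colspace}(E)=f+\ker C=\{x:Cx=d\}$, so the algorithm outputs $(E,f)$. Writing down $f$ and all the $v^{(j)}$ from $C'$ takes $O(\ell^2)$ further time, so the total running time is $O(k^\omega+\ell^\omega)$.

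\textbf{Main obstacle.} The only non-elementary ingredient is the $O(\max(k,\ell)^\omega)$ bound for row-reducing a \emph{rectangular}, possibly \emph{rank-deficient} matrix: the square nonsingular case is textbook, while the general case needs the recursive block argument sketched above (with some care about pivoting and the $\omega=2$ logarithmic factor). Everything afterwards---the consistency test, the particular solution $f$, and the assembly of $E$---is elementary bookkeeping on the echelon form and fits comfortably within the budget.
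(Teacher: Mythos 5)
Your proof is correct, but it takes a different route from the paper. The paper never forms an echelon form of the augmented matrix: it computes a generalized inverse $C^g$ of $C$ (via the Ibarra--Moran--Hui algorithm, transposing when $k>\ell$), tests consistency by checking $CC^g d = d$, and then writes the solution set in the closed form $\{(I - C^g C)z + C^g d\}$, so that $E := I - C^g C$ and $f := C^g d$ come out of two matrix products with no per-column bookkeeping. You instead reduce $[\,C \mid d\,]$ to RREF in matrix-multiplication time and assemble $f$ and a kernel basis by hand from the pivot structure; your verifications ($Cf=d$, $C'v^{(j)}=0$, the count $\ell-r$, and $\{Ez+f\}=f+\ker C$) are all sound, and the runtime bound $O(k\ell\min(k,\ell)^{\omega-2})\le O(k^\omega+\ell^\omega)$ matches the lemma. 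The trade-off: the paper's argument is shorter and pushes all the work into one cited black box plus a textbook identity for generalized inverses, whereas yours is more self-contained at the cost of needing fast RREF for rectangular, rank-deficient matrices --- which you correctly flag as the one non-elementary ingredient, and which is indeed standard (it is essentially the same Ibarra--Moran--Hui machinery the paper invokes, followed by block back-elimination). Two cosmetic remarks: your ``equivalently iff $d'_i=0$ for all $i>r$'' reformulation of the consistency test is garbled if $r$ denotes the rank of the augmented matrix (those rows are identically zero in RREF); the primary test --- column $\ell+1$ is not a pivot --- is the correct one, so nothing breaks. Also, your $E$ has zero columns at pivot positions while the paper's $E=I-C^gC$ is generally different; the lemma only asks for some $E$ with the right affine image, so both are fine.
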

\begin{proof}
First construct a generalized inverse\footnote{A generalized inverse of a $k \times \ell$ matrix $E$ is an $\ell \times k$ matrix $E^g$ such that $E E^g E = E$.} $E^g$ of $E$ in $O(k^{\omega-1}\ell + k\ell^{\omega - 1})$ time using an algorithm of Ibarra, Moran, and Hui \cite{ibarra+moran+hui:1982}. This algorithm can be used directly if $k\leq\ell$; otherwise, if $k>\ell$, we may find a generalized inverse of $E^T$ and then use the fact that $E^T(E^T)^g E^T = E^T$ implies $E ((E^T)^g)^T E = E$. We have that $Ex = \zeta$ has a solution iff $E E^g \zeta = \zeta$, and furthermore, if a solution exists, then the solution space is given by \cite{ben2003generalized}
\begin{equation}\label{eq:solution_space}
\{ (I - E^g E)z + E^g \zeta : z \in \{0,1\}^\ell \}.
\end{equation}
Setting $E' := I - E^g E$ and $\zeta' := E^g \zeta$, which can be computed in time $O(k^\omega + \ell^\omega)$ and $O(k \ell)$, respectively, will give the lemma. All that remains to show is that $E^g E$ can be computed in $O(k^\omega + \ell^\omega)$ time.  If $\ell \le k$, then the matrices $E^g$ and $E$ can be partitioned into $\lceil k/\ell \rceil$ $\ell \times \ell$ matrices (if $\ell$ doesn't evenly divide $k$, just pad the matrix). To compute $E^g E$, we must then compute $\lceil k/\ell \rceil$ square $\ell \times \ell$ matrix multiplications, taking time $\lceil k/\ell \rceil \ell^\omega = O(k \ell^{\omega - 1})$. However, since we assumed $\ell \le k$ this running time is also $O(k^\omega)$. A similar analysis shows that when $\ell \ge k$, the running time is $O(\ell^\omega)$.
\end{proof}

We arrive at the following important fact: enforcing the pattern $\pi$ onto an affine stabilizer subspace returns another affine stabilizer subspace. In the following lemma, we summarize other important operations that share this property.

\begin{lemma}
\label{claim:affine_properties}
Let $\ket{\psi}$ be an $n$-qubit stabilizer state. Let $\mathcal A \subseteq \{0,1\}^{2n}$ be an affine stabilizer subspace representing a subset of stabilizers of $\ket{\psi}$. There exist algorithms to calculate the affine stabilizer subspace $\mathcal A'$ arising from each of the following operations:
\begin{itemize}
\item \emph{Enforcing pattern $\pi \in \{0,1,*\}^{2n}$ on $\mathcal A$}:  $\mathcal A' \subseteq \mathcal A$ contains exactly those Paulis that are both in $\mathcal A$ and respect pattern $\pi$. (Runtime $O(n^\omega)$.)
\item \emph{Applying a Clifford unitary $U$ to $\ket{\psi}$}:  $\mathcal A' = U \mathcal A U^\dag$.  
(Runtime $O(n^\omega)$.)
\item \emph{Restricting $\mathcal{A}$ to qubits in some subset $S \subseteq [n]$}:  $\mathcal A'$ is the set of Paulis $\bigotimes_{i \in S} P_i$ for which there exists a Pauli $\bigotimes_{i\in [n]}P_i$ in $\mathcal A$. (Runtime $O(n^\omega)$.)
\end{itemize}
Letting $\ket{\psi_1}$ and $\ket{\psi_2}$ be $n_1$- and $n_2$-qubit stabilizer states with affine stabilizer subspaces $\mathcal A_1 \subseteq \{0,1\}^{2n_1}$ and $\mathcal A_2 \subseteq \{0,1\}^{2n_2}$, we have 
\begin{itemize}
\item \emph{Taking tensor products}:  $\mathcal A' = \mathcal A_1 \oplus \mathcal A_2$ contains exactly those Paulis which are tensor products of elements of $\mathcal A_1$ and $\mathcal A_2$. (Runtime $O((n_1+n_2)^2)$.)
\end{itemize}
\end{lemma}
\begin{proof}
Let $\mathcal A = \{ \begin{psmallmatrix} C \\ D \end{psmallmatrix} x + \begin{psmallmatrix} c \\ d \end{psmallmatrix} : x \in \{0,1\}^N \} \subseteq \{0,1\}^{2n}$ be the affine stabilizer subspace for $\ket{\psi}$. We'll consider the four operations starting with pattern enforcement. Supposing the pattern $\pi$ has $k$ non-$*$ entries, this requires solving a system of linear equations $Ex = \zeta$ where $E \in \{0,1\}^{k \times N}$ and $\zeta \in \{0,1\}^{k}$. By \Cref{lem:fast_linear_equations}, the solution space is affine and can be computed in $O(n^\omega)$ time since $N \le 2n$.

For unitary application, recall that stabilizers of $\ket{\psi}$ are mapped to stabilizers of $U \ket{\psi}$ under conjugation by $U$:
\[
P \ket{\psi} = \ket{\psi} \iff (UPU^\dag) U\ket{\psi} = U \ket{\psi}
\]
 Therefore, the affine stabilizer subspace $\mathcal A$ is mapped to the affine stabilizer subspace $U \mathcal A U^\dag$ for the state $U \ket{\psi}$ by ``conjugating'' the columns of $\begin{psmallmatrix} C \\ D \end{psmallmatrix}$ and the displacement vector $\begin{psmallmatrix} c \\ d \end{psmallmatrix}$, i.e., associating each vector with a Pauli element, conjugating by $U$, and then reverting back to the bit representation. To do this, we simply multiply $\begin{psmallmatrix} C \\ D \end{psmallmatrix}^T$ and $\begin{psmallmatrix} c \\ d \end{psmallmatrix}^T$ by the tableau for $U$, which takes $O(n^{\omega})$ time (this is equivalent modulo signs to the procedure described in \Cref{thm:tableaux_composition}). 

Suppose now that we want to restrict $\mathcal A$ to some subset of the qubits of $\ket{\psi}$. We do this by tracing out the unwanted qubits. To trace out the $i$th qubit from $\mathcal A$, we simply remove the $i$th row of $C$, $D$, $c$, and $d$.  After tracing out $m$ qubits, we obtain the affine space:
\[
\mathcal A' = \{ \begin{psmallmatrix} C' \\ D' \end{psmallmatrix} x + \begin{psmallmatrix} c' \\ d' \end{psmallmatrix} : x \in \{0,1\}^N \} \subseteq \{0,1\}^{2(n-m)}
\]
for matrices $C', D' \in \{0,1\}^{(n-m)\times N}$ and vectors $c', d' \in \{0,1\}^{n-m}$.  Unfortunately, $N$ may now be quite large in comparison to the total number of qubits $n-m$.  If $N>2(n-m)$ we apply an algorithm of Ibarra, Moran, and Hui to find a maximal set of independent columns of $\begin{psmallmatrix} C' \\ D' \end{psmallmatrix}$ in time $O(n^\omega)$, so that our representation of $\mathcal A'$ becomes manageable once again \cite{ibarra+moran+hui:1982}.

Finally, consider the tensor product of two states $\ket{\psi_1}$ and $\ket{\psi_2}$ with affine stabilizer subspaces $\mathcal A_1$ and $\mathcal A_2$, respectively. Notice that Pauli operator $P_1\otimes P_2$ stabilizes $\ket{\psi_1}\otimes\ket{\psi_2}$ (up to signs) iff $P_1\in\mathcal{A}_1$ and $P_2\in\mathcal{A}_2$. In other words, the new stabilizer affine space is given by the direct product of the two previous spaces:
\begin{equation}\label{eq:affine_product}
\mathcal A' = \mathcal A_1 \oplus \mathcal A_2 = 
\{ \begin{psmallmatrix} C_1 & 0 \\ D_1 & 0 \\ 0 & C_2 \\ 0 & D_2 \end{psmallmatrix} x + 
\begin{psmallmatrix} c_1 \\ d_1 \\ c_2 \\ d_2 \end{psmallmatrix} : x \in \{0,1\}^{N_1 + N_2} \}.
\end{equation}
\end{proof}
\Cref{claim:affine_properties} shows that tracing out qubits of a stabilizer state still results in an affine subspace. Importantly, one can check that all of the other operations (applying gates, enforcing patterns, and taking tensor products) in the lemma can also be performed on subsets of qubits of a stabilizer state using only the affine space for that restricted subset. This is due to the fact that these operations commute with the partial trace operator. Because of this, we will continue to use the phrase ``affine stabilizer subspace'' to refer to these restricted subspaces to emphasize their connection to stabilizers of some larger state.

We are now ready to prove the main theorem for this section (restated below):
\powerfulcorrection*
\begin{proof}
Let $\circuit' = (U_\mathrm{bases}\otimes I)\circuit$ for ease of notation.
A na{\"i}ve way to do this would be to compute $\mathrm{Stab}$ in its entirety, enforce $\pi$ using Lemma \ref{lem:fast_linear_equations}, and randomly sample an element of the solution space. This $O(n_t^\omega)$-time approach is far too slow. Instead, we use the tree structure of $\circuit'$ to compute the stabilizers of $\circuit'\ket{+^{n_t}}$ while simultaneously enforcing $\pi$. In \Cref{sec:affine_example} we work through an example of the algorithm described below.

We begin by defining an affine space $\mathcal{A}_B$  for each bag $B$ of the given tree decomposition $T$. Consider the subtree $T_B$ of $T$ rooted at $B$. Let us write $\circuit|_B$ and $\circuit|_{T_B}$ to denote the portions of $\circuit$ corresponding to $B$ and $T_B$, respectively. For example, for the root node $R$ of the tree decomposition in \Cref{fig:td_to_circuit}, $\circuit|_R$ consists of three wires and three CZ gates; and for the leftmost merge node $M = \{ B, C, D \}$, the circuit $\circuit|_{T_M}$ consists of six wires, four CZ gates, and one CNOT gate. Define $\circuit'|_B$ and $\circuit'_{T_B}$ analogously. Let $\ket{\psi_B}=\circuit'|_{T_B}\ket{+^{n_{T_B}}}$, where $n_{T_B}$ is the number of qubits in $\circuit'|_{T_B}$. First consider an affine space $\mathcal{Y}_B$ which describes the stabilizers of $\ket{\psi_B}$ whose $\pfont X$- and $\pfont Z$-components agree with $\pi$ on all qubits that appear \textit{only} in $\circuit'|_{T_B}$. These are the qubits which are merged or forgotten in  $\circuit'|_{T_B}$ and are therefore untouched by all operations that occur later in the circuit $\circuit'$. Now define $\mathcal{A}_B$ to be the restriction of $\mathcal{Y}_B$ to coordinates of qubits in $\circuit'|_B$. For example, $\mathcal{Y}_R=\textrm{Stab}\cap\Pi$ and $\mathcal{A}_R\subseteq \mathbb{F}_2^6$.

We construct $\mathcal{A}_B$ inductively, working our way up the tree. Each node inherits an affine subspace from its children (unless it is a leaf node), and the first step will always be to restrict that affine subspace to the qubits of $\circuit'|_B$. For an introduce node in which $N$ qubits are initialized, we create the affine space $\mathcal I = \{ \begin{psmallmatrix} I \\ 0 \end{psmallmatrix} x +\begin{psmallmatrix} 0 \\ 0 \end{psmallmatrix}: x \in \{0,1\}^{N} \}$ for the space of stabilizers of the $\ket{+^N}$ state.  If $B$ has a child $B'$, then $\mathcal{A}_B = \mathcal{A}_{B'} \oplus \mathcal I$; otherwise, $\mathcal{A}_B = \mathcal I$. For a forget node, we apply the CZ gates and local operators of $\circuit'|_B$, and then enforce $\pi$ on all qubits being forgotten in $B$. For a merge node we take the direct product of the affine spaces for each child, apply the CNOT gates of $\circuit'|_B$, and enforce $\pi$ on the merged qubits.

We continue up the tree until we reach the root node $R$.  If at any point during this process we encounter an affine space that is empty, we report that a stabilizer matching $\pi$ does not exist. Otherwise by definition $\mathcal{A}_R$ contains the restriction of such a stabilizer to coordinates of qubits in $\circuit'|_R$. All that remains is to reconstruct the remaining coordinates using the other affine spaces. To do so, we process the bags in reverse order.

Let $R'$ be the child of $R$. By definition, a stabilizer $P_\textrm{cor}$ of $\circuit'\ket{+^{n_t}}$ respects $\pi$ iff its restriction to qubits in $\circuit'|_R$ is in $\mathcal{A}_R$. With this in mind, we choose a uniformly random element $(a,b)$ from $\mathcal{A}_R$ and set $\pfont{X}^a\pfont{Z}^b$ to be the tensor element of $P_\textrm{cor}$ corresponding to qubits of $R$. To reconstruct the remaining components of $P_\textrm{cor}$, we push $(a,b)$ backward through $\circuit'|_R$ by computing $\pfont{X}^{a'} \pfont{Z}^{b'} \propto (\circuit'|_R)^\dagger \pfont{X}^a\pfont{Z}^b (\circuit'|_R)$; randomly sample an element $(a'',b'')$ of $\mathcal{A}_{R'}$ respecting the pattern $(a',b')$; and set the $j$th tensor element of $P_\textrm{cor}$ to be $\pfont{X}^{a''_j}\pfont{Z}^{b''_j}$, for each qubit $j$ being merged or forgotten in $\circuit'|_{R'}$. We continue in this way down the tree, always randomly sampling an element of the affine stabilizer subspace of the child node that preserves choices made higher up in the tree.

The number of qubits that we handle at any given moment is never more than the size of the bag being operated upon plus that of its children. The only operations needed to construct $\mathcal{A}_B$ are those in \Cref{claim:affine_properties}. The time needed for each of these operations is at most the size(s) of the bag(s) involved raised to the $\omega$th power. All bags contribute to the runtime at most twice. Therefore all affine spaces can be constructed in time $O(\lVert T \rVert_\omega^\omega)$. When reconstructing $P_{\mathrm{cor}}$ from $(a,b)$, the only operations needed are those from \Cref{claim:affine_properties} together with the ability to sample from an affine subspace. The latter operation can be done in time at most the square of the corresponding bag's size, by randomly sampling a bit string of length $N$, where $N$ is the subspace dimension, and then multiplying that string by a rank-$N$ matrix and adding a displacement vector corresponding to the subspace. Each bag is processed just once during the reconstruction of $P_{\mathrm{cor}}$ from $(a,b)$, and therefore this process also takes time $O(\lVert T \rVert_\omega^\omega)$.

Finally, we must show that $P_{\mathrm{cor}}$ is chosen uniformly from $\textrm{Stab} \cap \Pi$.  Let us start with a general fact about sampling affine spaces.  Consider some affine space $\mathcal B \subseteq \mathbb F_2^n$ and let $\mathcal B_{S} \subseteq \mathbb F_2^{|S|}$ be the restriction of $\mathcal B$ onto the coordinates in $S \subseteq [n]$.  Now consider the following iterative process over subsets $S_1, S_2, \ldots$ such that $\bigcup_i S_i = [n]$:  sample $\beta_1$ uniformly at random from $\mathcal B_{S_1}$; define the affine space $\mathcal B^{(1)} := \{x \in \mathcal B : x \text{ agrees with } \beta_1 \text{ on } S_1\}$; then sample $\beta_2$ uniformly from $\mathcal B^{(1)}_{S_2}$; and so on.    Let $\beta$ be obtained by combining all the $\beta_i$ into a single string of length $n$.  Since each $\beta_i$ term was chosen such that it never conflicted with $\beta_j$ for $j < i$, the choice of $\beta$ is unambiguous. 

We claim that $\beta$ is a uniformly random element of $\mathcal B$.  After the first iteration, we only want to sample strings of $\mathcal B$ where the coordinates in $S_1$ are restricted to $\beta_1$.  As we've seen before, this imposes a linear constraint on $\mathcal B$, and so this new space ($\mathcal B^{(1)}$) is still affine by \Cref{lem:fast_linear_equations}.  In fact, we see from that lemma that the size of the solution space is independent of our choice of $\beta_1$ (provided, of course, that $\beta_1 \in \mathcal B_{S_1}$).  In the second iteration, we sample uniformly from $\mathcal B_1$ restricted to the coordinates in $S_2$.  And so on.  It follows that the number of different $\beta_j$ that occur with nonzero probability in the $j$th step is independent of the specific choices of $\beta_1, \ldots, \beta_{j-1}$.  Therefore, the final string $\beta$ chosen from $\mathcal B$ has the same probability of being chosen as any other string $\beta' \in \mathcal B$.

We now only have left to show that this is actually what is occurring in our correction protocol.  At the root node, we sample uniformly from the affine space $\mathcal A_R$ which, by construction, is equal to $\textrm{Stab} \cap \Pi$ restricted to qubits which are forgotten in the root bag.  Further down in the tree, when nodes are forgotten or merged, we only sample strings which are consistent with measurements made higher up in the tree. Recall that the qubits that are measured in a particular bag are never touched again higher in the tree.  Therefore, the affine space for a bag $B$ (i.e., the space $\mathcal A_B$) restricted to the qubits that are measured in that bag is equal to the affine space for the entire circuit $\textrm{Stab} \cap \Pi$ restricted to the qubits that are measured in that bag.  For the qubits in $\mathcal A_B$ that are not measured, we restrict them to the measurement results sampled higher in the tree.  In some cases, these measurement results are obtained \emph{after} applying some unitary.  However, the application of that unitary is equivalent to applying an invertible linear transformation on the coordinates of $\mathcal{A}_B$ corresponding to unmeasured qubits.  Applying the inverse of this transformation to the outcome sampled higher in the tree gives the unique string over the unmeasured qubits of $\mathcal A_B$ consistent with higher measurements.  Since we sample uniformly from $\mathcal A_B$ restricted to this string, the theorem follows.
\end{proof}

We close this subsection with the observation that \Cref{thm:powerful_correction} can be used to solve symmetric linear systems modulo $2$. That is, suppose we are given a system $Ay = b$ over $\mathbb F_2$ where $A \in \mathbb F_2^{n \times n}$ is the (symmetric) adjacency matrix of a graph $G$, and we are given a tree decomposition $T$ for the graph $G$. We claim that $Ay = b$ has a solution if and only if $\circuit \ket{+^{n_t}}$ has a stabilizer Pauli respecting the pattern $(*^{n} \, 0^{n_a}, b \,\, *^{n_a})$, and moreover the solution $y$ can be read off of the stabilizer. This yields a fast algorithm for planar or limited-treewidth linear systems.
\begin{restatable}{theorem}{linearsolver}
There exists an algorithm to find a uniformly random solution to $Ay=b$ (or report that no solution exists) in time $O(\| T \|_{\omega}^{\omega})$ given a graph $G$ with adjacency matrix $A \in \mathbb F_2^{n \times n}$, a vector $b \in \mathbb F_2^{n}$, and a nice tree decomposition $T$ for $G$.
\label{thm:linearsolver}
\end{restatable}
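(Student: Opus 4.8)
The plan is to reduce the system to the stabilizer--pattern problem of \Cref{thm:powerful_correction}. Apply that theorem to the circuit $\circuit$ built from the given nice tree decomposition $T$, with the trivial basis change $U_{\mathrm{bases}}=I$ (i.e.\ all measurement bases equal to $\pfont{Z}$), so that the relevant state is $\circuit\ket{+^{n_t}}$ and $\mathrm{Stab}$ is its stabilizer group. By \Cref{claim:graphgprime} this state is the graph state $\ket{G'}$ on vertices $V'=V\cup M$, where $M$ indexes the merge ancillas and the extra edges all touch $M$; hence $G$ is the induced subgraph of $G'$ on $V$, and the $V\times V$ block of the adjacency matrix of $G'$ is exactly $A$. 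I will run \Cref{thm:powerful_correction} with the pattern $\pi$ given by $\pi^{(X)}=*^{\,n}\,0^{\,n_a}$ (the $\pfont{X}$-part is free on $V$ and forced to $0$ on $M$) and $\pi^{(Z)}=b\;*^{\,n_a}$ (the $\pfont{Z}$-part is forced to $b$ on $V$ and free on $M$).

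The heart of the argument is to check that the stabilizers of $\ket{G'}$ respecting $\pi$ correspond precisely to the solutions of $Ax=b$. Write $g'_u=\pfont{X}_u\prod_{j\sim_{G'}u}\pfont{Z}_j$ for the canonical generators. For $v\in V$, the generator $g'_v$ acts by $\pfont{X}$ only on $v$, by $\pfont{Z}$ on the $G$-neighbours of $v$, and only by $\pfont{Z}$ on $M$; for $w\in M$, the generator $g'_w$ has $\pfont{X}$-support meeting $M$. Passing to the image of the Pauli group in $\mathbb{F}_2^{2n_t}$ (so signs and phases, including those from reordering, drop out), a stabilizer of $\ket{G'}$ has zero $\pfont{X}$-component on all of $M$ exactly when it equals $\prod_{v\in S}g'_v$ for some $S\subseteq V$, and such an element has $\pfont{X}$-component $\mathbf 1_S$ on $V$ and $\pfont{Z}$-component $A\mathbf 1_S\bmod 2$ on $V$ (using that $A$ is the $V\times V$ block of $G'$'s adjacency matrix, and that $A$, being the adjacency matrix of a simple graph, has zero diagonal, so each $g'_v$ has disjoint $\pfont{X}$- and $\pfont{Z}$-supports and the bookkeeping over $\mathbb{F}_2$ is exact). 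Therefore the stabilizers of $\ket{G'}$ respecting $\pi$ are exactly the $\prod_{v\in S}g'_v$ with $A\mathbf 1_S=b$; they are in bijection with $\{x\in\mathbb{F}_2^n:Ax=b\}$ via $x=\mathbf 1_S$, and $x$ is simply the $\pfont{X}$-component of the stabilizer read off on the $V$ coordinates. Equivalently, forcing the $\pfont{X}$-part to vanish on $M$ restricts attention to the subgroup $\langle g'_v:v\in V\rangle$, i.e.\ to stabilizers of $\ket{G}=(I_V\otimes\langle 0|_M)\ket{G'}$, and there the standard identity that the $\pfont{Z}$-component of a stabilizer equals $A$ times its $\pfont{X}$-component is precisely the system $Ax=b$.

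With this in hand the algorithm is immediate. Build $\circuit$ and $n_t=n+n_a$ from $T$ in $O(|E|+n_a)\le O(\|T\|_1)$ time, form $\pi$ and the associated affine subspace $\Pi\subseteq\mathbb{F}_2^{2n_t}$, and invoke \Cref{thm:powerful_correction}. If it reports $\mathrm{Stab}\cap\Pi=\varnothing$, then by the previous paragraph $Ax=b$ has no solution and we report so; otherwise it returns an element of $\mathrm{Stab}\cap\Pi$, and we output its $\pfont{X}$-component on the $V$ coordinates, which is a valid solution $x$. The running time is that of \Cref{thm:powerful_correction}, namely $O(\|T\|_\omega^\omega)$, which dominates the $O(\|T\|_1)$ setup cost.

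I do not expect a genuine obstacle: the entire content lies in the stabilizer bookkeeping above, in particular the two small points that (i) pinning the $\pfont{X}$-component to $0$ on the merge ancillas is exactly what cuts down to the graph state $\ket G$ on $V$, and (ii) for graph-state stabilizers the $\pfont{Z}$-component equals the adjacency matrix applied to the $\pfont{X}$-component, which turns the pattern-matching problem into the linear system verbatim. The rest is assembling pieces already constructed in \Cref{sec:coralg}.
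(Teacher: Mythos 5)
Your proposal is correct and follows essentially the same route as the paper: it reduces to \Cref{thm:powerful_correction} applied to $\circuit\ket{+^{n_t}}=\ket{G'}$ (via \Cref{claim:graphgprime}) with the pattern $(*^{n}0^{n_a},\,b\,*^{n_a})$, and identifies stabilizers respecting that pattern with solutions of $Ax=b$. The paper phrases the identification through the block adjacency matrix $A'=\begin{psmallmatrix}A & B\\ B^T & C\end{psmallmatrix}$ and the affine space $\{(x,A'x)\}$, while you phrase it through products of canonical generators $g'_v$, $v\in V$ — the same bookkeeping in different notation.
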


Conversely, we also claim such an algorithm for solving linear systems implies the correction procedure in \Cref{thm:powerful_correction}. See Appendix~\ref{sec:lss} for the proof of Theorem \ref{thm:linearsolver} and the full details of this equivalence.

In Section \ref{sec:plan} we show that, given a planar graph, we can compute a nice tree decomposition $T$ in $O(n\log n)$ time, such that $\|T\|_\omega^{\omega}=\tilde{O}(n^{\omega/2})$ (see Lemmas \ref{lem:tdnorm}, \ref{lem:tdruntime}). Thus we obtain Theorem \ref{thm:specialcase} from the introduction as a corollary.

We believe the new linear system solver stated in Theorem \ref{thm:linearsolver} may be of independent interest because, to our knowledge, existing solvers exploiting planarity and/or treewidth--such as Alon and Yuster \cite{alon2010solving} or Lipton, Rose, and Tarjan \cite{lipton1979generalized}--require $A$ to have full rank. Our algorithm makes no such assumption; in fact, it can sample a uniformly random solution when the solution is not unique. 

\subsubsection{Simplified correction subroutine for graph state simulation without postselection\label{sec:withoutpost}}
The correction procedure for the graph state simulation problem \emph{without postselection} can be simplified to pushing Paulis backwards and forwards through the circuit, removing the need for explicit linear system solving. This improves the runtime to linear in the size of the circuit (i.e., $O(|E| + n_t)$).

\begin{theorem}
\label{thm:ancilla_fix}
The state $\circuit|+^{n_t}\rangle$ has stabilizers with arbitrary $\pfont X$-component. Moreover, given a desired $\pfont X$-component vector $a \in \{0,1\}^{n_t}$, there is an $O(|E|+n_t)$-time algorithm that outputs a stabilizer $Q(a)=\pfont X^a \pfont Z^b$ (up to sign) for some $b \in \{0,1\}^{n_t}$.
\end{theorem}
\begin{proof}
Since $Q(a) = \pfont X^{a} \pfont Z^{b}$ is a stabilizer for $\circuit \ket{+^{n_t}}$ (up to sign), it follows that $\circuit^{\dag} \pfont X^{a} \pfont Z^{b} \circuit$ stabilizes $\ket{+^{n_t}}$ (up to sign), and is therefore of the form $\pm \pfont X^{a'}$. That is,
\begin{equation}
\circuit^{\dag} Q(a) \circuit = \circuit^{\dag} \pfont X^{a} \pfont Z^{b} \circuit  = \pm \pfont X^{a'}
\label{eq:solve}
\end{equation}
for some $a' \in \{ 0, 1 \}^{n_t}$. 

Recall that $\circuit$ is composed of CZ and CNOT gates, both of which normalize the group generated by $\pfont Z$-Paulis, 
\[
\mathcal{Z} := \{ \pfont Z^{b'} : b' \in \{ 0, 1 \}^{n_t} \}.
\]
It follows that we can mod out Eq.~\eqref{eq:solve} by $\mathcal{Z}$ to get $\pm \pfont X^{a'} \equiv \circuit^{\dag} \pfont X^{a} \circuit \pmod{\mathcal{Z}}$. We compute $\circuit^{\dag} \pfont X^{a} \circuit$ by commuting $\pfont X^{a}$ backwards through $\circuit$, one gate at a time. This gives us a Pauli with an irrelevant $\pfont Z$-component (and sign) times $\pfont X^{a'}$. Finally, we push $\pfont X^{a'}$ \emph{forwards} through $\circuit$ to obtain $\circuit \pfont{X}^{a'} \circuit^{\dag}$ which is the desired stabilizer $Q(a)$ up to an overall $\pm 1$ sign.

The cost of the algorithm is in commuting $\pfont X^{a}$ and $\pfont X^{a'}$ through the circuit, which involves updating a $2n_t$-bit string once for each of the $|E| + n_a$ gates, leading to an overall runtime of $O(|E| + n_t)$.
\end{proof}

In particular, the Pauli $P_{\mathrm{cor}}$ that corrects the output $y \in \{0,1\}^{n_t}$ of the sampling routine is computed as
\begin{equation}
P_{\mathrm{cor}}=\left(U_{\mathrm{bases}}\otimes I\right) Q(y) \big(U^{\dagger}_{\mathrm{bases}}\otimes I\big).
\label{eq:conjU}
\end{equation}
Since $Q(y)$ stabilizes $\circuit|+^{n_t}\rangle$, we get that $P_{\mathrm{cor}}$ stabilizes $(U_{\mathrm{bases}} \otimes I)\circuit|+^{n_t}\rangle$, that is, $P_{\mathrm{cor}}$ satisfies Eq.~\eqref{eq:pcor1}.  Furthermore, notice that the $\pfont X$-component of $P_{\mathrm{cor}}$ on the merge ancillas is equal to $y$ on the merge ancillas, that is, $P_{\mathrm{cor}}$ satisfies Eq.~\eqref{eq:pcor2}.  Therefore, $P_{\mathrm{cor}}$ is a valid correcting Pauli, computable in $O(|E|+n_t)$ time.

Unlike the more general postselection procedure, we have not generated a uniformly random correcting Pauli.  Of course, the measurement outcome $z \in \{0,1\}^n$ obtained in this way is still valid. So, by \Cref{claim:simplestab}, we can obtain a uniformly random sample in $O(n)$ time by choosing a random stabilizer $R$ of the state $|\psi\rangle=U_{\mathrm{bases}}|G\rangle$.  A random stabilizer $R$ of $\psi$ can be computed straightforwardly in time $O(n+|E|)$.\footnote{Note that $\psi$ has a stabilizer generator for each vertex $v$ of $G$, with Pauli weight $d_v+1$. We obtain $R$ by choosing a random subset of vertices and multiplying the corresponding generators, which takes time $O(\sum_{v\in V} (d_v+1))=O(n+|E|)$.}

\section{Improved algorithm for planar graphs}
\label{sec:plan}

In this section we describe an improved algorithm for the graph state simulation problem in the special case of planar graphs. In particular, we prove \Cref{thm:planargss} and its slight generalization \Cref{thm:coarse}, restated below.

\planargss*

\coarse*

The above results are obtained as a consequence of our more general algorithm for graph state simulation (Theorem \ref{thm:sumofcubes}). That result allows us to solve the graph state simulation problem for $G$, given a tree decomposition $T$ of $G$, using a runtime $\widetilde{O}(\|T\|_\omega^{\omega})$. It is known that planar graphs have treewidth $t=O(\sqrt{n})$, and there are linear-time algorithms to construct a width-$O(\sqrt{n})$ tree decomposition of a planar graph. This fact, combined with Theorem \ref{thm:sumofcubes}, leads directly to a worst-case $\widetilde{O}(nt^{\omega-1})=\widetilde{O}(n^{(\omega+1)/2})$ algorithm for planar graph state simulation. In this section we present a more careful analysis which shows that on the tree decompositions that we construct (recursively), the algorithm from Theorem \ref{thm:sumofcubes} actually runs in $O(n^{w'/2})$ for any $w'>w$. The intuition is that any subgraph of a planar graph is again planar, so an $m$-vertex subgraph of an $n$-vertex graph has treewidth $O(\sqrt{m})$, not $O(\sqrt{n})$. Thus, we expect bags to get smaller as we descend the tree. This leads to a better upper bound on the runtime $\widetilde{O}(\|T\|_\omega^{\omega})$.

We shall use tree decompositions which are constructed recursively from planar graph \emph{separators}.
\begin{definition}
Given a graph $G = (V,E)$, a \emph{separator} is a subset of the vertices $S \subseteq V$ which divides the remaining vertices $V \backslash S$ into two disconnected parts, $A$ and $B$.
\end{definition}
It is known that any planar graph has a separator of size $|S|=O(\sqrt{n})$, and Lipton and Tarjan showed how to find one in linear time. 
\begin{theorem}[Lipton and Tarjan \cite{Lipton1979}]
Given an $n$-vertex planar graph $G$, there is an $O(n)$-time deterministic algorithm to compute a separator $S$ of size $|S|=\beta \sqrt{n}$ that divides the remaining vertices into two disconnected parts $A$ and $B$, of size $|A|, |B|\leq \alpha n$, where $\alpha = \frac{2}{3}$ and $\beta = 2 \sqrt{2}$. 
\end{theorem}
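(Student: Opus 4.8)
This is the classical Lipton--Tarjan planar separator theorem, so the plan is to reproduce their linear-time construction. I begin with two reductions. If $G$ is disconnected, it suffices to separate its components: either none has more than $2n/3$ vertices, in which case the components bundle into two groups of total size at most $2n/3$ each and $S=\varnothing$ works; or one component $C$ has more than $2n/3$ vertices, in which case I recurse on $C$ and append the remaining components to whichever side keeps both parts below $2n/3$. Second, I embed $G$ in the plane in $O(n)$ time (Hopcroft--Tarjan) and add edges until it is a triangulation; since a vertex set disconnecting a supergraph also disconnects $G$ and triangulating does not remove vertices, it suffices to prove the statement for connected planar triangulations.

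Next I grow a breadth-first spanning tree from an arbitrary root and let $L_0,L_1,\dots$ be its levels, writing $N_j=\lvert L_0\cup\cdots\cup L_j\rvert$. Because every edge of $G$ joins vertices in the same or adjacent levels, each $L_j$ is a separator of the lower from the higher levels. Let $\ell_1$ be the \emph{median level}, with $N_{\ell_1-1}<n/2\le N_{\ell_1}$. If $\lvert L_{\ell_1}\rvert\le\beta\sqrt n$, then $S=L_{\ell_1}$ works: the lower part has fewer than $n/2$ vertices and the higher part at most $n/2$, both at most $2n/3$. Otherwise I choose $\ell_0\le\ell_1$ minimizing $w_0:=\lvert L_{\ell_0}\rvert+2(\ell_1-\ell_0)$ and $\ell_2>\ell_1$ minimizing $w_2:=\lvert L_{\ell_2}\rvert+2(\ell_2-\ell_1)$. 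A pigeonhole count --- once $k\ge\sqrt m$, any $k$ consecutive levels each of size more than $\sqrt m$ would contain more than $m$ vertices --- gives $w_0\le 2\sqrt{N_{\ell_1}}$ and $w_2\le 2\sqrt{n-N_{\ell_1}}+O(1)$, and $\sqrt x+\sqrt{n-x}\le\sqrt{2n}$ is what produces the constant $\beta=2\sqrt2$.

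Now I contract all levels $\le\ell_0$ to one vertex, delete all levels $\ge\ell_2$, and call the resulting planar triangulation $M$; it carries a rooted spanning tree of depth at most $\ell_2-\ell_0-1$. The crux --- and the step that will need the most care --- is the \emph{fundamental-cycle separator lemma}: in a planar triangulation with a rooted spanning tree $T$, some non-tree edge $e$ is such that the simple cycle $C$ formed by $e$ and the tree path between its endpoints cuts the remaining vertices into an ``inside'' and an ``outside'' piece, each of size at most $\tfrac{2}{3}\lvert M\rvert$; moreover $\lvert C\rvert\le 2(\ell_2-\ell_0-1)+1$ since $T$ is shallow. Such an $e$ is found in linear time: start from any non-tree edge, and while one side is too heavy, push $C$ across a triangular face bordering that side, which strictly decreases its weight, until balanced.

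Finally set $S:=L_{\ell_0}\cup L_{\ell_2}\cup C$. Removing $S$ leaves the region above $L_{\ell_0}$ (fewer than $n/2$ vertices), the region below $L_{\ell_2}$ (at most $n/2$), and the two pieces of $M$ cut by $C$ (each at most $\tfrac{2}{3}n$); a short case analysis --- merge the two lightest of the ``above/middle/below'' regions if the middle band is light, otherwise redistribute the four pieces --- shows these group into $A,B$ with $\lvert A\rvert,\lvert B\rvert\le 2n/3$. For the size, $\lvert S\rvert\le\lvert L_{\ell_0}\rvert+\lvert L_{\ell_2}\rvert+2(\ell_2-\ell_0-1)+1=w_0+w_2+O(1)\le 2\bigl(\sqrt{N_{\ell_1}}+\sqrt{n-N_{\ell_1}}\bigr)+O(1)\le 2\sqrt{2n}+O(1)$, which under the careful accounting of Lipton--Tarjan is $\beta\sqrt n$ with $\beta=2\sqrt2$. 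Planar embedding, triangulation, BFS, the scan for $\ell_0,\ell_1,\ell_2$, and the fundamental-cycle search are each $O(n)$, so the whole algorithm is $O(n)$.
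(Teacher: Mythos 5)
This statement is not proved in the paper at all: it is imported as a black box from Lipton and Tarjan \cite{Lipton1979} and consumed by \textsc{ComputeTD}, so there is no internal argument to compare yours against. What you have written is a reconstruction of the original Lipton--Tarjan proof (BFS levels, median level, two cheap levels $\ell_0,\ell_2$, contraction of the top band, fundamental-cycle separator in a triangulation with a shallow spanning tree), and the overall structure is the right one.

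Three places need more care if you want the theorem as stated and not just an $O(\sqrt n)$ separator. First, the pigeonhole you invoke (``$k$ consecutive levels each of size more than $\sqrt m$ contain more than $m$ vertices'') bounds $|L_{\ell_0}|$ and $\ell_1-\ell_0$ by $\sqrt m$ \emph{separately}, which only gives $w_0\le 3\sqrt{N_{\ell_1}}$ and would end with $\beta=3\sqrt2$; to get $w_0\le 2\sqrt{N_{\ell_1}}$ you must argue about the minimized quantity itself: if $|L_\ell|+2(\ell_1-\ell)>2\sqrt m$ for every $\ell$ with $\ell_1-\ell<\sqrt m$, then summing these inequalities already forces more than $m$ vertices into levels $0,\dots,\ell_1$, a contradiction (and similarly above $\ell_1$). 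Second, your final bound carries $+O(1)$ slop, whereas the statement asserts $|S|\le 2\sqrt{2}\,\sqrt n$; Lipton and Tarjan absorb the additive terms (the $+1$ for the non-tree edge, the contracted root --- which is not a vertex of $G$ --- on the fundamental cycle) in exactly the ``careful accounting'' you defer, so that step is part of the proof, not an afterthought. Third, the cycle-improvement loop is the genuinely delicate lemma: crossing a triangle does not strictly decrease the \emph{vertex} weight inside (when the third vertex of the triangle already lies on the cycle, the cycle splits and no interior vertex need be shed); the correct progress measure is the number of faces enclosed, and the linear running time requires maintaining the inside weight incrementally rather than recomputing it. As written these are assertions rather than arguments, but none of them is a wrong turn --- they are where the original proof does its work.
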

 It is well known that one can compute a tree decomposition of width $O(\sqrt{n})$ using a runtime $O(n \log n)$ by recursively applying these separators, as in \Cref{alg:separator_td} (\textsc{ComputeTD}). This can be improved to linear time with clever data structures \cite{goodrich:1995}, though we will not use this improvement here.

The rest of this section is devoted to analyzing the runtime of \textsc{ComputeTD}, the properties of the tree it creates, and the performance of our graph state simulation algorithm when given this tree. 

\begin{algorithm}
	\caption{Separator to Tree Decomposition Algorithm\label{alg:separator_td}. \newline Makes use of a subroutine \textsc{Separator} which takes as input a planar graph and outputs a planar separator with parameters $\alpha, \beta$.}
	\begin{algorithmic}[1]
		\Require{Graph $G$, subset $U \subseteq V(G)$}
		\Ensure{A tree decomposition of $G$ with $U$ contained in the root bag}
		\Statex
		\Let{$n_0$}{$\left\lceil \frac{\beta^2}{(1-\alpha)^2}\right \rceil$}
		\Function{ComputeTD}{$G$, $U$}
		\If{$|V(G)| \leq n_0$}
		\State \Return{introduce node $V(G)$ below a forget node $U$}
		\Else
		\Let{$(A,S,B)$}{\Call{Separator}{$G$}}
		\Let{$T_A$}{\Call{ComputeTD}{$G|_{A \cup S}$, $S \cup (A \cap U)$}}
		\Let{$T_B$}{\Call{ComputeTD}{$G|_{B \cup S}$, $S \cup (B \cap U)$}}
		\State \Return{forget node $U$ above a merge node $S \cup U$ with children $T_A$ and $T_B$}
		\EndIf
		\EndFunction
	\end{algorithmic}
\end{algorithm}

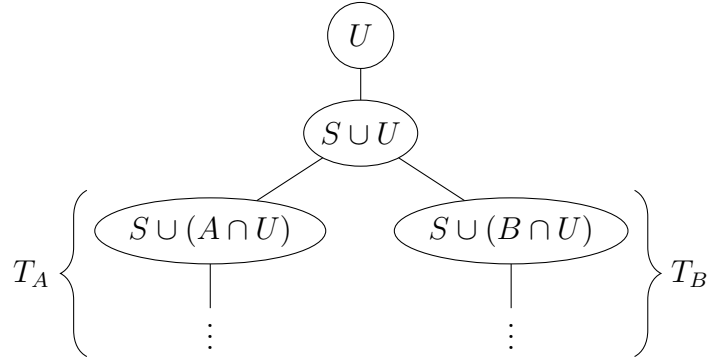
\begin{figure}
\tikzset{
  treenode/.style = {align=center, inner sep=0pt, text centered, font=\sffamily},
  bag/.style = {draw, ellipse, minimum width=25pt, minimum height=25pt, inner sep=0pt, outer sep=0pt},
  etc/.style = {circle, inner sep=0pt, outer sep=0pt},
}
\centering
\begin{tikzpicture}[-,>=stealth',level/.style={sibling distance = 4cm,level distance = 1.3cm}]
\node[bag] (U) {$U$}
child { node[bag]  (SU) {$S\cup U$} 
	child { node[bag] (SAU) {$S\cup(A\cap U)$} 
		child { node[etc] (dotsa) {$\vdots$} }
		}
	child { node[bag] (SBU) {$S\cup(B\cap U)$} 
		child { node[etc] (dotsb) {$\vdots$} }
		}
	}
;

\draw [decorate,decoration={brace,amplitude=10pt,mirror,raise=4pt}]
([shift={(-1.5,0.1)} ] SAU.north) -- ([shift={(-1.5,-0.1)}] dotsa.south) node [midway, xshift=-25pt] {$T_A$};

\draw [decorate,decoration={brace,amplitude=10pt,mirror,raise=4pt}]
([shift={(+1.5,-0.1)}] dotsb.south) -- ([shift={(+1.5,0.1)} ] SBU.north) node [midway,xshift=25pt] {$T_B$};

\end{tikzpicture}
\caption{The tree decomposition constructed by \textsc{ComputeTD}(G,U) in the case where $|V(G)|\geq n_0$. \label{fig:recursivecase}}
\end{figure}

We begin by verifying that \textsc{ComputeTD} is correct, and upper bounding the width of its output.
\begin{lemma}
Given a planar graph $G$ and a subset $U$ of its vertices, the tree $T$ computed by $\textsc{ComputeTD}(G, U)$ is a tree decomposition of $G$ with root bag $U$ and width $O(|U| + \sqrt{n})$. 
\label{thm:tdsize}
\end{lemma}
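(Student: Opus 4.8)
The plan is to prove the lemma by induction on $n = |V(G)|$, establishing all three claims at once: that the output $T$ is a tree decomposition of $G$, that its root bag is $U$, and the width bound. To make the width induction go through cleanly it is convenient to prove the explicit statement that every bag of $\textsc{ComputeTD}(G,U)$ has size at most $c\sqrt{n} + |U|$ for a constant $c$ depending only on $\alpha$ and $\beta$ (this gives width $\le c\sqrt n + |U| - 1 = O(|U|+\sqrt n)$). In the base case $n \le n_0$ the output is a forget node with bag $U$ above a single introduce node with bag $V(G) \supseteq U$: the bag $V(G)$ covers every vertex and every edge, and each vertex lies in one or two adjacent bags, so it is a valid tree decomposition; the root bag is $U$; and the bag sizes are $|U|$ and $n \le n_0$, both at most $c\sqrt n + |U|$ once $c \ge n_0$.

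For the inductive step ($n > n_0$), write $(A,S,B) = \textsc{Separator}(G)$, so that $V = A \sqcup S \sqcup B$ is a disjoint union, no edge of $G$ joins $A$ to $B$, $|A|,|B| \le \alpha n$, and $|S| \le \beta\sqrt{n}$. The first thing I would check is that the two recursive calls shrink the instance: with $n_A := |A \cup S| = |A|+|S| \le \alpha n + \beta\sqrt{n}$, the choice $n_0 \ge \beta^2/(1-\alpha)^2$ together with $n > n_0$ forces $\beta\sqrt{n} < (1-\alpha)n$, hence $n_A < n$, and likewise $n_B := |B\cup S| < n$. Since induced subgraphs of planar graphs are planar and the arguments $S\cup(A\cap U)$, $S\cup(B\cap U)$ are subsets of the respective vertex sets, the induction hypothesis applies to $T_A$ and $T_B$; these are valid tree decompositions of $G|_{A\cup S}$ and $G|_{B\cup S}$ with root bags $S\cup(A\cap U)$, $S\cup(B\cap U)$, and with all bags of size at most $c\sqrt{n_A}+|S\cup(A\cap U)|$, $c\sqrt{n_B}+|S\cup(B\cap U)|$ respectively.

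Now $T$ consists of a forget node (the root) with bag $U$, whose child is a merge node with bag $S \cup U$, whose two children are the roots of $T_A$ and $T_B$. I would first record that the node types are consistent: $U \subseteq S\cup U$, and the merge bag equals the union of its children's root bags since $(S\cup(A\cap U))\cup(S\cup(B\cap U)) = S\cup(A\cap U)\cup(B\cap U) = S\cup(U\setminus S) = S\cup U$, using $V = A\sqcup S\sqcup B$. Then I would verify the three tree-decomposition axioms for $T$. Coverage of vertices and edges is immediate: every vertex lies in $A\cup S$ or $B\cup S$, hence in a bag of $T_A$ or $T_B$; and since $S$ separates, every edge has both endpoints in $A\cup S$ or both in $B\cup S$, hence is covered by $T_A$ or $T_B$. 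For the connectedness axiom I would split on where a vertex $v$ lies. If $v \in S$, then $v$ appears in the merge bag and in the root bags of both $T_A$ and $T_B$ (each containing $S$), so the (connected, by hypothesis) $v$-subtrees of $T_A$ and $T_B$ attach to the merge node to form one connected subtree, together with the forget node when $v \in U$. If $v \in A\setminus S$, then no bag of $T_B$ contains $v$ (those bags are subsets of $B\cup S$), so the $v$-bags form a connected subtree of $T_A$; if moreover $v\in U$ it also lies in the merge and forget bags, which hang off the root of $T_A$, a bag that contains $v$ since $v \in A\cap U$ — so the subtree stays connected. The case $v\in B\setminus S$ is symmetric. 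The root bag of $T$ is $U$ by construction.

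It remains to bound the bag sizes of $T$. Every bag of $T$ is a bag of $T_A$, a bag of $T_B$, the bag $S\cup U$, or the bag $U$; the last two have size at most $\beta\sqrt{n}+|U|$, and by the induction hypothesis a bag of $T_A$ has size at most $c\sqrt{n_A} + |S\cup(A\cap U)| \le c\sqrt{n_A} + \beta\sqrt{n} + |U|$, and similarly for $T_B$. Since $n_A \le \alpha n + \beta\sqrt{n}$ is, for $n > n_0$, at most $(1-\delta)n$ for a fixed $\delta > 0$ depending only on $\alpha,\beta$, we get $c\sqrt{n_A} + \beta\sqrt{n} \le (c\sqrt{1-\delta}+\beta)\sqrt{n} \le c\sqrt{n}$ provided $c$ is chosen large enough that $\beta \le c(1-\sqrt{1-\delta})$ (while also $c \ge n_0$ for the base case). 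Hence every bag of $T$ has size at most $c\sqrt{n}+|U|$ with this same $c$, closing the induction. I expect this last paragraph to be the main obstacle: the induction only closes because $n_A/n$ and $n_B/n$ are bounded strictly away from $1$ — not merely below $1$ — and it is precisely the definition $n_0 = \lceil\beta^2/(1-\alpha)^2\rceil$ that supplies this gap in the recursive regime $n > n_0$; the axiom verifications, while a little lengthy, are routine once the case analysis on whether $v$ lies in $S$, $A\setminus S$, or $B\setminus S$ (and whether $v\in U$) is set up.
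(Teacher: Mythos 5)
Your proof is correct and takes essentially the same route as the paper: induction over the recursive structure of \textsc{ComputeTD}, the same case analysis ($v\in S$, $v\in A\setminus S$, $v\in B\setminus S$, refined by whether $v\in U$) for the three tree-decomposition axioms, and the width bound driven by the geometric decay of subproblem sizes that the choice of $n_0$ guarantees. The only cosmetic difference is that you close the width bound with a strengthened induction hypothesis (every bag has size at most $c\sqrt{n}+|U|$), whereas the paper unrolls the recursion and bounds a bag's growth by the geometric series $\sum_j \beta\sqrt{n}\,(\alpha(1+\epsilon))^{j/2}$ of separator sizes along a root-to-node path -- the two calculations are equivalent (and both, incidentally, implicitly use that $n$ is an integer, so $n>n_0$ gives $n\ge n_0+1$ and hence a uniform gap $\alpha+\beta/\sqrt{n}\le 1-\delta<1$).
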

\begin{proof}
We claim by induction that $\textsc{ComputeTD}(G,U)$ produces a valid tree decomposition where the root bag is $U$. We leave the base case for the reader to check. In the recursive case, the tree decomposition has the form depicted in \Cref{fig:recursivecase} and we see that 
\begin{itemize}
    \item every vertex appears in some bag, since every vertex belongs to $A \cup S$ or $B \cup S$, and by induction the two subtrees contain those vertices,
    \item every edge appears in some bag, since all the edges in $A \cup S$ appear in $T_A$, all the edges in $B \cup S$ appear in $T_B$, and there are no edges between $A$ and $B$ because of the separator property,
    \item nodes containing $v \in V$ form a connected component because
    \begin{itemize}
        \item they are connected in $T_A$ and $T_B$, 
        \item if $v$ appears in both $T_A$ and $T_B$, then it follows that $v \in S$ and appears in the merge node above $T_A$ and $T_B$, and
        \item if $v$ belongs to the merge node $S \cup U$, then it also appears at the root of $T_A$ (or $T_B$) if it appears in that subtree at all.  
    \end{itemize}
\end{itemize}
It is also clear by construction that $U$ is the root node of the new tree, finishing the induction. 

Notice that the recursive calls are on graphs $G|_{A\cup S}$ and $G|_{B\cup S}$ of size 
$$|A \cup S|, |B \cup S| \leq \alpha n + \beta \sqrt{n} \leq \gamma n,$$
for some $\gamma$ such that
\begin{equation}
\gamma < \alpha + \frac{\beta}{\sqrt{n_0}} \leq 1,
\label{eq:epschoice}
\end{equation}
where $n_0 = \frac{\beta^2}{(1-\alpha)^2}$ (as specified in the algorithm) guarantees $\gamma < 1$. It follows that the graph input to a recursive call at depth $\ell$ will have at most $\gamma^{\ell} n$ vertices.

The bag size can only increase from parent to child by at most  $|S|$, so a bag in the $\ell$th level of recursion has size which has increased by at most the sizes of $\ell$ planar separators. Since the size of each planar separator is at most $\beta$ times the square root of the number of vertices in the graph, the number of elements gained from $\ell$ levels of separators is therefore at most
$$
|\text{Bag} \backslash U| \leq \sum_{j=0}^{\ell} \beta \sqrt{n} \gamma^{j/2} \leq \beta \sqrt{n} \sum_{j=0}^{\infty} \gamma^{j/2} = \frac{\beta \sqrt{n}}{1-\sqrt{\gamma}} = O(\sqrt{n}).
$$
We conclude that every bag has at most $O(|U| + \sqrt{n})$ vertices. 
\end{proof}

We now bound the runtime $\widetilde{O}(\| T \|_{\omega}^{\omega})$ which appears in \Cref{thm:sumofcubes}, for tree decompositions $T$ produced by \textsc{ComputeTD}. 
\begin{lemma}
Let $G$ be a planar graph with $n$ vertices, let $U\subseteq V(G)$, and let $T$ be the tree decomposition computed by $\textsc{ComputeTD}(G, U)$. For all $c > 2$ we have
$$
\| T \|_{c}^{c} = \widetilde{O}(n^{c/2} + |U|^c).
$$
\label{lem:tdnorm}
\end{lemma}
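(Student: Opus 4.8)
The plan is to write $\|T\|_c^c=\sum_{\text{bags }B}|B|^c$ as a sum over the recursive calls of \textsc{ComputeTD}. Each call $\textsc{ComputeTD}(G',U')$ adds a constant number of bags to $T$: a forget node equal to $U'$, together with (in the recursive case) a merge node $S\cup U'$ where $S=\textsc{Separator}(G')$, or (in the base case) an introduce node $V(G')$ of size at most $n_0=O(1)$. By the Lipton--Tarjan bound $|S|\le\beta\sqrt{|V(G')|}$, every bag contributed by a call on an $m'$-vertex piece with boundary set $U'$ has size at most $|U'|+\beta\sqrt{m'}+1$. Hence it suffices to bound
\[
\sum_{\text{calls }C}\Big(|U_C|^c + |V(G_C)|^{c/2}\Big)
\]
up to constant and polylogarithmic factors, where $m_C:=|V(G_C)|$ and $U_C$ are the arguments of call $C$, with $U_{\text{root}}=U$, $|U|=\ell$.

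The first sum is handled by a master-theorem-style argument. Let $F(m)$ be the worst case of $\sum_C m_C^{c/2}$ over $m$-vertex planar inputs. The recursive calls are on pieces of sizes $m_A,m_B$ with $m_A,m_B\le\alpha(1+\epsilon)m$ and $m_A+m_B\le m+\beta\sqrt m$, so $F(m)\le m^{c/2}+F(m_A)+F(m_B)$. Since $c>2$ we have $x^{c/2}+y^{c/2}<1$ whenever $x+y\le 1$ and $0<x,y<1$ (indeed $x^{c/2}<x$ and $y^{c/2}<y$), and this stays bounded away from $1$ after absorbing the lower-order $\beta\sqrt m$ slack for $m$ above a fixed constant threshold. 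The recurrence therefore closes to $F(m)=O(m^{c/2})$ with a finite base case covering $m<n_0$, giving $\sum_C m_C^{c/2}=O(n^{c/2})$.

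The second sum, $\sum_C|U_C|^c$, is the crux. The boundary is updated by $U_{\text{child}}=S\cup(\text{side}\cap U_{\text{parent}})$ and is always contained in the current piece, so $|U_C|\le m_C$. The plan is to prove by induction on the recursion — with parameters $(m,\ell)$ and a polylogarithmically-relaxed invariant of the form $\widetilde O(m^{c/2}+\ell^c)$ — that $\sum_C|U_C|^c=\widetilde O(n^{c/2}+\ell^c)$. One splits each boundary set into the part inherited from the original $U$ — whose $c$th-power mass does not increase when split between the two children, and so telescopes into the $\ell^c$ term over the $O(\log n)$ levels provided by the depth bound in the proof of \Cref{thm:tdsize} — and the part injected by the separators of ancestor calls, which is charged back against those separators and controlled by the same geometric decay that gives $F(m)=O(m^{c/2})$, crucially using $|U_C|\le m_C$ to keep the deep bags small. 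Adding the two sums yields $\|T\|_c^c=\widetilde O(n^{c/2}+\ell^c)$, which is the claimed bound.

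I expect the main obstacle to be precisely this boundary sum: the crude estimate that bounds every bag by $O(\sqrt n+\ell)$ and multiplies by the number of bags is far too lossy, because the decomposition tree can be wide, so one must instead show that bags of size $\ge s$ are correspondingly rare — morally on the order of $\widetilde O(n/s^2)$, which is exactly what makes $\sum_B|B|^c$ converge for $c>2$ — and this requires tracking how separator vertices propagate into descendant boundaries rather than bounding them by their worst-case size. Making this accounting yield only a polylogarithmic overhead (the factor hidden by the $\widetilde O$), rather than a spurious polynomial factor in $n$, is the delicate step.
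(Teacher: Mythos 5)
Your overall decomposition is fine (bag size $\le |U_C|+\beta\sqrt{m_C}+1$, so $\|T\|_c^c\lesssim\sum_C m_C^{c/2}+\sum_C|U_C|^c$), and the first sum is handled correctly: with the balance condition $m_A,m_B\le\alpha(1+\epsilon)m$ one gets $m_A^{c/2}+m_B^{c/2}\le(m_A+m_B)\max(m_A,m_B)^{c/2-1}\le\gamma m^{c/2}$ with $\gamma<1$ for large $m$, so the recurrence closes to $O(n^{c/2})$ precisely because $c>2$. But the proof has a genuine gap exactly where you flag it: the bound $\sum_C|U_C|^c=\widetilde O(n^{c/2}+\ell^c)$ is asserted via a charging plan, not proved. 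Two concrete problems. First, the claim that the ``part inherited from the original $U$'' has non-increasing $c$th-power mass under a split is false as stated: a vertex of $U$ that lands in the separator $S$ is duplicated into \emph{both} children (indeed $U_C=(U\cup\bigcup_{\mathrm{anc}}S_j)\cap V(G_C)$), so the inherited mass can grow; you can repair this by reclassifying such vertices as separator-contributed, but that only moves them into the unproved part of the accounting. Second, the separator part cannot simply be ``charged back against those separators'': a separator vertex reappears in the boundaries of many descendant calls (its multiplicity doubles every time it is caught in a later separator), so its $c$th-power mass is paid many times, and the geometric decay of separator \emph{sizes} down the tree does not by itself control this multiplicity. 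Your ``bags of size $\ge s$ are $\widetilde O(n/s^2)$'' heuristic is the right moral (it is what makes $\sum_B|B|^c$ converge for $c>2$), but establishing it is the entire content of the lemma, and no argument for it is given.

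For comparison, the paper avoids the direct $c$-norm accounting altogether: since $c>2$, $\|T\|_c\le\|T\|_2$, so it suffices to prove $\|T\|_2^2=\widetilde O((\sqrt n+\ell)^2)$. The effect of the boundary $U$ is then controlled by comparing $T$ with the tree $T'=\textsc{ComputeTD}(G,\varnothing)$ (same separators, same recursion shape) and observing that a vertex $u\in U$ can be ``missing'' from at most one node per level --- because at any split $u$ is either in the separator (then present in both trees in both children) or confined to one side --- giving $\|T\|_2^2-\|T'\|_2^2=O(\ell(\ell+\sqrt n)\log n)$; the same comparison, applied with $|U'|\le|S|=O(\sqrt m)$ at each recursive call, yields the recurrence $F(m)\le F(m_A)+F(m_B)+O(m\log m)$ and hence $\|T'\|_2^2=O(n\log^2 n)$. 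That ``disagreement in at most one child'' observation is the combinatorial ingredient your charging scheme is missing; without it (or an equivalent control on how separator and $U$ vertices propagate into descendant boundaries) the delicate step you defer does not go through.
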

\begin{proof}
It is a property of $p$-norms that $\| T \|_{p} \leq \| T \|_{q}$ whenever $p \geq q > 0$. In particular, since $c > 2$ we have $\| T \|_{c} \leq \| T \|_2$. It suffices to show that $\| T \|_{2}^{2} = \widetilde{O}((\sqrt{n} + |U|)^2)$, because then 
$$
\| T \|_{c}^{c} \leq \| T \|_{2}^{c} = \left( \| T \|_{2}^{2} \right)^{c/2} = \widetilde{O}((\sqrt{n} + |U|)^{c}) = \widetilde{O}(n^{c/2} + |U|^c),
$$
as desired.

The crux of the proof is to reduce to the case where $U = \varnothing$. That is, consider the tree $T' \gets \textsc{ComputeTD}(G, \varnothing)$ that we get by passing $\varnothing$ instead of $U$ as the second argument. The tree structure $T$ and $T'$ will be identical except for the content of the nodes, since they will get the same separators, recurse on the same graphs, and stop at the same base cases (in other words, the control flow of \textsc{ComputeTD} does not depend on $U$). However, each node in $T'$ is contained in the corresponding node of $T$, i.e., some vertices of nodes in $T$ are missing from some nodes $T'$. We say $T$ and $T'$ \emph{disagree} on these vertices at these nodes.

We will upper bound $\| T \|_2^2 - \| T' \|_2^2$ as
\begin{align*}
    \| T \|_2^{2} - \| T' \|_2^{2} = \sum_{i} | B_i|^2 - |B_i'|^2 &= \sum_{i} \left( |B_i| - |B_i'| \right) \left( |B_i| + |B_i'| \right) \\
    &\leq \left( \sum_{i} |B_i| - |B_i'| \right) (\max_{j} |B_j| + \max_{j} |B_j'|).
\end{align*}
Recall from \Cref{thm:tdsize} that the size of each bag is at most $O(|U| + \sqrt{n})$, which upper bounds the second factor in the product above. It remains to bound the first factor (essentially the total number of missing vertices across all nodes), and use the resulting bound on $\| T \|_{2}^{2} - \| T' \|_{2}^{2}$ to draw a conclusion about $\| T \|_{2}^{2}$.

Note that $T$ and $T'$ will disagree on a vertex $u \in U$ at the root, and continue to disagree as we descend the tree until either
\begin{itemize}
    \item the separator divides $u$ from $T$, i.e., the current subtree corresponds to a subgraph of $G$ that does not include $u$ anymore, or
    \item $u$ is in the separator at some level of the recursion, and since $T$ and $T'$ have the same separators, it will appear in both trees.
\end{itemize}
In either case, if $T$ and $T'$ agree on a vertex $u$ at node $X$, then they also agree on $u$ in all descendants of $X$. Furthermore, $T$ and $T'$ cannot disagree about $u$ in both children, $Y_1$ and $Y_2$, of some node $X$. Either $u$ belongs to the separator at node $X$, and thus appears in both children in both $T$ and $T'$, \emph{or} $u$ is in one half of the partition and thus absent from the other half in both $T$ and $T'$. The take away is that $u \in U$ can be missing from at most one node per level of the tree. Otherwise, if $u$ is missing from nodes $Z_1$ and $Z_2$ in the same level, then it is also missing from their ancestors (since if $T$ and $T'$ agree on an ancestor of $Z_i$, they agree on all its descendants, including $Z_i$), and in particular from their least common ancestor $Z$ and its two children. Since $T$ and $T'$ cannot disagree on both children, this is a contradiction.

We conclude that any particular $u \in U$ is missing from at most $O(\text{depth}) = O(\log n)$ nodes, and so there are a total of $O(|U| \log n)$ missing nodes across the whole tree. That is, $\sum_{i} |B_i| - |B_i'| = O(|U| \log n)$, and thus 
$$
\| T \|_{2}^{2} - \| T' \|_{2}^{2} = O(|U| (|U| + \sqrt{n}) \log n).
$$
We will now show that $\| T' \|_2^{2} = \widetilde{O}(n)$ to complete the proof.

Let $f(n)$ be the maximum value of $\| T' \|_{2}^{2}$ for a tree $T' \gets \textsc{ComputeTD}(G, \varnothing)$ over all graphs $G$ on at most $n$ vertices. In the recursive case (where $n > n_0$), we have two recursive calls which are unfortunately \emph{not} of the form $\textsc{ComputeTD}(G', \varnothing)$; they have non-empty second arguments. However, by the same argument as above, setting the second argument from $U'$ to $\varnothing$ changes the squared $2$-norm by at most $O(|U'| (|U'|+ \sqrt{n}) \log n) = O(n \log n)$, since here $|U'| \leq |S| = O(\sqrt{n})$. It follows that we can upper bound $f(n)$ by $f(n_1) + O(n \log n)$ and $f(n_2) + O(n \log n)$ for the two subtrees, plus $O(|S|^2) = O(n)$ for the root node, giving
\begin{equation}\label{eq:source_of_gamma}
f(n) \leq f(n_1) + f(n_2) + O(n \log n),
\end{equation}
where $n_1, n_2 \leq \alpha n + \beta \sqrt{n}$ and $n_1 + n_2 \leq n + \beta \sqrt{n}$. We show by induction that $f(n) \leq k n \log^2 n$ for all $n \geq 2$, where the constant $k$ is to be determined.\footnote{The base of the logarithm is irrelevant since it can be absorbed into $k$.} As long as $n \geq 2/\alpha$, both parts will be size $\geq \alpha n \geq 2$, so we can apply the induction hypothesis. Let us take $k$ large enough that the induction hypothesis holds for the remainder, $2 \leq n < 2/\alpha$. Then taking $\gamma$ to be the constant from the $O(n\log n)$ term in \Cref{eq:source_of_gamma}, we have
\begin{align*}
f(n) &\leq f(n_1) + f(n_2) + \gamma n \log n \\
&\leq k (n_1 \log^2 n_1 + n_2 \log^2 n_2) + \gamma n \log n \\
&\leq k (n_1 + n_2) \log^2 \max(n_1, n_2) + \gamma n \log n \\
&\leq k (n + \beta \sqrt{n}) \log^2 \max(n_1, n_2) + \gamma n \log n \\
&= kn \log^2 \max(n_1, n_2) + k \beta \sqrt{n} \log^2 \max(n_1, n_2) + \gamma n \log n.
\end{align*}
Clearly $\max(n_1, n_2) \leq n$, and $\sqrt{n} \log^2 n = o(n \log n)$, so for any $\beta > 0$ and $\epsilon > 0$ we can take $N$ large enough that 
$$
\beta \sqrt{n} \log^2 \max(n_1, n_2) \leq \epsilon n \log n
$$
for all $n > N$. For the other term, we note that $n_1, n_2 \leq \alpha n + \beta \sqrt{n} \leq \alpha' n$ for some constant $\alpha < \alpha' < 1$ when $n$ is sufficiently large. 
\begin{align*}
f(n) &\leq kn (\log n)(\log n + \log \alpha') + (\gamma + k\epsilon) n \log n \\
&= k n\log^2 n + (\gamma + k\epsilon + k\log \alpha') n \log n
\end{align*}
We wish to show this is at most $k n \log^2 n$. At this point we fix $\alpha'$, assume $N$ is large enough that $\epsilon < \log(1/\alpha')$, and then take any $k > \frac{\gamma}{\log(1/\alpha)-\epsilon}$ so that the second term is negative, and hence $f(n) \leq k n \log^2 n$. This completes the induction, and we conclude that $f(n) = O(n \log^2 n)$.
\end{proof}

The running time for the algorithm follows immediately, as long as \textsc{ComputeTD} is not too expensive. Let us bound that next.
\begin{lemma}
Let $G$ be a planar graph with $n$ vertices.  $\textsc{ComputeTD}(G, U)$ runs in time $O(n \log n)$ given that $\textsc{Separator}$ runs in linear time. 
\label{lem:tdruntime}
\end{lemma}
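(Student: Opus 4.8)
The plan is a standard divide-and-conquer recurrence analysis, entirely parallel to (and in fact one logarithmic factor lighter than) the analysis of $\|T\|_2^2$ carried out in \Cref{lem:tdnorm}. Let $W(m)$ denote the worst-case running time of $\textsc{ComputeTD}(G,U)$ over planar graphs $G$ with $m$ vertices; the bound will not depend on $U$ beyond the precondition $U\subseteq V(G)$. The first step is to account for the non-recursive work done at a single invocation. In the base case $m\le n_0$ everything fits in $O(1)$ since $n_0$ is a constant. In the recursive case we call $\textsc{Separator}(G)$, which costs $O(m)$ by hypothesis (and because a planar graph has $O(m)$ edges); from the returned partition $(A,S,B)$ we extract the induced subgraphs $G|_{A\cup S}$ and $G|_{B\cup S}$ and the sets $S\cup(A\cap U)$, $S\cup(B\cap U)$ in $O(m)$ additional time; and we assemble one forget node with bag $U$ and one merge node with bag $S\cup U$, each of size $O(|U|+|S|)$. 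Since $U\subseteq V(G)$ we have $|U|\le m$, so the total non-recursive cost at a node is $O(m)$.

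The second step is to set up and solve the recurrence. With $m_A:=|A\cup S|$ and $m_B:=|B\cup S|$, the above gives
\[
W(m)\le W(m_A)+W(m_B)+O(m)\quad (m>n_0),\qquad W(m)=O(1)\quad (m\le n_0).
\]
Because $S$ separates $V(G)\setminus S$ into $A$ and $B$ we have $|A|+|S|+|B|=m$, hence $m_A+m_B=m+|S|\le m+\beta\sqrt m$; and the Lipton--Tarjan bounds $|A|,|B|\le\alpha m$ give $m_A,m_B\le\alpha m+\beta\sqrt m\le\rho m$, where $\rho:=\alpha(1+\epsilon)<1$ for the choice of $\epsilon$ in Eq.~\eqref{eq:epschoice} and the same $n_0$ as in \Cref{alg:separator_td}. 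In particular every subproblem is smaller by a constant factor, so the recursion depth is $O(\log m)$. Solving then proceeds by induction on $m$ with the hypothesis $W(m)\le k_0 m+k_1 m\log m$: plugging it in and using $\log m_A,\log m_B\le\log m-\log(1/\rho)$ together with $m_A+m_B\le m+\beta\sqrt m$ yields a bound of the form $k_0 m+k_1 m\log m-\big(k_1\log(1/\rho)-c\big)m+O(\sqrt m\log m)$, where $c$ is the constant in the $O(m)$ term. Choosing $k_1>c/\log(1/\rho)$ makes the linear correction strictly negative, and since the $O(\sqrt m\log m)$ remainder is $o(m)$ the induction closes for all $m$ past some constant threshold; taking $k_0$ large absorbs the finitely many base cases. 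Hence $W(m)=O(m\log m)$, and in particular $\textsc{ComputeTD}(G,U)$ runs in $O(n\log n)$ time.

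I do not expect a genuine obstacle. The one point needing a touch of care is the mild inflation $m_A+m_B\le m+\beta\sqrt m$ of the combined subproblem size, caused by placing the separator $S$ into \emph{both} children's subtrees (so that the tree-decomposition properties hold); a crude per-level accounting of this duplication overcounts badly, so the right move is to avoid it and instead feed the sharp bounds $m_A+m_B\le m+\beta\sqrt m$ and $\max(m_A,m_B)\le\rho m$ directly into the induction, where the $\beta\sqrt m$ perturbation is lower order and absorbed harmlessly — exactly as the analogous inflation is handled in the proof of \Cref{lem:tdnorm}. It is worth noting that, because our non-recursive work per node is only $O(m)$ rather than the $O(m\log m)$ appearing in \Cref{lem:tdnorm}, we pick up a single logarithmic factor instead of two, giving precisely the claimed $O(n\log n)$.
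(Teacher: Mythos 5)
Your proposal is correct and follows essentially the same route as the paper's proof: the identical recurrence $W(m)\le W(m_A)+W(m_B)+O(m)$ with $m_A+m_B\le m+\beta\sqrt{m}$ and $\max(m_A,m_B)\le\alpha(1+\epsilon)m$, the same inductive guess of the form $k_0 m+k_1 m\log m$, and the same trick of choosing the leading constant large enough that the negative term coming from $\log\bigl(\alpha(1+\epsilon)\bigr)<0$ absorbs both the per-node linear work and the lower-order $O(\sqrt{m}\log m)$ remainder. The only difference is cosmetic bookkeeping (writing $-\log(1/\rho)$ instead of $\log(\alpha(1+\epsilon))$ and making the $O(m)$ per-node cost explicit), so no changes are needed.
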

\begin{proof}
Let $F(n)$ be the worst-case running time of $\textsc{ComputeTD}$ on graphs of at most $n$ vertices. Using the recursive structure of the algorithm, we see that for sufficiently large $n$, 
$$
f(n) \leq f(n_1) + f(n_2) + O(n)
$$
for some $n_1, n_2$ such that $n_1 + n_2 \leq n + \beta \sqrt{n}$ and $n_1, n_2 \leq \alpha n + \beta \sqrt{n}$. Similar to the previous lemma, an inductive argument shows this recurrence is solved by $f(n) = O(n \log n)$.
\end{proof}

We are now ready to prove \Cref{thm:planargss}.
\begin{proof}[Proof of \Cref{thm:planargss}]
We will apply the algorithm from \Cref{thm:sumofcubes} to the tree $T$ output by $\textsc{ComputeTD}(G, \varnothing)$. The graph state simulation algorithm runs in $\widetilde{O}(\| T \|^{\omega}_{\omega})$ by \Cref{thm:sumofcubes}, and $\|T \|_{\omega}^{\omega}$ is $\widetilde{O}(n^{\omega/2})$ by \Cref{lem:tdnorm}. The cost of constructing the tree decomposition is $O(n \log n)$ by \Cref{lem:tdruntime}. Thus, the total running time is $\widetilde{O}(n^{\omega/2})$.
\end{proof}

We also immediately get \Cref{thm:coarse}.

\begin{proof}[Proof of \Cref{thm:coarse}]
Let $\varphi \colon G' \to G$ be the mapping witnessing that $G'$ is $r$-coarse-grained planar. That is, $G$ is planar and $\varphi$ maps at most $r$ vertices in $G'$ to one vertex in $G$. To solve an instance of the graph state simulation problem for $G'$, we run the following steps:
\begin{enumerate}
    \item Construct $T_{G} \gets \textsc{ComputeTD}(G, \varnothing)$, using the planar tree decomposition.
    \item Take the preimage $T_{G'} \gets \varphi^{-1}(T_G)$. That is, let $T_{G'}$ have the same tree structure, and replace a bag $B \subseteq V(G)$ with the preimage set $\varphi^{-1}(B) := \{ v \in V(G') : \varphi(v) \in B \}$.
    \item\label{step:simulation} Run the graph state simulation algorithm from \Cref{thm:sumofcubes} for $G'$ with tree decomposition $T_{G'}$.
\end{enumerate}
Note that $T_{G'}$ is a tree decomposition for $G'$ by \Cref{lem:tree_decomp_homo}, and therefore the algorithm is correct.

The running time will be dominated by Step~\ref{step:simulation} since constructing $T_G$ is $O(n \log n)$ time by \Cref{lem:tdruntime}, and computing preimages is $O(n)$ time. Recall that the graph state simulation algorithm runs in time $O(\| T_{G'} \|_{\omega}^{\omega})$ by \Cref{thm:sumofcubes}. Each bag in $T_{G'}$ is at most $r$ times bigger than $T_G$, so it is clear that $\| T_{G'} \|_{\omega}^{\omega} \leq r^{\omega} \| T_G \|_{\omega}^{\omega}$. Finally, $\| T_G \|_{\omega}^{\omega} = \widetilde{O}(n^{\omega/2})$ by \Cref{lem:tdnorm}. Putting these facts together, we see that the sampling algorithm is $\widetilde{O}(n^{\omega/2} r^{\omega})$ time. 
\end{proof}

For example, consider the $k \times k \times k$ cubic grid $G'$. There is clearly a $k$-coarse graining onto the $k \times k$ grid $G$. Since the $k \times k$ grid is only $n = k^2$ qubits, sampling only takes $\widetilde{O}( (k^2)^{\omega/2} k^{\omega}) = \widetilde{O}(k^{2 \omega})$ time. This matches (up to log factors) the natural analogue of the recursive 2D grid algorithm, and beats the $\widetilde{O}(k^{3 \omega})$ runtime of direct Clifford simulation.

\section{Stabilizer operations in matrix multiplication time}
\label{sec:matrix_mult}

The goal of this section is to show that fast matrix multiplication techniques can be used in a wide range of stabilizer operations. These speedups are shown in \Cref{table:clifford_asymptotics}, where $\omega < 2.372859$ \cite{alman+williams:2020} is the matrix multiplication constant.  In particular, we will show that given the tableaux for Clifford operations on $n$ and $m$ qubits, the tableau for their composition can be computed in $O(\min(n m^{\omega-1}, m n^{\omega-1}))$ time.  Furthermore, given the tableau for a Clifford state on $n$ qubits, we simulate a measurement on $k$ qubits in $O(k^{\omega - 2} n^2 \log (2n/k))$ time. 

Most of the results in this section follow from the work of Dehaene and De Moor \cite{dehaene+demoor:2003} which describes the composition of Clifford operations as the product of binary matrices. We begin with a short summary of their techniques.  First, recall that every Pauli $P$ can be represented uniquely as\footnote{For the purposes of this section, we will use a separate font to write $\pfont{X}$, $\pfont{Y}$, and $\pfont{Z}$ Pauli operators to avoid confusion with the many other matrices that will appear.}
$$
i^\alpha (-1)^\beta \pfont{X}^a \pfont Z^b
$$
for $\alpha, \beta \in \{0,1\}$ and $a, b \in \{0,1\}^n$.

\begin{lemma}[Dehaene, De Moor \cite{dehaene+demoor:2003}]
\label{lem:pauli_mult}
Let $P_1 = i^{\alpha_1} (-1)^{\beta_1} \pfont X^{a_1} \pfont Z^{b_1}$ and $P_2 = i^{\alpha_2} (-1)^{\beta_2} \pfont X^{a_2} \pfont Z^{b_2}$.  Then $P_1 P_2 = i^{\alpha_{12}}(-1)^{\beta_{12}} \pfont X^{a_{12}} \pfont Z^{b_{12}}$ where
\begin{center}
\vspace{-20pt}
\begin{minipage}{.35\linewidth}
\begin{align*}
\alpha_{12} &= \alpha_1 + \alpha_2 \\
\beta_{12} &= \beta_1 + \beta_2 + \alpha_1\alpha_2 + b_1 \cdot a_2
\end{align*}
\end{minipage}
\begin{minipage}{.35\linewidth}
\begin{align*}
a_{12} &= a_1 + a_2 \\
b_{12} &= b_1 + b_2
\end{align*}
\end{minipage}
\end{center}
where all operations are over $\mathbb F_2$.
\end{lemma}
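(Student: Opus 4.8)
The plan is to reduce the $n$-qubit identity to single-qubit Pauli algebra, exploiting that $\pfont X^a\pfont Z^b = \bigotimes_{j=1}^n \pfont X^{a_j}\pfont Z^{b_j}$ is a tensor product, so that the computation factorizes across qubits up to the global scalars. First I would pull the scalar prefactors to the front,
\[
P_1 P_2 = i^{\alpha_1+\alpha_2}(-1)^{\beta_1+\beta_2}\,\pfont X^{a_1}\pfont Z^{b_1}\pfont X^{a_2}\pfont Z^{b_2},
\]
where at this stage $\alpha_1+\alpha_2 \in \{0,1,2\}$ is an ordinary integer sum, to be reduced at the end.

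Next I would move $\pfont Z^{b_1}$ past $\pfont X^{a_2}$ using the single-qubit relation $\pfont Z\pfont X = -\pfont X\pfont Z$: on qubit $j$ one picks up a factor $-1$ exactly when $b_{1,j}=a_{2,j}=1$, so $\pfont Z^{b_1}\pfont X^{a_2} = (-1)^{b_1\cdot a_2}\,\pfont X^{a_2}\pfont Z^{b_1}$. Then $\pfont X^{a_1}\pfont X^{a_2}=\pfont X^{a_1+a_2}$ and $\pfont Z^{b_1}\pfont Z^{b_2}=\pfont Z^{b_1+b_2}$ with additions over $\mathbb F_2$ and no further phase, since $\pfont X^2=\pfont Z^2=I$. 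This already yields $a_{12}=a_1+a_2$ and $b_{12}=b_1+b_2$, leaving the scalar $i^{\alpha_1+\alpha_2}(-1)^{\beta_1+\beta_2+b_1\cdot a_2}$ to normalize into canonical form $i^{\alpha_{12}}(-1)^{\beta_{12}}$ with $\alpha_{12},\beta_{12}\in\{0,1\}$. The only subtlety is that the exponent of $i$ lives mod $4$: for bits $\alpha_1,\alpha_2$ one has $\lfloor(\alpha_1+\alpha_2)/2\rfloor=\alpha_1\alpha_2$, hence $i^{\alpha_1+\alpha_2}=i^{(\alpha_1+\alpha_2)\bmod 2}(-1)^{\alpha_1\alpha_2}$. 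Collecting the phases gives $\alpha_{12}=\alpha_1+\alpha_2$ and $\beta_{12}=\beta_1+\beta_2+\alpha_1\alpha_2+b_1\cdot a_2$, all over $\mathbb F_2$, which is the claim.

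There is no genuine obstacle here; the proof is routine bookkeeping of Pauli commutation relations. The one place that requires care is distinguishing the $\bmod 2$ arithmetic governing the $\pfont X$/$\pfont Z$ exponents and the $(-1)$-phase from the $\bmod 4$ arithmetic implicit in $i^{\alpha_1+\alpha_2}$; tracking the carry from that reduction is precisely what produces the $\alpha_1\alpha_2$ term in $\beta_{12}$.
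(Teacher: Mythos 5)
Your proof is correct and follows essentially the same route as the paper's: commute $\pfont Z^{b_1}$ past $\pfont X^{a_2}$ to pick up $(-1)^{b_1\cdot a_2}$, and reduce the $i$-exponent via $i^{\alpha_1+\alpha_2}=i^{\alpha_1\oplus\alpha_2}(-1)^{\alpha_1\alpha_2}$, which is exactly where the $\alpha_1\alpha_2$ term comes from in the paper as well. The only difference is cosmetic ordering (the paper normalizes the $i$-phase first, you do it last) and your slightly more explicit qubit-by-qubit justification of the commutation factor.
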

\begin{proof}
We start by noticing that $i^{\alpha_1 + \alpha_2} = i^{\alpha_1 \oplus \alpha_2} (-1)^{\alpha_1\alpha_2}$ when $\alpha_1, \alpha_2 \in \{0,1\}$.  Then
\begin{align*}
P_1 P_2 &= i^{\alpha_1 \oplus \alpha_2} (-1)^{\beta_1 + \beta_2 + \alpha_1\alpha_2 } \pfont X^{a_1} \pfont Z^{b_1} \pfont X^{a_2} \pfont Z^{b_2}\\
&= i^{\alpha_1 \oplus \alpha_2} (-1)^{\beta_1 + \beta_2 + \alpha_1\alpha_2 + b_1 \cdot a_2} \pfont X^{a_1}  \pfont X^{a_2} \pfont Z^{b_1} \pfont Z^{b_2}\\
&= i^{\alpha_1 \oplus \alpha_2} (-1)^{\beta_1 + \beta_2 + \alpha_1\alpha_2 + b_1 \cdot a_2} \pfont X^{a_1 \oplus a_2}  \pfont Z^{b_1 \oplus b_2},
\end{align*}
which gives the desired result.
\end{proof}

\begin{corollary}
\label{cor:multi_pauli_mult}
Let $P_i = i^{\alpha_i} (-1)^{\beta_i} \pfont X^{a_i} \pfont Z^{b_i}$ for $i \in \{1, \ldots, \ell\}$.  Then, $P_1 \cdots P_\ell = i^{\alpha} (-1)^{\beta} \pfont X^{a} \pfont Z^{b}$ where
\begin{center}
\vspace{-20pt}
\begin{minipage}{.4\linewidth}
\begin{align*}
\alpha &= \sum_{i =1}^\ell \alpha_i \\
\beta &= \sum_{i=1}^\ell \beta_i + \sum_{i < j}^\ell \alpha_i \alpha_j + \sum_{i < j}^\ell b_i \cdot a_j
\end{align*}
\end{minipage}
\begin{minipage}{.35\linewidth}
\begin{align*}
a &= \sum_{i=1}^\ell a_i \\
b &= \sum_{i=1}^\ell b_i
\end{align*}
\end{minipage}
\end{center}
where all operations are over $\mathbb F_2$.
\end{corollary}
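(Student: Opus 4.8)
The plan is to prove this by induction on $n$, using \Cref{lem:pauli_mult} as the inductive step. For the base case $n=1$ (or equivalently $n=2$, where the claim is exactly \Cref{lem:pauli_mult}) the statement is immediate: when $n=1$ the double sums $\sum_{i<j}\alpha_i\alpha_j$ and $\sum_{i<j} b_i\cdot a_j$ are empty, and the formula reduces to the given form of $P_1$.

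For the inductive step, suppose $P_1\cdots P_{n-1} = i^{\alpha'}(-1)^{\beta'}\pfont{X}^{a'}\pfont{Z}^{b'}$ with $\alpha' = \sum_{i=1}^{n-1}\alpha_i$, $a' = \sum_{i=1}^{n-1}a_i$, $b' = \sum_{i=1}^{n-1}b_i$, and $\beta' = \sum_{i=1}^{n-1}\beta_i + \sum_{1\le i<j\le n-1}\alpha_i\alpha_j + \sum_{1\le i<j\le n-1}b_i\cdot a_j$. Applying \Cref{lem:pauli_mult} to $(P_1\cdots P_{n-1})\cdot P_n$ gives at once $\alpha = \alpha' + \alpha_n = \sum_{i=1}^n\alpha_i$, $a = a' + a_n = \sum_{i=1}^n a_i$, and $b = b' + b_n = \sum_{i=1}^n b_i$, as claimed.

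The only bookkeeping is in the phase bit: \Cref{lem:pauli_mult} yields $\beta = \beta' + \beta_n + \alpha'\alpha_n + b'\cdot a_n$. The key observation is that over $\mathbb F_2$ multiplication distributes over addition, so $\alpha'\alpha_n = \bigl(\sum_{i=1}^{n-1}\alpha_i\bigr)\alpha_n = \sum_{i=1}^{n-1}\alpha_i\alpha_n$ and likewise $b'\cdot a_n = \sum_{i=1}^{n-1} b_i\cdot a_n$. Substituting the inductive form of $\beta'$ and recombining, $\sum_{1\le i<j\le n-1}\alpha_i\alpha_j + \sum_{i=1}^{n-1}\alpha_i\alpha_n = \sum_{1\le i<j\le n}\alpha_i\alpha_j$ and similarly for the $b_i\cdot a_j$ terms; together with $\sum_{i=1}^{n-1}\beta_i + \beta_n = \sum_{i=1}^n\beta_i$ this is exactly the claimed expression for $\beta$. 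There is no real obstacle here — the argument is purely a matter of checking that the triangular double sums absorb the new terms contributed by $P_n$ — so the only thing to be careful about is that every reindexing step is valid over $\mathbb F_2$.
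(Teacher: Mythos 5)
Your proof is correct and matches the paper's approach: the paper's proof of this corollary is simply ``repeatedly apply \Cref{lem:pauli_mult},'' and your induction with the $\mathbb{F}_2$ bookkeeping for $\alpha'\alpha_n$ and $b'\cdot a_n$ is exactly the detailed version of that argument.
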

\begin{proof}
Repeatedly apply \Cref{lem:pauli_mult}.
\end{proof}

Let us now fix our tableau representation for an $n$-qubit Clifford circuit $Q$.\footnote{We will mostly adopt the conventions of the Aaronson-Gottesman tableau representation \cite{aaronson+gottesman:2004}, but will keep track of the phase bits according to Dehaene and De Moor \cite{dehaene+demoor:2003}.}  The tableau is simply a list of the images of $\pfont X^{e_j}$ and $\pfont Z^{e_j}$ under conjugation by $Q$ for all $j$, where $e_j \in \{0,1\}^n$ is the $n$-bit vector with a 1 in position $j$ and 0's everywhere else.  Let $Q \pfont X^{e_j} Q^\dag = i^{\alpha_j} (-1)^{\beta_j} \pfont X^{a_j} \pfont Z^{b_j}$ and $Q \pfont Z^{e_j} Q^\dag = i^{\delta_j} (-1)^{\epsilon_j} \pfont X^{c_j} \pfont Z^{d_j}$.  We arrange this conjugation information into a $2n \times 2n$ binary matrix $M$, a phase vector $p \in \{0,1\}^{2n}$, and a sign vector $s \in \{0,1\}^{2n}$:
$$
M = \left[\begin{array}{c | c}
A & B \\ \hline C & D
\end{array}\right]
\hspace{10pt}
p = \left[\begin{array}{c} \alpha \\ \hline \delta \end{array}\right]
\hspace{10pt}
s = \left[\begin{array}{c} \beta \\ \hline \epsilon \end{array}\right]
$$
where $A = \{a_{i,j}\}$, $B = \{b_{i,j}\}$, $C = \{c_{i,j}\}$, and $D = \{d_{i,j}\}$ are the $n \times n$ binary matrices whose elements correspond to the Pauli decompositions. That is, $a_i$ is the $i$th row of $A$, and so on.
\begin{fact}[e.g., \cite{dehaene+demoor:2003}]
\label{fact:tableau_facts}
$M$, $p$, and $s$ correspond to a valid Clifford unitary iff
\begin{itemize}
\item $M$ is an element of the binary symplectic group:  $M^T \Omega M = \Omega$ where $\Omega = \left(\begin{smallmatrix} 0 & I \\ I & 0 \end{smallmatrix}\right)$;
\item $p = \diag(M \Lambda M^T)$ where $\Lambda = \left(\begin{smallmatrix} 0 & I \\ 0 & 0 \end{smallmatrix}\right)$.
\end{itemize}
\end{fact}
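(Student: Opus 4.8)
The plan is to prove the two directions of the ``iff'' separately. For necessity I would argue directly from the fact that conjugation by a Clifford unitary is a commutator-preserving automorphism of the Pauli group; for sufficiency I would invoke the known description of the Clifford group modulo Paulis and then fix up the sign bits by a Pauli multiplication.

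\textbf{Necessity.} Assume $Q$ is Clifford. Identifying a Pauli $\pfont{X}^a\pfont{Z}^b$ with the vector $\binom{a}{b}\in\mathbb{F}_2^{2n}$, two Paulis commute iff their vectors are orthogonal under $\Omega$; equivalently, the Gram matrix of the standard generators $\pfont{X}^{e_j},\pfont{Z}^{e_j}$ under this form is exactly $\Omega$. Since the columns of $M$ record the images of these generators and conjugation by $Q$ preserves all commutators, we get $M^T\Omega M=\Omega$. For the phase bits, each $\pfont{X}^{e_j}$ and $\pfont{Z}^{e_j}$ is Hermitian and squares to $I$, hence so is its image; squaring $i^{\alpha_j}(-1)^{\beta_j}\pfont{X}^{a_j}\pfont{Z}^{b_j}$ yields $(-1)^{\alpha_j+a_j\cdot b_j}I$, which forces $\alpha_j=a_j\cdot b_j$, and likewise $\delta_j=c_j\cdot d_j$. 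The short block computation $M\Lambda M^T=\left(\begin{smallmatrix}AB^T&AD^T\\ CB^T&CD^T\end{smallmatrix}\right)$ then gives $\diag(M\Lambda M^T)=\binom{\diag(AB^T)}{\diag(CD^T)}=\binom{\alpha}{\delta}=p$.

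\textbf{Sufficiency.} Fix a symplectic $M$, set $p=\diag(M\Lambda M^T)$, and let $s$ be arbitrary. I would use the classical fact (well known, and underlying the composition rules of Dehaene and De Moor \cite{dehaene+demoor:2003}) that the Clifford group modulo Paulis and global phase maps isomorphically onto $\mathrm{Sp}(2n,\mathbb{F}_2)$: concretely, $M$ factors into the elementary symplectic matrices induced by $H$, $S$, and $\CZ$ gates, so composing those gates produces a Clifford $Q_0$ whose tableau has symplectic part $M$; by the necessity direction its phase vector is automatically $p$, and its sign vector is some $s_0$. To reach the desired $s$ I would left-multiply $Q_0$ by a Pauli $\pfont{X}^u\pfont{Z}^v$: this conjugates every generator image by a Pauli, which only flips signs (no factors of $i$ appear) and changes none of the $a_j,b_j,\alpha_j$, so $M$ and $p$ are untouched while $s$ shifts by $M\Omega\binom{u}{v}$, an invertible linear function of $(u,v)$ since $M$ and $\Omega$ are invertible. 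Choosing $(u,v)$ appropriately lands on $s$. Combined with necessity, this also settles the excluded case: a symplectic $M$ with $p\ne\diag(M\Lambda M^T)$ cannot arise as a Clifford tableau.

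As an alternative to the generator decomposition, sufficiency also follows by counting: the tableaux obeying the two conditions number $|\mathrm{Sp}(2n,\mathbb{F}_2)|\cdot 2^{2n}$ (free $M$, free $s$, forced $p$), which equals the order of the Clifford group modulo global phase; since the tableau determines a Clifford up to global phase and every Clifford tableau is valid by necessity, the correspondence must be a bijection. I expect the only fussy part to be the phase/sign bookkeeping --- pinning down $\alpha_j=a_j\cdot b_j$ with the correct sign, and verifying that Pauli pre-multiplication acts invertibly on $s$ while leaving $M$ and $p$ alone.
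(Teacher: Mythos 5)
Your proof is correct, and it is worth noting that the paper itself offers no proof of this Fact: it is stated as standard background (following the Dehaene--De Moor conventions), so there is nothing in the paper to compare against, and your argument supplies exactly the routine verification the authors omit. Both directions check out. For necessity, preservation of commutation relations under conjugation gives the symplectic condition (one small slip: in the paper's convention the \emph{rows} of $M$, not the columns, record the images $(a_j,b_j)$ and $(c_j,d_j)$; this is immaterial because $M^T\Omega M=\Omega$ holds iff $M\Omega M^T=\Omega$), and the Hermitian-involution argument correctly forces $\alpha_j=a_j\cdot b_j$ and $\delta_j=c_j\cdot d_j$, which your block computation identifies with $\diag(M\Lambda M^T)$. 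For sufficiency, you correctly lean on the standard fact that the Clifford group surjects onto $\mathrm{Sp}(2n,\mathbb F_2)$ (realizing $M$ by an $H$, $S$, $\CZ$ circuit), observe that $p$ is then forced by the necessity direction, and fix the sign vector by Pauli pre-multiplication: conjugation by $\pfont{X}^u\pfont{Z}^v$ flips the sign of the $j$th generator image by the symplectic product of $(u,v)$ with the $j$th row of $M$, so $s$ shifts by $M\Omega\binom{u}{v}$, which ranges over all of $\mathbb F_2^{2n}$ since $M\Omega$ is invertible; this leaves $M$ and $p$ untouched as you say. The counting argument you sketch as an alternative is also sound, given that a tableau determines the Clifford up to global phase.
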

As you can see from the above fact, we don't actually need to store $p$ since it is encoded into $M$, but it will helpful to keep it around anyway.  

Let us now turn to the computational question of how much time is required to perform various Clifford operations.  Since an $n$-qubit state is represented by $O(n^2)$ many bits, it will often be computationally inefficient to write out an entirely new tableau resulting from a Clifford operation.  For this reason, we will adopt the conventional view of Clifford operations as updating the existing tableau (e.g. modifying entries and appending rows/columns). Therefore, operations such as composing a single-qubit gate with an $n$-qubit Clifford circuit only require $O(n)$ time if the tableau for the $n$-qubit Clifford is given as input (see \Cref{sec:composition} for a detailed explanation of composition).

We conclude this section with a simple fact that tensor products of Clifford operations are represented by the direct sum of their tableau:
\begin{fact}
Let $(M_1, p_1, s_1)$ and $(M_2, p_2, s_2)$ be the tableaux for Clifford operations $Q_1$ and $Q_2$, respectively.  Then $Q_1 \otimes Q_2$ is represented by tableau
$$
M_{12} = \left[\begin{array}{ c c | c c }
A_{1} & 0 & B_{1} & 0 \\
0  & A_{2}  & 0 & B_{2} \\ \hline
C_{1} & 0 & D_{1} & 0 \\
0 & C_{2} & 0 & D_{2}
\end{array}\right]
\hspace{20pt}
p_{12} = \left[\begin{array}{c} \alpha_1 \\ \alpha_2 \\ \hline \delta_1 \\ \delta_2 \end{array}\right]
\hspace{20pt}
s_{12} = \left[\begin{array}{c} \beta_1 \\ \beta_2 \\ \hline \epsilon_1 \\ \epsilon_2 \end{array}\right]
$$
\end{fact}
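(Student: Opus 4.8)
The plan is to unwind the definition of the tableau representation. Recall that the tableau $(M,p,s)$ of an $n$-qubit Clifford $Q$ lists, row by row, the images $Q\pfont{X}^{e_j}Q^\dagger$ in its top $n$ rows and $Q\pfont{Z}^{e_j}Q^\dagger$ in its bottom $n$ rows, with the left $n$ columns recording the $\pfont X$-component of each image, the right $n$ columns its $\pfont Z$-component, and $p,s$ recording the associated $i$- and $(-1)$-phases. So to verify the displayed formula for $C_1\otimes C_2$, which acts on $n_1+n_2$ qubits (the first $n_1$ forming ``register $1$'' and the rest ``register $2$''), it suffices to compute $(C_1\otimes C_2)\pfont{X}^{e_j}(C_1\otimes C_2)^\dagger$ and $(C_1\otimes C_2)\pfont{Z}^{e_j}(C_1\otimes C_2)^\dagger$ for every $j\in[n_1+n_2]$ and to check that assembling these according to this convention produces exactly $(M_{12},p_{12},s_{12})$.

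First I would note that a single-qubit generator on the combined system always factors through one register: for $j\le n_1$ it equals $P\otimes I$ with $P$ a generator of register $1$, and for $j>n_1$ it equals $I\otimes P$. Since conjugation respects tensor products, $(C_1\otimes C_2)(P\otimes I)(C_1\otimes C_2)^\dagger=(C_1PC_1^\dagger)\otimes I$, so this image is supported on register $1$ and agrees with the image already recorded in $(M_1,p_1,s_1)$, tensored with identity on register $2$. Concretely, if $C_1\pfont{X}^{e_j}C_1^\dagger=i^{\alpha^{(1)}_j}(-1)^{\beta^{(1)}_j}\pfont{X}^{a^{(1)}_j}\pfont{Z}^{b^{(1)}_j}$, then $(C_1\otimes C_2)\pfont{X}^{e_j}(C_1\otimes C_2)^\dagger=i^{\alpha^{(1)}_j}(-1)^{\beta^{(1)}_j}\pfont{X}^{(a^{(1)}_j,\,0)}\pfont{Z}^{(b^{(1)}_j,\,0)}$, with the appended zero blocks of length $n_2$. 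Hence the $j$th row of $M_{12}$ is $(a^{(1)}_j\mid 0\mid b^{(1)}_j\mid 0)$, and stacking over $j=1,\dots,n_1$ yields the block row $(A_1\mid 0\mid B_1\mid 0)$, with phase entries $\alpha^{(1)}_j$ and $\beta^{(1)}_j$. The symmetric computation for $j>n_1$ gives rows $(0\mid A_2\mid 0\mid B_2)$ with phases drawn from $\alpha_2,\beta_2$, and repeating the argument with the $\pfont Z$-generators fills in the bottom half of $M_{12}$ together with the $\delta$- and $\epsilon$-phase blocks. Reading this off reproduces the displayed $M_{12}$, $p_{12}$, $s_{12}$.

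I expect essentially no real obstacle here; the only thing to watch is the bookkeeping of the ordering convention (all register-$1$ qubits before register-$2$ qubits, all $\pfont X$-generators before $\pfont Z$-generators) and the harmless point that no new phase cross-terms arise: when we multiply out using \Cref{lem:pauli_mult}, the term $b_1\cdot a_2$ vanishes because one of the two tensor factors is always the identity, so the $i$- and $(-1)$-phases simply carry over block by block. An alternative route would be to verify directly from the block form that $M_{12}$ lies in the binary symplectic group and that $p_{12}=\diag(M_{12}\Lambda M_{12}^T)$ as in \Cref{fact:tableau_facts}, but the generator-image argument above is more transparent and additionally pins down $p_{12}$ and $s_{12}$.
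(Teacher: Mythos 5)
Your proof is correct. The paper states this Fact without any proof (it is offered as an immediate consequence of the definitions), and your argument --- each single-qubit generator of the combined system is supported on one register, conjugation respects tensor products, and no phase cross-terms arise because one tensor factor is always the identity --- is precisely the routine verification the paper implicitly has in mind, so there is nothing to compare beyond noting that your bookkeeping of the block structure and of $p_{12}$, $s_{12}$ is accurate.
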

\begin{proof}
For Pauli operators $P_1$ and $P_2$, we have 
\[
(Q_1 \otimes Q_2) (P_1 \otimes P_2) (Q_1^\dag \otimes Q_2^\dag) = 
Q_1 P_1 Q_1^\dag \otimes Q_2 P_2 Q_2^\dag.
\]
The fact follows by setting $P_1 \otimes P_2$ to be $\pfont X^{e_j}$ (for the destabilizers) and $\pfont Z^{e_j}$ (for the stabilizers). For example, the first row of the tableau comes from setting $P_1 = \pfont{X}^{e_1}$ and $P_2 = I$.
\end{proof}
Therefore, we get that appending a single-qubit stabilizer state to an $n$-qubit stabilizer state takes time $O(n)$ and appending an $\ell$-qubit state takes time $O((n+\ell)^2)$.

\subsection{Composition}
\label{sec:composition}
This section is dedicated to showing fast matrix multiplication speedups for the composition of Clifford operations.  Before we do so, let us first describe the action of a single Clifford operation on a Pauli string.  For any square matrix $A$, we will define $\lows(A)$ to be the strictly lower triangular matrix of $A$ (i.e., with the diagonal and above diagonal entries replaced with $0$).

\begin{lemma}[Dehaene, De Moor \cite{dehaene+demoor:2003}]
\label{lem:tableau_with_pauli}
Let $Q$ be an $n$-qubit Clifford operation represented by tableau $(M, p, s)$.  Let $P := i^\alpha (-1)^\beta \pfont X^a \pfont Z^b$ be a Pauli where $a \in \{0,1\}^n$ and $b \in \{0,1\}^n$ form the $2n$-bit row vector $c = [ a \;\; b ]$.  We have $Q P Q^\dag = i^{\alpha'} (-1)^{\beta'} X^{a'} Z^{b'}$ with $c' = [ a' \;\; b'] = c M$, $\alpha' = \alpha + c \cdot p$, and
$$
\beta' = \beta + \alpha c \cdot p + c \cdot s + c \lows(p p^T + M \Lambda M^T) c^T
$$
\end{lemma}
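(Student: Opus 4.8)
The statement to prove is Lemma~\ref{lem:tableau_with_pauli} (Dehaene--De Moor): conjugation of a Pauli $P = i^\alpha(-1)^\beta \pfont X^a \pfont Z^b$ by an $n$-qubit Clifford $Q$ with tableau $(M,p,s)$ produces $QPQ^\dag = i^{\alpha'}(-1)^{\beta'}\pfont X^{a'}\pfont Z^{b'}$, where the symplectic part transforms linearly as $c' = cM$ with $c = [a\ b]$, and the phase/sign bits pick up the stated quadratic corrections.

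\medskip

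\textbf{Plan.} The natural approach is to write $\pfont X^a \pfont Z^b = \prod_j (\pfont X^{e_j})^{a_j} (\pfont Z^{e_j})^{b_j}$, conjugate each factor individually using the tableau (which by definition records $Q\pfont X^{e_j}Q^\dag$ and $Q\pfont Z^{e_j}Q^\dag$), and then reassemble the product using Corollary~\ref{cor:multi_pauli_mult} to track exactly how the global phase $i^\alpha$ and sign $(-1)^\beta$ accumulate. First I would observe that conjugation is a group homomorphism, so $QPQ^\dag = i^\alpha(-1)^\beta \prod_{j:a_j=1}(Q\pfont X^{e_j}Q^\dag)\prod_{j:b_j=1}(Q\pfont Z^{e_j}Q^\dag)$. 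Each $Q\pfont X^{e_j}Q^\dag = i^{p_j}(-1)^{s_j}\pfont X^{M_{j,\bullet}}$ in the notation of the tableau (here $p_j = \alpha_j$, $s_j=\beta_j$, and the row of $M$ is $[a_j\ b_j]$), and similarly for the $\pfont Z$ rows. The symplectic part $c' = cM$ then follows immediately since the $\pfont X,\pfont Z$ exponents just add mod $2$: selecting rows of $M$ indexed by the support of $c$ and summing over $\mathbb F_2$ is exactly the matrix--vector product $cM$.

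\medskip

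\textbf{The phase bookkeeping.} The main work — and the main obstacle — is the $\beta'$ formula. Applying Corollary~\ref{cor:multi_pauli_mult} to the product of the $\sum_j (a_j+b_j)$ selected Pauli factors, the sign bit accumulates three kinds of contributions: (i) $\sum_i s_i$ over selected factors, which gives $c\cdot s$; (ii) the $\sum_{i<j}\alpha_i\alpha_j$ cross-terms among the selected factors' $i$-phases, which is $\binom{c\cdot p}{2}\bmod 2$ in spirit but must be reconciled with the $i^{\alpha'}$ we want to pull out front — writing $i^{\sum}=i^{(c\cdot p)}(-1)^{\binom{c\cdot p}{2}}$ and noting $\binom{m}{2}\equiv \tfrac{m(m-1)}{2}$, which over $\mathbb F_2$ matches the strictly-lower-triangular quadratic form $c\,\lows(pp^T)\,c^T$; and (iii) the $\sum_{i<j} b_i\cdot a_j$ commutation terms from reordering $\pfont X$'s past $\pfont Z$'s, which, since factor $i$ carries symplectic row $M_{i,\bullet}$, becomes $c\,\lows(M\Lambda M^T)\,c^T$ with $\Lambda = \left(\begin{smallmatrix}0&I\\0&0\end{smallmatrix}\right)$ picking out the "$\pfont Z$-part of row $i$ dotted with $\pfont X$-part of row $j$" pairing. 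There is also the interaction between the leading $(-1)^\beta$ and the accumulated $i$-phase: pulling $i^\alpha$ and $i^{c\cdot p}$ together contributes $(-1)^{\alpha(c\cdot p)}$, giving the $\alpha\, c\cdot p$ term. I expect the fiddly part to be getting the diagonal-versus-strictly-lower-triangular conventions exactly right so that $\binom{m}{2}\bmod 2$ is correctly expressed via $\lows$, and making sure no term is double-counted between the $pp^T$ piece and the $M\Lambda M^T$ piece. The cleanest way to nail this down is to verify the formula is consistent under composing the Pauli $P$ out of its generators one at a time (i.e.\ an induction on the Hamming weight of $c$, peeling off one generator and invoking Lemma~\ref{lem:pauli_mult}), checking that each incremental step reproduces the claimed quadratic form — this reduces the whole proof to a single bilinear bookkeeping step rather than an $n$-fold sum.

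\medskip

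\textbf{Summary.} So the proof is: (1) homomorphism property reduces conjugation of $P$ to a product of conjugated generators read off the tableau; (2) the $\mathbb F_2$-linear part is immediate, $c'=cM$; (3) apply Corollary~\ref{cor:multi_pauli_mult} (or induct with Lemma~\ref{lem:pauli_mult}) to collect the three sources of sign flips plus the $i$-phase normalization, identifying each with the corresponding term $\alpha + \alpha\,c\cdot p + c\cdot s + c\,\lows(pp^T + M\Lambda M^T)\,c^T$; the quadratic-form identifications are the only nontrivial bookkeeping and are handled by the incremental/inductive check.
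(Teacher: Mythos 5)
Your proposal is correct and follows essentially the same route as the paper: decompose $\pfont X^a\pfont Z^b$ into generators, read the conjugated generators off the tableau rows (giving $c'=cM$ immediately), apply Corollary~\ref{cor:multi_pauli_mult} to collect the $c\cdot s$, $c\,\lows(pp^T)c^T$, and $c\,\lows(M\Lambda M^T)c^T$ contributions, and then reinstate the original prefactor $i^\alpha(-1)^\beta$, which produces the $\beta+\alpha\,c\cdot p$ terms. The bookkeeping you flag as ``fiddly'' is exactly what the corollary already delivers in closed form, so no separate inductive check is needed.
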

\begin{proof}

Once again let $M = \left[\begin{smallmatrix} A & B \\ C & D \end{smallmatrix}\right]$.  We will write $M_i$ for the $i$th row of matrix $M$ (and similarly for the submatrices $A$, $B$, $C$, and $D$). Notice that the inner products between the $\pfont X$ and $\pfont Z$ parts of $M$ can be expressed with the $\Lambda = \left(\begin{smallmatrix} 0 & I \\ 0 & 0 \end{smallmatrix}\right)$ operator.  For example, $A_i \cdot B_j = M_i \Lambda M_j^T$ for $i, j \in \{1, \ldots, n\}$.  Ignoring the phase for the moment, we have
\begin{align*}
Q \pfont X^a \pfont Z^b Q^\dag &= \prod_{j : a_j = 1} Q \pfont X^{e_j} Q^\dag \prod_{j : b_j = 1} Q \pfont Z^{e_j} Q^\dag \\
&= \prod_{j:a_j = 1} i^{p_j} (-1)^{s_j} \pfont X^{A_j} \pfont Z^{B_j} \prod_{j:b_j = 1} i^{p_{j+n}} (-1)^{s_{j+n}} \pfont X^{C_j} \pfont Z^{D_j} \\
&= i^{\alpha'} (-1)^{\beta'} \pfont X^{a'} \pfont Z^{b'}.
\end{align*}
That is, $Q \pfont X^{a} \pfont Z^{b} Q^{\dag}$ is simply the product of the Pauli elements indexed by the rows of the tableau, so by \Cref{cor:multi_pauli_mult} (there are $\ell := 2n$ such Pauli operators) we get
\begin{align*}
\alpha' &= \sum_{j = 1}^n a_j p_j + \sum_{j = 1}^n b_j p_{j+n} = \sum_{j = 1}^n c_j p_j + \sum_{j = 1}^n c_{j+n} p_{j+n} = c \cdot p \\
\beta' &= \sum_{j = 1}^{2n} c_j s_j + \sum_{i < j}^{2n} c_i c_j p_i p_j + \sum_{i < j}^{2n} c_i c_j M_i \Lambda M_j^T \\
&= c \cdot s + c \lows(p p^T) c^T + c \lows(M \Lambda M^T) c^T
\end{align*}
and $c' =[ a' \;\; b' ] = c M$.  Adding the global phase back in gives the lemma.
\end{proof}

\begin{theorem}[Tableau composition \cite{dehaene+demoor:2003}]
\label{thm:tableaux_composition}
Let $Q_1, Q_2$ be $n$-qubit Clifford unitaries represented by tableaux $(M_1$, $p_1$, $s_1)$ and $(M_2$, $p_2$, $s_2)$ respectively.  The composition $Q_2 Q_1$ has tableau $(M_{12}, p_{12}, s_{12})$ with $M_{12} = M_1 M_2$, $p_{12} = p_1 + M_1 p_2$, and 
$$
s_{12} = s_1 + p_1 \circ M_1 p_2 + M_1 s_2 + \diag (M_1 \lows(p_2 p_2^T + M_2 \Lambda M_2^T) M_1^T)
$$
where $\circ$ denotes the Hadamard product\footnote{The Hadamard product of two $n \times m$ matrices $A = \{a_{i,j} \}_{i,j}$ and $B = \{b_{i,j} \}_{i,j}$ is their entrywise product $A \circ B = \{ a_{i,j} b_{i,j} \}_{i,j}$.} and ``diag'' selects the elements along the diagonal.
\end{theorem}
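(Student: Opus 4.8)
The plan is to compute the action of the composition $Q_2Q_1$ on the generating Pauli operators $\pfont X^{e_j}$ and $\pfont Z^{e_j}$ by conjugating first by $Q_1$ and then by $Q_2$, and then simply read off the resulting tableau entries. Concretely, the $j$th column of the tableau of $Q_2Q_1$ records $(Q_2Q_1)\pfont X^{e_j}(Q_2Q_1)^\dagger = Q_2\bigl(Q_1\pfont X^{e_j}Q_1^\dagger\bigr)Q_2^\dagger$. The inner conjugation $Q_1\pfont X^{e_j}Q_1^\dagger$ is given directly by the tableau $(M_1,p_1,s_1)$: it is the Pauli with bit-vector $(e_j)^TM_1$, phase bit $(p_1)_j$, and sign bit $(s_1)_j$. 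Then I apply \Cref{lem:tableau_with_pauli} with $Q=Q_2$ to this Pauli. This immediately yields the $\pfont X$/$\pfont Z$ part of the answer: the new bit-vector is $(e_j)^TM_1M_2$, so $M_{12}=M_1M_2$, as claimed.

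The second step is to track the phase exponent $\alpha'$ (the power of $i$). From \Cref{lem:tableau_with_pauli}, conjugating the Pauli with bit-row $c=(e_j)^TM_1$ and phase $\alpha=(p_1)_j$ by $Q_2$ gives $\alpha' = \alpha + c\cdot p_2 = (p_1)_j + (e_j)^TM_1p_2$. Stacking over all $2n$ generators gives the vector identity $p_{12} = p_1 + M_1p_2$. (One should note here the convention that rows of $M$ index generators, so that "$c\cdot p_2$" for $c$ the $j$th row becomes the $j$th entry of $M_1p_2$; this is the only place where one must be careful about whether $M$ acts on the left or the right, and it is purely bookkeeping.)

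The third and most delicate step is the sign exponent $\beta'$ (the power of $-1$). Again invoking \Cref{lem:tableau_with_pauli} with $Q=Q_2$, $\alpha=(p_1)_j$, $\beta=(s_1)_j$, and $c$ the $j$th row of $M_1$:
\[
\beta' = (s_1)_j + (p_1)_j\,(c\cdot p_2) + (c\cdot s_2) + c\,\lows\!\bigl(p_2p_2^T + M_2\Lambda M_2^T\bigr)\,c^T.
\]
Now I assemble this over all $j$. The term $(s_1)_j$ gives $s_1$; the term $(c\cdot s_2)$ gives $M_1s_2$; the term $(p_1)_j(c\cdot p_2)$ is the entrywise (Hadamard) product $p_1\circ(M_1p_2)$, since $(p_1)_j$ multiplies the $j$th entry $(M_1p_2)_j$ of $M_1p_2$; and the quadratic form $c\,\lows(\cdot)\,c^T$, with $c$ ranging over the rows of $M_1$, becomes $\diag\bigl(M_1\lows(p_2p_2^T+M_2\Lambda M_2^T)M_1^T\bigr)$. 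Summing gives exactly the stated formula for $s_{12}$.

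I expect the main obstacle to be the sign term: the identity $c\,\lows(N)\,c^T$ summed over the rows $c$ of $M_1$ equals $\diag(M_1\lows(N)M_1^T)$ only over $\mathbb F_2$ and only because we extract the strictly-lower-triangular part before forming the quadratic form — one has to check carefully that conjugating by $M_1^T$ and then re-triangularizing does not change the diagonal modulo $2$, using that $x^TSx \equiv \diag(S)\cdot x \pmod 2$ for symmetric $S$ so only the "antisymmetric" part of $N$ matters and it is captured correctly by $\lows(N)$. The remaining verification — that $(M_{12},p_{12},s_{12})$ satisfies the symplectic and phase constraints of \Cref{fact:tableau_facts}, hence is a genuine Clifford tableau — follows automatically since it was obtained by honest conjugation, but could also be checked directly using $M^T\Omega M=\Omega$ for $M_1,M_2$; I would relegate this to a remark. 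Finally, the runtime claim ($O(n^\omega)$) is immediate: every operation above is a constant number of $n\times n$ matrix multiplications, transpositions, Hadamard products, and diagonal extractions.
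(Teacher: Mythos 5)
Your proposal is correct and is essentially the paper's own proof: both observe that $Q_2Q_1\pfont{X}^{e_j}Q_1^\dagger Q_2^\dagger$ (and likewise for $\pfont{Z}^{e_j}$) is $Q_2$ conjugating the Pauli recorded in the $j$th row of $Q_1$'s tableau, and then apply \Cref{lem:tableau_with_pauli} with $Q=Q_2$ row by row and stack the results into $M_{12}$, $p_{12}$, $s_{12}$. The step you flag as the main obstacle is actually immediate, since $(M_1\lows(N)M_1^T)_{jj}=c\,\lows(N)\,c^T$ for $c$ the $j$th row of $M_1$ by the definition of matrix multiplication, with no re-triangularization needed.
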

\begin{proof}
Notice that $Q_2 Q_1 \pfont X^{e_j} Q_1^\dag Q_2^\dag = Q_2 (i^{(p_1)_j} (-1)^{(s_1)_j} \pfont X^{(A_1)_j} \pfont Z^{(B_1)_j}) Q_2^\dag$, and similarly for the images of $\pfont Z^{e_j}$.  Thus, we arrive at the theorem statement by applying \Cref{lem:tableau_with_pauli} with $Q_2$ and every Pauli specified by the rows of the tableau for $Q_1$.
\end{proof}

From the tableaux composition theorem, it should be clear that the tableau for the product of two Clifford operations on $n$-qubits can be computed in time $O(n^\omega)$.  Furthermore, when the two Clifford operations are of different sizes, it may be possible to obtain a significant savings.  We know from standard techniques that if $Q_2$ is a one or two-qubit gate, then the tableau for $Q_1$ can be updated in $O(n)$ time.

The theorem below interpolates between these extremes. 

Concretely, composing an $n$-qubit operation $Q_1$ with an $m$-qubit operation $Q_2$ for $m \le n$, means computing the product $(Q_2 \otimes I^{n-m}) Q_1$, and similarly $Q_2 (Q_1  \otimes I^{m-n})$ when $m > n$. 

\begin{theorem}
\label{thm:fast_composition}
Let $Q_1, Q_2$ be $n$ and $m$-qubit Clifford operators, respectively, given by their tableaux.  The tableau of their product can be computed in time $O(\min(m^{\omega-1}n, n^{\omega-1}m))$.
\end{theorem}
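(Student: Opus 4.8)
The plan is to reduce composition to one rectangular fast matrix multiplication together with a few cheap linear-algebra steps. Without loss of generality take $m \le n$; the case $m > n$ is symmetric, interchanging the roles of the two operators so that the $2m\times 2m$ matrix multiplies from the left instead of the right. With $m\le n$, composing means producing the tableau of $(Q_2\otimes I^{n-m})Q_1$. The first observation I would record is that $Q_2\otimes I^{n-m}$ acts nontrivially, under conjugation, only on the $m$ qubits on which $Q_2$ acts: if $P = P^{(m)}\otimes P^{(n-m)}$ then it is sent to $(Q_2 P^{(m)}Q_2^\dagger)\otimes P^{(n-m)}$. In tableau terms, the tableau of $Q_2\otimes I^{n-m}$ differs from the identity only in the $2m$ rows and $2m$ columns indexed by the active qubits (the coordinates $\{1,\dots,m\}\cup\{n+1,\dots,n+m\}$ in the $2n$-dimensional symplectic space), and its phase and sign vectors vanish outside those coordinates.

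Next I would apply \Cref{lem:tableau_with_pauli} in bulk, with the $m$-qubit tableau $(M_2,p_2,s_2)$, to all $2n$ Paulis given by the rows of $M_1$, but with each row restricted to its $2m$ active coordinates. Writing $\widetilde M_1\in\{0,1\}^{2n\times 2m}$ for the matrix of those restrictions, \Cref{thm:tableaux_composition} then says that the tableau $M_{12}$ of the product agrees with $M_1$ on every non-active column (the corresponding columns of the tableau of $Q_2\otimes I^{n-m}$ being standard basis vectors), while its active columns equal $\widetilde M_1 M_2$; and $p_{12} = p_1 + \widetilde M_1 p_2$, $s_{12} = s_1 + (p_1\circ \widetilde M_1 p_2) + \widetilde M_1 s_2 + \diag(\widetilde M_1 L\,\widetilde M_1^T)$ with $L := \lows(p_2 p_2^T + M_2\Lambda M_2^T)\in\{0,1\}^{2m\times 2m}$. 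The only heavy operations here are the products $\widetilde M_1 M_2$ and $\widetilde M_1 L$ of a $2n\times 2m$ matrix by a $2m\times 2m$ matrix: splitting the $2n$ rows into $O(n/m)$ blocks of $2m$ rows reduces each to $O(n/m)$ multiplications of $2m\times 2m$ matrices, hence $O((n/m)m^\omega) = O(nm^{\omega-1})$ time, which since $\omega>2$ equals $\min(nm^{\omega-1}, mn^{\omega-1})$.

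Everything else I would do is cheap and, importantly, must stay cheap. Precomputing $L$ (and hence $M_2\Lambda M_2^T$) costs $O(m^\omega)$; once $\widetilde M_1 L$ has been formed, $\diag(\widetilde M_1 L\,\widetilde M_1^T)$ is just $2n$ row-wise inner products, $O(nm)$; and $\widetilde M_1 p_2$, $\widetilde M_1 s_2$, the Hadamard product and the vector sums are all $O(nm)$. The step that needs the most care is $\diag(\widetilde M_1 L\,\widetilde M_1^T)$: evaluating the $2m$-term quadratic form row by row would cost $\Theta(nm^2)$, which exceeds $O(nm^{\omega-1})$, so one must batch it through the matrix $\widetilde M_1 L$ first. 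Finally, I would emphasize that the answer is produced by an \emph{in-place} update: rather than writing a fresh $2n\times 2n$ tableau (which is $\Theta(n^2)$ and too slow once $m$ is small), overwrite the $2m$ active columns of $M_1$'s tableau with $\widetilde M_1 M_2$ and update the two length-$2n$ vectors, leaving the non-active columns of $M_1$ untouched; correctness of this is exactly the observation above that $M_{12}$ and $M_1$ share their non-active columns. The main obstacle is really just this bookkeeping---matching the interleaved $\pfont X$/$\pfont Z$ indexing of the $2m$ active coordinates of $M_1$ with the internal $\pfont X/\pfont Z$ ordering of the $m$-qubit tableau $M_2$, and checking that the phase formulas of \Cref{lem:tableau_with_pauli} restrict correctly to the active block---rather than anything algorithmically deep.
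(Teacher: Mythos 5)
Your proposal is correct and follows essentially the same route as the paper: embed $Q_2$ as $Q_2\otimes I^{n-m}$, apply the Dehaene--De Moor composition formula (Theorem \ref{thm:tableaux_composition}), and exploit the fact that this tableau is trivial outside the $2m$ active coordinates so that the only heavy steps are $2n\times 2m$ by $2m\times 2m$ products, blocked into $O(n/m)$ square multiplications of size $m$. The only (harmless) deviation is in the sign-vector term, where you compute $\diag(\widetilde M_1 L\,\widetilde M_1^T)$ by forming $\widetilde M_1 L$ and taking $2n$ row-wise inner products, whereas the paper computes only the diagonal blocks of the triple product directly; both give the same $O(nm^{\omega-1})$ bound.
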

\begin{proof}
When $m = \Theta(n)$, the theorem is trivial.  Let's consider the case where $m = o(n)$.  Without loss of generality, let's assume that $Q_2$ is applied to the first $m$ qubits of $Q_1$.  On $n$ qubits, the tableau of $Q_2$ is
$$
M_2 = \left[\begin{array}{ c c | c c }
A & 0 & B & 0 \\
0 & I & 0 & 0 \\ \hline
C & 0 & D & 0 \\
0 & 0 & 0 & I
\end{array}\right]
\hspace{20pt}
p_2 = \left[\begin{array}{c} \alpha \\ 0 \\ \hline \delta \\ 0 \end{array}\right]
\hspace{20pt}
s_2 = \left[\begin{array}{c} \beta \\ 0 \\ \hline \epsilon \\ 0 \end{array}\right]
$$
where $M_2 \in \{0,1\}^{2n \times 2n}$ and $A, B, C, D \in \{0,1\}^{m \times m}$.  Matching the dimensions of the matrices above, let's write the tableau of $Q_1$ as follows:
$$
M_1 = \left[\begin{array}{ c c | c c }
A_{11} & A_{12} & B_{11} & B_{12} \\
A_{21}  & A_{22}  & B_{21} & B_{22} \\ \hline
C_{11} & C_{12} & D_{11} & D_{12} \\
C_{21} & C_{22} & D_{21} & D_{22}
\end{array}\right]
\hspace{20pt}
p_1 = \left[\begin{array}{c} \alpha_1 \\ \alpha_2 \\ \hline \delta_1 \\ \delta_2 \end{array}\right]
\hspace{20pt}
s_1 = \left[\begin{array}{c} \beta_1 \\ \beta_2 \\ \hline \epsilon_1 \\ \epsilon_2 \end{array}\right]
$$
Therefore, according to \Cref{thm:tableaux_composition}, we have that the matrix $M_{12}$ for the tableaux of $Q_2 Q_1$ is
$$
\left[\begin{array}{ c c | c c }
A_{11} & A_{12} & B_{11} & B_{12} \\
A_{21}  & A_{22}  & B_{21} & B_{22} \\ \hline
C_{11} & C_{12} & D_{11} & D_{12} \\
C_{21} & C_{22} & D_{21} & D_{22}
\end{array}\right]
\left[\begin{array}{ c c | c c }
A & 0 & B & 0 \\
0 & I & 0 & 0 \\ \hline
C & 0 & D & 0 \\
0 & 0 & 0 & I
\end{array}\right]
=
\left[\begin{array}{ c c | c c }
A_{11}A + B_{11}C & A_{12}& A_{11}B + B_{11}D & B_{12}  \\
A_{21}A + B_{21}C  & A_{22}  & A_{21}B+ B_{21}D & B_{22} \\ \hline
C_{11}A + D_{11}C & C_{12} & C_{11}B + D_{11}D & D_{12} \\
C_{21}A + D_{21}C & C_{22} & C_{21}D + D_{21}D & D_{22}
\end{array}\right]
$$
There are two types of matrix products we require.  Terms such as $A_{11}A + B_{11}C$ consist entirely of $m \times m$ matrix products, so can be computed in time $O(m^\omega)$.  Terms such as $A_{21}A + B_{21}C$ require multiplication of an $(n-m) \times m$ matrix by an $m \times m$ matrix.  Dividing these matrices into blocks of size $m$, we get new block matrices of dimensions roughly $n/m \times 1$ and $1 \times 1$ where multiplication of a single element (i.e., block) takes time $O(m^\omega)$.  There are $n/m$ many block multiplications in the product, so it can be computed in time $O((n/m) m^\omega)$.  Therefore, the entire procedure costs $O(n m^{\omega-1})$ time. 

The calculation for the phase vector is similar, so we omit it.  Let us now turn to the calculation of the sign vector. First, notice that the diagonal of $M_1 \lows(p_2 p_2^T + M_2 \Lambda M_2^T) M_1^T$ is equal to
$$
\diag(M_1 \lows(p_2 p_2^T) M_1^T) + \diag(M_1 \lows(M_2 \Lambda M_2^T) M_1^T),
$$
so we will focus on computing the second more complicated term.  We have
$$
M_2 \Lambda M_2^T = 
\left[\begin{array}{ c c | c c }
A & 0 & B & 0 \\
0 & I & 0 & 0 \\ \hline
C & 0 & D & 0 \\
0 & 0 & 0 & I
\end{array}\right]
\left[\begin{array}{ c c | c c }
0 & 0 & I & 0 \\
0 & 0 & 0 & I \\ \hline
0 & 0 & 0 & 0 \\
0 & 0 & 0 & 0
\end{array}\right]
\left[\begin{array}{ c c | c c }
A^T & 0 & C^T & 0 \\
0 & I & 0 & 0 \\ \hline
B^T & 0 & D^T & 0 \\
0 & 0 & 0 & I
\end{array}\right]
=
\left[\begin{array}{ c c | c c }
AB^T & 0 & AD^T & 0 \\
0 & 0 & 0 & I \\ \hline
CB^T & 0 & CD^T & 0 \\
0 & 0 & 0 & 0
\end{array}\right]
$$
from which, we get
$$
\lows(M_2 \Lambda M_2^T) = 
\left[\begin{array}{ c c | c c }
\lows(AB^T) & 0 & 0 & 0 \\
0 & 0 & 0 & 0 \\ \hline
CB^T & 0 & \lows(CD^T) & 0 \\
0 & 0 & 0 & 0
\end{array}\right]
$$
from which we can derive the condition along the diagonal---that is, we have that the diagonal of $M_1\lows(M_2 \Lambda M_2^T)M_1^T$ is equal to
\begin{align*}
&\diag(A_{11} \lows(AB^T) A_{11}^T + B_{11} CB^T A_{11}^T + B_{11} \lows(CD^T) C_{11}^T) \\
\oplus \;\; & \diag(A_{21} \lows(AB^T) A_{21}^T + B_{21} CB^T A_{21}^T + B_{21} \lows(CD^T) C_{21}^T) \\
\oplus \;\; & \diag(C_{11} \lows(AB^T) B_{11}^T + D_{11} CB^T B_{11}^T + D_{11} \lows(CD^T) D_{11}^T) \\
\oplus \;\; & \diag(C_{21} \lows(AB^T) B_{21}^T + D_{21} CB^T B_{21}^T + D_{21} \lows(CD^T) D_{21}^T)
\end{align*}
where $\oplus$ is used here to indicate direct sum.  Since $A, B, C, D$ are $m \times m$ matrices, we can explicitly compute $\lows(AB^T)$ and $\lows(CD^T)$ in $O(m^\omega)$ time.  Consider a term such as $A_{21} \lows(AB^T) A_{21}^T$ where we are multiplying matrices with dimension $(n-m) \times m$, $m \times m$, and $m \times (n-m)$.  Dividing each matrix into blocks of size $m$, we get matrices of dimensions roughly $n/m \times 1$, $1 \times 1$, and $1 \times n/m$ where multiplication of blocks takes time $O(m^\omega)$.  Since we only need to calculate the $n/m$ blocks along the diagonal of the product matrix, we can compute the entire diagonal in time $O((n/m) m^\omega)$.  Since all other terms follow in a similar manner, we can calculate the entire sign vector in time $O(n m^{\omega-1})$.  Thus, for $m = o(n)$, the entire composition takes time $O(n m^{\omega -1})$.

When $n = o(m)$, the calculation is extremely similar to the one above, so we omit it.
\end{proof}

The above theorem is more general than needed for the purposes of this paper.  Nevertheless, we can use it to show how to multiply any $n$-qubit stabilizer state by an arbitrary collection of CZ gates in time $O(n^\omega)$ as stated in \Cref{table:clifford_asymptotics}.  First, let us assume that no two CZ gates in our collection are identical since they would simply cancel.  Then, let $A$ be the $n \times n$ binary symmetric matrix whose $(i,j)$ entry is $1$ iff there is a CZ gate between qubits $i$ and $j$ in our collection.  The tableau for the entire collection of CZ gates is
$$
M = \left[\begin{array}{c | c}
I & A \\ \hline 0 & I
\end{array}\right]
\hspace{10pt}
p = \left[\begin{array}{c} 0 \\ \hline 0 \end{array}\right]
\hspace{10pt}
s = \left[\begin{array}{c} 0 \\ \hline 0 \end{array}\right]
$$
Since we can explicitly write down the tableau for the collection of CZ gates, we can straightforwardly apply \Cref{thm:fast_composition}.

\subsection{Measurement}
\label{sec:fast_measurement}

The previous sections describe tableaux as representing Clifford unitary operations.  However, tableaux are equally suitable for representing states. Given tableau $M = \left[\begin{smallmatrix} A & B \\ C & D \end{smallmatrix}\right]$, $p = \left[\begin{smallmatrix} \alpha \\ \delta \end{smallmatrix}\right]$, and $s = \left[\begin{smallmatrix} \beta \\ \epsilon \end{smallmatrix}\right]$, the stabilizer state defined by the tableau is the unique state $\ket{\psi}$ such that
$$
i^{\delta_j} (-1)^{\epsilon_j} \pfont X^{C_j} \pfont Z^{D_j} \ket{\psi} = \ket{\psi}
$$
for all $j \in \{1, \ldots, n\}$.  That is, the lower half of the tableau describes the Pauli operations that fix or ``stabilize'' the state, which is why these Pauli operations are often referred to as stabilizers.  Nevertheless, the upper half of the tableau (consisting of the ``destabilizers'' of the state $\ket{\psi}$) is still useful in speeding up measurement computations \cite{aaronson+gottesman:2004}.  In this section, we show how to combine those techniques with fast matrix multiplication.  

An important primitive for the measurement protocol is choosing new stabilizer generators for the tableau by multiplying existing stabilizer generators together. Therefore, we will need a fast way to multiply several Pauli operators at once.  We will encounter two situations in particular:
\begin{enumerate}
    \item There are several Pauli operators, but each operator has  few non-identity elements;
    \item There are few Pauli operators, but each operator has many non-identity elements.
\end{enumerate} 
The following technical lemma handles both cases.
\begin{lemma}
\label{lem:tableau_with_many_paulis}
Let $Q$ be an $n$-qubit Clifford unitary represented by tableau $(M, p, s)$. Suppose we wish to compute $Q P_j Q^\dag$ for $\ell$ Pauli operators $\{P_j\}_{j=1}^\ell$ that only act non-trivially on the first $k$ qubits. Precisely, we define $P_j := \pfont X^{[A_j \; 0]} \pfont Z^{[B_j \; 0]}$ to be an $n$-qubit Pauli where $A_j,B_j \in \{0,1\}^k$ form the rows of the $\ell \times 2n$ matrix $C = [A \;0 \mid B \; 0]$.  We have $Q P_j Q^\dag = i^{\alpha'_j} (-1)^{\beta'_j} \pfont{X}^{A_j'} \pfont{Z}^{B_j'}$ with $C' = [ A' \mid B' ] = C M$, $\alpha' = C p$, and
$$
\beta' =  C s + \diag(C \lows(p p^T + M \Lambda M^T) C^T)
$$
Furthermore, $C'$, $\alpha'$, and $\beta'$ can be computed in time
\begin{itemize}
\item $O(n^2 k^{\omega -2})$ when $\ell = n$, i.e., several Pauli operators with few non-identity elements.
\item $O(n^2 \ell^{\omega -2} \log (2n/\ell))$ when $\ell \le k = n$, i.e., few Pauli operators with many non-identity elements 
\end{itemize}
\end{lemma}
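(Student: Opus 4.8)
The correctness of the formulas $C' = CM$, $\alpha' = Cp$, and $\beta' = Cs + \diag(C\,\lows(pp^T + M\Lambda M^T)\,C^T)$ is immediate from \Cref{lem:tableau_with_pauli} applied row by row: each row $C_j$ of $C$ encodes a Pauli $P_j$, and conjugating it by $Q$ gives exactly the expression in that lemma, with the quadratic phase term $C_j \lows(pp^T + M\Lambda M^T) C_j^T$ appearing as the $j$th diagonal entry of the matrix product $C\,\lows(pp^T+M\Lambda M^T)\,C^T$. So the only real content is the runtime analysis, and that is where I would spend all the effort.

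For the case $\ell = n$ with each $P_j$ supported on $k$ qubits (i.e.\ the nonzero part of $C$ is $n \times 2k$, and $M$ restricted to the relevant rows/columns is essentially $2k \times 2k$ for the purpose of computing $C' = CM$, but the output $C'$ is $n \times 2n$): here I would note that $C' = CM$ is a product of an $n \times 2k$ matrix by a $2k \times 2n$ matrix, which by blocking into $k\times k$ (or $2k\times 2k$) blocks costs $(n/k)^2$ block multiplications each of cost $O(k^\omega)$, i.e.\ $O(n^2 k^{\omega-2})$. The diagonal of $C\,L\,C^T$ where $L = \lows(pp^T + M\Lambda M^T)$ is the subtle part: we only want the diagonal, not the full product, and $L$ may be a full $2n\times 2n$ matrix even though $C$ is sparse — but $C$ has only $2k$ nonzero columns, so effectively we need $\diag(\widetilde C \widetilde L \widetilde C^T)$ where $\widetilde C$ is $n \times 2k$ and $\widetilde L$ is the $2k\times 2k$ submatrix of $L$ on the relevant coordinates. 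Form $\widetilde L$ (cost $O(k^\omega)$ since it's a submatrix of $M\Lambda M^T$ plus $pp^T$, both computable from a $2k$-column slice — one must be slightly careful that $\lows$ is taken with respect to the \emph{global} ordering, but that only affects which entries survive, not the cost), then compute $D := \widetilde C \widetilde L$ ($n \times 2k$ times $2k \times 2k$, cost $O(n^2 k^{\omega-2})$ by blocking), and finally read off $\sum_i D_{ji} (\widetilde C)_{ji}$ for each $j$ — an $O(nk)$ postprocessing step. Total $O(n^2 k^{\omega-2})$.

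For the case $\ell \le k = n$ (few Paulis, $C$ is $\ell \times 2n$ with no sparsity assumption), the products $C' = CM$, $Cp$, $Cs$, and $\diag(CLC^T)$ all involve an $\ell \times (2n)$ matrix times an $(2n)\times(2n)$ or $(2n)\times \ell$ matrix. The bottleneck is forming $L = \lows(pp^T + M\Lambda M^T)$: $M\Lambda M^T$ is a full $2n\times 2n$ product costing $O(n^\omega)$, which is \emph{too slow} if $\ell$ is small — this is the main obstacle. The fix is the standard trick of not materializing $L$ in full: write $M\Lambda M^T = (M\text{-}\mathrm{left})(M\text{-}\mathrm{right})^T$ where $\Lambda = \left(\begin{smallmatrix} 0 & I \\ 0 & 0\end{smallmatrix}\right)$ factors as a projection, so $\diag(C\,\lows(M\Lambda M^T)\,C^T)$ can be computed by a divide-and-conquer / recursive block decomposition of $\lows$: split the index set into two halves, the off-diagonal block of $\lows$ is a full rectangular block (handled by one rectangular matrix multiplication of the appropriate aspect ratio, reduced to square multiplications at cost $O(n^2 \ell^{\omega-2})$), and the two diagonal blocks are handled recursively. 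This recursion has depth $O(\log(n/\ell))$ and each level costs $O(n^2\ell^{\omega-2})$ once the rectangular products are reduced to $\ell\times\ell$ square products, giving the claimed $O(n^2 \ell^{\omega-2}\log(n/\ell))$; the $pp^T$ contribution is rank one and costs only $O(n\ell)$ after precomputing $Cp$. Assembling the three cost contributions ($CM$, $C$ against the diagonal term, and $Cs$/$Cp$) gives the total, with the $\log$ factor coming solely from the recursive handling of $\lows$.
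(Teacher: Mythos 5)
Your proposal is correct and follows essentially the same route as the paper: correctness is delegated to \Cref{lem:tableau_with_pauli} applied row by row, the $\ell=n$ case exploits the $2k$ nonzero columns of $C$ via blocked rectangular products (the paper writes this out as an explicit block expansion of $C\lows(M\Lambda M^T)C^T$), and the $k=n$ case is the paper's own log-depth divide-and-conquer on the $\lows$ structure, whose recurrence $T(n,\ell)=4T(n/2,\ell)+O(n^2\ell^{\omega-2})$ is just your two-way recursion with the inner dimension also halved. Two harmless miscounts that do not affect the claimed bounds: forming the relevant $2k\times 2k$ submatrix $\widetilde{L}$ requires a $2k\times 2n$ by $2n\times 2k$ product, i.e.\ $O(nk^{\omega-1})$ rather than $O(k^\omega)$ (still within $O(n^2k^{\omega-2})$), and the $\lows(pp^T)$ term is determined by the integer overlap of each row of $C$ with $p$ (a count mod $4$), not by $Cp$ over $\mathbb{F}_2$, though it is indeed computable in $O(n\ell)$ time either way.
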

\begin{proof}
Correctness follows from \Cref{lem:tableau_with_pauli}, so we focus only on running time.  Once again, we will only show how to compute the most complicated term---$\diag(C \lows(M \Lambda M^T) C^T)$---of the sign vector $\beta'$, and the rest of the terms follow by similar (or simpler) arguments. We will write $M$ in terms of the blocks $A_{11} \in \{0,1\}^{k \times k}$, $A_{12} \in \{0,1\}^{k \times (n-k)}$, $A_{22} \in \{0,1\}^{(n-k) \times (n-k)}$, and so on as in \Cref{thm:fast_composition}.

First, let's consider the case with many Pauli operators (i.e., $\ell = n$).  Following the same algebra as in \Cref{thm:fast_composition}, we arrive at the following expression for $C \lows(M \Lambda M^T)C^T$:
$$
A \lows(A_{11} B_{11}^T + A_{12} B_{12}^T) A^T + B (C_{11} B_{11}^T + C_{12} B_{12}^T) A^T + B \lows(C_{11} D_{11}^T + C_{12} D_{12}^T) B^T.
$$ 
 To compute $\lows( \cdot )$, we need to compute a $k \times (n-k)$ by $(n-k) \times k$ matrix product.  Dividing into $k \times k$ blocks, such a multiplication takes time $O(n k^{\omega-1})$, after which we can trivially compute the lower triangle matrix in $O(k^2)$ time.  Finally, we must multiply matrices with dimensions $n \times k$, $k \times k$, and $k \times n$.  Once again, we divide into $k \times k$ blocks, to compute the entire product in $O((n/k)^2 k^\omega)$ time. This concludes the first case.

Now consider the case where the Pauli operators can act non-trivially on all qubits (i.e., $k = n$). Since $k=n$, we can eliminate many of the terms from the expression we are trying to compute:
\[
A \lows(A_{11} B_{11}^T) A^T + B (C_{11} B_{11}^T) A^T + B \lows(C_{11} D_{11}^T) B^T.
\]
Nevertheless, we unfortunately encounter a new problem. Now notice that computing $\lows( \cdot )$ now requires $n \times n$ matrix products, so there is not obvious way to compute it in $O(n^2 \ell^{\omega - 2})$ time.  For the $B (C_{11} B_{11}^T) A^T$ term, we can choose the order in which we compose the matrices:  first compute $B C_{11}$ which takes time $O(n^2 \ell^{\omega - 2})$ and leaves another $\ell \times n$ matrix, and so on.  Unfortunately, such a strategy does not automatically work for the other terms.  

Since the two $\lows(\cdot)$ terms have identical structure, let's focus on $A \lows(A_{11} B_{11}^T) A^T$. To reiterate,  $A$ is an $\ell \times n$ matrix, and $A_{11}$ and $B_{11}^T$ are $n \times n$ matrices. Let $T(n,\ell)$ be the time it takes to compute $A \lows(A_{11} B_{11}^T) A^T$ for dimensions $n$ and $\ell$.

We give a recursive strategy to speed up the brute force calculation of terms of this type. Let's decompose these matrices further, writing $A = [A_1 A_2]$ for $A_1, A_2 \in \{0,1\}^{\ell \times n/2}$, $A_{11} = [\begin{smallmatrix} X_{11} & X_{12} \\ X_{21} & X_{22} \end{smallmatrix}]$, with $X_{ij} \in \{0,1\}^{n/2 \times n/2}$, and $B_{11}^T = [\begin{smallmatrix} Z_{11} & Z_{12} \\ Z_{21} & Z_{22} \end{smallmatrix}]$, with $Z_{ij} \in \{0,1\}^{n/2 \times n/2}$.  Together, we can expand $A \lows(A_{11} B_{11}^T) A^T$ as
$$
A_1 \lows(X_{11} Z_{11} + X_{12} Z_{21}) A_1^T + A_2 (X_{21} Z_{11} + X_{22} Z_{21}) A_1^T + A_2 \lows(X_{21} Z_{12} + X_{22} Z_{22}) A_2^T.
$$
  By the above expression, we have
$$
T(n,\ell) \le 4 T(n/2, \ell) + c_1 n^2 \ell^{\omega-2}
$$
where $c_1$ is some constant.  We now claim that $T(n, \ell) \le c_2 n^2 \ell^{\omega -2} \log (2n/\ell)$.  This is clearly true when $\ell = n$, since we can compute $\ell \times \ell$ matrix products in $O(\ell^\omega)$ time.  Otherwise, by induction we have
\begin{align*}
T(n,\ell) &\le 4 c_2 (n/2)^2 \ell^{\omega -2} \log (n/\ell) + c_1 n^2 \ell^{\omega-2} \\
&=  c_2 n^2 \ell^{\omega -2} \log (2n/\ell) - c_2 n^2 \ell^{\omega -2} + c_1 n^2 \ell^{\omega-2} \\
&\le c_2 n^2 \ell^{\omega -2} \log (2n/\ell)
\end{align*}
for $c_2 \ge c_1$.
\end{proof}

We are now ready to prove the main measurement theorem.  We state the theorem in terms of a $Z$-basis measurement on $k$ qubits.  That said, the cost of this measurement dominates the cost of applying an arbitrary $k$-qubit Clifford unitary, so the theorem still holds for any choice of basis.

\begin{theorem}
\label{thm:fast_measurement}
Let any $n$-qubit stabilizer state (represented by its tableau) and a subset of $k$ qubits be given.  There exists an $O(\min(n^2 k^{\omega - 2} \log(2n/k), k n^2))$ time algorithm that returns
\begin{enumerate}[itemsep = 0pt]
\item $Z$-basis measurement outcomes for each qubit in the subset, and
\item The Clifford tableau for the $(n-k)$-qubit post-measurement state.
\end{enumerate}
Furthermore, the same algorithm can be used to postselect\footnote{In addition to the subset of $k$ qubits to be measured, a string of $k$ preferred measurement outcomes is given as input.  The algorithm either returns that the postselected event has probability $0$ or returns the $(n-k)$-qubit state assuming the postselection was successful.} on a particular $k$-bit outcome.
\end{theorem}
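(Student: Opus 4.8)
The plan is to give two algorithms and return whichever is faster for the given $k$, which produces the $\min$ in the statement. The first is immediate: run the Aaronson--Gottesman single-qubit measurement \cite{aaronson+gottesman:2004} $k$ times in sequence at cost $O(n^2)$ each, for a total of $O(kn^2)$; postselection is handled by substituting the desired bit for the random one in the random case and comparing the computed bit against the desired one in the deterministic case. The rest of the argument develops the second, matrix-multiplication-based algorithm and shows it runs in $O(n^2 k^{\omega-2}\log(n/k))$ time.

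The batched algorithm mirrors Aaronson--Gottesman but processes all $k$ measured qubits at once. Write the tableau as $2n$ rows, rows $1,\dots,n$ destabilizers and $n{+}1,\dots,2n$ stabilizers, and let $M$, $|M|=k$, be the measured qubits. \emph{First phase (random outcomes).} Form the $n\times k$ matrix $A^{(S)}_M$ of the $\pfont X$-components of the $n$ stabilizer generators restricted to the columns in $M$ and compute its reduced row echelon form in $O(nk^{\omega-1})$ time using fast rectangular elimination (e.g.\ via the generalized inverse of \cite{ibarra+moran+hui:1982}, noting $nk^{\omega-1}\le n^2k^{\omega-2}$ since $k\le n$). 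This reveals the rank $r\le k$, the pivot columns $a_1,\dots,a_r\in M$ (the qubits with random outcomes), the pivot stabilizer rows $g_1,\dots,g_r$, and the elimination transformation itself, which has the form $L=I+N$ with $N\in\mathbb F_2^{n\times n}$ supported only on the $\le k$ columns $g_1,\dots,g_r$ (only pivot rows are ever added to other rows). Extending $N$ so as to clear columns $a_1,\dots,a_r$ from all $2n$ rows --- so every row other than $g_1,\dots,g_r$ commutes with $\pfont Z_{a_1},\dots,\pfont Z_{a_r}$ --- keeps the rank-$\le k$ structure; call the result $\widetilde L=I+\widetilde N$ with $\widetilde N\in\mathbb F_2^{2n\times 2n}$ supported on $\le k$ columns. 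Applying $\widetilde L$ to the tableau amounts, for each row, to multiplying its Pauli by a product of at most $k$ of the $g_i$; the $\mathbb F_2$ part is the product of a $2n\times r$ matrix with an $r\times 2n$ matrix, computable in $O(n^2 r^{\omega-2})$ time by block multiplication with block size $r$, and the induced phase- and sign-bit updates are, by \Cref{lem:pauli_mult} and \Cref{cor:multi_pauli_mult}, $\mathbb F_2$-quadratic forms of type $\diag(N\,\lows(\cdot)\,N^T)$ in the pivot rows, again within $O(n^2 r^{\omega-2})$ time. After this cleanup the only row with an $\pfont X$ on $a_i$ is $g_i$, so we perform the Aaronson--Gottesman random-measurement update for each $a_i$ in turn (sample a uniform bit $z_{a_i}$, copy old $g_i$ into destabilizer $g_i-n$, overwrite $g_i$ by $(-1)^{z_{a_i}}\pfont Z_{a_i}$) at total cost $O(kn)$, which preserves the tableau invariants by correctness of that subroutine.

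\emph{Second phase (deterministic outcomes).} For each $b$ in $D=M\setminus\{a_1,\dots,a_r\}$, the Aaronson--Gottesman deterministic rule gives the outcome as the sign of $\prod_{j:\ d_j\text{ anticommutes with }\pfont Z_b} g_j$, where the index set is read off as the $D$-columns of the $\pfont X$-part of the destabilizer tableau. Evaluating all $k-r$ of these products on the updated tableau simultaneously is exactly the ``$\ell\le k=n$'' case of \Cref{lem:tableau_with_many_paulis} with $\ell=k-r$, costing $O(n^2 (k-r)^{\omega-2}\log(n/(k-r)))$ --- this is where the $\log(n/k)$ factor enters. Reading off the signs gives the deterministic outcomes. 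Finally, we factor the $k$ measured qubits into a product $\bigotimes_{a\in M}\ket{z_a}$ with the rest: convert each deterministic $\pfont Z_b$ into an explicit stabilizer generator by a symplectic change of generating set (another batched Gaussian elimination, on the $D$-columns of the destabilizers, within the same budget), then delete the corresponding generator/destabilizer pairs and the $k$ columns, leaving a valid tableau for the $(n-k)$-qubit post-measurement state. For postselection we substitute the desired bits for $z_{a_1},\dots,z_{a_r}$ (conditioning on a probability-$2^{-r}$ event, which always succeeds) and compare the computed outcome at each $b\in D$ against the desired bit, reporting probability $0$ if any disagree. Since measuring in an arbitrary single-qubit basis only prepends $\le k$ single-qubit Cliffords ($O(kn)$ time) --- and conjugating by any $k$-qubit Clifford on these qubits is again a rank-$\le k$ tableau update --- the same bound holds for any measurement basis.

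The main obstacle is keeping the first-phase tableau update inside the $O(n^2k^{\omega-2})$ budget rather than the $O(n^\omega)$ cost of a generic $2n\times 2n$ row operation: this rests on the observation that Gaussian elimination of an $n\times k$ matrix applies only a rank-$\le k$ correction to the identity, so the update to the full tableau factors through inner dimension $k$, together with the use of \Cref{lem:tableau_with_many_paulis} (and hence the inheritance of its $\log$ factor) to evaluate the sign-bit quadratic forms. A secondary, more tedious point is the bookkeeping to verify that the batched updates and the final restriction preserve all Aaronson--Gottesman tableau invariants ($M^T\Omega M=\Omega$, the commutation relations, and $p=\diag(M\Lambda M^T)$ with $\Omega,\Lambda$ as in \Cref{fact:tableau_facts}); since each individual move is a valid Aaronson--Gottesman operation or a symplectic basis change, this reduces to checking that the batched versions compute the same thing.
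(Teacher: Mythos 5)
Your proposal is correct and follows essentially the same route as the paper's proof: batch the Aaronson--Gottesman measurement procedure, use Ibarra--Moran--Hui fast rectangular elimination on the $\pfont X$-part of the stabilizers restricted to the measured columns, apply the resulting rank-$\le k$ row operations to the full tableau with sign updates via \Cref{lem:tableau_with_many_paulis}, split into random and determinate phases, and fall back on $k$ sequential single-qubit measurements for the $O(kn^2)$ branch. The only differences are internal bookkeeping --- you batch the rowsum directly on all $2n$ rows and compute determinate outcomes as explicit products of stabilizer generators (placing the $\log(n/k)$ factor there), whereas the paper keeps the tableau symplectic via paired $\mathcal{L}$/$\mathcal{L}^{-T}$ updates and reads determinate outcomes off the row-reduced destabilizers --- but these are equivalent in substance and cost.
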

\begin{proof}
Without loss of generality, let us assume we measure the first $k$ qubits.  We will also only be concerned with the $O(n^2 k^{\omega - 2} \log(2n/k))$ measurement algorithm since one can trivially obtain a $O(k n^2)$ algorithm by applying the single-qubit measurement scheme of Aaronson and Gottesman $k$ times \cite{aaronson+gottesman:2004}.

Let us write the tableau for our state represented by Clifford $Q$ as
$$
M = \left[\begin{array}{ c c | c c }
A_{11} & A_{12} & B_{11} & B_{12} \\
A_{21}  & A_{22}  & B_{21} & B_{22} \\ \hline
C_{11} & C_{12} & D_{11} & D_{12} \\
C_{21} & C_{22} & D_{21} & D_{22}
\end{array}\right]
\hspace{20pt}
p = \left[\begin{array}{c} \alpha_1 \\ \alpha_2 \\ \hline \delta_1 \\ \delta_2 \end{array}\right]
\hspace{20pt}
s = \left[\begin{array}{c} \beta_1 \\ \beta_2 \\ \hline \epsilon_1 \\ \epsilon_2 \end{array}\right]
$$
where $A_{11} \in \{0,1\}^{k \times k}$, $A_{22} \in \{0,1\}^{(n-k) \times (n-k)}$, and so on.  We will perform the measurement in two steps, following the Aaronson-Gottesman measurement procedure \cite{aaronson+gottesman:2004}---first focusing on those outcomes which are random and then on those which are determinate.\footnote{Note that the choice of which measurements are random and which are determinate is not unique.  For example, consider the Bell state $\frac{\ket{00} + \ket{11}}{\sqrt 2}$.  Measuring the first qubit, determines the second, and vice versa.}  We will assume correctness from the Aaronson-Gottesman procedure and will focus primarily on running times.

\noindent\textit{Step 1: Random measurement outcomes}\\
\noindent 
To determine the random measurements, we first need the Pauli-$\pfont X$ parts of the stabilizers, i.e., the $n \times k$ submatrix $C := \left[\begin{smallmatrix} C_{11} \\ C_{21} \end{smallmatrix}\right]$, to be in reduced row echelon form.  To do this, we compute the $LSP$ factorization (see Ibarra, Moran, and Hui \cite{ibarra+moran+hui:1982}) of $C^T$ in $O(k^{\omega -1} n)$ time.   That is, $C^T$ can be written as the product of three matrices $LSP$, where $L$ is an $k \times k$ lower triangular matrix with 1's on the main diagonal, $S$ is an $k \times n$ matrix which reduces to an upper triangular matrix with 1's along the diagonal when the zero rows are deleted, and $P$ is an $n \times n$ permutation matrix.  Therefore, we get $C = P^T S^T L^T$.

We assume without loss of generality that $P$ is the identity transformation, since we can rearrange of the rows of the tableau according to $P$ without affecting the underlying state.  
 
Now, let us rearrange the columns of $S^T$ by permutation $\Pi$ such that only the first $r := \textrm{rank}(C)$ columns are non-zero.  That is, we write $C = (S^T \Pi) (\Pi^T L^T \Pi) \Pi^T$.  

The final matrix in this decomposition, $\Pi^T$, is a permutation on the first $k$ qubits of the state, so let us drop it in the following discussion since we can simply permute those columns of the tableau accordingly.  We can now write $C = S' U'$, where $S'$ is the $n \times r$ matrix obtained by removing the last $k - r$ columns from $S^T \Pi$ and $U'$ is the $r \times k$ matrix obtained by removing the last $k - r$ rows from $\Pi^T L^T \Pi$. Notice that the first $r \times r$ submatrix of $U'$ is upper triangular with $1$'s on the main diagonal.  Let's also write $S' = \left[\begin{smallmatrix} L' \\ A' \end{smallmatrix}\right]$ where $L'$ is an $r \times r$ lower triangular matrix with $1$'s on the diagonal, and $A'$ is arbitrary.

Before proceeding, let us recap the current state of the tableau. Until this point, we have only swapped its rows and columns, so nothing has fundamentally changed about its structure. We can either think of these swaps as occurring virtually, by keeping track of the permutation of the rows and columns, or explicitly, by physically moving the data around in the tableau as each row/column is only swapped constantly-many times. Because of this procedure, we can assume our original $n \times k$ matrix $C$ is of the form
\[
C = S' U' = \begin{bmatrix} L'U' \\ A'U' \end{bmatrix}
\]
We now define $\mathcal L$ to be the $n \times n$ matrix that reduces $C$ to row echelon form, i.e., $\mathcal{L} C = \left[\begin{smallmatrix} U' \\ 0 \end{smallmatrix}\right]$.  One can check that
$
\mathcal L = \left(\begin{smallmatrix} (L')^{-1} & 0 \\ A' (L')^{-1} & I \end{smallmatrix}\right)
$
has the desired behavior. 

First notice that we can construct $\mathcal L$ efficiently since computing the inverse of the $r \times r$ matrix $L'$ takes time $O(r^\omega)$ and computing the product between the $(n-r)\times r$ matrix $A'$ and $L'$ takes time at most $O(n r^{\omega -1})$.

We will view the application of $\mathcal L$ as directly multiplying the stabilizer generators of $Q$.  For example, if the $i$th row of $\mathcal L$ has $1$'s in the first, second, and fourth columns, we will combine the first, second, and fourth stabilizers in the tableau (i.e., compute $Q \pfont Z^{e_1} \pfont Z^{e_2}  \pfont Z^{e_4} Q^\dag$).  Since $\mathcal L$ is only dense in the first $r$ columns, we can use \Cref{lem:tableau_with_many_paulis} to construct the resulting tableau in $O(n^2 r^{\omega - 2})$ time (i.e., the variables ``$\ell$'' and ``$k$'' in the statement of \Cref{lem:tableau_with_many_paulis} are set to $n$ and $r$, respectively).  Since we are only taking linear combinations of generators, the underlying state does not change. To preserve the fact that the tableau is symplectic, we can apply \Cref{lem:tableau_with_many_paulis} again with $\mathcal L^{-T} = \left(\begin{smallmatrix} (L')^T & (A')^T \\ 0 & I \end{smallmatrix}\right)$ applied to the destabilizers.\footnote{In other words, the row operations on the stabilizers/destabilizers result from multiplying the following matrix on the entire tableau:
\scalebox{.75}{$\left[\begin{array}{c | c} \mathcal L^{-T} & 0 \\ \hline  0 & \mathcal L \end{array}\right].$}
Since this matrix is symplectic, it corresponds to a valid Clifford tableau with 0's for phase bits (\Cref{fact:tableau_facts}). Therefore, the resulting product is also a valid Clifford tableau. Here and elsewhere, we use $-T$ to denote both inverse and transpose.}  Because $\mathcal L^{-T}$ is now upper triangular, we must use the other setting of \Cref{lem:tableau_with_many_paulis} (i.e., with the variable ``$\ell$'' set to $r$ and ``$k$'' set to $n$), which takes $O(n^2 r^{\omega-2} \log(2n/r))$ time.

To recap, the original submatrix $C$ of our tableau is now $\left[\begin{smallmatrix} U' \\ 0 \end{smallmatrix}\right]$.  Also recall that $U' = \left[\begin{smallmatrix} U'' A'' \end{smallmatrix}\right]$ where $U''$ is an $r \times r$ upper triangular matrix with ones along the diagonal.  Therefore, we can compute $(U'')^{-1}$ explicitly in time $O(r^\omega)$, and once again apply \Cref{lem:tableau_with_many_paulis} so that each of the first $r$ stabilizers has exactly one Pauli-$\pfont X$ component (i.e., the submatrix is in reduced row echelon form).  We then apply $(U'')^T$ to the destabilizers in the same fashion.

In the next step, we multiply the stabilizer generators into the destabilizer generators, so that no destabilizer has a Pauli-$\pfont X$ component within the first $r$ qubits.\footnote{A note of warning: this is the first operation that we apply to the tableau which is not symplectic. For a more in-depth justification of this procedure, we refer the reader to the work of Aaronson/Gottesman \cite{aaronson+gottesman:2004}.} Once again, this can be accomplished by \Cref{lem:tableau_with_many_paulis}. Then, we replace the first $r$ destabilizers with the first $r$ stabilizers. 

Since these first $r$ qubits will be the qubits for which the measurement outcome is random, we can now output a uniformly random measurement for each (or, in the case of postselection, the specified outcome).  Let this measurement be $z \in \{0,1\}^r$.  For $i \in \{1, \ldots, r\}$, we replace the $i$th stabilizer with $(-1)^{z_i} \pfont Z^{e_i}$.

Unfortunately, the tableau we just constructed is only valid for the post-measurement state in tensor product with the measured qubits.  Since we would like a tableau for just the $(n-k)$-qubit post-measurement state, additional work is required.  To recap, the state of the tableau is now
$$
M = \left[\begin{array}{ c c | c c }
I & A_{12} & B_{11} & B_{12} \\
0  & A_{22}  & B_{21} & B_{22} \\ \hline
0 & 0 & I & 0 \\
0 & C_{22} & D_{21} & D_{22}
\end{array}\right]
\hspace{20pt}
p = \left[\begin{array}{c} \alpha_1 \\ \alpha_2 \\ \hline 0 \\ \delta_2 \end{array}\right]
\hspace{20pt}
s = \left[\begin{array}{c} \beta_1 \\ \beta_2 \\ \hline z \\ \epsilon_2 \end{array}\right]
$$
where $A_{12} \in \{0,1\}^{n \times n-r}$, and so on.  To clarify, this is a slight abuse of notation since we've subdivided the matrix in terms of $r$ instead of $k$ (and we've chosen to redefine the submatrices instead of giving them new names).   In order to safely remove the first $r$ qubits from the tableau, we need to ensure that the tableau will still be symplectic afterwards. In some sense, we need $A_{12}, B_{11}, B_{12}, B_{21}$ and $D_{21}$ all to be $0$.  As before, we can zero out $D_{21}$ using \Cref{lem:tableau_with_many_paulis} with the transformation $L = \left(\begin{smallmatrix} I & 0 \\ D_{21} & I \end{smallmatrix}\right)$ on the stabilizers and $L^{-T} = \left(\begin{smallmatrix} I & D_{21}^T \\ 0 & I \end{smallmatrix}\right)$ on the destabilizers. This results in the tableau:
$$
M' = \left[\begin{array}{ c c | c c }
I & A_{12}' & B_{11}' & B_{12}' \\
0  & A_{22}'  & B_{21}' & B_{22}' \\ \hline
0 & 0 & I & 0 \\
0 & C_{22} & 0 & D_{22}
\end{array}\right]
\hspace{20pt}
p' = \left[\begin{array}{c} \alpha_1' \\ \alpha_2' \\ \hline 0 \\ \delta_2 \end{array}\right]
\hspace{20pt}
s' = \left[\begin{array}{c} \beta_1' \\ \beta_2' \\ \hline z \\ \epsilon_2' \end{array}\right]
$$
We can now construct the tableau for the state on $n - r$ qubits by simply removing the first $r$ rows/columns from each submatrix:
$$
M'' = \left[\begin{array}{ c | c }
A_{22}'  & B_{22}' \\ \hline
C_{22} & D_{22}
\end{array}\right]
\hspace{20pt}
p'' = \left[\begin{array}{c} \alpha_2' \\ \hline \delta_2 \end{array}\right]
\hspace{20pt}
s'' = \left[\begin{array}{c} \beta_2' \\ \hline \epsilon_2' \end{array}\right]
$$
It may not be the case that $A_{12}', B_{11}', B_{12}',$ and $B_{21}'$ are all zero, but one can check that the tableau is symplectic using the fact that it was symplectic before the rows/columns were removed.

\noindent\textit{Step 2: Determinate measurement outcomes}\\
\noindent 
Since we have already removed the $r$ qubits which had random outcomes in Step 1, let us assume without loss of generality that we are measuring the first $k$ qubits, all of which have determinate outcomes.  That is, for $i \in \{1, \ldots, k\}$, we have $\epsilon_i \pfont Z^{e_i}$ is a stabilizer of the state for some $\epsilon_i \in \{\pm1\}$.  The goal is to determine the $\epsilon_i$. 

To start, let us put the Pauli $\pfont X$ part of the destabilizers---i.e., the submatrix $\left[\begin{smallmatrix} A_{11} \\ A_{21} \end{smallmatrix}\right]$ ---into reduced row echelon form.  Once again, this will require a fast $LSP$ decomposition coupled with several applications of \Cref{lem:tableau_with_many_paulis}.  Recall that their are no Pauli-$\pfont X$ terms in the stabilizers, so the tableau of the resulting transformation is (once again, redefining the submatrices instead of giving them new names)
$$
M = \left[\begin{array}{ c c | c c }
I & A_{12} & B_{11} & B_{12} \\
0  & A_{22}  & B_{21} & B_{22} \\ \hline
0 & C_{12} & D_{11} & D_{12} \\
0 & C_{22} & D_{21} & D_{22}
\end{array}\right]
\hspace{20pt}
p = \left[\begin{array}{c} \alpha_1 \\ \alpha_2 \\ \hline \delta_1 \\ \delta_2 \end{array}\right]
\hspace{20pt}
s = \left[\begin{array}{c} \beta_1 \\ \beta_2 \\ \hline \epsilon_1 \\ \epsilon_2 \end{array}\right].
$$
In fact, because $M$ is symplectic, we have that $C_{12} = 0$, $D_{11} = I$, and $D_{12} = 0$, so we can immediately report the measurement outcome of the $i$th qubit as $(\epsilon_1)_i$.  In the case of postselection, we check if these outcomes match the specified outcomes.  If so, then there is nothing to be done; otherwise, we report that the postselected event has probability 0.

Finally, we repeat the procedure from the end of Step 1:  row reducing so that $D_{21} = 0$, and then returning the tableau with the first $k$ rows/columns removed from each submatrix.  This completes the proof.
\end{proof}

\begin{reptheorem}{thm:fastmm}
Let $\ket{\psi}$ be any $n$-qubit stabilizer state given by its tableau, and let $z = z_1 \ldots z_k$ be any $k$-bit string of postselected outcomes.  There is an $O(n^\omega)$ algorithm to obtain a uniformly random measurement of the state $(\bra{z} \otimes I) \ket{\psi}$ in the computational basis.\footnote{If $\omega = 2$, we pick up an additional log factor in the solution of the recurrence, i.e., a running time of $O(n^2 \log n)$.}
\end{reptheorem}
\begin{proof}
The idea will be to invoke \Cref{thm:fast_measurement} twice---first, for the postselected qubits, and then for the remaining qubits. In other words, if we let $f(n,k) := n^2 k^{\omega -2} \log(2n/k)$ denote the bound given for the measurement protocol in \Cref{thm:fast_measurement}, then the running time of the proposed algorithm is bounded by 
\[
f(n,k) + f(n-k, n-k) = O(n^2 k^{\omega - 2} \log(2n/k) + (n-k)^\omega).
\]
It remains to show that gives a final bound of $O(n^\omega)$ for all $k$. We separate into two cases based on how $k$ grows with $n$. When $k = \Theta(n)$, we have $c_1 n \le k \le c_2 n$ for some constants $c_1, c_2 > 0$, and so
\begin{align*}
n^2 k^{\omega - 2} \log(2n/k) + (n-k)^\omega
\le c_2^{\omega - 2} n^\omega \log(2/c_1) + n^\omega = O(n^\omega).
\end{align*}
When $k = o(n)$, we have $k = n / g(n)$ for some $g(n) = \omega(1)$, and so
\begin{align*}
n^2 k^{\omega - 2} \log(2n/k) + (n-k)^\omega
\le  \frac{n^\omega}{g(n)^{\omega - 2}} \log(2 g(n)) + n^\omega = O(n^\omega)
\end{align*}
where we use that $\log(2 g(n)) / g(n)^\epsilon = o(1)$ for all constants $\epsilon > 0$.  Therefore, in the case of $\omega = 2$, we actually get a bound of $O(n^2 \log n)$.
\end{proof}

We note again that the choice of measurement bases in \Cref{thm:fastmm} is arbitrary since single-qubit gates can be applied in $O(n)$ time.

\section{Acknowledgments}
D.\ Gosset and D.\ Grier are supported in part by IBM Research. We also acknowledge the support of the Natural Sciences and
Engineering Research Council of Canada and the Canadian Institute for Advanced Research. We thank Sergey Bravyi and Robert K\"{o}nig for helpful discussions. We also thank an anonymous reviewer for detailed feedback on presentation and for identifying an incorrect statement regarding runtime in Claim~\ref{claim:graphgprime} in the previous version.

\apptocmd{\sloppy}{\hbadness 10000\relax}{}{}
\bibliographystyle{quantum}
\bibliography{bibliography}

\appendix

\section{Applications}
\label{app:applications}

In this section, we formally state and prove Theorems \ref{thm:clifford_tensor} and \ref{thm:clifsim} from the introduction concerning the application of our graph state sampling problem to more traditional circuit simulation tasks.  We will need the well-known fact that measurements of quantum states can induce certain unitary transformations on the remaining qubits. This is the basis of measurement-based quantum computation \cite{raussendorf2001one} and the heuristic algorithm for Clifford simulation due to Anders and Briegel \cite{anders2006fast}. Bravyi has established a linear-time reduction from constant-depth circuit simulation to graph state simulation using the tableau representation of stabilizer states \cite{bravyireduction}. To keep things relatively self-contained, here we shall use a ``Hadamard gadget'' \cite{bremner2011classical} which allows us to implement Hadamard gates using $X$-basis measurements (see \Cref{fig:identities}).  We note that a similar technique was used in the recent paper of Guan and Regan \cite{guan2019stabilizer}.

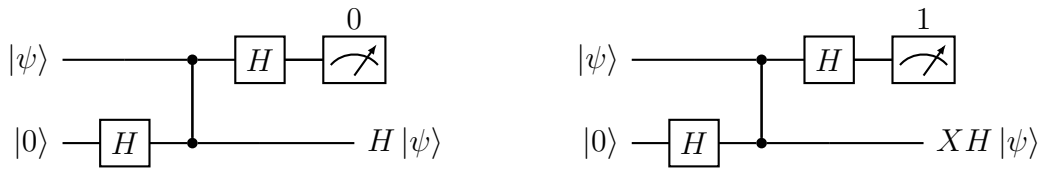
\begin{figure}[h!]
\centering
    \begin{subfigure}{0.45\textwidth}
    \begin{quantikz}
    \lstick{$\ket{\psi}$} & \qw & \ctrl{1} & \gate{H} & \meter{0} \qw\\
		\lstick{$\ket{0}$} & \gate{H} & \ctrl{-1} & \qw & \rstick{$H\ket{\psi}$} \qw 
    \end{quantikz}
\end{subfigure}
\begin{subfigure}{0.45\textwidth}
  \begin{quantikz}
    \lstick{$\ket{\psi}$} & \qw & \ctrl{1} & \gate{H} & \meter{1} \qw\\
		\lstick{$\ket{0}$} & \gate{H} & \ctrl{-1} & \qw & \rstick{$X H\ket{\psi}$} \qw 
    \end{quantikz}
\end{subfigure}
\caption{The Hadamard gadget implements either $H$ or $XH$ on the single-qubit input state $\psi$ depending on the measurement outcome. By linearity, the gadget can also be used to implement a Hadamard gate on a single-qubit of a multi-qubit state. \label{fig:identities}}
\end{figure}

\subsection{Clifford tensor network simulation}
\label{sec:clifford_tensor}
A \textit{Clifford tensor} $T$ of rank $k$ is an array of $2^k$ complex numbers $T_{i_1 i_2\ldots i_k}\in \mathbb{C}$ indexed by $k$ binary variables $i_1,i_2,\ldots, i_k \in \{0,1\}$, with the additional property that the $k$-qubit state
\begin{equation}
|T\rangle\equiv \sum_{i_1 i_2\ldots i_k} T_{i_1 i_2\ldots i_k} |i_1 i_2\ldots i_k\rangle
\label{eq:tensorstate}
\end{equation}
is proportional to a stabilizer state. Such a tensor can be represented pictorially as in \Cref{fig:tensordef} where each ``leg" represents one of the $k$ binary variables. A leg of tensor $T^{(1)}$ and a leg of tensor $T^{(2)}$ can be contracted by summing the corresponding binary variable as depicted in \Cref{fig:contraction}.

A Clifford tensor network consists of a set $T^{(1)}, T^{(2)}, \ldots , T^{(n)}$ of ``elementary" Clifford tensors along with a set $\mathcal{E}$ of pairs of legs that are to be contracted. Write $\mathcal{T}$ for the tensor obtained from the given collection of tensors by contracting all pairs of legs in $\mathcal{E}$.  We shall abuse notation and use $\mathcal{T}$ to refer both to the tensor network and the tensor that it evaluates to. In particular, $\mathcal{T}$ is a Clifford tensor with rank equal to the number of uncontracted legs. If it is nonzero, then it defines a stabilizer state (up to normalization) as in Eq.~\eqref{eq:tensorstate} and we consider the problem of simulating measurement of all qubits of this state.

The \emph{underlying graph} of the tensor network $\mathcal{T}$ is the bipartite graph with a vertex for each elementary tensor $T^{(i)}$, a vertex for each pair of contracted legs $e=\{\ell_1,\ell_2\}\in \mathcal{E}$ (i.e., an edge/wire of the network), and edges $\{e,\ell_1\}$ and $\{e,\ell_2\}$ between tensor vertices and contracted leg vertices which are incident in the network.

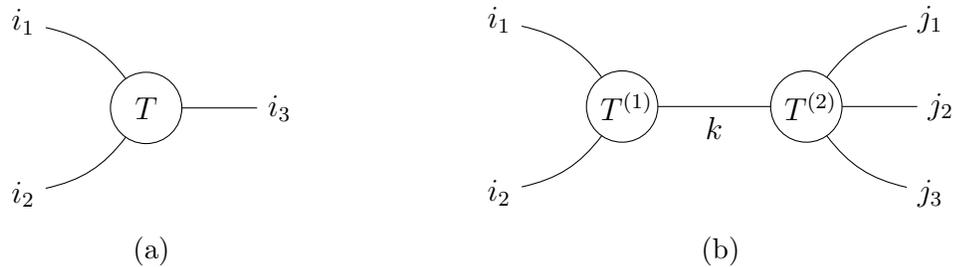
\begin{figure}[h!]
\centering
\begin{subfigure}[t]{0.45\textwidth}
\centering
\begin{tikzpicture}
\tikzstyle{tensor}=[draw, circle, minimum size=27pt, inner sep=0pt, outer sep=0pt]
    \node[tensor] (T) []  {$T$};
    \node (i1) [above left = .5 and 1 of T] {$i_1$};
    \node (i2) [below left = .5 and 1 of T] {$i_2$};
    \node (i3) [right = 1 of T] {$i_3$};

    \path[draw]
    (T) edge (i3)
    (T) edge[bend right=20] (i1)
    (T) edge[bend left=20] (i2);
\end{tikzpicture}
\subcaption{}
\label{fig:tensordef}
\end{subfigure}
\begin{subfigure}[t]{0.45\textwidth}
\centering
\begin{tikzpicture}
\tikzstyle{tensor}=[draw, circle, minimum size=27pt, inner sep=0pt, outer sep=0pt]
    \node[tensor] (T1) []  {$\;T^{(1)}$};
    \node (i1) [above left = .5 and 1 of T1] {$i_1$};
    \node (i2) [below left = .5 and 1 of T1] {$i_2$};
    \node[tensor] (T2) [right = 1.5 of T1]  {$\;T^{(2)}$};
    \node (j1) [above right = .5 and 1 of T2] {$j_1$};
    \node (j2) [right = 1 of T2] {$j_2$};
    \node (j3) [below right = .5 and 1 of T2] {$j_3$};

    \path[draw]
    (T1) edge[bend right=20] (i1)
    (T1) edge[bend left=20] (i2)
    (T1) edge node[below] {$k$} (T2)
    (T2) edge[bend left=20] (j1)
    (T2) edge (j2)
    (T2) edge[bend right=20] (j3);
\end{tikzpicture}
\subcaption{}
\label{fig:contraction}
\end{subfigure}
\caption{(a) Graphical representation of rank-$3$ tensor $T$ with indices $i_1$, $i_2$, and $i_3$. (b) Contracting rank-$3$ tensor $T^{(1)}$ with rank-$4$ tensor $T^{(2)}$ on index $k$ gives a rank-$5$ tensor $T_{i_1 i_2 j_1 j_2 j_3} := \sum_{k \in \{0,1\}} T^{(1)}_{i_1 i_2 k} T^{(2)}_{j_1 j_2 j_3 k}$.}
\end{figure}

\begin{PCTNSproblem}
The input is a Clifford tensor network $\mathcal{T}$ which has a planar underlying graph, and each elementary tensor has constant rank. If $\mathcal{T}$ is the zero tensor then the output is a flag indicating this; otherwise the output is a binary string sampled from the probability distribution obtained by measuring each qubit of $|\mathcal{T}\rangle$ in the standard basis.
\end{PCTNSproblem}

\begin{reptheorem}{thm:clifford_tensor}
There is a classical algorithm which solves the planar Clifford tensor network simulation problem with runtime $\widetilde{O}(n^{\omega/2})$, where $n$ is the number of elementary tensors.
\end{reptheorem}

\begin{proof}
Let $\mathcal{T}$ be specified by elementary tensors $T^{(1)}, T^{(2)},\ldots, T^{(n)}$ and pairs of contracted legs $\mathcal{E}$. Each elementary tensor defines a stabilizer state $|T^{(i)}\rangle$ via Eq.~\eqref{eq:tensorstate}. The  state defined by the tensor network $|\mathcal{T}\rangle$ is then given by
\[
|\mathcal{T}\rangle=\prod_{e\in \cal{E}} \left(\langle 00|+\langle 11|\right)_{e}|T^{(1)}\rangle\otimes |T^{(2)}\rangle\otimes\ldots \otimes |T^{(n)}\rangle
\]
The state $|T^{(1)}\rangle \otimes \ldots \otimes |T^{(n)}\rangle$ appearing on the right hand side has one qubit for each leg of each elementary tensor. It is an $m$-qubit state, where $m=O(n)$ is the total number of legs in the tensor network. We have then used the fact that contraction of two legs is equivalent to projecting the two corresponding qubits onto the Bell pair $\langle 00|+\langle 11|$.

We now use the following lemma to express each of the elementary tensors $|T^{(i)}\rangle$ as a graph state with local unitaries:
\begin{lemma}[Theorem 1 from Ref.~\cite{van2004graphical}]
A $k$-qubit stabilizer state $|\phi\rangle$ can be expressed as $|\phi\rangle=C_1\otimes C_2\otimes \ldots \otimes C_k|G\rangle$ where $G$ is a graph state and $C_1,C_2,\ldots, C_k$ are single-qubit Clifford unitaries.
\end{lemma}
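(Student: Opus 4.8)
This is the classical fact that every stabilizer state is local-Clifford equivalent to a graph state (Van den Nest--Dehaene--De Moor), so the plan is to reprove it through the binary symplectic representation. I would encode the stabilizer group of $|\phi\rangle$ by a \emph{check matrix} $(A\mid B)\in\mathbb F_2^{k\times 2k}$ whose $j$-th row $(A_j\mid B_j)$ records that the $j$-th generator equals $\prod_i \pfont{X}_i^{A_{ji}}\pfont{Z}_i^{B_{ji}}$ up to a phase. Independence of the generators says $(A\mid B)$ has rank $k$, and pairwise commutation says $AB^T$ is symmetric over $\mathbb F_2$. In this language: changing the generating set is left multiplication by some element of $\mathrm{GL}(k,\mathbb F_2)$; applying a single-qubit Clifford to qubit $i$ applies some $\mathrm{GL}(2,\mathbb F_2)$ transformation simultaneously to the $i$-th columns of $A$ and of $B$ (a Hadamard swaps these two columns, a phase gate adds the $i$-th column of $A$ into the $i$-th column of $B$); and a single-qubit Pauli on qubit $i$ only flips a sign. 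A graph state $|G\rangle$ is precisely the stabilizer state with check matrix $(I\mid\Gamma)$ for $\Gamma$ the (symmetric, zero-diagonal) adjacency matrix of $G$, so the goal is to reach this form using only these moves.

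The first and main step is to make the $\pfont{X}$-part invertible. I would use row operations and a relabelling of the qubits to put $A$ into the shape $\bigl(\begin{smallmatrix} I_r & \ast\\ 0 & 0\end{smallmatrix}\bigr)$ with $r=\mathrm{rank}(A)$, so that the last $k-r$ generators are Pauli-$\pfont{Z}$ operators $\pfont{Z}^{b}$. These $b$'s are independent and, because they commute with the first $r$ generators, each is orthogonal to every row of the $\pfont{X}$-part of those generators; that orthogonal complement has dimension exactly $k-r$, so the restrictions of $b_{r+1},\dots,b_k$ to the last $k-r$ coordinates form a basis of $\mathbb F_2^{k-r}$. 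Applying a Hadamard to each of the last $k-r$ qubits then turns the $\pfont{X}$-part into $\bigl(\begin{smallmatrix} I_r & C\\ 0 & D\end{smallmatrix}\bigr)$, where $D$ is exactly that basis matrix and is therefore invertible, so the new $\pfont{X}$-part is invertible.

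From there the rest is routine $\mathbb F_2$ linear algebra. With $A$ invertible, left multiply the check matrix by $A^{-1}$ to reach $(I\mid\Gamma)$ with $\Gamma=A^{-1}B$; the commutation condition, preserved by all our moves, becomes $\Gamma=\Gamma^T$. If $\Gamma_{ii}=1$ for some $i$, a phase gate on qubit $i$ adds $e_i$ into the $i$-th column of $\Gamma$, flipping $\Gamma_{ii}$ to $0$ while leaving $A=I$ and keeping $\Gamma$ symmetric; doing this for every such $i$ makes $\Gamma$ the adjacency matrix of a graph $G$. The resulting check matrix $(I\mid\Gamma)$ is that of $|G\rangle$ up to the signs of the generators, which are corrected by single-qubit Paulis. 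Reading off the sequence of moves gives $|\phi\rangle=C_1\otimes\cdots\otimes C_k\,|G\rangle$ with each $C_i$ a product of Hadamard, phase, and Pauli gates on qubit $i$, hence a single-qubit Clifford.

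I expect the only genuine obstacle to be the invertibility step: one must see that the rank condition together with the commutation condition are exactly what is needed to certify that the block $D$ above is nonsingular (equivalently, that some subset of qubits can always be Hadamarded to make the $\pfont{X}$-part full rank). Everything after that is bookkeeping with generator changes, transposes, and signs.
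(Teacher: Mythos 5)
The paper never proves this lemma; it is imported verbatim as Theorem~1 of Van den Nest, Dehaene and De Moor, so there is no in-paper argument to compare against. Your check-matrix reproof is correct and is essentially the argument of the cited reference: row operations plus a qubit relabelling (harmless, since it merely relabels the vertices of the final graph) to bring the $\pfont{X}$-part to $\left(\begin{smallmatrix} I_r & * \\ 0 & 0\end{smallmatrix}\right)$, Hadamards on the last $k-r$ qubits to make it invertible, a change of generators to reach $(I\mid\Gamma)$ with $\Gamma$ symmetric, phase gates to clear the diagonal, and single-qubit $\pfont{Z}$'s to repair the signs (conjugation by $\pfont{Z}_j$ flips the sign of the $j$th graph-state generator only, so any sign pattern is reachable). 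The step you flag as the potential obstacle is indeed where a line is elided, but it is only a line: a vector orthogonal to every row of $(I_r\mid C)$ satisfies $v_1 = Cv_2$, so its first $r$ coordinates are determined by its last $k-r$; hence projecting the orthogonal complement onto the last $k-r$ coordinates is injective, and your $k-r$ independent $\pfont{Z}$-type rows restrict to an invertible block $D$, exactly as your argument requires.
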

Each elementary tensor has constant rank and therefore we may compute the single-qubit Cliffords and graph state for each elementary tensor in constant time on a classical computer.  In this way, we compute graphs $G_1,G_2,\ldots, G_n$ and single-qubit Cliffords $C_1,C_2,\ldots, C_m$ such that
\begin{equation}
|\mathcal{T}\rangle=\prod_{e\in \cal{E}} \left(\langle 00|+\langle 11|\right)_{e} C_1\otimes C_2\otimes \ldots \otimes C_m |G_1\rangle|G_2\rangle \ldots |G_n\rangle. 
\label{eq:tensorgraph}
\end{equation}
Our goal is to sample a binary string from the probability distribution obtained by measuring this state in the computational basis (or determine if $\mathcal{T}=0$). Below we show that this can be recast as an instance of the graph state simulation problem with a graph $G$ that admits an $O(1)$-coarse-graining to a planar graph, and then Theorem \ref{thm:coarse} completes the proof.

Consider a pair of qubits $e=\{i,j\}\in \mathcal{E}$ and the corresponding state
\[
C_i^{\dagger} \otimes C_j^{\dagger} \frac{1}{\sqrt{2}}(|00\rangle+|11\rangle)
\]
which appears (up to normalization) in Eq.~\eqref{eq:tensorgraph}. Noting that 
\[
\frac{1}{\sqrt{2}}\left(|00\rangle+|11\rangle\right)=(I\otimes H) \CZ \ket{++},
\]
we get
\begin{equation}
C_i^{\dagger} \otimes C_j^{\dagger} \frac{1}{\sqrt{2}}(|00\rangle+|11\rangle)=
I\otimes C_j^{\dagger} \overline{C}_i \frac{1}{\sqrt{2}}\left(|00\rangle+|11\rangle\right)= (I\otimes D) \CZ \ket{++}.
\label{eq:D}
\end{equation}
where $D=C_j^{\dagger} \overline{C}_i H$ is a single-qubit Clifford, and therefore can be written as a product of Hadamard and phase gates~\cite{nest2008classical}
\[
D\propto S^{a} H^{b} S^{c},
\]
for some $a,c\in \{0,1,2,3\}$ and $b\in \{0,1\}$. If $b=0$ then
\begin{equation}
C_i^{\dagger} \otimes C_j^{\dagger} \frac{1}{\sqrt{2}}(|00\rangle+|11\rangle) \propto  \CZ |\phi_i\otimes \phi_j\rangle
\label{eq:b0}
\end{equation}
for single-qubit stabilizer states $|\phi_i\rangle=|+\rangle$ and $|\phi_j\rangle=S^{a+c}|+\rangle$. On the other hand, if $b=1$ then we can use the Hadamard gadget from \Cref{fig:identities} to get
\begin{align}
C_i^{\dagger} \otimes C_j^{\dagger} \frac{1}{\sqrt{2}}(|00\rangle+|11\rangle) &\propto (I\otimes S^{a}HS^{c} )\CZ \ket{++}\nonumber\\
&= (I\otimes \langle+|\otimes I)\CZ_{ie}\CZ_{ej} |\phi_i\otimes \phi_e \otimes \phi_j\rangle,
\label{eq:b1}
\end{align}
where $|\phi_i\rangle=|+\rangle,|\phi_j\rangle=S^{a}|+\rangle,|\phi_e\rangle=S^{c}|+\rangle$ are single-qubit stabilizer states. 

For every pair of contracted legs $e=\{i,j\}\in \mathcal{E}$ in the tensor network we can compute $b\in \{0,1\}$ and the decomposition given by either Eq.~\eqref{eq:b0} or Eq.~\eqref{eq:b1}. Let us partition the elements of $\mathcal{E}=\mathcal{E}_0\cup \mathcal{E}_1$ where $\mathcal{E}_0, \mathcal{E}_1$ have $b=0,1$ respectively.  We obtain
\begin{equation}
|\mathcal{T}\rangle\propto \prod_{e=\{i,j\}\in \mathcal{E}_1} \left(\langle \phi_i \otimes \phi_e\otimes \phi_{j}|\CZ_{ie} \CZ_{ej} |+\rangle_e\right) \prod_{e=\{i,j\}\in \mathcal{E}_0} \left(\langle \phi_i \otimes \phi_j|\CZ\right)_{ij} |G_1\rangle|G_2\rangle \ldots |G_n\rangle.
\label{eq:psisample}
\end{equation}

In Eq.~\eqref{eq:psisample} we have a single-qubit stabilizer state $\langle \phi_i|$ for each contracted leg and one $\langle \phi_e|$ for each $e\in \mathcal{E}_1$. Letting $\mathcal{L}\subseteq [m]$ be the set of all contracted legs, we have
\begin{equation}
|\mathcal{T}\rangle =\left(\prod_{i\in \mathcal{L}} \langle \phi_i| \prod_{e\in \mathcal{E}_1} \langle \phi_e|\right)|G_{\mathcal{T}}\rangle 
\label{eq:tgss}
\end{equation}
where $G_\mathcal{T}$ is a graph with $m+|\mathcal{E}_1|$ vertices, defined as follows. Start with a graph with $m$ vertices that consists of a copy of each of the graphs $G_1,G_2, \ldots ,G_n$. Next add an edge between every pair $e=\{i,j\}\in \mathcal{E}_0$. Finally, add a new vertex labeled $e$ for each $e=\{i,j\}\in \mathcal{E}_1$ and edges $\{i,e\}$ and $\{e,j\}$.

In light of Eq.~\eqref{eq:tgss}, to solve the planar Clifford tensor network simulation problem defined by $\mathcal{T}$, it suffices to simulate single-qubit Pauli measurements on the graph state $|G_\mathcal{T}\rangle$ and postselect on the subset of measurement outcomes corresponding to the qubits in $\mathcal{L}\cup \mathcal{E}_1$. This is an instance of the graph state simulation problem with the graph $G_{\mathcal{T}}$.

To complete the proof, we construct an $r$-coarse-graining from $G_\mathcal{T}$ to a planar graph, derived from the underlying planar graph of the tensor network. For all $1 \leq i \leq n$, let the coarse-graining map all vertices of $G_i$ to the vertex for $T^{(i)}$ in the underlying graph. Map the additional vertex created for a pair of contracted legs $e=\{k, \ell\} \in \mathcal{E}_1$ to the vertex labelled $e$ in the underlying graph of the tensor network. If $G_\mathcal{T}$ has no vertex for a pair of contracted legs $e\in \mathcal{E}$, then let us remove that vertex from the underlying graph by contracting either edge incident to it; clearly the modified graph is still planar. At worst, the coarse-graining collapses all of $G_i$ to a single vertex (corresponding to $T^{(i)}$), but this is at most the rank of the tensors $T^{(i)}$, which is constant. 
\end{proof}

\subsection{Clifford circuit simulation}
\label{sec:planar_circuit_application}

Because Clifford circuit simulation is a such a common task, this section is devoted to proving an independent reduction from graph state simulation that does not require the use of postselection.  We now state a formal version of the theorem we'd like to prove:

\begin{reptheorem}{thm:clifsim}
Consider $n$ qubits arranged at the vertices of a planar graph $G$. Let $C$ be an $n$-qubit, depth $d$ quantum circuit composed of one- and two-qubit Clifford gates such that each two-qubit gate acts nontrivially on an edge of $G$. There is a classical algorithm with runtime upper bounded as $\widetilde{O}(n^{\omega/2} d^{\omega})$ which produces a sample $z\in \{0,1\}^n$ drawn from the output distribution of $C$.
\end{reptheorem}

Before we give a new proof of this theorem, let's sketch why it follows from \Cref{thm:clifford_tensor}.  Notice that an $n$-qubit Clifford circuit can be viewed as a Clifford tensor network obtained by contracting $n$ rank-$1$ tensors representing the initial single-qubit states $|0\rangle$, with rank-$2$ and rank-$4$ Clifford tensors corresponding to $1$- and $2$-qubit Clifford gates.  The underlying graph of this tensor network is not necessarily planar, but is $O(d)$-coarse-grained planar as can be seen by grouping all the gates acting on the same qubit.  This suffices to prove the claimed bound on the running time.

\begin{proof}[Proof of \Cref{thm:clifsim}]
Suppose without loss of generality that all gates are either $H,S$, or $\CZ$, since any one- or two-qubit Clifford gate can be rewritten as a circuit of size $O(1)$ using only these three gates.  We will massage the circuit into a certain canonical form along the lines of Ref. \cite{bremner2011classical}. Consider a circuit $C'$ which is obtained from $C$ by inserting two Hadamard gates acting on each qubit at the very beginning of the circuit, and also at the very end of the circuit. The unitaries $U_C, U_{C'}$ implemented by these circuits are the same, since 
\[
U_{C'}=(H^{\otimes n})^2U_{C}(H^{\otimes n})^2=U_C. 
\]
Moreover, the depth of $C'$ is $d+4$. Let us say that a Hadamard gate in $C'$ is a \textit{middle Hadamard gate} if it is in the subcircuit $H^{\otimes n}U_{C}H^{\otimes n}$ containing all gates except the initial and final layer of Hadamards. It will be convenient to write $h$ for the total number of middle Hadamard gates in $C'$. For any binary string $z\in \{0,1\}^h$ we also define $C'(z)$ to be the quantum circuit obtained from $C'$ by replacing the $j$th middle Hadamard gate by $X^{z_j} H$, for each $j\in [h]$ (here we fix some arbitrary ordering of the middle Hadamards).

Now the circuit $C'$ is composed of $O(nd)$ one- and two-qubit Clifford gates, and the circuit $C'(z)$ is obtained from $C'$ by inserting $O(nd)$ Pauli $X$ gates in locations determined by $z$. It follows that, given any binary string $z\in \{0,1\}^h$ we may in $O(nd)$-time compute an $n$-qubit Pauli $P(z)$ such that 
\begin{equation}
U_{C'(z)}=P(z)U_{C'}=P(z)U_C.
\label{eq:pofz}
\end{equation}
In particular, $P(z)$ is obtained by pushing all Pauli corrections to the end of the circuit using the fact that every gate in the circuit is Clifford and therefore maps Paulis to Paulis under conjugation.

We now construct a new quantum circuit by replacing all middle Hadamard gates in $C'$ with the Hadamard gadget depicted in \Cref{fig:identities}. Each Hadamard gadget introduces one ancilla qubit, so the resulting circuit has $n+h$ qubits in total. Let us now partition these $n+h$ qubits as $[n+h]=A_1\cup A_2\cup\ldots \cup A_n$ where subsets $A_j\subseteq [n+h]$ are defined as follows.  For each $j\in [n]$, the set $A_j$ contains qubit $j$ as well each ancilla qubit from a gadget replacing a Hadamard gate acting on the $j$th qubit in $C'$. Note that $|A_j|\leq d+4$ for all $j\in [n]$ since $C'$ has depth at most $d+4$.

Since all gates in $C'$ were either $H, S$ or $\CZ$ gates, after inserting the Hadamard gadgets we are left with a circuit of the form
\[
V=H^{\otimes {n+h}} \prod_{j\in M} S_j\prod_{\{i,j\}\in E(G')} \CZ_{ij} H^{\otimes {n+h}},
\]
for some subset $M\subseteq [n+h]$ and graph $G'$ with vertices labeled by $[n+h]$. Moreover, there exists a $(d+4)$-coarse-graining from $G'$ to $G$. Indeed, such a coarse-graining is defined by $\varphi \colon [n+h]\rightarrow [n]$ where $\varphi(k)$ is the unique $j\in [n]$ satisfying $k\in A_j$. As noted above, $|A_j|\leq d+4$ for all $j$. One can also straightforwardly verify that $\{u,v\}\in E(G')$ only if either (a) $u,v\in A_j$ for some $j$, or (b) $u\in A_j$ and $v\in A_i$ for some $\{i,j\}\in E(G)$.

We may therefore use the algorithm from Theorem \ref{thm:coarse} to obtain $x\in \{0,1\}^n$ and $z\in \{0,1\}^h$ such that $\langle x|\otimes \langle z| V|0^{n+h}\rangle \neq 0$. (Note that here we only need to solve an instance of the graph state simulation problem with $\mathcal{P}=\varnothing$.) Using the identities in Fig.~\ref{fig:identities} we then have
\[
\langle x|\otimes \langle z| V|0^{n+h}\rangle=\langle x|U_{C'(z)}|0^n\rangle 
\]
Finally, as noted above, we may compute (in $O(nd)$ time) an $n$-qubit Pauli $P(z)$ such that 
\[
\langle x|U_{C'(z)}|0^n\rangle =\langle x|P(z)U_{C}|0^n\rangle \neq 0.
\]
Letting $g\in \{0,1\}^n$ be such that $P(z)|x\rangle \propto |g\rangle$ we see that $\langle g|U_C|0^n\rangle\neq 0$. Finally, we select a random $s\in \{0,1\}^n$ and let 
\[
Q(s)=U_C Z(s) U_C^{\dagger}.
\]
Note that $Q(s)$ is a random stabilizer of $U_C|0^n\rangle$ and can be computed in time $O(nd)$, since $U_C$ has $O(nd)$ gates. By Claim \ref{claim:simplestab}, the binary string $y$ such that $|y\rangle \propto Q(s)|g\rangle$ is a random sample from the output distribution of $C$. This $y$ is the output of the algorithm. The total runtime is $\widetilde{O}(n^{\omega/2}d^{\omega})$ since all steps except for the graph state simulation take time $O(nd)$.
\end{proof}

\section{Software Implementation}
\label{sec:software_implementation}

This section is about our software implementation of the graph state simulation problem for the $n^{\frac{1}{2}} \times n^{\frac{1}{2}}$ grid graph.  The goal was to show that the asymptotic speedup we predict for the grid in \Cref{thm:grid_warmup} is actually practical, i.e., that it is feasible to implement and that the increased complexity of the algorithm does not hurt the runtime too much for small grids. We do not implement fast matrix multiplication for simplicity, so in the theorem we may set $\omega \to 3$ and avoid some extra log factors to obtain an $O(n^{3/2})$-time algorithm.

Recall that this running time is achieved by a divide-and-conquer recursive algorithm described in the example on page~\pageref{ex:grid_warmup}. We compare the running time of this algorithm with the na\"ive algorithm and the left-to-right sweep algorithm in \Cref{fig:grid_simulation}. In each case, for a given $n$, the algorithm was translated into a circuit (on $n$ qubits for the na\"{i}ve algorithm, $O(\sqrt{n})$ for the others), and then fed to the same tableau-based Clifford simulator
 (CHP++ \cite{gridCHPpp})
running on a modern desktop computer.\footnote{The experiments were conducted on a desktop computer with an AMD Ryzen 5 3600 processor and 16GB of RAM.} We make one concession to translate the divide-and-conquer algorithm to a circuit: each gate includes information about which qubits (represented as a interval) could be currently entangled with its inputs. In this way, we can represent the multiple subtableaux (demanded by a divide-and-conquer strategy) as separate intervals of qubits within the full tableau, without the performance hit that would come from scanning the entire tableau for each operation. Finally, in our tests, the measurement basis ($\pfont X$ or $\pfont Y$) for each qubit is chosen at random; we expect (and conjecture) that this is induces near worst-case performance in the three algorithms. 

In the log-log plot shown in \Cref{fig:loglogplot}, the line of best fit for all three algorithms suggests a better runtime than we predict.  Namely, the na\"ive algorithm scales as $n^{1.7}$ whereas the theoretical guarantee is $O(n^3)$; the left-to-right sweep algorithm scales as $n^{1.5}$ with theoretical guarantee only $O(n^2)$; and the recursive algorithm scales as $n^{1.2}$ with theoretical guarantee only $O(n^{1.5})$.  Nevertheless, we believe this is simply due to small-$n$ effects.

\begin{figure}
\centering
\input{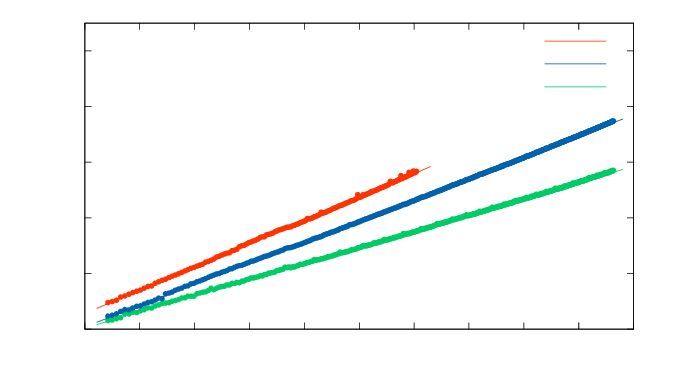}
\caption{Log-log plot of the simulation time (seconds) vs. $\ln(n)$ (i.e., log the total number of qubits in the $\sqrt{n} \times \sqrt{n}$ grid).  The slope is calculated as a linear regression of the data points, and shows the empirical scaling with number of qubits. The data depicted is the same as that in \Cref{fig:grid_simulation} except only for $\sqrt n$ at least 100. \label{fig:loglogplot}}
\end{figure}

In an effort to improve the performance of the recursive algorithm as much as possible, our software implementation differs from that given in the example in two important ways:
\begin{enumerate}
    \item In each recursive step, we split the grid into two (almost) identical halves rather than four quarters as in the example. Of course, in the next step we split those halves again in the opposite direction (the rule is to split the longer dimension),  so we end up with four quarters anyway.
    \item We eliminate the frame qubits between the two halves. The recursive calls return the state of the perimeter qubits for each half, connect the two perimeters directly with CZs at the seam (rather than to a separating frame), and measure qubits that are no longer part of the perimeter. This leads to fewer active qubits at the top-most levels of the recursion, which is important because that part of the computation dominates the runtime.
\end{enumerate}
Similarly, we use a tableau-based Clifford simulator loosely based on Aaronson's CHP \cite{aaronson+gottesman:2004}, but it differs in two important ways:
\begin{enumerate}
    \item When a qubit is measured, we update the tableau to be a direct sum, to reflect that the qubit is no longer entangled with the rest. This requires more work, but helps clearly separate the past interactions of the qubit from the future, facilitating reuse. 
    \item Additional gates have been added: CZ because it is essential to graph states, single-qubit gates to translate $Y$-basis measurement into standard basis measurement, and SWAP gates.
\end{enumerate}

\section{Correction Subroutine Example}\label{sec:affine_example}
In this section we work through an explicit example of the correction subroutine described in \Cref{sec:corgeneral}. The input is shown in \Cref{fig:affine_graph}. We shall consider the case in which all four data qubits are to be measured in the $\pfont X$ basis (i.e., $U_{\mathrm{bases}}=H^{\otimes 4}$). The quantum circuit $\EuScript{C}'$ is shown in  \Cref{fig:affine_example_circuit}. Note that vertices of the graph are labelled by uppercase letters while qubits of $\EuScript{C}'$ are labelled by lowercase letters, with the data qubit and merge ancilla for vertex $B$ being labelled $b$ and $b'$, respectively. Let us take $\pi = (*1**1,*****)$ as the desired pattern, where qubits are ordered $ab'bcd$. That is, we would like $P_\textrm{cor}$ to have nonzero $\pfont X$-components acting on qubits $b'$ and $d$.

\begin{figure}
\begin{subfigure}{0.27\textwidth}
\centering
\begin{tikzpicture}
\tikzstyle{vertex}=[draw, circle, minimum size=20pt, inner sep=0pt, outer sep=0pt]
\node (C) [vertex] at (1.299,-0.75) {$C$};
\node (B) [vertex] at (0,0) {$B$};
\node (A) [vertex] at (0,1.5) {$A$};
\node (D) [vertex] at (-1.299,-0.75) {$D$};
\node (ghost) at (0,-1.75) {};

\draw (A) -- (B);
\draw (C) -- (B);
\draw (B) -- (D);
\end{tikzpicture}
\end{subfigure}
\begin{subfigure}{0.72\textwidth}
\centering
\begin{tikzpicture}
\tikzstyle{bag}=[rectangle, rounded corners=8pt, minimum size=30pt, text width=30pt, align=center, fill opacity=0.1, text opacity=1]
\tikzstyle{introduce}=[bag, fill=introduce]
\tikzstyle{forget}=[bag, fill=forget]
\tikzstyle{merge}=[bag, fill=merge]
\matrix[row sep=17pt, column sep=23pt]{
\node (ab_int) [introduce]{\phantom{$AB$}}; \node (ab_int_text) {$AB$}; &  \node (ab_forg) [forget]{\phantom{$B$}};  \node (ab_forg_text) {$B$}; & & & \\
&&\node (bb) [merge] {\phantom{$B$}}; \node (bb_text) {$B$};& \node (bd_int) [introduce] {\phantom{$BD$}}; \node (bd_int_text) {$BD$};& \node (bd_forg) [forget] {\phantom{$\varnothing$}};  \node (bd_forg_text) {$\varnothing$}; &\\
\node (bc_int) [introduce]{\phantom{$BC$}}; \node (bc_int_text) {$BC$}; &  \node (bc_forg) [forget]{\phantom{$B$}};  \node (bc_forg_text){$B$}; & & &\\
};
\newcommand{\roundec}{2.5pt}
\draw (ab_int) edge (ab_forg);
\draw ($(ab_forg.south east) - (\roundec,-\roundec)$) edge ($(bb.north west) - (-\roundec,\roundec)$);
\draw ($(bc_forg.north east) - (\roundec,\roundec)$) edge ($(bb.south west) - (-\roundec,-\roundec)$);
\draw (bc_int) edge (bc_forg);
\draw (bb) edge (bd_int);
\draw (bd_int) edge (bd_forg);

\tikzstyle{type}=[rectangle, rounded corners=2pt, fill opacity=.1]
\tikzstyle{legend}=[]
\node (INTRODUCE) [type, fill=introduce, below = 30pt of bd_int]{};
\node (INTRODUCELABEL) [legend, right=5pt of INTRODUCE]{Introduce node};
\node (FORGET) [type, fill=forget, below=5pt of INTRODUCE]{};
\node (FORGETLABEL) [legend, right=5pt of FORGET]{Forget node};
\node (MERGE) [type, fill=merge, below=5pt of FORGET]{};
\node (MERGELABEL) [legend, right=5pt of MERGE]{Merge node};
\end{tikzpicture}
\end{subfigure}
\caption{\emph{Left:} Input graph to the graph state simulation problem. \emph{Right:} A nice tree decomposition.}
\label{fig:affine_graph}
\end{figure}
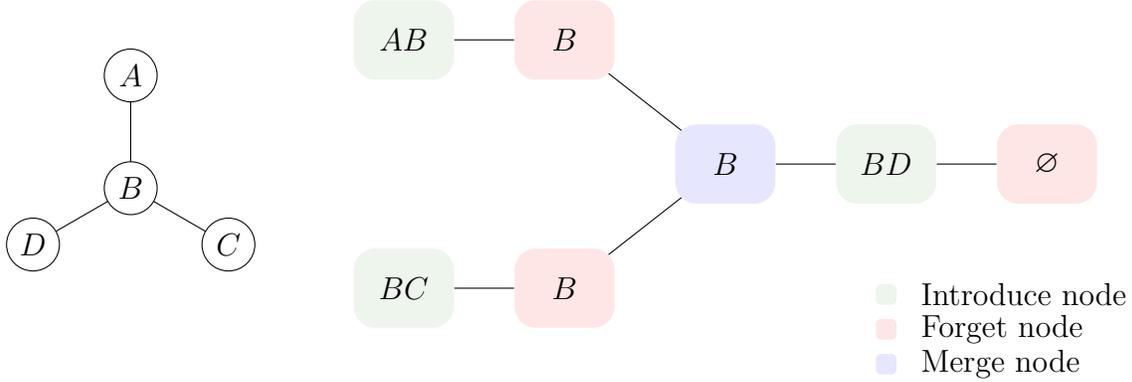

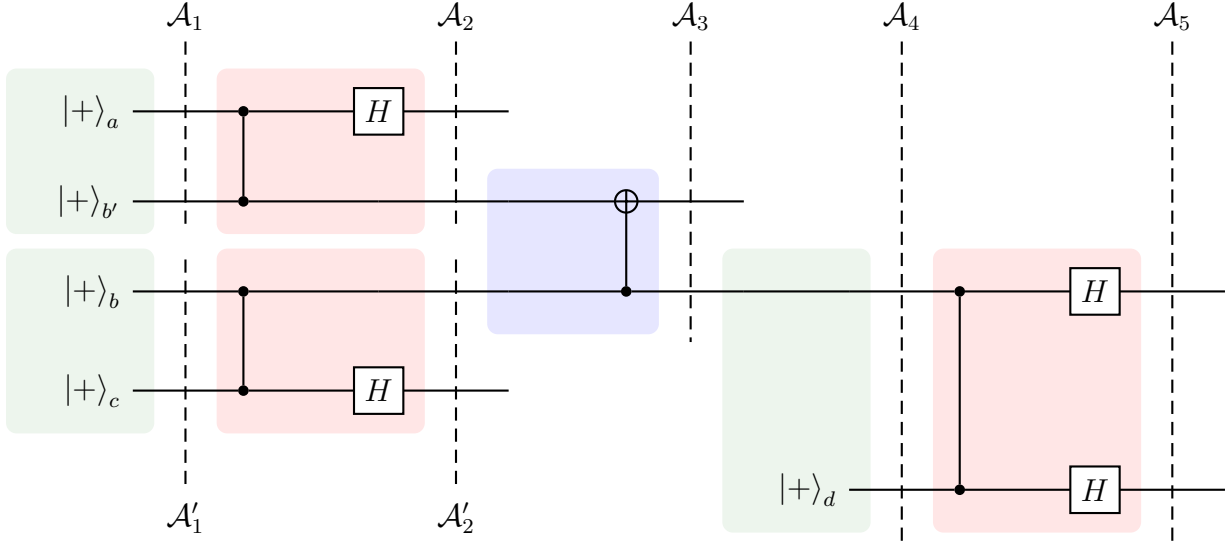
\begin{figure}
\begin{quantikz}[row sep=0.75cm, column sep = 1.4cm]
\gategroup[2,steps=2,style={draw = none, rounded corners,fill=introduce!10},
background]{}&\slice[style = {draw = black, shorten <= -0.5cm, shorten >= 4cm}, label style = {yshift = 0.5cm}]{$\mathcal{A}_1$} \slice[style = {draw = black, shorten <= 2.4cm, shorten >= 0.6cm}, label style = {yshift = -6.2cm}]{$\mathcal{A}'_1$} \lstick{$\ket{+}_a$} &\ctrl{1}\gategroup[2,steps=2,style={draw = none, rounded corners,fill=forget!10},
background]{} & \gate{H}\slice[style = {draw = black, shorten <= -0.5cm, shorten >= 4cm}, label style = {yshift = 0.5cm}]{$\mathcal{A}_2$}\slice[style = {draw = black, shorten <= 2.4cm, shorten >= 0.6cm}, label style = {yshift = -6.2cm}]{$\mathcal{A}'_2$}& \qw &           &   &                  &                      &  \\
&\lstick{$\ket{+}_{b'}$} & \control{} & \qw     & \qw\gategroup[2,steps=2,style={draw = none, rounded corners,fill=merge!10},
background]{} & \targ{}  \slice[style = {draw = black, shorten <= -0.5cm, shorten >= 2.5cm}, label style = {yshift = 0.5cm}]{$\mathcal{A}_3$}  &\qw&                  &                      &  \\
\gategroup[2,steps=2,style={draw = none, rounded corners,fill=introduce!10},
background]{}&\lstick{$\ket{+}_{b}$} & \ctrl{1}\gategroup[2,steps=2,style={draw = none, rounded corners,fill=forget!10},
background]{}  & \qw     & \qw & \ctrl{-1} &\qw \gategroup[3,steps=2,style={draw = none, rounded corners,fill=introduce!10},
background]{} & \qw   \slice[style = {draw = black, shorten <= -0.5cm, shorten >= -0.2cm}, label style = {yshift = 0.5cm}]{$\mathcal{A}_4$} &\ctrl{2}\gategroup[3,steps=2,style={draw = none, rounded corners,fill=forget!10},
background]{} & \gate{H}\slice[style = {draw = black, shorten <= -0.5cm, shorten >= -0.2cm}, label style = {yshift = 0.5cm}]{$\mathcal{A}_5$}& \qw  \\
&\lstick{$\ket{+}_c$} & \control{} & \gate{H}& \qw &           &   &                  &                      &   \\
&                   &            &         &     &           &   & \lstick{$\ket{+}_d$}& \control{}& \gate{H}&\qw   
\end{quantikz}
\caption{Example circuit $\circuit'$ constructed from the graph and tree decomposition in \Cref{fig:affine_graph}. Each affine space is associated with the bag immediately to the left of its dashed line. Qubits are ordered in the same way as $\pi$, from top to bottom.}
\label{fig:affine_example_circuit}
\end{figure}

We first compute the affine spaces $\mathcal{A}_1,\mathcal{A}_2,\ldots,\mathcal{A}_5$, which correspond to sections of $\circuit$ shown with dashed lines in \Cref{fig:affine_example_circuit}. We will omit the computation of $\mathcal{A}'_1$ and $\mathcal{A}'_2$ (which are also shown in \Cref{fig:affine_example_circuit}), because by symmetry they are isomorphic to $\mathcal{A}_1$ and $\mathcal{A}_2$, respectively. The leaf node containing $A$ and $B$ is an introduce node, so we set
\[
\mathcal{A}_1 = \left\{\begin{psmallmatrix}1&0\\0&1\\0&0\\0&0\end{psmallmatrix}x:x\in\mathbb{F}_2^2\right\},
\]
corresponding to the stabilizer generators $\pfont X_a$ and $\pfont X_{b'}$ of $\ket{+}_a\otimes\ket{+}_{b'}$. Throughout this example, we will order the rows of matrices first by grouping $\pfont X$- and $\pfont Z$-components together, and then in the same descending order as in $\pi$ and $\circuit'$. In the matrix above, the first and third rows correspond to qubit $a$ while the second and fourth correspond to $b'$.

Next we conjugate $\mathcal{A}_1$ by $(H\otimes I)\textrm{CZ}$. Since qubit $a$ is forgotten, we would in general need to enforce $\pi^{(X)}_a$ and $\pi^{(Z)}_a$, but since $\pi^{(X)}_a=\pi^{(Z)}_a=*$, there is nothing to be done. Therefore we have
\[
\mathcal{A}_2 = ((H\otimes I)\textrm{CZ})\mathcal{A}_1((H\otimes I)\textrm{CZ})^\dagger = \left\{\begin{psmallmatrix}0&1\\0&1\\1&0\\1&0\end{psmallmatrix}x:x\in\mathbb{F}_2^2\right\}.
\]

To compute $\mathcal{A}_3$, we first remove the rows of the matrix in $\mathcal{A}_2$ corresponding to $a$ (i.e. the first and third) to obtain $\{\begin{psmallmatrix} 0&1\\1&0\end{psmallmatrix}x:x\in\mathbb{F}_2^2\}$. By symmetry, we would obtain the same set after removing qubit $c$ from $\mathcal{A}'_2$. We then take the direct product
\[
\left\{\begin{psmallmatrix} 0&1\\1&0\end{psmallmatrix}x:x\in\mathbb{F}_2^2\right\}\oplus \left\{\begin{psmallmatrix} 0&1\\1&0\end{psmallmatrix}x:x\in\mathbb{F}_2^2\right\} \cong \left\{\begin{psmallmatrix} 0&1&0&0\\0&0&0&1\\1&0&0&0\\0&0&1&0\end{psmallmatrix}x:x\in\mathbb{F}_2^4\right\}.
\]
We have rearranged the right hand side matrix so that the $\pfont X$-components form the first two rows and the $\pfont Z$-components the last two, in contrast with Eq.~\eqref{eq:affine_product}. Next, we apply the CNOT gate to obtain
\begin{equation}\label{eq:affine_A3_intermediate}
\left\{\begin{psmallmatrix}0&1&0&1\\0&0&0&1\\1&0&0&0\\1&0&1&0 \end{psmallmatrix}x:x\in\mathbb{F}_2^4\right\}.
\end{equation}
Finally, we must enforce the pattern on the merge ancilla $b'$. Since $\pi_{b'}^{(X)} = 1$, this means finding the subset of elements in Eq.~\eqref{eq:affine_A3_intermediate} that have the form $(1,*,*,*)^T$. Basic linear algebra tells us that $x$ must take the form $x = e_4 + y_1e_1 + y_2(e_2+e_4) + y_3e_3$, where $e_j$ is the usual standard basis vector and $y_1,y_2,y_3\in\{0,1\}$ are free. The desired subset of Eq.~\eqref{eq:affine_A3_intermediate} therefore is
\[
\mathcal{A}_3 = \left\{\begin{psmallmatrix}0&1&0&1\\0&0&0&1\\1&0&0&0\\1&0&1&0 \end{psmallmatrix}(\begin{psmallmatrix}1&0&0\\0&1&0\\0&0&1\\0&1&0\end{psmallmatrix}y + \begin{psmallmatrix}0\\0\\0\\1\end{psmallmatrix}): y\in\mathbb{F}_2^3\right\}
=
\left\{
\begin{psmallmatrix}0&0&0\\0&1&0\\1&0&0\\1&0&1 \end{psmallmatrix} y + \begin{psmallmatrix}1\\1\\0\\0 \end{psmallmatrix} : y\in\mathbb{F}_2^3
\right\}.
\]
We then restrict $\mathcal{A}_3$ to coordinates corresponding to $b$ to obtain $\{\begin{psmallmatrix}0&1&0\\1&0&1 \end{psmallmatrix}x + \begin{psmallmatrix}
1\\0 \end{psmallmatrix}:x\in\mathbb{F}_2^3\}$. Since the matrix has more columns than rows, we search for a maximum linearly independent subset of columns. By inspection, we can just remove the third column to get $\{\begin{psmallmatrix}0&1\\1&0\end{psmallmatrix}x + \begin{psmallmatrix}
1\\0 \end{psmallmatrix}:x\in\mathbb{F}_2^2\}$. Taking a direct product with the affine space $\mathcal{I}$ for the stabilizer generator of $\ket{+}_d$ gives us
\[
\mathcal{A}_4 = \left\{\begin{psmallmatrix}0&1&0\\0&0&1\\1&0&0\\0&0&0\end{psmallmatrix}x + \begin{psmallmatrix}1\\0\\0\\0\end{psmallmatrix}:x\in\mathbb{F}_2^3\right\}.
\]

To compute $\mathcal{A}_5$, we conjugate the affine space by the CZ and Hadamard gates to obtain
\[
\left\{\begin{psmallmatrix}1&0&1\\0&1&0\\0&1&0\\0&0&1\end{psmallmatrix}x + \begin{psmallmatrix}0\\1\\1\\0\end{psmallmatrix}:x\in\mathbb{F}_2^3\right\},
\]
and since $\pi^{(X)}_d =1$, we search for all elements in this space of the form $(*,1,*,*)^T$. Using the same approach as in the computation of $\mathcal{A}_3$ we arrive at
\[
\mathcal{A}_5 = \left\{\begin{psmallmatrix}
1&1\\0&0\\0&0\\0&1 \end{psmallmatrix}y+ \begin{psmallmatrix}0\\1\\1\\0\end{psmallmatrix}:y\in\mathbb{F}_2^2\right\}.
\]

We are now ready to construct $P_\textrm{cor}$. We first choose a random element from $\mathcal{A}_5$, say, $(0,1,1,0)^T$. This fixes two tensor elements of the Pauli correction: $P_\textrm{cor}=\textrm{?}\otimes \textrm{?} \otimes \pfont Z\otimes \textrm{?} \otimes \pfont X$ (with the question marks to be determined below).

Since $((H\otimes H)CZ)^\dagger \pfont Z\otimes \pfont X ((H\otimes H)CZ) = \pfont X\otimes I$, pushing the chosen element of $\mathcal{A}_5$ backward through the $H$ and CZ gates gives us $(1,0,0,0)^T\in\mathcal{A}_4$. We then remove qubit $d$ to obtain $(1,0)^T$ and select a random element in $\mathcal{A}_3$ of the form $(*,1,*,0)^T$, say, $(1,1,0,0)^T$. Since $b'$ is measured immediately after the merge node, we can fill in another tensor element in the Pauli correction: $P_\textrm{cor} = \textrm{?}\otimes\pfont X \otimes\pfont Z\otimes \textrm{?} \otimes\pfont X$. 

Pushing $(1,1,0,0)^T$ back through the CNOT gate gives $(0,1,0,0)^T$. Removing $b$ gives us $(0,0)^T$, so we search for an element of $\mathcal{A}_2$ of the form $(*,0,*,0)^T$. The only such element is $(0,0,0,0)^T$, so we now have $P_\textrm{cor} = I\otimes\pfont X \otimes\pfont Z\otimes \textrm{?} \otimes\pfont X$. Similarly, we can reconstruct the tensor element corresponding to qubit $c$ and find that $P_\textrm{cor} = I\otimes\pfont X \otimes\pfont Z\otimes\pfont X \otimes\pfont X$ is a valid choice; indeed, we have $(01011,00100)\in\Pi$. We do not need to use $\mathcal{A}_1$ or $\mathcal{A}_1'$ as we now have all tensor elements of $P_\textrm{cor}$.

\section{Linear System Solving}
\label{sec:lss}

\linearsolver*

\begin{proof}
    Construct the circuit $\circuit$ associated with $G$ and its tree decomposition (in time linear in $\| T \|_1$). Claim~\ref{claim:graphgprime} says that $\circuit \ket{+^{n_t}}$ is a graph state $\ket{G'}$ for $G' = (V',E')$ which extends $G$ in the sense that $V' = V \cup M$ (where $M$ are the new vertices associated with the merge ancillas, i.e., $|M| = n_a$) and the adjacency matrix is 
    \begin{equation}
    A'=\begin{pmatrix} A & B \\ B^{T} & C\end{pmatrix}
    \label{eq:aprime}
    \end{equation}
    where $A$ is the adjacency matrix of $G$, and for some $n \times n_{a}$ matrix $B$, $n_a\times n_a$ matrix $C$.

    Suppose $x = x_V x_M, z = z_V z_M \in \mathbb{F}_2^{|V'|}$ are the $\pfont{X}$- and $\pfont{Z}$-components of a Pauli stabilizer for $\ket{G'}$. Since $\ket{G'}$ is a graph state, the components are related by the adjacency matrix: $z = A' x$. By Eq. \eqref{eq:aprime},
    \begin{equation}
    \begin{pmatrix} z_V \\ z_M \end{pmatrix} = \begin{pmatrix} A & B \\ B^{T} & C \end{pmatrix} \begin{pmatrix} x_V \\ x_M \end{pmatrix} = \begin{pmatrix} A x_V + B x_M \\ B^{T} x_V + C x_M \end{pmatrix}.
    \label{eq:linear}
    \end{equation}
    We claim there is a bijection between the set of solutions to $Ay = b$ and the set of solutions to Eq.~\eqref{eq:linear} which satisfy $z_V = b$ and $x_M = 0$ (i.e., stabilizers of $\ket{G'}$ of the form $(*_V 0_{M}, b_{V} *_{M})$). We will prove this claim by describing injective maps from each set to the other.
    
    First, suppose we are given $y$ such that $Ay = b$. Then setting $x_V = y$, $x_M = 0$, $z_V = b$, and $z_M = B^{T} y$ satisfies Eq.~\eqref{eq:linear}. Because we have set $x_V = y$, for two distinct solutions $y,y'$ we will find distinct stabilizers, and so this map is injective. Conversely, if we have a solution to Eq.~\eqref{eq:linear}, then the first equation, $z_V = A x_V + B x_M$ immediately implies $A x_V = b$ since $z_V = b$ and $x_M = 0$. If $(x_V 0_M, b_Vz_M)$ and $(x_V 0_M, b_V z'_M)$ represent stabilizers of $\ket{G'}$, then their product, represented by $(0_V 0_M, 0_V (z_M \oplus z'_M))$, is also a stabilizer. However because $\ket{G'}$ is a graph state, its only stabilizer with trivial $\pfont X$-component is the identity, which implies that $z_M = z'_M$, and thus this map is also injective.

    Therefore, we may apply Theorem~\ref{thm:powerful_correction} to sample a uniformly random stabilizer of $\ket{G'} = \circuit \ket{+^{n_t}}$ respecting the pattern $(*_V 0_M, b_V *_M)$ (or detect that no such stabilizer exists) in $O(\|T\|_{\omega}^{\omega})$ time. If a solution exists, then by the previous paragraph, it immediately yields a uniformly random solution to $Ay = b$.
\end{proof}
\end{document}